	\newcommand{\igw}[1]{\todo[color=green!50]{\small #1}}
\newcommand{\fo}{\mathbf{ir}}
\newcommand{\fop}{{\mathbf{ir}+1}}
\newcommand{\gp}{\overline{\p}}
\newcommand{\mset}{\mathit{mset}}
\newcommand{\vho}{\vec \ho}
\newcommand{\dar}{\!\!\downharpoonright}
\newcommand{\every}{\mathtt{every}}
\newcommand{\ls}{\mathtt{ls}}
\newcommand{\lr}{\mathtt{lr}}
\newcommand{\ep}{\epsilon}
\newcommand{\nn}{\mathbb{N}}
\newcommand{\one}{\mathit{one}}
\newcommand{\onea}{\mathit{one}^{a}}
\newcommand{\oneb}{\mathit{one}^{b}}
\newcommand{\oneq}{\mathit{one}^{?}}
\newcommand{\oned}{\mathit{one}^{d}}
\newcommand{\bias}{\mathit{bias}}
\newcommand{\biasq}{\mathit{bias}^{?}}
\newcommand{\biasqa}{\mathit{bias}^{?}_a}
\newcommand{\spread}{\mathit{spread}}
\newcommand{\spreadq}{\mathit{spread}^{?}}
\newcommand{\solo}{\mathit{solo}}
\newcommand{\soloq}{\mathit{solo}^{?}}
\newcommand{\soloa}{\mathit{solo}^{a}}
\newcommand{\ho}{\mathsf{H}}
\newcommand{\lact}[1]{\stackrel{#1}{\Longrightarrow}}
\newcommand{\bi}{\begin{itemize}}
\newcommand{\ei}{\end{itemize}}
\newcommand{\iif}{\mathtt{if}}
\newcommand{\tthen}{\mathtt{then}}
\newcommand{\cond}{\mathit{cond}}
\newcommand{\Dq}{D_{?}}
\newcommand{\mult}{\mathtt{mult}}
\newcommand{\uni}{\mathtt{uni}}
\newcommand{\thr}{\mathit{thr}}
\newcommand{\bthr}{\bar{\thr}}
\newcommand{\thrm}{\thr_m}
\newcommand{\thru}{\thr_u}
\newcommand{\thruo}{\thr_u^1}
\newcommand{\thrmok}{\thr_m^{1,k}}
\newcommand{\dom}{\mathit{set}}
\newcommand{\op}{\mathtt{op}}
\newcommand{\fire}{\operatorname{fire}}
\newcommand{\send}{\mathtt{send}}
\newcommand{\update}{\mathtt{update}}
\newcommand{\inp}{\mathit{inp}}
\newcommand{\dec}{\mathit{dec}}
\newcommand{\maxts}{\mathrm{maxts}}
\newcommand{\smor}{\mathrm{smor}}
  \theoremstyle{plain}
  \newtheorem{assumption}{Assumption}
  \newtheorem{proviso}{Proviso}
\title{Characterizing consensus in the Heard-Of model} %TODO Please add
 \author{A.R. Balasubramanian\footnote{Technical University of Munich, Germany} \and Igor Walukiewicz\footnote{CNRS, LaBRI, University of Bordeaux, France}}
\begin{document}
\setcounter{footnote}{2}
\maketitle

\begin{abstract}
	The Heard-Of model is a simple and relatively expressive model of distributed
	computation. 
	Because of this, it has gained a considerable attention of the verification community.
	We give a characterization of all algorithms solving consensus in a fragment
	of this model. 
	The fragment is big enough to cover many prominent consensus algorithms.
	The characterization is purely syntactic: it is expressed in terms of some
	conditions on the text of the algorithm. 
	One of the recent methods of verification of distributed algorithms is to
	abstract an algorithm to the Heard-Of model and then to verify the abstract algorithm
	using semi-automatic procedures. 
	Our results allow, in some cases, to avoid the  second step in this methodology.
	% Both, degrees of synchrony and failure models are expressed using
	% communication predicates that put constraints on transmission faults.
	% The model has attracted attention of verification community since 
\end{abstract}

\section{Introduction}

Most distributed algorithms solving problems like  consensus, leader election,
set agreement, or renaming are essentially one iterated loop. 
Yet, their behavior is difficult to understand due to unbounded number of processes,
 asynchrony, failures,  and other aspects of the execution model. 
The general context of this work is to be able to say what happens when we
change some of the parameters: modify an algorithm or the execution model. 
Ideally we would like to characterize the space of all algorithms solving a
particular problem.

To approach this kind of questions, one needs to restrict to a well defined space of all
distributed algorithms and execution contexts. 
In general this is an impossible requirement. 
Yet the distributed algorithms community has come up with some settings that are
expressive enough to represent interesting cases and limited enough to start
quantifying over ``all possible'' distributed algorithms~\cite{charron-heard-distributed09,WidSch:09,AguDelFau:12}. 

In this work we consider the consensus problem in the Heard-Of model~\cite{charron-heard-distributed09}. 
\emph{Consensus problem} is a central problem in the field of distributed algorithms;
it requires that all correct processes eventually decide on one of the initial
values. 
\emph{The Heard-Of model} is a round- and message-passing-based model.
It can represent many intricacies of various execution models  and yet is simple
enough to attempt to analyze it
algorithmically~\cite{ChaDebMer:11,DebMer:12,DraHenVei:14,MarSprBas:17,Mar:17}.   
Initially, our goal was to continue the quest from~\cite{MarSprBas:17} of examining what is
algorithmically possible to verify in the Heard-Of model. 
While working on this problem we have realized that a much more ambitious goal can
be achieved: to give a simple, and in particular decidable, characterization of all consensus
algorithms in well-defined fragments of the Heard-Of model.

The Heard-Of model is an open ended model: it does not specify what operations
processes can perform and what kinds of communication predicates are allowed. 
Communication predicates in the Heard-Of model capture in an elegant way both
synchrony degree and failure model.  
In this work we fix the set of atomic communication predicates and atomic
operations. 
We opted for a set sufficient to express most prominent consensus
algorithms (cf.\ Section~\ref{sec:examples}), but we do not cover all
operations found in the literature on the Heard-Of model.

Our characterization of algorithms that solve consensus is expressed in terms of
syntactic  conditions both on the text of the algorithm, and in the constraints
given by the communication predicate.
It exhibits an interesting way all  consensus algorithms should behave.
One could imagine that there can be a consensus algorithm that makes processes gradually
converge to a consensus: more and more processes adopting the same value. 
This is not the case.
A consensus algorithm, in models we study here, should
have a fixed number of crucial rounds where precise things are guaranteed to
happen. 
Special rounds have been identified for existing algorithms~\cite{RutMilSch:10},
but not their distribution over different phases.
Additionally, here we show that all algorithms should have this structure.
% Our characterization shows that essentially all consensus algorithms are already
% known\igw{change this}. 
% There is still a liberty to get different ones by changing constants and moving
% some instructions, but this would not change the general schema of a sequence of
% crucial rounds. 

% Of course such a characterization is interesting as long as it covers an interesting
% class of algorithms. 
% This is why we consider timestamps and coordinators.
% The fragment is also interesting big enough to cover examples considered by
% verification methods~\cite{MarSprBas:17,ChaMer:09} (without those that send pairs of values).

As an application of our characterization we can think of using it as an
intermediate step in analysis of more complicated settings than the Heard-Of model. 
An algorithm in a given setting can be abstracted to an algorithm in 
the Heard-Of model, and then our characterization can be applied. 
Instead of proving the original algorithm correct it is enough to show that the
abstraction is sound. 
For example, an approach reducing asynchronous semantics to round based semantics under some
conditions is developed in~\cite{ChaChaMer:09}. 
A recent paper~\cite{DamDraMil:19} gives a reduction methodology in a much
larger context, and shows its applicability. 
The goal language of the reduction is  an extension of the Heard-Of model that
is not covered by our characterization. 
As another application,  our characterization can be used to quickly see if an algorithm
can be improved by taking a less constrained communication predicate, by adapting
threshold constants, or by removing parts of code (c.f.\ Section~\ref{sec:examples}).

\subsubsection*{Related work}
The celebrated FLP result~\cite{FisLynPat:85} states
that consensus is impossible to achieve in an asynchronous system, even in the
presence of a single failure. 
There is a considerable literature investigating the models in which the consensus
problem is solvable.
Even closer in spirit to the present paper are results on weakest failure
detectors required to solve
the problem~\cite{ChaHadTou:96,FreGueKuz:11}.
Another step closer are works providing generic consensus algorithms that can be
instantiated to give several known concrete algorithms~\cite{MosRay:99,HurMosRay:02,GueRay:07,BieWidCha:07,SonRenSch:08,RutMilSch:10}.
The present paper considers a relatively simple model, but gives a characterization
result of all possible consensus algorithms.

The cornerstone idea of the  Heard-Of model is that both asynchrony and failures can
be modeled by the constraints on the message loss captured  by a notion of
communication predicates. 
This greatly simplifies the model that is essential for a kind of
characterization we present here. 
Unavoidably, not all aspects of partial synchrony~\cite{DwoLynSto:88,CriFet:99}
or failures~\cite{ChaTou:96} are covered by the model. 
For example, after a crash it may be difficult for a process to get into initial
state, or in terms of the Heard-of model, do the same round as other
processes~\cite{RenSchSch:15,ChaChaMer:09}.
These observations just underline that there is no universal model for
distributed algorithms. 
There exists several other proposals of relatively simple and expressible models~\cite{Gaf:98,WidSch:09,AguDelFau:12,RaySta:13}. 
The Heard-Of model, while not perfect, is in our opinion representative enough to
study in more detail.

On the verification side there are at least three approaches to analysis of
the Heard-Of or similar models. 
One is to use automatic theorem provers, like Isabelle~\cite{ChaMer:09,ChaDebMer:11,DebMer:12}.
Another is deductive verification methods applied to annotated programs~\cite{DraHenZuf:16,DraHenVei:14}.
The closest to this work is a model-checking approach~\cite{TsuSch:11,MarSprBas:17,Mar:17,AmiRubSto:18}.
Particularly relevant here is the work of Maric et al.~\cite{MarSprBas:17}. who
show cut-off results
for a fragment of the Heard-Of model and then perform verification on a resulting
finite state system. 
Our fragment of the Heard-Of model is incomparable with the one
from that work, and arguably it has less restrictions coming from
purely technical issues in proofs. 
While  trying to extend the scope of automatic methods we have realized that
we could actually bypass them completely and get a stronger characterization result.

Of course there are also other models of distributed systems that are considered
in the context of verification. 
For example there has been big progress on verification of threshold
automata~\cite{KukKonWid:18,KonLazVei:17,KonVeiWid:17,StoKonWid:19,BerKonLaz:19}.
There are also other methods, as  automatically generating invariants for
distributed algorithms~\cite{KonVeiWid:15,GleBjoRyb:16,TauLosMcM:18}, or
verification in Coq proof assistant~\cite{WilWooPan:15,WooWilAnt:16}.

\subsubsection*{Organization of the paper}
In the next section we introduce the Heard-Of model and formulate the consensus
problem.
In the four consecutive sections we present the characterizations for the
core model as well as for the extensions with timestamps, coordinators, as both
timestamps and coordinators at the same time.
We then give examples of algorithms that are covered by this model,
and discuss their optimality given our characterization.
The next for sections contain the proofs for the four characterizations.

\section{Heard-Of model and the consensus problem}

In a Heard-Of model a certain number of processes execute the same
code synchronously.
An algorithm consists of a sequence of \emph{rounds}, every process executes the same
round at the same time. 
The sequence of rounds, called \emph{phase}, is repeated forever.
In a round every process sends the value of one of its variables to a
communication medium, receives a multiset of values, and uses it to adopt a
new value (cf.\ Figure~\ref{fig:schema}).%\igw{New drawing and caption}
\begin{figure}[htb]
	\label{fig:schema}
	\centering
	\includegraphics[scale=.5]{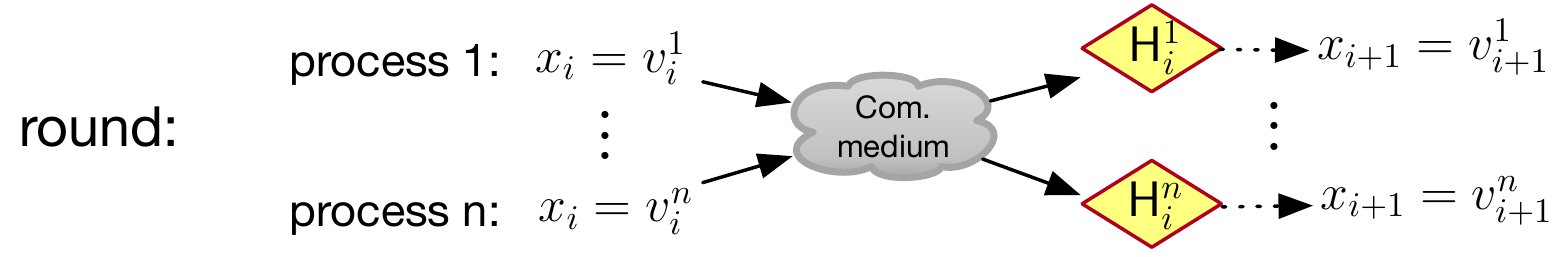}
	\bigskip

	\includegraphics[scale=.34]{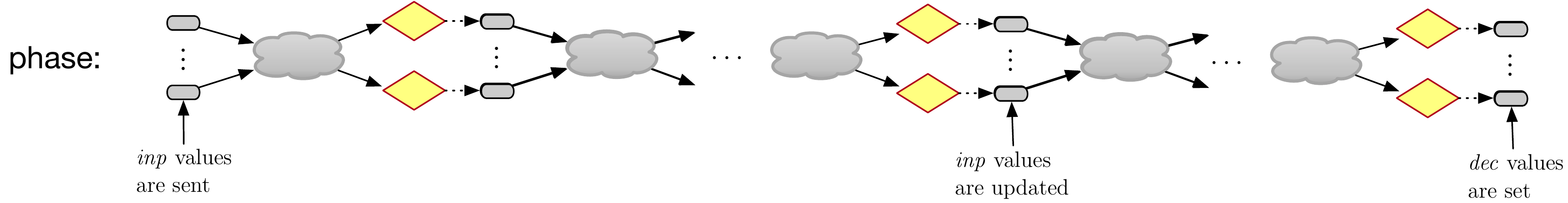}
	\caption{A schema of an execution of a round and of a phase. In every round $i$
	every process sends a value of its variable $x_i$, and sets its variable $x_{i+1}$
	depending on the received multiset of values: $\ho^j_i$.
	At the beginning of the phase the value of $\inp$
	is sent, at some round $\inp$ may be updated; we use $\fo$ for the index of
	this round.
	In the last round $\dec$ may
	be set. Both $\inp$ and $\dec$ are not updated if the value is $?$, standing
	for undefined.}
\end{figure}

At the beginning every process has its initial value in variable $\inp$.
Every process is expected to eventually set its decision variable $\dec$.
Every round is communication closed meaning that a value sent in a round can only be
received in the same round; if it is not received it is lost.
A \emph{communication predicate} is used to express a constraint on acceptable
message losses. Algorithm~\ref{alg:one-third} is a concrete simple example of
a $2$-round algorithm.

We proceed with a description of the syntax and semantics of  Heard-Of
algorithms. 
Next we define the consensus problem.
In later sections we will extend the core language with timestamps and
coordinators.

\begin{algorithm}[H]\label{alg:one-third}
	%\SetAlgoLined
		\Send{$(\inp)$}{
		\lIf{$\uni(\ho) \land |\ho| > \thr_1 \cdot |\Pi|$}{$x_1:=\inp:=\smor(\ho)$}
		\lIf{$\mult(\ho) \land |\ho| > \thr_1 \cdot |\Pi|$}{$x_1:=\inp:=\smor(\ho)$}
	}
	\Send{$x_1$}{
		\lIf{$\uni(\ho) \land |\ho| > \thr_2 \cdot |\Pi|$}{$\dec:=\smor(\ho)$}
	}
	\BlankLine
	\Cp{$\lF(\p^1 \land \lF\p^2)$}
	\Where{$\p^1 := (\f_{=}\land\f_{\thr_1},\true)$\quad and\quad $\p^2 := (\f_{\thr_1},\f_{\thr_2})$}
	\caption{Parametrized OneThird algorithm~\cite{charron-heard-distributed09}, $\thr_1, \thr_2$ are constants from $(0,1)$}
\end{algorithm}

\subsubsection*{Syntax}
An algorithm has one \emph{phase} that consists of two or more rounds. 
In the first round each process  sends the value of $\inp$ variable, in the last round it
can set the value of $\dec$ variable. 
A phase is repeated forever, all processes execute the same round at the same time.
A round $i$ is a send statement followed by a sequence of conditionals:

\RestyleAlgo{plain}
\begin{algorithm}[H]
		\Send{$x_{i-1}$}{
		\lIf{$\cond_i^1(\ho)$}{$x_i:=\op_i^1(\ho)$}
		\vdots
		\lIf{$\cond_i^l(\ho)$}{$x_i:=\op_i^l(\ho)$}
		}
\end{algorithm}
\noindent The variables are used in a sequence: first $x_0$, which is $\inp$, is sent and
$x_1$ is set, then $x_1$ is sent and $x_2$ is set, etc.\ (cf. Figure~\ref{fig:schema}).
There should be exactly one round (before the last round) where $\inp$ is
updated;  the conditional lines in this round are:
\begin{align*}
		\iif\ \cond^j_{\fo}(\ho)\ \tthen\ x_{\fo}:=\inp:=\op^j_{\fo}(\ho)
\end{align*}
Since this is a special round, we use the index $\fo$ to designate this round number. 
In the last round, only instructions setting variable $\dec$ can be present:
\begin{align*}
	   &\iif\ \cond^j_r(\ho)\ \tthen\ \dec:=\op^j_r(\ho)
\end{align*}
This is why a phase needs to have at least two rounds. 
Of course one can also have a syntax and a characterization for one round
algorithms, but unifying the two hinders readability.
Our fragment roughly corresponds to the fragment from~\cite{MarSprBas:17},
without extra restrictions but with a less liberty at the fork point. 

	% 
% In the penultimate round $r-1$ additionally variable $\inp$ can be updated,
% so conditional lines become
% \begin{align*}
% 	\iif\ \cond^j_{r-1}(\ho)\ \tthen\ x_{r-1}:=\inp:=\op^j_{r-1}(\ho)
% \end{align*}
% In the last round $r$, we have exactly one instruction, and it acts on $\dec$ variable.
% \begin{align*}
%   \send(x_{r-1}) \\
%    &\iif\ \cond_r(\ho)\ \tthen\ \dec:=\op_r(\ho)
% \end{align*}
% The syntax for the deciding phase case might look asymmetrical but it is adapted
% to modeling common consensus algorithms in the literature. 
As an example, consider Algorithm~\ref{alg:one-third}. 
It has two rounds, each begins with a $\send$ statement. 
In the first round both $x_1$ and $\inp$ are set, in the second round $\dec$ is set. 
The conditions talk about properties of the received $\ho$ multiset, that we
describe below. 

As the above syntax suggests, in round $i$ every process first sends the value of 
variable $x_{i-1}$, and then receives a multiset of values $\ho$ that it uses to
set the value of the variable $x_i$.
The possible tests on the received set $\ho$ are $\uni$, $\mult$, and $|\ho|>\thr\cdot |\Pi|$
saying respectively that: the multiset has only one value; has more than one
value; and that is of size $>\thr\cdot n$ where $n$ is the number of processes
and  $0\le\thr<1$. 
The possible operations are $\min(\ho)$ resulting in the minimal value in $\ho$,
and $\smor(\ho)$ resulting in the minimal most frequent value in $\ho$. 
For example, the first conditional line in Algorithm~\ref{alg:one-third} tests if
there is only one value in $\ho$, and if this value has multiplicity at least
$\thr_1\cdot n$ in $\ho$; if so $\inp$ and $x_1$ are set to this value, it does not
matter if $\min$ or $\smor$ operation is used in this case. 

In addition to description of rounds, an algorithm has also   a \emph{communication
predicate} putting constraints on the behavior of the communication
medium. 
A \emph{communication predicate for a phase} with $r$ rounds is a tuple
$\p=(\p_1,\dots,\p_r)$, where each $\p_l$ is a conjunction of atomic
communication predicates that we specify later. 
A \emph{communication predicate for an algorithm} is
\begin{equation*}
  (\lG\gp)\land(\lF(\p^1\land\lF(\p^2\land\dots (\lF\p^k)\dots)))
\end{equation*}
where $\gp$ and $\p^i$ are communication predicates for a phase.
Predicate $\gp$ is \emph{global predicate}, and $\p^1\dots,\p^k$ are \emph{sporadic
predicates}.
So the global predicate specifies constraints on every
phase of execution, while sporadic predicates specify a \emph{sequence} of special
phases that should  happen: first $\p_1$, followed later by $\p_2$, etc.
We have two types of atomic communication predicates: $\f_=$ says that every
process receives the same multiset; $\f_\thr$ says that every process receives a
multiset of size at least $\thr\cdot n$ where $n$ is the number of processes. 
In Algorithm~\ref{alg:one-third} the global predicate is trivial, and we require
two special phases.
In the first of them, in its first round every process should
receive exactly the same $\ho$ multiset, and the multiset should contain values from at least $\thr_1$
fraction of all processes.

\subsubsection*{Semantics}
The values of variables come from a fixed linearly ordered set $D$. 
Additionally, we take a special value $? \notin D$ standing
for undefined.
We write $\Dq$ for $D\cup\set{?}$.
% In our arguments we will need only three values: $\set{a,b,?}$.
% Our results imply
% that an algorithm in our language solves consensus if it solves consensus with two
% initial values.

We describe the semantics of an algorithm for $n$ processes.
A \emph{state of an algorithm} is a pair of $n$-tuples of values; denoted $(f,d)$.
Intuitively, $f$ specifies the value of the $\inp$ variable for each process, and
$d$ specifies the value of the $\dec$ variable. 
The value of $\inp$ can never be $?$, while initially the value of $\dec$ is $?$
for every process. 
We denote by $\mset(f)$ the multiset of values appearing in the tuple $f$, and
by $\dom(f)$ the set of values in $f$.
Only values of $\inp$ and $\dec$ survive between phases. 
All the other variables are reset to $?$ at the beginning of each phase.

There are two kinds of transitions:
\begin{equation*}
	(f,d)\act{\p} (f',d') \quad\text{and}\quad  f\lact{\f}_i f' \ .
\end{equation*}
The first is a phase transition, while the second is a transition for round $i$.
So in a transition of the second type $f$ describes the values of $x_i$, and
$f'$ the values of $x_{i+1}$.
Phase transition is labeled with a phase communication predicate, while a round
transition has a round number and a conjunction of atomic predicates as
labels.

Before defining these transitions we need to describe the semantics of
communication predicates. 
At every round processes send values of their variable to a communication medium, and then
receive a multiset of values from the medium (cf.\ Figure~\ref{fig:schema}).
Communication medium is not assumed to be perfect, it can send a different
multiset of values to every process, provided it is a sub-multiset of received
values.
An atomic communication predicate puts constraints on multisets that every
process receives. 
So a predicate specifies constraints on a tuple of multisets $\vho=(\ho_1,\dots,\ho_n)$.
Predicate $\f_=$ is satisfied if all the multisets are the same.
Predicate $\f_\thr$ requires that every multiset is bigger than $\thr\cdot n$
for some number $0\le\thr <1$.
Predicate $\true$ is always satisifed. 
We write $\vho\sat\f$ when the tuple of multisets $\vho$ satisfies the
conjunction of atomic predicates $\f$.
% Coming back to our Algorithm~\ref{alg:one-third} the communication predicate
% says that there should be a phase where all processes receive in the first round
% the same $\ho$ multiset and it should contain at least $\thr_1$ fraction of values. 
% Then later there should be a phase where in the first round every proecess receives
% an $\ho$ set with at least $\thr_1$ fraction of values, and in the second round
% at least $\thr_2$ fraction of values; this time every process may receive a different multiset. 

Once a process $p$ receives a multiset $\ho_p$, it uses it to do an  update of one of
its variables.
For this it finds the first conditional that $\ho_p$ satisfies and performs the
operation from the corresponding assignment. 

A \emph{condition} is a conjunction of atomic conditions: $\uni$, $\mult$,
$|\ho|>\thr\cdot |\Pi|$. 
A multiset $\ho$ satisfies $\uni$ when it contains just one value; it
satisfies $\mult$ if it contains more than one value.
A multiset $\ho$ satisfies $|\ho|>\thr\cdot |\Pi|$ when the size of $\ho$ is bigger
than $\thr\cdot n$, where $n$ is the number of processes. 
Observe that only predicates of the last type take into account possible
repetitions of the same value.

We can now define the \emph{update value} $\update_i(\ho)$, describing to which value
the process sets its variable in round $i$ upon receiving the multiset $\ho$. 
For this the process finds the first conditional statement in the sequence of instructions for
round $i$ whose condition is satisfied by $\ho-\set{?}$ and looks at the operation in
the statement:
\begin{itemize}
	\item if it is $x:=\min(\ho)$ then $\update_i(\ho)$ is the minimal value in $\ho-\set{?}$;
	\item if it is $x:=\smor(\ho)$ then $\update_i(\ho)$ is the smallest most
	frequent value in $\ho-\set{?}$;
		\item if no condition is satisfied then $\update_i(\ho)=?$.
\end{itemize}

A transition~\label{page:round-transition} $f\lact{\f}_i f'$ is possible when there exists a tuple of multisets
$(\ho_1,\dots,\ho_n)\sat\f$ such that for  all $p=1,\dots,n$: 
$\ho_p\incl \mset(f)$,  and
$f'(p)=\update_i(\ho_p)$.
Observe that $?$ value in $\ho_p$ is ignored by the $\update$ function, but not
by the communication predicate.

Finally, a transition $(f,d)\act{\p} (f',d')$, for $\p=(\f_1,\dots,\f_n)$, is
possible when there  is a sequence
\begin{equation*}
	f_0\lact{\f_1}_1f_1\lact{\f_2}_2\cdots\lact{\f_{r-1}}_{r-1}f_{r-1} \lact{\f_r}_r f_r
\end{equation*}
with:
\begin{itemize}
	\item $f_0=f$;
	\item $f'(p)=f_{\fo}(p)$ if $f_{\fo}(p)\not=?$, and $f'(p)=f(p)$ otherwise;
	\item $d'(p)=d(p)$ if $d(p)\not=?$, and $d'(p)=f_r(p)$ otherwise.
\end{itemize}
This means that $\inp$ is updated with the value from the input updating round
$\fo$, but only if the update is not $?$.
The value of $\dec$ cannot be updated, it can only be set if it has not
been set before.
For setting the value of $\dec$, the value from the last round is used. 

An \emph{execution} is a sequence of phase transitions.
An \emph{execution of an algorithm respecting a communication predicate} 
$(\lG\gp)\land(\lF(\p^1\land\lF(\p^2\land\dots (\lF\p^k)\dots)))$ is an 
infinite sequence:
\begin{equation*}
	(f_0,d_0)\act{\gp}^* (f_1,d_1)\act{\p\land\p^1}(f'_1,d'_1)\cdots \act{\gp}^*(f_k,d_k)\act{\p\land\p^k}(f'_k,d'_k)\act{\gp}^\w\cdots 
	%(f_{k+1},d_{k+1})\act{\p}(f_{k+2},d_{k+2})\act{\p}\dots
\end{equation*}
where $\act{\gp}^*$ stands for a finite sequence of $\act{\gp}$ transitions, and
$\act{\gp}^\w$ for an infinite sequence. 
For every execution there is some fixed $n$ standing for the number of processes, $f_0$ is any
$n$-tuple of values without $?$, and $d_0$ is the $n$-tuple of $?$ values. 
Observe that the size of the first tuple determines the size of every other
tuple. 
There is always a transition from every  configuration, so an execution cannot block. 

\begin{definition}[Consensus problem]\label{def:consensus}
An algorithm has \emph{agreement property} if for every number of processes $n$,
and for every state $(f,d)$ reachable by an execution of the algorithm, for all
processes $p_1$ and $p_2$, either $d(p_1)=d(p_2)$ or one of the two values is
$?$. 
An algorithm  has \emph{termination property} if for every $n$, and for every
execution there is a state $(f,d)$ on this execution with $d(p)\not=?$ for all
$p=1,\dots,n$. 
An algorithm \emph{solves consensus} if it has agreement and termination
properties. 
\end{definition}

\begin{remark}
Normally, the consensus problem also requires irrevocability and integrity
properties, but these are always guaranteed by the semantics: once set, a
process cannot change its $\dec$ value, and a variable can be set only to one of
the values that has been received. 
\end{remark}

\begin{remark}
The original definition of the Heard-Of model is open ended: it does not limit
possible forms of a communication predicate, conditions, or operations. 
Clearly, for the kind of result we present here, we need to fix them.
The original semantics uses process identifiers. 
We do not need them for the set of operations we consider here. 
%We have aimed at a set that is sufficient to describe well-known
%algorithms, cf.\ Section~\ref{sec:examples}.
\end{remark}

\begin{remark}
In the original definition processes are allowed to have identifiers. 
We do not need them for the set of operations we consider. 
Later we will add coordinators without referring to identifiers. 
This is a relatively standard way of avoiding identifiers while having
reasonable expressivity.
\end{remark}

\section{A characterization for the core language}\label{sec:core}

We present a characterization of all the algorithms in our language that solve
consensus. 
In later sections we will extend it to include timestamps and coordinators.
As it will turn out, for our analysis we will need to consider only two values
$a,b$ with a fixed order between them: we take $a$ smaller than $b$. 
This order influences the semantics of instructions: the result of $\min$
is $a$ on a multiset containing at least one $a$; the result of $\smor$ 
is $a$  on a multiset with the same number of $a$'s and $b$'s. 
Because of this asymmetry we mostly focus on the number of $b$'s in a tuple.
In our analysis we will consider tuples of the form $\bias(\th)$ for $\th<1$, i.e., a tuple
where we have $n$ processes (for some large enough $n$), out of which $\th\cdot n$ of them have
their value set to $b$; and the remaining ones to $a$. 
The tuple containing only $b$'s (resp.\ only $a$'s) is called $\solo$
(resp.\ $\soloa$).

We show that there is essentially one way to solve consensus. 
The text of the algorithm together with the form of the global predicate determines a
threshold $\bthr$.
We prove that in the language we consider here, there should be a \emph{unifier phase} which guarantees
that the tuple of $\inp$ values after the phase belongs to one of the following four
types: $\solo$, $\soloa$, $\bias(\th)$, or $\bias(1-\th)$ where $\th \ge \bthr$.
Intuitively, this means that there is a dominant value in the tuple. 
This phase should be followed by a \emph{decider phase} which
guarantees that if the tuple of $\inp$ is of one of the above mentioned types,
then all the processes decide.  While this ensures termination,
agreement is ensured by proving that some simple structural properties on 
the algorithm should always hold. 
In the rest of this section we give some observations and definitions in order to
state the result formally.

% From a state $\bias(\th)$, when $\th<\bthr$ a phase under global predicate can
% change the state to $\bias(\th')$ for arbitrary $\th'$.
%We prove that from $\bias(\th)$ with $\th\geq \bthr$, a phase transition under the
%global predicate cannot decrease the bias, meaning that the state of the $\inp$ variable after
%executing the phase must be  $\bias(\th')$ for $\th'\geq \th$.
%Then we prove that in the fragment of the Heard-Of model we consider here,
%reaching consensus must go through three phases. 
%First, there must be  a weak-unifier phase  guaranteeing
%that the bias after it is at least $\bthr$. 
%This phase should be followed by a finalizer phase guaranteeing that all states
%with bias at least $\bthr$ get to $\solo$. 
%Finally, there should be a decider phase, which ensures that if the state of the $\inp$ variable is $\solo$, then all the $\dec$
%variables get set. 

%The only exception to this schema is that some phases can be merged together: 
%weak-unifier can be also a finalizer, or finalizer can be also a decider, or
%there can be a phase merging all the three. 

Before stating the characterization, we will make some observations that allow
us to simplify the structure of an algorithm, and in consequence simplify the
statements.

It is easy to see that in our languge we can assume that the list of
conditional instructions in each round can have at most one $\uni$ conditional
followed by a sequence of $\mult$ conditionals with non-increasing thresholds:
\begin{align*}
&\iif\ \uni(\ho)\land |\ho|>\thru^i\cdot |\Pi|\ \tthen\ x:=\op_u^i(\ho)\\
&\iif\ \mult(\ho)\land |\ho|>\thrm^{i,1}\cdot |\Pi|\ \tthen\ x:=\op_m^i(\ho)\\
& \vdots\\
&\iif\ \mult(\ho)\land |\ho|>\thrm^{i,k}\cdot |\Pi|\ \tthen\ x:=\op_m^i(\ho)
\end{align*}
We use superscript $i$ to denote the round number: so $\thruo$ is a threshold
associated to $\uni$ instruction in the first round, etc.
If round $i$ does not have a $\uni$ instruction, then $\thruo$ will be $-1$.  
For the sake of brevity, $\thrm^{i,k}$ will always denote the minimal 
threshold appearing in any of the $\mult$ instructions in round $i$ and 
$-1$ if no $\mult$ instructions exist in round $i$.
%We will use the above indexing scheme for threshold constants, but we do not assume that every round must have a $\uni$ or $\mult$ instruction.  

We fix a \emph{communication predicate}:
\begin{equation*}
	(\lG\gp)\land(\lF(\p^1\land\lF(\p^2\land\dots (\lF\p^k)\dots)))
\end{equation*}
%In the above, $\gp$ is called the \emph{global predicate}, and $\{\p^i\}_{1 \le i \le k}$ are the \emph{sporadic predicates}. 
\label{assumptions-com-predicate}Without loss of generality we can assume that every sporadic predicate
implies the global predicate; in consequence,  $\gp\land \p^i$ is equivalent to $\p^i$.
Recall that each of $\gp,\p^1,\dots,\p^k$ is an $r$-tuple of conjunctions of atomic predicates. 
We write $\p\dar_i$ for the $i$-th element of the tuple 
and so $\p$ is $(\p\dar_1,\dots,\p\dar_r)$.
By $\thr_i(\p)$ we denote the threshold constant appearing in 
the predicate $\p\dar_i$, i.e., if $\p\dar_i$ has $\f_{thr}$ as a conjunct,
then $\thr_i(\p) = thr$, if it has no such conjunct then $\thr_i(\p) = -1$ just
to avoid treating this case separately. 
We call $\p\dar_i$ an \emph{equalizer} if it has $\f_=$ as a conjunct.
In this case we also say that $\p$ has an equalizer. 

Recall (cf.~page~\ref{page:round-transition}) that a transition $f\lact{\p}_i
f'$ for a round  $i$ under a phase predicate $\p$ is possible when
there is a tuple of multisets $(\ho_1,\dots,\ho_n)\sat\p\dar_i$ such that for
all $p=1,\dots,n$: $\ho_p\in\mset(f)$ and $f'(p)=\update_i(\ho_p)$.

\begin{definition}
	We write $d\in\fire_i(f,\p)$ if there 
	there is $f'$ such that  $f\lact{\f}_i f'$ and $d=f'(p)$ for some $p$.
\end{definition}

\begin{definition}\label{def:preserving-and-solo-safe}
	 A round $i$ is \emph{preserving} w.r.t.\ $\p$ iff one of the three conditions
	hold: (i) it does not have an $\uni$ instruction, (ii) it does not have a $\mult$
	instruction, or (iii) $\thr_i(\p)< 	\max(\thru^i,\thrm^{i,k})$. 
	Otherwise the round is \emph{non-preserving}. 
	The round is \emph{solo safe} w.r.t.\ $\p$ if 	$0\leq \thru^i\leq \thr_i(\p)$.
\end{definition}

If $i$ is a preserving round, then there exists a tuple $f$ having no $?$ value,
such that we can produce $?$ out of $f$ after round $i$, this
allows us to not update $\inp$ in the phase with such a round, i.e., to preserve
the old values.
If $i$ is a non-preserving round no such tuple exists.
A solo safe round cannot alter the $\solo$ state.
These two properties are stated formally in the next lemma that follows directly
from the definitions.

\begin{lemma}\label{lem:preserving}
	A round $i$ is preserving w.r.t.\ $\p$ iff there is a tuple $f$ 
	such that $? \notin \dom(f)$ and $?\in \fire_i(f,\p)$.
	If a round $i$ is solo safe and $\solo\lact{\p}_i f$ then $f$ is $\solo$. 
\end{lemma}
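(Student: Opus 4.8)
The lemma has two independent halves, so I would prove them separately, and in each case just unfold the relevant definitions.

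For the first half (the preserving characterization), I would prove the two directions of the iff.  For the ``if'' direction, assume there is a tuple $f$ with $? \notin \dom(f)$ and $? \in \fire_i(f,\p)$; I need to show round $i$ is preserving, i.e.\ one of the three conditions (i)--(iii) holds.  I would argue the contrapositive: suppose the round is non-preserving, so it has a $\uni$ instruction, it has a $\mult$ instruction, and $\thr_i(\p)\ge\max(\thru^i,\thrm^{i,k})$.  Take any $f$ with $?\notin\dom(f)$ and any tuple of multisets $(\ho_1,\dots,\ho_n)\sat\p\dar_i$ with $\ho_p\incl\mset(f)$.  Since $\thr_i(\p)\ge 0$, every $\ho_p$ is nonempty (size $>\thr_i(\p)\cdot n\ge 0$), and since $f$ has no $?$, neither does $\ho_p$, so $\ho_p-\set{?}=\ho_p\ne\emptyset$.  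Now $\ho_p$ either contains one value (so it satisfies $\uni$ together with the size bound, because $|\ho_p|>\thr_i(\p)\cdot n\ge\thru^i\cdot n$) or more than one value (so it satisfies $\mult$ with the size bound $|\ho_p|>\thr_i(\p)\cdot n\ge\thrm^{i,k}\cdot n$, using that $\thrm^{i,k}$ is the \emph{minimal} $\mult$-threshold).  Either way some conditional fires, so $\update_i(\ho_p)\ne ?$, hence $?\notin\fire_i(f,\p)$.  For the ``only if'' direction, suppose round $i$ is preserving; I construct the witness $f$.  If (i) holds (no $\uni$ instruction) take $f$ with $\dom(f)=\set{a}$; then every receivable nonempty multiset is $\set{a}^m$, which is $\uni$, matches no conditional, so $\update_i$ returns $?$.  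If (ii) holds (no $\mult$ instruction) take $f$ with $\dom(f)=\set{a,b}$ in equal proportion; a process can receive $\set{a,b}$ (one copy of each — this is a sub-multiset of $\mset(f)$ and any threshold predicate on it can be met by padding, as long as $n$ is large enough and $\thr_i(\p)<1$, which holds since thresholds are $<1$), which satisfies $\mult$ but no conditional, so again $?$.  If (iii) holds, $\thr_i(\p)<\max(\thru^i,\thrm^{i,k})$; take $f=\solo$ if $\thr_i(\p)<\thru^i$ and the relevant $\uni$ instruction has threshold $\thru^i>\thr_i(\p)$ — a process can receive a multiset of $b$'s of size just above $\thr_i(\p)\cdot n$ but not above $\thru^i\cdot n$, which is $\uni$ but fails the size test; similarly for the $\mult$ case with a two-valued $f$.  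One has to be a little careful that the only instructions present are the ones whose thresholds one is undershooting, but the normal form (one $\uni$ instruction, $\mult$ instructions with non-increasing thresholds, $\thrm^{i,k}$ the minimum) makes this bookkeeping routine: if the received multiset has size $\le\thru^i\cdot n$ no $\uni$ conditional fires, and if it is multi-valued of size $\le\thrm^{i,k}\cdot n$ no $\mult$ conditional fires.

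For the second half (solo safety), assume round $i$ is solo safe, so $0\le\thru^i\le\thr_i(\p)$, and suppose $\solo\lact{\p}_i f$.  Fix the witnessing multisets $(\ho_1,\dots,\ho_n)\sat\p\dar_i$ with $\ho_p\incl\mset(\solo)$; since $\mset(\solo)$ contains only $b$, each $\ho_p$ is $\set{b}^{m_p}$ for some $m_p$.  Moreover $|\ho_p|=m_p>\thr_i(\p)\cdot n\ge 0$, so $\ho_p$ is nonempty and contains exactly one value, hence satisfies $\uni$; and $m_p>\thr_i(\p)\cdot n\ge\thru^i\cdot n$, so it also satisfies the size test of the $\uni$ instruction.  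Therefore some conditional fires (at the latest the $\uni$ one, if present; and if no $\uni$ instruction is present the round would be preserving, but that does not matter here — what matters is that whatever instruction fires, it fires on a multiset $\ho_p$ whose only value is $b$).  Any operation $\min$ or $\smor$ applied to $\set{b}^{m_p}$ returns $b$.  Hence $f(p)=\update_i(\ho_p)=b$ for every $p$, so $f=\solo$.  (I should double-check the degenerate possibility that no conditional fires and $\update_i(\ho_p)=?$: this happens only if the round has no $\uni$ instruction, but then $\thru^i=-1<0$, contradicting solo safety $0\le\thru^i$; so a $\uni$ instruction is present and fires, giving $b$.)

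**Main obstacle.** There is no deep difficulty here — the lemma ``follows directly from the definitions'' as the paper says.  The one thing requiring care is the choice of the witness tuple $f$ in the ``only if'' direction of the first half, and in particular checking that the communication predicate $\p\dar_i$ (a conjunction of $\f_=$ and/or $\f_\thr$) can actually be satisfied by the tuple of multisets I want each process to receive.  Since Proviso~1 forbids an equalizer in the global predicate this is not an issue for $\gp$, but a sporadic $\p$ may have $\f_=$; when it does, I need \emph{all} processes to receive the same multiset, which is fine for the $\solo$ witness and for a suitably symmetric two-valued witness, but forces a uniform choice.  The cleanest way to handle this uniformly is: whenever I want ``a process can receive multiset $H$'' I actually have every process receive the same $H$, so the construction works regardless of whether $\p\dar_i$ has an equalizer, as long as $H\incl\mset(f)$ and $|H|>\thr_i(\p)\cdot n$; choosing $n$ large and $H$ of the right relative size and value-content makes both constraints satisfiable simultaneously because all thresholds lie in $[0,1)$.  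With that observation the rest is bookkeeping against the normal form.
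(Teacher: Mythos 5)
Your argument is correct and is exactly the routine unfolding of Definition~\ref{def:preserving-and-solo-safe} and the round-transition semantics that the paper has in mind when it states that the lemma ``follows directly from the definitions'' (no explicit proof is given there). The case analysis over conditions (i)--(iii), the device of letting every process receive the same multiset so that a possible $\f_=$ conjunct is automatic, and the nonemptiness/size bookkeeping in the solo-safe part are all sound and match the intended justification.
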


\begin{remark} 
	Given a global predicate $\gp$ we can remove $\mult$
	instructions that will never be executed because there is an instruction with 
	a bigger threshold that is bound to be executed. 
	To see this, suppose rounds $1,\dots,i-1$ are non-preserving under $\gp$. 
	By Lemma~\ref{lem:preserving}, if $f \lact{\gp\ \dar_1}_1 f_1 \lact{\gp\
	\dar_2}_2 \dots \lact{\gp\ \dar_{i-1}}_{i-1} f_{i-1}$ and $f$ contains no $?$
	then $f_{i-1}$ contains no $?$ as well.
	Hence, no heard-of multi-set $\ho$ constructed from $f_{i-1}$ can have $?$ value.
	Consequently, if round $i$ is such that, say, 
	$\thr_i(\gp) > \thrm^{i,2}$ then we can be sure that 
	only the first two $\mult$ instructions in round $i$ can be executed
	under the predicate $\gp$: a process will always receive at least
	$\thrm^{i,2}$ fraction of values, and as there will be no $?$ value among them 
	the second threshold constraint will be satisfied.
	%This applies only to the first round as only in the first round, the Heard-of set is
	%guaranteed not to have $?$ value. 
	%To see why we can do this, suppose the instructions in the first round  are as above
	%and $\thr_1(\p)>\thrm^{1,2}$. Then only the first two $\mult$ instructions can
	%be executed under the predicate $\p$, because a process will always receive a
	%multiset of size at least $\thrm^{1,2}$.
	%So, given a global predicate $\p$, we can remove from the
	%first round the $\mult$ instructions that will never be executed. 
	This implies that we can adopt the following assumption.
\end{remark}
	
\begin{assumption}
\label{assumption}
For every round $i$, if rounds $1,\dots,i-1$
are non-preserving under $\gp$ then 
\begin{equation}\label{eq:syntactic-property}
	\begin{cases}
		\thru^i\ge\thr_i(\gp) &\qquad \text{if round $i$ has $\uni$ instruction}\\
		\thrm^{i,k}\ge\thr_i(\gp) &\qquad \text{if round $i$ has $\mult$ instruction}
	\end{cases}
\end{equation} 
\end{assumption}
We put some restrictions on the form of algorithms we consider in our
characterization. 
They greatly simplify the statements, and as we argue, are removing cases that
are not that interesting anyway.

\begin{proviso}\label{proviso} We adopt the following additional syntactic restrictions:
	\begin{itemize}
		\item We require that the global predicate does not have an
		equalizer. 
		
		%\item We also assume that $\thru^{\fo+1} \geq \thr_{\fo+1}(\p)$; recall that
		%$\fo$ is a round where $\inp$ variable can be updated. 
		%This assumption is less harmless than the previous one, as there are non-trivial
		%algorithms  violating it.
		%Yet, if an algorithm is correct and does 
		%not satisfy this condition then we can always increase $\thru^{\fo+1}$ to $\thr_{\fo+1}(\p)$
		%and it will still be correct.
		\item We assume that there is no $\mult$ instruction in the round $\fo+1$. 
	\end{itemize}
\end{proviso}
Concerning the first of the above requirements, if the global predicate has an
equalizer then it is quite easy to construct an
algorithm for consensus because  equalizer guarantees that in a given round all
the processes receive the same value.  
The characterization below can be extended to this case but would require to
mention it separately in all the statements.
Concerning the second requirement. We
prove in Lemma~\ref{lem:no-mult-sequence} that if such a $\mult$ instruction
exists then either the
algorithm violates consensus, or  the instruction will never be fired in any
execution of the algorithm and so can be removed without making an algorithm
incorrect.

%We also assume that $\thrur\geq \thr_r(\p)$. 

%Moreover this change will not make the algorithm
%"less efficient" since in round $\fo+1$, it receives at least $\thr_{\fo+1}(\p)$ size
%set anyway.

In order to state our characterization we need to give formal definitions
of concepts we have discussed at the beginning of the section.

\begin{definition}\label{def:border-threshold}
	The \emph{border threshold} is $\bthr=
	\max(1-\thruo,1-\thrmok/2)$.
\end{definition}
Observe that $\bthr>1/2$ as $\thrmok<1$.

\begin{definition}\label{def:predicates}
	A predicate $\p$ is a
	\begin{itemize}
		\item \emph{Decider},  if all rounds are solo safe w.r.t. $\p$
		\item \emph{Unifier}, if the three conditions hold:
		\begin{itemize}
			\item $\thr_1(\p)\geq \thrmok$ and either  $\thr_1(\p)\geq \thruo$ or
			$\thr_1(\p)\geq \bthr$,
			\item there exists $i$ such that $1 \le i \le \fo$ and
			$\p\dar_i$ is an equalizer,
			\item rounds $2,\dots,i$ are
			non-preserving w.r.t.\ $\p$ and rounds $i+1,\dots \fo$ are 	solo-safe w.r.t.\ 
			$\p$
		\end{itemize}
		% \item \emph{$\thr_1(\p)$-unifier}, if $\thr_1(\p)\geq\thrm^{1,k}$, 
		%there is 	$1 \le i \le \fo$ such that $\p\dar_i$ is an equalizer,  rounds $2,\dots,i$ are
		%non-preserving w.r.t.\ $\p$, and rounds $i+1,\dots \fo$ are 	solo-safe w.r.t.\ 
		%$\p$. 
		%\item \emph{Weak unifier}, if it is a $\thr$-unifier for $\thr\geq \bthr$.
		%	\item \emph{Strong unifier}, if it is a $\thr$-unifier for $\thr\ge\thruo$.
		%	\item \emph{Finalizer}, if $\thr_1(\p)\geq\thru^1$, $\thr_1(\p)\geq\thrmok$, and rounds $2,\dots,\fo$ are solo-safe.
	\end{itemize} 
\end{definition}

Finally, we list some syntactic properties of algorithms that, as we will see later,
imply the agreement property.
\begin{definition}\label{def:structure}
	An algorithm is \emph{syntactically safe} when:
	\begin{enumerate}
		\item First round has a $\mult$ instruction.
		\item Every round has a $\uni$ instruction.
		\item In the first round the operation in every $\mult$ instruction is $\smor$.
		\item $\thrmok/2 \geq 1-\thru^{\fo+1}$, and $\thruo \geq 1-\thru^{\fo+1}$.
	\end{enumerate}
\end{definition}

Recall that $\p^1,\dots,\p^k$ are the set of sporadic predicates from the
communication predicate. 
Without loss of generality we can assume that there is at least one sporadic
predicate, at a degenerate case it is always possible to take a sporadic
predicate that is the same as the global predicate. 
With these definitions we can state our characterization:
\eject

\begin{theorem}\label{thm:core}
	Consider algorithms in the core language satisfying syntactic constraints from
	Assumption~\ref{assumption} and Proviso~\ref{proviso}. 
	An algorithm solves consensus iff it is syntactically safe according to
	Definition~\ref{def:structure}, and it satisfies the condition:
	%There is $i\leq j$ such that $\p^i$ is a strong unifier and $\p^j$ is a decider.
	\begin{description}
		\item[T]
		There is $i\leq j$ such that $\p^i$ is a unifier and $\p^j$ is a decider.
		%	\item[T2] There are $i\leq j\leq k$  such that
		%	$\p^i$ is a weak unifier, 
		%	$\p^j$ is finalizer, and $\p^k$ is a decider.
	\end{description}
\end{theorem}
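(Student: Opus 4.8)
The plan is to prove both directions of the characterization separately, and within each direction to split the argument along the two facets of consensus: agreement (safety) and termination (liveness).

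\paragraph{Soundness: structural safety $+$ condition \textbf{T} $\Rightarrow$ consensus.}
For agreement I would prove that, under the four structural properties of Definition~\ref{def:structure}, no reachable state can have two processes deciding different values. The key is a \emph{bias-monotonicity invariant}: I will show that under the global predicate $\gp$ alone, a phase transition from a tuple $f$ cannot "cross the middle", i.e.\ if $\mset(f)$ has a strict majority of $b$'s (more than half, or more precisely at least the border threshold $\bthr$ after a unifier), then the resulting $\inp$-tuple still has a majority of $b$'s, and symmetrically for $a$'s; in particular $\solo$ and $\soloa$ are absorbing in the relevant sense. This uses property~4 ($\thrmok/2 \geq 1-\thru^{\fo+1}$, $\thruo \geq 1-\thru^{\fo+1}$) to control what can be fired in round $\fo+1$, together with Proviso~\ref{proviso} (no $\mult$ in round $\fo+1$, no equalizer in $\gp$) and property~3 ($\smor$ in first-round $\mult$ instructions) to guarantee that the $\min$/$\smor$ asymmetry cannot flip a near-tie. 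Since $\dec$ is only ever set from the last round's value, and is never overwritten, two processes deciding differently would require one phase producing a $b$-decision and another (or the same) producing an $a$-decision, which the invariant forbids; here I would invoke the two-value reduction mentioned in the text, so that it suffices to rule this out for $D=\{a,b\}$.

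For termination I would use condition \textbf{T}: let $\p^i$ be a unifier occurring before (or equal to) the decider $\p^j$. First, using Lemma~\ref{lem:preserving} and the definition of \emph{unifier}, I would show that executing a unifier phase (under $\gp\wedge\p^i$) from \emph{any} reachable tuple forces the $\inp$-tuple afterwards into one of $\{\solo,\soloa,\bias(\th),\bias(1-\th)\}$ with $\th\ge\bthr$: the equalizer at some round $i\le\fo$ makes all processes receive the same multiset at that round (so they all compute the same value), the preceding non-preserving rounds guarantee $?$ cannot appear so the equalized value propagates, the solo-safe rounds $i+1,\dots,\fo$ keep a solo tuple solo, and the first-round threshold condition $\thr_1(\p)\ge\thrmok$, $\thr_1(\p)\ge\max(\thruo,\bthr)$ ensures the bias after round $1$ is already on the correct side of $\bthr$. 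Next, I would show that from such a "dominant-value" tuple the invariant above keeps it dominant through every intermediate $\gp$-phase, so that when the decider $\p^j$ is reached the tuple is still of one of those four types; finally, \emph{decider} means every round is solo-safe, so in the decider phase every process receives a multiset of size $>\thr_i(\p)\ge\thru^i$ with a single dominant value — hence the $\uni$ instruction (which exists by property~2) fires in the last round and sets $\dec$. Combined with the eventual-occurrence semantics of $\lF$ in the communication predicate, every execution reaches a state where all $\dec$ values are defined.

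\paragraph{Completeness: consensus $\Rightarrow$ structural safety and \textbf{T}.}
This is the harder and more delicate direction, and it is where I expect the main obstacle to lie. The strategy is contrapositive: assuming an algorithm violates any structural property, or violates \textbf{T}, I construct an execution (respecting the communication predicate) witnessing a failure of agreement or of termination. For each missing structural property I would exhibit a concrete adversarial run with only two values: e.g.\ if some round lacks a $\uni$ instruction, or if property~4 fails, I build a phase under $\gp$ that takes a near-balanced tuple and produces both an $a$-decision and a $b$-decision in different processes (using the communication medium's freedom to deliver different sub-multisets), breaking agreement. If \textbf{T} fails, there are two subcases: if no sporadic predicate is a unifier, I argue that whatever the sequence of sporadic phases, one can keep the $\inp$-tuple "ambiguous" (not of dominant-value type, or dominant in a way that is not preserved) forever — this requires a pumping/compositionality argument showing that a phase which is not a unifier cannot remove ambiguity against a worst-case incoming tuple, and that non-unifier phases compose badly; if no decider occurs after a unifier, I show that even after the tuple becomes dominant, some non-solo-safe round in each remaining sporadic phase can be driven to destroy the would-be decision, so $\dec$ never gets set — again blocking termination. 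The technical heart, and the step I expect to be most laborious, is the precise adversarial construction for the "no unifier" case: one must simultaneously respect every sporadic predicate $\p^1,\dots,\p^k$ and the global $\gp$ while maintaining an invariant that prevents convergence, which is essentially a game against the scheduler, and getting the threshold arithmetic (around $\bthr=\max(1-\thruo,1-\thrmok/2)$ and Assumption~\ref{assumption}) exactly right so that the adversary's multisets are always realizable. I would handle it by isolating a single "witness tuple family" $\bias(\th)$ parametrized by $\th$ near the critical thresholds and showing, round by round, that the adversary can choose heard-of multisets keeping $\th$ in the forbidden zone unless the phase is genuinely a unifier.
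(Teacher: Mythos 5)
Your overall skeleton is the paper's: the same split into agreement/termination, the same witness family $\bias(\th)$ with the border threshold $\bthr$, a unifier forcing the $\inp$ tuple into one of the four dominant types, a decider finishing from there, and stuttering runs ($\bias(\th)\act{\p}\bias(\th)$ for some $1/2\le\th<\bthr$ when $\p$ is not a unifier, $\solo\act{\p}\solo$ when $\p$ is not a decider, glued together by $\bias(\th)\act{\gp}\bias(\th')$ under the global predicate) for the non-termination direction. So on the sufficiency side and on the failure of condition \textbf{T} you are essentially reconstructing the paper's argument; the part you flag as the technical heart (the ``no unifier'' pumping) is in fact dispatched there by exactly the two lemmas you describe.

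The genuine gap is in how you plan to prove necessity of the structural properties. First, a round without a $\uni$ instruction (or a first round without a $\mult$ instruction) cannot in general be turned into an agreement violation --- such an algorithm may simply never decide; the correct witness is non-termination, via $\solo\act{\gp}\solo$ (resp.\ $\spread\act{\gp}\spread$). Second, and more seriously, your plan to ``build a phase under $\gp$ that takes a near-balanced tuple and produces both an $a$-decision and a $b$-decision'' is impossible in the situations where it is needed: by the proviso, round $\fo+1$ has only $\uni$ instructions, and whenever $\thru^{\fo+1}>1/2$ at most one value can clear that threshold, so at most one value can be decided in any single phase. The violations of property 3 ($\min$ as a first-round $\mult$ operation) and of property 4 (the constants) therefore require a two-phase adversarial construction: in one phase make some processes decide $b$ while deliberately leaving the others undecided \emph{and} steering the $\inp$ tuple to a precise intermediate state such as $(\spread,\spreadq)$ or $(\bias(\th),\spreadq)$ with $\th$ just above $\thru^{\fo+1}$; only in the next phase does the failed inequality ($\thrmok/2<1-\thru^{\fo+1}$ or $\thruo<1-\thru^{\fo+1}$, or the presence of $\min$) make $a$ fireable in the first round, yielding $\soloa$ and an $a$-decision for the remaining processes. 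This cross-phase steering --- controlling simultaneously who decides, who stays undecided, and what the surviving $\inp$ tuple is --- is the real technical core of the necessity proof, and your proposal as written does not contain it. A smaller looseness on the sufficiency side: from $\bias(\th)$ with $\th\ge\bthr$ the decider does not work because each process sees a single-valued multiset (it need not); it works because neither $a$ nor $?$ can be fired in its first round (using $\thrmok\le\thruo\le\thr_1(\p)$ and the bound $\th\ge\bthr$), so the tuple collapses to $\solo$ after round~1 and the solo-safe rounds carry it to a decision; this also forces the case split $\thruo\le\thrmok$ versus $\thrmok<\thruo$ that your sketch glosses over.
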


A two value principle is a corollary from the proof of the above theorem: an algorithm solves
consensus iff it solves consensus for two values. 
Indeed, it turns out that it is enough to work with three values $a,b$, and $?$.
%  standing
% for the undefined.
% Even better, almost always we can consider tuples with only two of the three
% possible values. 
% A tuple over two values can be abstracted by one number $0\leq \th \leq 1$
% representing the fraction of a chosen value in the tuple. 
The proof considers separately safety and liveness aspects of the consensus problem.

\begin{lemma}
	An algorithm violating structural properties from
	Definition~\ref{def:structure} cannot solve consensus.
	An algorithm with the structural properties has the agreement property.
\end{lemma}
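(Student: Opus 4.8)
The plan is to prove the two halves of the lemma separately, treating the "only if" direction (violating the structural properties precludes consensus) by exhibiting concrete counterexample executions, and the "if" direction (the structural properties yield agreement) by a suitable invariant maintained along every execution.

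For the \textbf{first claim}, I would go through the four conditions of Definition~\ref{def:structure} one at a time and, for each, construct a number of processes $n$ and an execution reaching a state $(f,d)$ with $d(p_1)=a$, $d(p_2)=b$ for two processes, thereby violating agreement. The key tool is the freedom the communication medium has: within the constraints of $\gp$ (which, by Proviso~\ref{proviso}, has no equalizer) different processes may receive different sub-multisets of the sent values. Concretely: if the first round has no $\mult$ instruction (condition~1 fails), then starting from a $\bias(\th)$ tuple with both values present, processes that hear a mixed multiset get no update, so one can keep $\inp$ split between $a$ and $b$ indefinitely, then in some later phase let different processes hear homogeneous multisets of different values and set $\dec$ differently. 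If some round lacks a $\uni$ instruction (condition~2 fails), then even from $\solo$ (only $b$'s) a process receiving a unanimous-$b$ multiset has no matching conditional, so $\inp$ can drop to $?$ and be reset to the old value — but more usefully, the last round lacking $\uni$ means $\dec$ can never be set from a unanimous state, contradicting termination; and if an intermediate round lacks $\uni$ one shows solo-safety fails everywhere, again allowing the bias to be destroyed. If in the first round some $\mult$ operation is $\min$ rather than $\smor$ (condition~3 fails), then from a tuple with a strict $b$-majority a process hearing a $b$-dominated but $a$-containing multiset outputs $a$, letting us flip the dominant value while another process, hearing a pure-$b$ multiset via the $\uni$ branch, keeps $b$; iterating, both values survive into a deciding phase. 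Finally, if the threshold inequalities in condition~4 fail, round $\fo+1$ is preserving (by Definition~\ref{def:preserving-and-solo-safe} and the fact that $\fo+1$ has no $\mult$ instruction by Proviso~\ref{proviso}), so from $\solo$ one can produce a $?$ at round $\fo+1$, hence not update $\inp$, i.e.\ "forget" that consensus was reached — and one arranges two copies of this failure with different surviving values to break agreement across phases. The main obstacle here is bookkeeping: each counterexample must respect \emph{every} phase's global predicate \emph{and} all the sporadic predicates $\p^1,\dots,\p^k$ simultaneously, and must reach a genuinely reachable state; I would handle this by always choosing the "free" phases ($\act{\gp}^*$ segments) to do the damage and making the sporadic phases act trivially (e.g.\ as no-ops from a homogeneous tuple, which is always possible since from $\solo$ or $\soloa$ a solo-safe round is a no-op and in general one can let all processes hear everything).

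For the \textbf{second claim}, the plan is to establish an invariant of the form: for every reachable state $(f,d)$, there is a value $v\in\{a,b\}$ such that every process $p$ with $d(p)\neq ?$ has $d(p)=v$, and moreover whenever $\dom(f)$ is \emph{not} already $\{v\}$ it is still the case that $f$ is "$v$-dominant enough" that no round of any phase consistent with $\gp$ can produce the other value as the sole surviving $\dec$-candidate. Concretely I expect the right invariant couples the $\dec$ values already written with a lower bound on the fraction of $v$'s in $f$: once some process has decided $v$, the tuple $f$ has at least $\bthr\cdot n$ copies of $v$. The work is then a one-phase preservation argument: assuming $f$ has $\geq\bthr\cdot n$ copies of $v$, and using syntactic safety (first round is $\mult$-with-$\smor$, so a process hearing a $v$-majority multiset outputs $v$; every round has $\uni$, so from near-unanimous multisets one stays at $v$; the threshold bounds in condition~4 control what happens at the input-updating round $\fo$ and the round $\fo+1$), show that after the phase the updated $\inp$ tuple still has $\geq\bthr\cdot n$ copies of $v$, and that any $\dec$ set in the last round equals $v$. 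Here the inequalities $\thrmok/2\geq 1-\thru^{\fo+1}$ and $\thruo\geq 1-\thru^{\fo+1}$ are exactly what is needed: they say the $b$-count (or $a$-count) that can survive round $\fo+1$ without triggering the $\uni$ branch is small enough that the majority is preserved. This step — showing the dominance threshold $\bthr$ is truly inductive across an arbitrary phase respecting only $\gp$ — is the crux and the hardest part, because a phase is a composition of several round transitions and a clever adversary can try to erode the majority gradually; I would control this by proving a per-round statement ("if round $i$ with threshold $\thr_i(\gp)$ runs from a $v$-dominant tuple, the output is at least as $v$-dominant, up to an additive slack bounded by $\thru^i$ or $\thrm^{i,k}$") and then summing the slack across the phase, using Assumption~\ref{assumption} to bound each slack term by $\thr_i(\gp)$ and the no-$\mult$-in-round-$\fo+1$ proviso to pin down the critical round.

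Once both directions of the preservation/counterexample analysis are in place, agreement follows: any two decided processes decided in states sharing the same dominant $v$ (there is only one $v$ once the first decision is made, by the invariant), so they decided the same value; and an algorithm missing any structural property admits an agreement-violating run by the first claim.
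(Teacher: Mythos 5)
Your plan for the first claim has a genuine flaw: you insist on producing an agreement violation for each of the four conditions of Definition~\ref{def:structure}, but for conditions 1 and 2 no such violation exists in general, and the constructions you sketch cannot work. Consider the OneThird algorithm with both thresholds $2/3$ and the $\mult$ line deleted: it violates condition 1, yet it still satisfies agreement (a decision needs a unanimous multiset above $2/3$, so only one value can be decided in a phase, and after a decision the other value occupies less than a third of $\inp$ and can never again be fired in the first round, which has only a $\uni$ instruction with threshold $2/3$). What fails is \emph{termination}, and that is all the lemma needs, since solving consensus includes termination: from $\spread$ every process can receive a mixed multiset, no instruction fires, and $\soloq$ propagates, giving $\spread\act{\p}\spread$ for every $\p$; likewise a round without $\uni$ gives $\solo\act{\p}\solo$ for every $\p$. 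This is exactly how the paper handles these two conditions, and your attempt to "let different processes hear homogeneous multisets of different values" is blocked precisely by the remaining structural constraints. Your sketch for condition 4 is also off-target: producing a $?$ at round $\fo+1$ from $\solo$ cannot create disagreement, since only one value is present. The actual mechanism (both for condition 4 and, in a variant, for condition 3) is to start from a mixed tuple such as $\spread$, arrange that some processes decide $b$ while the others stay undecided and $\inp$ remains near $\bias(\thru^{\fo+1})$, and then use the violated inequality (or the $\min$ operation) to fire $a$ in the \emph{first round of the next phase}, turning the tuple into $\soloa$ and making the undecided processes decide $a$. Merely having "both values survive into a deciding phase" does not yield two different decisions.

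Your second half is essentially the right argument and matches the paper's: once some process decides $v$, the other value occupies fewer than $(1-\thru^{\fo+1})\cdot n\le\min(\thruo,\thrmok/2)\cdot n$ entries of $\inp$ (this uses that round $\fo+1$ has no $\mult$ and $\thru^{\fo+1}\ge 1/2$, which follow from the proviso and condition 4, and also gives uniqueness of the decided value within that phase). But the preservation step is much simpler than the per-round "slack summing" you propose: below that fraction the minority value cannot be fired in the first round at all, neither by the $\uni$ instruction (fraction below $\thruo$) nor by a $\mult$ instruction with $\smor$ (fraction below $\thrmok/2$); and since every round can only output values present in the multisets it receives, a value absent after round 1 is absent for the rest of the phase. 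Hence the minority count in $\inp$ is non-increasing from the first decision onwards and no later decision can be the minority value; no analysis of gradual erosion across rounds is needed.
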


\begin{lemma}
	An algorithm with the structural properties from Definition~\ref{def:structure}
	has the termination property iff it satisfies condition T from
	Theorem~\ref{thm:core}.
	%one of the conditions T1 or T2 from
	%Theorem~\ref{thm:core}. 
\end{lemma}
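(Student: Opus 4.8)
The plan is to establish both directions of the equivalence around the informal picture from the start of the section. Call a tuple \emph{dominated} if it is $\solo$, $\soloa$, or of the form $\bias(\th)$ or $\bias(1-\th)$ with $\th\ge\bthr$; informally, a dominated tuple has a strict majority value. Throughout the algorithm is assumed syntactically safe (Definition~\ref{def:structure}). The proof rests on three technical lemmas about single phases, which I prove in turn and then combine.

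\emph{Sufficiency} ($\mathbf{T}\Rightarrow$ termination). Fix $n$ and an execution respecting the communication predicate; by $\mathbf{T}$ there are $i\le j$ with $\p^i$ a unifier and $\p^j$ a decider, so the execution contains a $\p^i$-phase followed later by a $\p^j$-phase. I would prove: \textbf{(A) Unifier lemma:} if $\p$ is a unifier, then for every $(f,d)\act{\p}(f',d')$ the tuple $f'$ is dominated. Let $i'\le\fo$ carry the equalizer of $\p$; all processes receive the same multiset $\ho$ in round $i'$, so they compute the same update value $w\in\{a,b,?\}$ and $f_{i'}$ is constantly $w$, after which rounds $i'+1,\dots,\fo$ are solo safe and preserve $\solo$, $\soloa$ and the all-$?$ tuple (Lemma~\ref{lem:preserving} and item~2 of Definition~\ref{def:structure}). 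If $w\ne{?}$ then $f_\fo\in\{\solo,\soloa\}$. The case $w={?}$ can occur only when $i'=1$ with $\ho$ uniform of size in $(\thr_1(\p)\cdot n,\thruo\cdot n]$, possible only in the sub-case $\thr_1(\p)\ge\bthr$ of the first-round clause, and then $\ho\incl\mset(f)$ forces $f$ to have more than $\bthr\cdot n$ copies of the common value, while $\inp$ is left unchanged, so $f'=f$ is dominated. The first-round clause together with items~1 and~3 is exactly what makes this two-value analysis of round~$1$ close and rules out a ``mixed'' tuple at round $\fo$. \textbf{(B) Stability lemma:} if $f$ is dominated then so is $f'$ for every $(f,d)\act{\gp}(f',d')$, hence (each $\p^m$ implies $\gp$) for every phase. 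The point is that from a $b$-dominated tuple, round~$1$ of $\gp$ cannot make a process adopt $a$: a uniform $a$-multiset would need size $>\thruo\cdot n\ge(1-\bthr)\cdot n$, and a $\mult$-multiset with $\smor$-value $a$ would need more than $\thrmok\cdot n/2\ge(1-\bthr)\cdot n$ copies of $a$ --- here Assumption~\ref{assumption} (for round~$1$), item~3, and Definition~\ref{def:border-threshold} are used. So after round~$1$ every value is $b$ or $?$, this propagates through rounds $2,\dots,\fo$, and since $\inp$ is only updated to non-$?$ values the number of $a$'s does not grow; the $a$-dominated case is symmetric (the $a$-bias of $\smor$, $\min$ only helps). \textbf{(C) Decider lemma:} if $\p$ is a decider and $f$ is dominated, then $d'(p)\ne{?}$ for all $p$ in $(f,d)\act{\p}(f',d')$. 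Solo safety of all rounds, together with Lemma~\ref{lem:preserving}, Lemma~(B) and item~4 of Definition~\ref{def:structure}, keeps a strict majority value alive through round $r-1$ while controlling the proliferation of $?$, so in the last round every process receives more than $\thru^r\cdot n$ copies of that value, fires the $\uni$ instruction (item~2), and sets $\dec$. Combining: before $\p^i$ the tuple is arbitrary; after $\p^i$ it is dominated by (A); it stays dominated up to and through $\p^j$ by (B); and every process decides during $\p^j$ by (C).

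\emph{Necessity} (termination $\Rightarrow\mathbf{T}$). I prove the contrapositive: if $\mathbf{T}$ fails there is an $n$, an execution, and a process that never decides. If no sporadic predicate is a decider, every sporadic predicate and $\gp$ has a round that is not solo safe; I keep $\inp$ at $\bias(1/2)$ throughout, using that non-solo-safe round of each phase to reproduce $\bias(1/2)$ (or to flood the round with $?$ so that $\inp$ is not updated), and choosing the multiset delivered to a fixed process $p$ in the last round so that $p$ fires no $\decide$ instruction --- possible because $\bias(1/2)$ admits no uniform majority multiset of size $>\thru^r\cdot n$, and $\mult$-decide instructions are dodged with the same flooding trick. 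If some predicate is a decider but every decider-predicate is preceded in the sporadic order only by non-unifier predicates, I keep $\inp$ undominated through all phases up to and including the last decider: negating the three clauses of the unifier definition gives the adversary, respectively, a round-$1$ schedule destroying a would-be majority, a round-$\fo$ configuration on which processes disagree, or a solo-safe-needed round that erodes the majority; after the last decider no decider remains and the first sub-case applies. In both cases $p$ never decides.

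\emph{Main obstacle.} The crux is the pair of Lemmas (A) and (C) and their converses in the necessity direction: one must pin down exactly, round by round and as a function of the $\uni$/$\mult$ thresholds, which tuples $\bias(\th)$ can evolve into, while tracking the propagation of $?$ through rounds $1,\dots,\fo$ and through the decider's rounds, and coping with the $a$-versus-$b$ asymmetry of $\smor$ and $\min$. The border threshold $\bthr$ and the conditions in Definition~\ref{def:structure}, Assumption~\ref{assumption} and Proviso~\ref{proviso} are tuned precisely so that the relevant inequalities match in both directions; verifying this matching --- and that the excluded corner cases (global equalizer, $\mult$ in round $\fo+1$) really cannot arise --- is where the real work lies.
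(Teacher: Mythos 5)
Your architecture (unifier gives a ``dominated'' tuple, dominance is stable, decider finishes) mirrors the paper's, but two of your key lemmas are not correct as stated. The blanket Decider Lemma (C) --- ``$\p$ a decider and $f$ dominated implies every process decides'' --- is false. Take $\thruo=0.4$, $\thrmok=0.8$ (so $\bthr=0.6$), a decider $\p$ with $\thr_1(\p)=0.5$, and $\inp=\bias(0.7)$: the tuple is dominated and round $1$ is solo safe, yet a process can receive a two-valued multiset of size $0.55\,n$, which fires neither the $\uni$ instruction (two values) nor any $\mult$ instruction (size below $\thrmok$); it gets $?$, these $?$'s survive the later solo-safe rounds (solo safety says nothing about $\mult$ thresholds), and $?$-heavy multisets in the last round leave processes undecided. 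The paper's Lemma~\ref{lem:decider} carries exactly the missing side condition $\thrmok\leq\thruo$, and Lemma~\ref{main-positive} makes the corresponding case split: when $\thruo\leq\thrmok$ the unifier already outputs $\solo$ or $\soloa$ (Lemma~\ref{lem:unifier}), so a merely-dominated tuple is never handed to the decider; when $\thrmok<\thruo$, solo safety of the decider's first round forces $\thr_1(\p)\geq\thruo>\thrmok$ and the $?$ above cannot arise. Your phrase ``controlling the proliferation of $?$'' is precisely where this distinction is needed and is not supplied. (A smaller flaw: in (A), $w={?}$ does not force the equalizer into round $1$; when $\bthr\leq\thr_1(\p)<\thruo$, $?$'s created in round $1$ from a dominated tuple propagate through the non-preserving rounds --- non-preservation only controls $?$-free inputs, Lemma~\ref{lem:preserving} --- and can make the equalized value $?$ later; the conclusion of (A) survives, but via Corollary~\ref{cor:no-a-above-bthr}, not via your case analysis.)

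The necessity direction has a more serious gap: you maintain $\bias(1/2)$ (or merely ``undominated'' tuples) where the construction must maintain $\solo$. Negation of T allows a sporadic predicate that is a unifier but not a decider after the last decider (or when there is no decider at all), and such a predicate can have its only non-solo-safe round be round $1$, via the branch $\thr_1(\p)\geq\bthr>1/2$ with $\thruo>\thr_1(\p)$. From $\bias(1/2)$ every admissible round-$1$ multiset then has size $>\thr_1(\p)\cdot n>n/2$, hence contains both values and exceeds $\thrmok$, so $\mult$ fires for everybody; no $?$ appears up to the equalizer, the equalizer unifies, rounds up to $\fo$ and all rounds after $\fo$ are solo safe, and every process decides --- your adversary has no move. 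From $\solo$ the adversary instead delivers a uniform multiset of size in $(\thr_1(\p),\thruo]$ in that round, floods $?$, and nobody decides; this is exactly Lemma~\ref{lem:not-decider}, and the paper's non-termination proof keeps $\solo$ after the last decider (and in the no-decider case), maintaining $\bias(\th_i)$ with $1/2\leq\th_i<\bthr$ only up to the last decider via Lemma~\ref{lem:not-uni}, and bridging the per-predicate biases across the intermediate $\gp$-phases with Lemma~\ref{lem:global-th} --- a step (different $\th_i$ for different sporadic predicates, adjusted under $\gp$ without any decision) your sketch does not address. Finally, your claim that ``$\bias(1/2)$ admits no uniform majority multiset of size $>\thru^r\cdot n$'' presupposes $\thru^r>1/2$, which Definition~\ref{def:structure} guarantees only for round $\fo+1$, not for a later last round.
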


\section{A characterization for algorithms with timestamps}

We extend our characterization to algorithms with timestamps. 
Now, variable $\inp$ stores not only the value but also a timestamp, that is the number
of the last phase at which $\inp$  was updated.
These timestamps are used in the first round, as a process considers only
values with the most recent timestamp. 
The syntax is the same as before except that we introduce a new operation, called
$\maxts$, that must be used in the first round and nowhere else. 
So the form of the first round becomes\label{shape-for-timestamps}:

\RestyleAlgo{plain}
\begin{algorithm}[H]
		\Send{$(\inp,ts)$}{
		\lIf{$\cond_1^1(\ho)$}{$x_1:=\maxts(\ho)$}
		\vdots
		\lIf{$\cond_1^l(\ho)$}{$x_1:=\maxts(\ho)$}
		}
\end{algorithm}

The semantics of transitions for rounds and phases needs to take into account timestamps.  
The semantics changes only for the first round; its form becomes
$(f,t)\lact{\f}f'$, where $t$ is a vector of timestamps ($n$-tuple of natural numbers).
Timestamps are ignored by communication predicates and conditions, but are used
in the update operation. 
The operation $\maxts(\ho)$ returns the smallest  among values with the most
recent timestamp in $\ho$.

The form of a phase transition changes to  $(f,t,d)\act{\p} (f',t',d')$.
Value $t(p)$ is the timestamp of the last update of $\inp$ of process $p$ (whose
value is $f(p)$). 
We do not need to keep timestamps for $d$ since the value of $\dec$ can be set
only once.
Phase transitions are defined as before, taking into account the above mentioned change for
the first round transition, and the fact that in the round $\fo$ when $\inp$ is
updated then so is its timestamp. 
Some examples of algorithms with timestamps are presented in
Section~\ref{sec:examples}. 

As in the case of the core language, without loss of generality we can assume
conditions from Assumption~\ref{assumption} from the assumption on
page~\pageref{assumption}.
Concerning Proviso~\ref{proviso} on
page~\pageref{proviso}, we assume almost the same conditions, but now the
second one refers to the round $\fo$ and not to the round $\fo+1$, and is a bit stronger.

\begin{proviso}\label{ts-proviso}
	 We adopt the following syntactic restrictions:
		\begin{itemize}
			\item We require that the global predicate does not have an
			equalizer. 
			% If the global predicate has an equalizer then it is quite easy to construct an
			% algorithm for consensus because  equalizer guarantees that in a given round all
			% the processes receive the same value. 
			% The characterization below can be extended to this case but would require to
			% mention it separately in all the statements.
			%\item We also assume that $\thru^{\fo+1} \geq \thr_{\fo+1}(\p)$; recall that
			%$\fo$ is a round where $\inp$ variable can be updated. 
			%This assumption is less harmless than the previous one, as there are non-trivial
			%algorithms  violating it.
			%Yet, if an algorithm is correct and does 
			%not satisfy this condition then we can always increase $\thru^{\fo+1}$ to $\thr_{\fo+1}(\p)$
			%and it will still be correct.
			\item We assume that there is no $\mult$ instruction in the round $\fo$,
			and that $\thru^{\fo} \ge 1/2$.
		\end{itemize}	 
\end{proviso}
We prove (Lemma~\ref{lem:ts-min}) that if these two assumptions do not
hold then either the 	algorithm violates consensus, or  
we can remove the $\mult$ instruction and increase $\thru^\fo$ without
making an algorithm incorrect.

Our characterization resembles the one for the core language. 
The structural conditions get slightly modified: the condition on constants
is weakened, and there is no need to talk about $\smor$ operations in the fist round.
\begin{definition}\label{def:ts-structure}
	An algorithm is \emph{syntactically t-safe} when:
	\begin{enumerate}
		\item Every round has a $\uni$ instruction.
		\item First round has a $\mult$ instruction.
		\item $\thrmok\geq 1-\thru^{\fo+1}$ and $\thruo \geq 1-\thru^{\fo+1}$.
	\end{enumerate}
\end{definition}

We consider the same shape of a communication predicate as in the case of the
core language:
\begin{equation*}
	(\lG\gp)\land(\lF(\p^1\land\lF(\p^2\land\dots (\lF\p^k)\dots)))
\end{equation*}
We also adopt the same straightforward simplifying assumptions about the predicate as on page~\pageref{assumptions-com-predicate}.

A characterization for the case with timestamps uses a stronger version of a
unifier that we define now. 
The intuition is that we do not have $\bthr$ constant because of $\maxts$
operations in the first round. 
In other words, the conditions are the same as  before but when taking $\bthr>1$. 
\begin{definition}\label{def:strong-unifier}
	A predicate $\p$ is a \emph{strong unifier} $\p$ if it is a unifier in a sense
	of Definition~\ref{def:predicates} and 	$\thruo\leq\thr_1(\p)$.
\end{definition}
Modulo the above two changes, the characterization stays the same.

\begin{theorem}\label{thm:ts}
	Consider algorithms in the language with timestamps satisfying syntactic constraints from
	Assumption~\ref{assumption} and Proviso~\ref{ts-proviso}. 
	An algorithm satisfies consensus iff it is
	syntactically t-safe according to the structural properties from
	Definition~\ref{def:ts-structure}, and it satisfies:
		\begin{description}
			\item[sT]
			There are $i\leq j$ such that $\p^i$ is a strong unifier and $\p^j$ is a decider.
		\end{description}
\end{theorem}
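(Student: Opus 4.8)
The plan is to mirror the structure of the proof of Theorem~\ref{thm:core}, separating the safety (agreement) and liveness (termination) arguments, and to highlight only the points where timestamps force a change. For agreement, I would first establish the analogue of the structural lemma: an algorithm that is not syntactically t-safe (Definition~\ref{def:ts-structure}) cannot solve consensus, and one that is t-safe has the agreement property. The ``cannot solve'' direction proceeds by exhibiting, for each violated condition, an execution reaching a state where two processes hold different $\dec$ values: if some round has no $\uni$ instruction one builds a run where $\solo$ is destroyed; if the first round has no $\mult$ instruction one uses the $\bias(\th)$/$\bias(1-\th)$ symmetry to split decisions; and if the threshold inequality $\thrmok\ge 1-\thru^{\fo+1}$ or $\thruo\ge 1-\thru^{\fo+1}$ fails, one produces a phase in which a $\bias$-tuple with dominant value near the border is driven to decide inconsistently. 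The positive direction — t-safe implies agreement — is the two-value-principle argument: reduce to $D=\{a,b\}$, then show inductively over phases that once a process has decided value $v$, the tuple of $\inp$ values forever keeps $v$ as a strict majority (in fact keeps bias at least $\bthr$ away from $1/2$ on the $v$-side), so no later phase can let another process decide $\bar v$. Here the role of $\maxts$ is actually helpful: in the first round $\maxts$ only ever propagates values carrying the most recent timestamp, so once a decision is taken in phase $m$ every subsequent phase's first round works with values timestamped $\ge$ something that keeps the decided value dominant; this is precisely why condition~3 of Definition~\ref{def:ts-structure} can be weakened relative to Definition~\ref{def:structure} (no $/2$ on $\thrmok$) and why no $\smor$ requirement is needed.

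For liveness I would prove the two implications of condition \textbf{sT} separately. For the ``only if'' direction, assume the algorithm solves consensus; I must locate indices $i\le j$ with $\p^i$ a strong unifier and $\p^j$ a decider. Following the core-language proof, one argues by contradiction: if no sporadic predicate is a decider, one constructs a run respecting the communication predicate in which some process never decides, by keeping the $\inp$-tuple at a $\bias$ value that no non-solo-safe phase can push to $\solo$ while also preventing the last round from firing; if no sporadic predicate (at or before the last decider) is a strong unifier, one produces two runs that, after the common prefix through all the would-be unifiers, can be completed to decide $a$ in one and $b$ in the other, contradicting agreement. The key adaptation is that ``unifier'' must be strengthened to ``strong unifier'' ($\thruo\le\thr_1(\p)$): with $\maxts$ in the first round, a phase that is only a weak unifier ($\thr_1(\p)\ge\bthr$ but $<\thruo$) no longer suffices, because $\maxts$ does not do the majority-vote that $\smor$ did, so one needs the $\uni$ instruction of the first round to actually fire — which requires $\thr_1(\p)\ge\thruo$. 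For the ``if'' direction, given a strong unifier $\p^i$ followed by a decider $\p^j$, I show every execution respecting the communication predicate terminates: after the phase realizing $\p^i$, the strong-unifier conditions (the first-round $\uni$ fires because $\thr_1(\p^i)\ge\thruo$; then rounds up to the equalizer are non-preserving so $\inp$ is not reset; the equalizer forces a common heard-of multiset; rounds after it are solo-safe) guarantee the $\inp$-tuple becomes $\solo$ or $\soloa$; Proviso~\ref{ts-proviso} (no $\mult$ in round $\fo$, $\thru^\fo\ge 1/2$) is what makes this collapse to an actual $\solo$/$\soloa$ rather than a mere large bias, which is the reason the proviso now speaks of round $\fo$ instead of $\fo+1$. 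Then the decider phase $\p^j$, all of whose rounds are solo safe, carries $\solo$ (or $\soloa$) unchanged through to the last round and fires the $\dec$ assignment, so every process decides.

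I would reuse Lemma~\ref{lem:preserving} and Assumption~\ref{assumption} exactly as in the core case to justify the normal form and the removal of dominated $\mult$ instructions, and cite Lemma~\ref{lem:ts-min} to discharge Proviso~\ref{ts-proviso}. The main obstacle I anticipate is the agreement direction with $\maxts$: because $\maxts$ is genuinely different from $\smor$, the invariant ``decided value stays dominant'' must be re-proved with care, tracking timestamps through phases and showing that a stale $b$-value (say) carrying an old timestamp can never outvote the freshly-timestamped decided value in the first round; getting the bookkeeping right — especially across a phase in which $\inp$ is updated in round $\fo$, so timestamps advance — and confirming that the weakened constant condition $\thrmok\ge 1-\thru^{\fo+1}$ (rather than $\thrmok/2\ge\cdots$) is exactly what the $\maxts$ semantics buys, is where the real work lies. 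The liveness side should then follow the core-language template closely, with ``strong unifier'' substituted for ``unifier'' and the proviso-driven collapse argument re-indexed to round $\fo$.
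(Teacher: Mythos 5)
Your overall decomposition (structural conditions giving agreement, then termination under sT, then non-termination when sT fails) matches the paper, and your reading of the timestamp-specific points is largely right: the agreement argument does hinge on the fact that any sufficiently large heard-of set must contain a freshly timestamped value, which is exactly why condition~3 of Definition~\ref{def:ts-structure} drops the factor $1/2$. But there is a genuine gap in your liveness ``only if'' direction. For the case where some $\p^j$ is a decider but no $\p^i$ with $i\leq j$ is a strong unifier, you propose to derive a contradiction with agreement by exhibiting two runs, one completed to decide $a$ and one to decide $b$. This cannot work, for two reasons. First, two \emph{different} executions deciding different values is not an agreement violation at all; agreement constrains decisions within a single execution. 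Second, and more fundamentally, once the algorithm is syntactically t-safe, agreement holds unconditionally (your own first part, the analogue of Lemma~\ref{lem:ts-structure}, proves it without any reference to sT), so no agreement violation can be produced for such algorithms; the only property that can fail when sT fails is termination. The paper's argument is precisely a non-termination construction: Lemma~\ref{lem:ts-bias-fire} shows that with $\maxts$ both $a$ and $b$ are firable from a heavily biased tuple under \emph{any} predicate (the border-threshold monotonicity of the core case, Corollary~\ref{cor:no-a-above-bthr}, is lost), which yields Lemma~\ref{lem:ts-not-str-uni}: every phase whose predicate is not a strong unifier admits a transition $(\bias(\theta),i)\act{\p}(\bias(\theta),j)$ with no decisions; combining this with Lemma~\ref{lem:ts-not-decider} for the phases after the first strong unifier gives one infinite execution in which no process ever decides. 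Your plan is missing this construction, and the agreement-violation route you substitute for it would fail.

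Two secondary inaccuracies: the first two conditions of Definition~\ref{def:ts-structure} (a $\uni$ instruction in every round, a $\mult$ instruction in the first round) are refuted via non-termination, e.g.\ $(\spread,0)\act{\p}(\spread,0)$ and $(\soloa,i)\act{\p}(\soloa,\cdot)$, not by reaching a state with two different $\dec$ values as you claim; only the constant condition is refuted through an agreement violation. Also, in the termination direction the collapse to $\solo$ or $\soloa$ after a strong unifier comes from the strong unifier itself ($\thruo,\thrmok\leq\thr_1(\p)$ prevent $?$ in the first round, then the equalizer plus non-preserving and solo-safe rounds do the rest, as in Lemma~\ref{lem:strong-unifier}); Proviso~\ref{ts-proviso} about round $\fo$ is not what drives that lemma---its role (via Lemma~\ref{lem:ts-min}) is in the agreement argument, where it guarantees that in the deciding phase every process either adopts the decided value or leaves $\inp$ untouched.
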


%%%%%%%%%%%%%%%%%%%%%%%%%%%%%%%%%%%%%%%%%%%%%%%%%%%%%%%%%%%%%%%%%%%%%%%%%
%%%%%%%%%%%%%%%%%%%%%%%%%%%%%%%%%%%%%%%%%%%%%%%%%%%%%%%%%%%%%%%%%%%%%%%%%
%%%%%%%%%%%%%%%%%%%%%%%%%%%%%%%%%%%%%%%%%%%%%%%%%%%%%%%%%%%%%%%%%%%%%%%%%
 
\section{A characterization for algorithms with coordinators}

We consider algorithms equipped with coordinators. 
The novelty is that we can now have rounds where there is a unique process that
receives values from other processes, as well as rounds where there is a unique
process that sends values to other processes. 
For this we extend the syntax by introducing a round type that can be:  $\every$,
$\lr$ (leader  receive),  or $\ls$ (leader-send):
\begin{itemize}
\item A round of type $\every$ behaves as before.
\item In a round of type $\lr$ only one arbitrarily selected process receives values.
\item In a round of type $\ls$, the process selected in the immediately preceding $\lr$ round
sends its value to all other processes. 
\end{itemize}
If an $\ls$ round is not preceded by an $\lr$ round then an arbitrarily chosen process sends its value. 
We assume that every $\lr$ round is immediately followed by an $\ls$ round, because
otherwise the $\lr$ round would be useless. 
We also assume that $\inp$ and $\dec$ are not updated during $\lr$ rounds, as
only one process is active in these rounds.

For $\ls$ rounds we introduce a new communication predicate. 
The predicate $\f_{\ls}$ says that the leader
successfully sends its message to everybody; it  makes sense only for $\ls$
rounds.

These extensions of the syntax are reflected in the semantics. 
% In an $\lr$ round only one process receives some value, in other words the others
% receive undefined value. 
% In an $\ls$ round the behavior depends if $\f_{\ls}$ holds. 
% If so then every process receives the value from the leader, if the previous
% round was of type $\lr$; if not then every process receives a value from some
% arbitrary chose process. 
% If $\f_{\ls}$ does not hold then some arbitrary process is chosen and its value
% is transmitted to some processes, while the others get $?$.
% This models a failure situation when maybe a different process becomes a leader
% and manages to send its value only to some fraction of the processes.
% We formalize this semantics below.
For convenience we introduce two new names for tuples: $\oneb$ is a tuple where all the
entries are $?$ except for one  entry which is $b$; similarly for $\onea$. 
Abusing the notation we also write $\oneq$ for $\soloq$, namely the tuple
consisting only of $?$ values.

If $i$-th round is of type $\lr$, we have a transition $f\lact{\p}_i \oned$
for every $d\in\fire_i(f,\p)$.
In particular, if $?\in\fire_i(f,\p)$ then $f\lact{\f}_i\soloq$
is possible. 

Suppose $i$-th round is of type $\ls$. 
If $\p\dar_i$ contains $\f_{\ls}$ as a conjunct then
\begin{align*}
	\oned\lact{\p}_i&\solo^d && \text{if round $(i-1)$ is of type $\lr$}	\\
	f\lact{\p}_i&\solo^d\ \text{for $d\in \dom(f)$}&& \text{otherwise}
\end{align*}
When $\p\dar_i$ does not contain $\f_{\ls}$ then independently of the type of
the round $(i-1)$ we have $f\lact{\p}_i f'$ for every $d\in\dom(f)$ and $f'$
such that $\dom(f')\incl \set{d,?}$.

We consider the same shape of a communication predicate as in the case of the
core language:
\begin{equation*}
	(\lG\gp)\land(\lF(\p^1\land\lF(\p^2\land\dots (\lF\p^k)\dots)))
\end{equation*}
We also adopt the same straightforward simplifying assumptions about the
predicate as on page~\pageref{assumptions-com-predicate}.

The semantics allows us to adopt some more simplifying assumptions about the syntax
of the algorithm, and the form of the communication predicate.

\begin{assumption}\label{assumption-ls-lr}
We assume that $\ls$ rounds do not have a $\mult$ instruction.
Indeed, from the above semantics it follows that $\mult$ instruction is never used in
a round of type $\ls$.
It also does not make much sense to use $\f_{\ls}$ in rounds other than of type
$\ls$. 
So to shorten some definitions we require that $\f_{\ls}$ can appear only  in
communication predicates for $\ls$-rounds. 
For similar reasons we require that $\f_=$ predicate is not used in $\ls$-rounds.
As we have observed in the first paragraph, we can assume that neither round
$\fo$ nor the last round are of type $\lr$.
\end{assumption}

The notions of preserving and solo-safe rounds get
extended to incorporate the new syntax

\begin{definition}
	A round of type $\ls$ is \emph{c-solo-safe} w.r.t.\ $\p$  if $\p_i$ has
	$\f_{\ls}$ as a conjunct, it is \emph{c-preserving} otherwise. 
	A round of type other than $\ls$ is \emph{c-preserving} or \emph{c-solo-safe}
	w.r.t\ $\p$ if it is so  in the sense of
	Definition~\ref{def:preserving-and-solo-safe}.  
\end{definition}

\begin{definition}
	A \emph{c-equalizer} is a conjunction containing a term of the form $\f_=$ or $\f_{\ls}$.	
\end{definition}

\begin{proviso}\label{c-proviso}
	We assume the same proviso as on page~\pageref{proviso}, but using the concepts
of c-equalizers instead of equalizers. 
\end{proviso}
To justify the proviso we prove that $\mult$ instruction in round $\fop$ cannot
be useful; cf.\ Lemma~\ref{lem:co-no-mult-fo}.

Assumption on page~\pageref{assumption} is also updated to using the notion of
c-preserving instead of preserving. We restate it for convenience.

\begin{assumption}\label{co-assumption}
For every round $i$, if rounds $1,\dots,i-1$
are non-c-preserving under $\gp$ then 
\begin{equation}\label{eq:c-syntactic-property}
	\begin{cases}
		\thru^i\ge\thr_i(\gp) &\qquad \text{if round $i$ has $\uni$ instruction}\\
		\thrm^{i,k}\ge\thr_i(\gp) &\qquad \text{if round $i$ has $\mult$ instruction}
	\end{cases}
\end{equation} 
\end{assumption}

Finally, the above modifications imply modifications of terms from Definition~\ref{def:predicates}.

\begin{definition}\label{def:c-predicates}
	A predicate $\p$ is called a
	\begin{itemize}
		\item \emph{c-decider}, if all rounds are c-solo safe w.r.t.\ $\p$.
		\item \emph{c-unifier}, if 
		\begin{itemize}
			\item $\thr_1(\p)\geq \thrmok$ and either  $\thr_1(\p)\geq \thruo$ or
			$\thr_1(\p)\geq \bthr$,
			\item there exists $i$ such that $1 \le i \le \fo$ and
			$\p\dar_i$ is an c-equalizer,
			\item rounds $2,\dots,i$ are
			non-c-preserving w.r.t.\ $\p$ and rounds $i+1,\dots \fo$ are 	c-solo-safe
			w.r.t.\  $\p$.
		\end{itemize}
	\end{itemize}		
\end{definition}

With these modifications, we get an analog of
Theorem~\ref{thm:core} for the case with coordinators subject to the modified 
provisos as explained above.

\begin{theorem}\label{thm:coordinators}
	Consider algorithms in the language with timestamps satisfying syntactic constraints from
	Assumptions~\ref{assumption-ls-lr}, \ref{co-assumption} and Proviso~\ref{c-proviso}. 
	An algorithm satisfies consensus iff the first round and
	the $(\fo+1)^{th}$ round are not of type $\ls$, it is syntactically safe according
	to Definition~\ref{def:structure}, and it satisfies the condition:
	\begin{description}
		\item[cT]
		There are $i\leq j$ such that $\p^i$ is a c-unifier and $\p^j$ is a c-decider.
	\end{description}
\end{theorem}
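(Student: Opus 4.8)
My plan is to mirror the proof strategy of Theorem~\ref{thm:core}, treating safety and liveness separately, and to isolate exactly the places where the coordinator rounds change the picture. For the safety (agreement) direction I would first show that an algorithm violating Definition~\ref{def:structure} cannot solve consensus: this argument is value-based and almost entirely insensitive to round types, since $\ls$ and $\lr$ rounds only ever restrict the set of reachable tuples (a transition either keeps a dominant value or collapses to a one-value/$?$-tuple), so the bad executions constructed in the core case still exist here. Conversely, to show that a syntactically safe algorithm has agreement, I would redo the invariant argument tracking the number of $b$'s across a phase, using Lemma~\ref{lem:preserving} and its c-analog: the key new fact is that a c-solo-safe $\ls$ round (one with $\f_{\ls}$) sends a single value to everybody, so it can only move the tuple towards $\solo$ or $\soloa$, never create a fresh minority value; condition~4 of Definition~\ref{def:structure}, relating $\thrmok$, $\thruo$ and $\thru^{\fo+1}$, is used exactly as before to rule out the two halves of the tuple ``crossing over'' in a single phase. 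The extra hypothesis that the first round and round $\fo+1$ are not of type $\ls$ is what guarantees the relevant thresholds are meaningful in those rounds (an $\ls$ round ignores $\thru$ and $\thrm$ entirely), so I would fold that into the safety argument from the start.

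For the liveness (termination) direction, the ``if'' part proceeds by following a c-unifier phase $\p^i$ with a c-decider phase $\p^j$ as in the core case. I would argue in two steps. First, from an arbitrary reachable tuple $f$, the c-unifier phase $\p^i$ forces the $\inp$-tuple afterwards into one of the four shapes $\solo,\soloa,\bias(\th),\bias(1-\th)$ with $\th\ge\bthr$: the first bullet of Definition~\ref{def:c-predicates} controls the first round exactly as before; the middle bullet gives a c-equalizer $\p\dar_i$ with $1\le i\le\fo$, which now may be either $\f_=$ (all processes hear the same multiset) or $\f_{\ls}$ (the leader broadcasts its single value) — in both cases the tuple after that round has at most one non-$?$ value up to the rounds $2,\dots,i$ being non-c-preserving (so $?$ cannot appear and kill the input update), and rounds $i+1,\dots,\fo$ being c-solo-safe keeps the result pinned. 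Second, the c-decider phase $\p^j$, all of whose rounds are c-solo-safe, preserves $\solo$/$\soloa$ and, by condition~4 of syntactic safety together with the threshold in the last round, forces every $\dec$ to be set when the $\inp$-tuple is already one-valued; the intermediate $\bias(\th)$ shapes are handled by first observing the global predicate $\gp$ (whose phases are repeated infinitely often between the sporadic ones) cannot increase the minority, combined with an analysis showing some $\gp$-phase or the decider phase itself collapses the bias. The ``only if'' part is the contrapositive: if no such $i\le j$ exists, I construct an infinite fair execution in which $\dec$ is never set everywhere — using non-c-preserving rounds to repeatedly reset $\inp$ to $?$ and non-c-solo-safe rounds (in particular c-preserving $\ls$ rounds, which may drop the leader's message) to keep a minority value alive, so that no decider round ever sees a solo tuple.

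The main obstacle I anticipate is the interaction of the two coordinator round types with the ``fork point'' $\fo$ and with the non-c-preserving/c-solo-safe alternation in the definition of a c-unifier. Concretely: a c-preserving $\ls$ round (no $\f_{\ls}$) can turn any tuple into an arbitrary tuple over $\{d,?\}$, which is a much more violent form of non-preservation than in the core language, and I must check that the c-unifier conditions genuinely exclude this on rounds $2,\dots,\fo$ in the right way, and that Lemma~\ref{lem:co-no-mult-fo} (justifying Proviso~\ref{c-proviso}) really does let me assume away $\mult$ instructions in round $\fo+1$ without losing algorithms. A secondary subtlety is bookkeeping around $\lr$ rounds: since $\inp$ and $\dec$ are never updated there and every $\lr$ is immediately followed by an $\ls$, I would treat each $\lr$/$\ls$ pair as a single compound step when tracking the tuple's shape, which should make the invariants line up with the core proof, but the reindexing of round numbers needs to be done carefully so that Assumption~\ref{co-assumption} is applied to the correct prefix of rounds.
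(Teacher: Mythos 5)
Your overall decomposition (necessity of the structural properties, agreement from syntactic safety, termination from a c-unifier followed by a c-decider, and an explicit non-terminating execution otherwise) is the same as the paper's, and your idea of adapting the core-language lemmas with extra case analysis on round types is how the paper proceeds. However, there is a genuine gap on the round-type conditions: in the theorem the requirement that the first round and round $\fo+1$ are not of type $\ls$ is part of the characterization, i.e.\ its necessity must be proved, whereas you only ever use it as a hypothesis ("fold that into the safety argument from the start"). The paper proves necessity by constructing agreement-violating executions (Lemma~\ref{lem:co-ls} and Lemma~\ref{lem:co-ls-fo}): since the global predicate has no c-equalizer, an $\ls$ round run under $\gp$ may deliver the leader's value to some processes and not to others, which lets one process decide $b$ while the $\inp$ tuple stays mixed, and in the next phase another process decide $a$. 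Your observation that an $\ls$ round "ignores $\thru$ and $\thrm$" explains why syntactic safety would be vacuous there, but it is not an argument that such an algorithm fails consensus.

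The non-termination sketch also has two concrete problems. First, "using non-c-preserving rounds to repeatedly reset $\inp$ to $?$" is backwards: by Lemma~\ref{lem:preserving} (and its c-analog) a non-c-preserving round can \emph{not} produce $?$ from a $?$-free tuple; it is the c-preserving and non-c-solo-safe rounds that give you $?$. Second, and more substantively, keeping "a minority value alive so that no decider round ever sees a solo tuple" is not sufficient: when $\thrmok\leq\thruo$, a c-decider forces every process to decide already from $\bias(\th)$ with $\th\geq\bthr$ (third clause of Lemma~\ref{lem:co-decider}), so a surviving small minority does not block termination. The construction must keep the bias inside $[1/2,\bthr)$ before every decider, which is exactly what the paper's analog of Lemma~\ref{lem:not-uni} supplies for each non-c-unifier sporadic phase, combined with the global-predicate lemma (Lemma~\ref{lem:co-global-th}) to re-establish such a bias between sporadic phases; only after the last decider does one switch to $\solo$ and invoke Lemma~\ref{lem:co-not-decider}. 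Without this threshold bookkeeping the infinite execution you describe can be cut short by a decider firing on a heavily biased, non-solo tuple.
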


\section{A characterization for algorithms with coordinators and timestamps}

Finally, we consider the extension of the core language with both coordinators
and with timestamps. 
Formally, we extend the coordinator model with timestamps in the same way we
have extended the core model. 
So now $\inp$ variables store pairs (value, timestamp), and all the instructions
in the first round are $\maxts$ (cf. page~\pageref{shape-for-timestamps}).

\begin{proviso}\label{ts-co-proviso}
	We assume the same proviso as for timestamps: Proviso~\ref{ts-proviso} on
page~\pageref{ts-proviso}, but using the notion of c-equalizer.
\end{proviso}
As in the previous cases we justify our proviso by showing that the algorithm
violating the second condition would not be correct or the condition could be
removed (Lemma~\ref{lem:co-ts-no-mult-fo}).

The characterization is a mix of conditions from timestamps and coordinator
cases. 

\begin{definition} 
	A predicate $\p$ is a strong c-unifier if it is a c-unifier (cf.~Definition~\ref{def:strong-unifier})
	and $\thru^1 \le \thr_1(\p)$.
\end{definition} 

\begin{theorem}\label{thm:ts-coordinators}
	Consider algorithms in the language with timestamps satisfying syntactic constraints from
	Assumptions~\ref{assumption-ls-lr}, \ref{co-assumption} and Proviso~\ref{ts-co-proviso}. 
	An algorithm satisfies consensus iff the first
	round and the $(\fo+1)^{th}$ round are not of type $\ls$, it has the 	structural properties from 	Definition~\ref{def:ts-structure}, and it satisfies:
		\begin{description}
			\item[scT] There are $i\leq j$ such that $\p^i$ is a strong c-unifier and $\p^j$ is a c-decider.
		\end{description}
\end{theorem}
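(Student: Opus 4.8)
The plan is to combine the structural arguments developed for the three preceding characterizations --- Theorems~\ref{thm:core}, \ref{thm:ts} and \ref{thm:coordinators} --- in a modular way, since the language here is literally the coordinator language with the first round replaced by $\maxts$-instructions. I would first carry out the \emph{easy direction} (necessity): assume the algorithm solves consensus and show it must be syntactically t-safe, that the first and $(\fo+1)^{th}$ rounds are not of type $\ls$, and that condition \textbf{scT} holds. The non-$\ls$ requirement on the first round is forced because the first round sends $\inp$ and an $\ls$ round ignores all but one value, destroying the initial bias with no $\uni$/$\mult$ guard available; on the $(\fo+1)^{th}$ round it is forced by the same argument used for coordinators together with the role of round $\fo$ in timestamp propagation. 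For syntactic t-safety one re-runs the agreement counterexamples from the core/timestamp proofs: a round without a $\uni$ instruction, or a first round without a $\mult$ instruction, or a violation of the constant inequalities $\thrmok \ge 1-\thru^{\fo+1}$, $\thruo \ge 1-\thru^{\fo+1}$, each yields an execution reaching a state that breaks agreement, and none of these constructions is affected by the presence of coordinators or timestamps (coordinator rounds only \emph{strengthen} the ability to reach such states). For \textbf{scT}, I would lift the termination analysis: by the timestamp argument a unifier must in fact be a \emph{strong} unifier (the $\bthr$ slack disappears because $\maxts$ in the first round cannot be relied on to preserve a dominant non-majority value), and by the coordinator argument "equalizer" must be read as "c-equalizer" and "solo-safe"/"preserving" as their c-variants; combining the two, some sporadic predicate must be a strong c-unifier and a later one a c-decider.

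For the \emph{hard direction} (sufficiency) I would show the two halves of consensus separately, following the template of the earlier proofs. For \textbf{agreement}: assuming syntactic t-safety and the non-$\ls$ conditions, prove that along any execution the tuple of $\inp$-values stays within the "dominant value" regime --- i.e. of type $\solo$, $\soloa$, $\bias(\th)$, or $\bias(1-\th)$ with $\th\ge\bthr$ is \emph{not} needed for safety; rather one proves a bias-monotonicity invariant under the global predicate, exactly as in the core case, now checking that (i) $\maxts$ in the first round behaves at least as well as $\smor$ for this purpose because the most-recent-timestamp values form a sub-multiset that was itself produced by a previous dominant-value phase, and (ii) the coordinator rounds, being either c-preserving (no effect forced) or c-solo-safe (cannot flip $\solo$), never decrease the bias below threshold. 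The constant inequalities in Definition~\ref{def:ts-structure} are exactly what is needed to push the bias through round $\fo+1$. This gives: if two processes ever decide, they decide the same value. For \textbf{termination}: given \textbf{scT}, take the sporadic phase with predicate $\p^i$ a strong c-unifier; using Lemma~\ref{lem:preserving} (c-version) and the c-equalizer at some round $\le\fo$ together with c-solo-safety of the subsequent rounds, show the $\inp$-tuple after this phase is $\solo$ (or $\soloa$) --- here the strong-unifier condition $\thru^1\le\thr_1(\p)$ and the bound $\thru^\fo\ge 1/2$ from Proviso~\ref{ts-co-proviso} are precisely what force collapse to a single value rather than merely a biased tuple, which is the point where timestamps genuinely help over the core case. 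Then the later c-decider phase $\p^j$, all of whose rounds are c-solo-safe, leaves $\solo$ invariant and by the $\uni$ instruction in the last round (guaranteed by t-safety) sets every $\dec$; hence termination.

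The main obstacle, I expect, is the \emph{interaction} between $\maxts$ and the coordinator rounds in the agreement invariant: in the pure timestamp case one reasons about which timestamp values survive the first round, and in the pure coordinator case one reasons about a single process's value being broadcast, but when an $\ls$ round sits between phases one must track both the value \emph{and} its timestamp of the broadcasting leader and argue that re-broadcasting an old value with its (old) timestamp cannot create a fresh spurious majority after the next $\maxts$ round. Making this precise requires extending the timestamp bookkeeping through $\lr$/$\ls$ transitions --- the semantics on page~\pageref{shape-for-timestamps} says $\inp$ and $\dec$ are untouched in $\lr$ rounds, which helps, but one still needs a lemma (the c-timestamp analogue of the invariant behind Theorem~\ref{thm:ts}) stating that the set of (value, timestamp) pairs reachable for $\inp$ is closed under the combined dynamics. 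Once that lemma is in place, both directions reduce to the bookkeeping already done for Theorems~\ref{thm:ts} and~\ref{thm:coordinators}, and the proof is a matter of checking that each case in those proofs survives the substitution of c-variants and the replacement of $\smor$-reasoning by $\maxts$-reasoning.
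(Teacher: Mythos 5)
Your plan matches the paper's own proof in both decomposition and key ingredients: the paper likewise splits the argument into (i) necessity of the structural/non-$\ls$ conditions via agreement- or termination-violating executions, (ii) agreement from the structural properties using the fact that after the first decision no value with a stale timestamp can be fired in the first round (relying on $\thru^{\fo}\ge 1/2$, no $\mult$ in round $\fo$, and the constants of Definition~\ref{def:ts-structure}), and (iii) termination iff \textbf{scT}, proved by showing a strong c-unifier collapses $\inp$ to $\solo$ or $\soloa$ and a later c-decider forces decisions, with non-termination otherwise by maintaining $\bias(\th)$ or an undecided $\solo$. Since this is essentially the same route, lifting the core/timestamp/coordinator lemmas to their c- and $\maxts$-variants, no further comparison is needed.
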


\section{Examples}\label{sec:examples}

We apply the characterizations from the previous sections to some 
consensus algorithms studied in the literature, and their variants.
We show some modified versions of these algorithms, and some impossibility
results, that are easy to obtain thanks to our characterization. 

Finally, we a show an algorithm that is new as far as we can tell. 
It is obtained by eliminating timestamps from a version of Paxos algorithm, and
using bigger thresholds instead.
 
\subsection{Core language}

First, we can revisit the parametrized Algorithm~\ref{alg:one-third} from
page~\pageref{alg:one-third}. 
This is an algorithm in the core language, and it depends on two thresholds.
Theorem~\ref{thm:core} implies that it solves consensus iff $\thr_1/2\geq
1-\thr_2$. 
In case of $\thr_1=\thr_2=2/3$ we obtain the well known OneThird algorithm. 
But, for example, $\thr_1=1/2$ and $thr_2=3/4$ are also possible solutions for
this inequality.
So Algorithm~\ref{alg:one-third} solves consensus for these values of thresholds.

Because of the conditions on constants, $\thrmok/2\geq 1-\thru^{\fo+1}$ coming from
Definition~\ref{def:structure},  it is not 
possible to have an algorithm in the core language where all constants are at
most $1/2$. This answers a question from~\cite{charron-heard-distributed09} for
the language we consider here.

The above condition on constants is weakened to $\thrmok\geq 1-\thru^{\fo+1}$ when
we have timestamps. 
In this case indeed it is possible to use only $1/2$ thresholds.
An algorithm from~\cite{Mar:17} is discussed later in this section. 

We can go further with a parametrization of the OneThird algorithm.
The one below is a general form of an algorithm with at most one $\mult$
instruction and two phases. 

\RestyleAlgo{ruled}
\begin{algorithm}[H]\label{alg:one-third-more-parameters}
	%\SetAlgoLined
		\Send{$(\inp)$}{
		\lIf{$\uni(\ho) \land |\ho| > \thru^1 \cdot |\Pi|$}{$x_1:=\inp:=\smor(\ho)$}
		\lIf{$\mult(\ho) \land |\ho| > \thrm^1 \cdot |\Pi|$}{$x_1:=\inp:=\smor(\ho)$}
	}
	\Send{$x_1$}{
		\lIf{$\uni(\ho) \land |\ho| > \thru^2 \cdot |\Pi|$}{$\dec:=\smor(\ho)$}
	}
	\BlankLine
	\Cp{$\lF(\p^1 \land \lF\p^2)$}
	%\Where{$\p^1 := (\f_{=}\land\f_{\thr_1},\true)$\quad and\quad $\p^2 := (\f_{\thr_1},\f_{\thr_2})$}
	\caption{Parametrized OneThird algorithm~\cite{charron-heard-distributed09}, $\thru^1$, $\thrm^1, \thru^2$ are constants from $(0,1)$}
\end{algorithm}

Let us list all the constraints on the constants that would make this algorithm
solve consensus. 
Observe, that if we want an algorithm with 2 rounds, by structural constraints
from Definition~\ref{def:structure}, there must be $\mult$ instruction in the
first round and there cannot be $\mult$ instruction in the second round.
The operation in the $\mult$ instruction must be $\smor$.
Both rounds need to have $\uni$ instruction.

The structural constraints from Definition~\ref{def:structure} imply
\begin{equation*}
	\thrm^1/2\geq 1-\thru^2 \quad\text{and}\quad \thru^1\geq 1-\thru^2
\end{equation*}
Recall that the formula for border threshold is
\begin{equation*}
	\bthr= \max(1-\thruo,1-\thrmok/2)	
\end{equation*}

We will consider only the case when the global predicate is $(\true,\true)$ so
there are no constraints coming from the proviso. 
Let us see what can be $\p^1$ and $\p^2$ so that we have a unifier and a
decider.

Decider is a simpler one. 
We need to have $\p^2:=(\f_{\thru^1},\f_{\thru^2})$, or some bigger thresholds.

For a unifier we need $\p^1:=(\f_{=}\land\f_{\th},\true)$, but we may have a
choice for $\th$ with respect to the constants $\thru^1, \thrm^1, \thru^2$. 
\begin{itemize}
	\item Suppose $\thru^1\leq \thrm^1$. Then the constraints on unifier reduce to $\th\geq\thrm^1$.
	\item Suppose $\thru^1>\thrm^1$ and all the constrains for the algorithm to
	solve consensus are satisfied. Then we can
	decrease $\thru^1$ to $\thrm^1$, and they will be still satisfied. Actually,
	one can show that one can decrease $\thru^1$ to $\thrm^1/2$. 
\end{itemize}

To sum up, the constraints are $\th\geq\thrm^1$, $\thru^1=\thrm^1/2\geq 1
-\thru^2$. 
If we want to keep the constants as small as possible we take $\th=\thrm^1$.
We get the best constraints as a function of $\thrm^1$:
\begin{equation*}
	\p^1=(\f_{=}\land\f_{\thrm^1},\true)\qquad \p^2=(\thrm^1/2,1-\thrm^1)
\end{equation*}

\begin{remark}
	The notion of a unifier (Definition~\ref{def:predicates}) suggests that there
	are two types of algorithms for consensus.
	The first type has a round that guarantees that every process has the same
	value (a unifier with $\thruo\leq \thrmok$), and a later round that makes all the processes decide
	(decider).
	The second type has a weaker form of unifier ($\thruo> \thrmok$) that only
	guarantees bias between values to be above $\bthr$ (or below $1-\bthr$). 
	Then the decider is stronger and makes every process decide even if not all
	processes have the same value.
	
	We do not see algorithms of the second type in the literature, and indeed
	the characterization says why. 
	The second type appears when $\thruo> \thrmok$, but our characterization
	implies that in this case we can decrease $\thruo$ to $\thrmok/2$ and the
	algorithm will be still correct. 
	So unless there are some constrains external to the model, algorithms with a
	weaker form of unifier are not interesting. 
\end{remark}

\subsection{Timestamps}

We start with a timestamp algorithm from~\cite{Mar:17} that uses only $1/2$ thresholds.
 
 \begin{algorithm}[H]\label{alg:new-timestamp}
	 %\SetAlgoLined
		 \Send{$(\inp,ts)$}{
		 \lIf{$\uni(\ho) \land |\ho| > 1/2 \cdot |\Pi|$}{$x_1:=\maxts(\ho)$}
		 \lIf{$\mult(\ho) \land |\ho| > 1/2 \cdot |\Pi|$}{$x_1:=\maxts(\ho)$}
	 }
	 \Send{$x_1$}{
		 \lIf{$\uni(\ho) \land |\ho| > 1/2 \cdot |\Pi|$}{$x_2:=\inp:=\smor(\ho)$}
	 }
	 \Send{$x_2$}{
		 \lIf{$\uni(\ho) \land |\ho| > 1/2 \cdot |\Pi|$}{$\dec:=\smor(\ho)$}
	 }
	 \BlankLine
	 \Cp{$\lF(\p^1)$ where $\p^1 := (\f_=\land\f_{1/2},\ \f_{1/2},\ \f_{1/2})$}
	 \caption{A timestamp algorithm from~\cite{Mar:17}}
 \end{algorithm}
 This algorithm is correct by Theorem~\ref{thm:ts}. 
 The theorem also says that the communication predicate can be weakened to
 $\lF(\p^1\land\lF\p^2)$ where $\p^1=(\f_=\land\f_{1/2},\ \f_{1/2},\ \true)$
 and $\p^2=(\f_{1/2},\ \f_{1/2},\ \f_{1/2})$.
 \medskip

 If we do not want to have $\f_=$ requirement on the first round where we check
 timestamps, we can consider the following modification of the above.
 %\bala{Modified this, because the previous algorithm was not correct.}
 
 \begin{algorithm}[H]\label{alg:mod-new-timestamp} 
	 %\SetAlgoLined
		 \Send{$(\inp,ts)$}{
		 \lIf{$\uni(\ho) \land |\ho| > 1/2 \cdot |\Pi|$}{$x_1:=\maxts(\ho)$}
		 \lIf{$\mult(\ho) \land |\ho| > 1/2 \cdot |\Pi|$}{$x_1:=\maxts(\ho)$}
	 }
	 \Send{$x_1$}{
		 \lIf{$\uni(\ho) \land |\ho| > 1/2 \cdot |\Pi|$}{$x_2:=\smor(\ho)$}
		 \lIf{$\mult(\ho) \land |\ho| > 1/2 \cdot |\Pi|$}{$x_2:=\smor(\ho)$}
	 }
	 \Send{$x_2$}{
		 \lIf{$\uni(\ho) \land |\ho| > 1/2 \cdot |\Pi|$}{$x_3:=\inp:=\smor(\ho)$}
	 }
	 \Send{$x_3$}{
		 \lIf{$\uni(\ho) \land |\ho| > 1/2 \cdot |\Pi|$}{$\dec:=\smor(\ho)$}
	 }
	 \BlankLine
	 \Cp{$\lF(\p^1\land\lF\p^2)$}
	 \Where{$\p^1=(\f_{1/2},\ \f_=\land\f_{1/2}, \ \f_{1/2}, \ \true)$
	 and $\p^2=(\f_{1/2},\ \f_{1/2},\ \f_{1/2}, \ \f_{1/2})$}
	 \caption{A modification of a timestamp algorithm from~\cite{Mar:17}}
 \end{algorithm}
 Note that by Theorem~\ref{thm:ts}, when we move the equalizer to the
 second round, there necessarily has to be a $\mult$ instruction in the
 second round.
 
 \subsection{Timestamps and coordinators}
 When we have both timestamps and coordinators, we get variants of Paxos
 algorithm.
 
 \begin{algorithm}[H]\label{alg:paxos}
	%\SetAlgoLined
		\Send{$(\inp,ts)$ $\lr$}{
		\lIf{$\uni(\ho) \land |\ho| > 1/2 \cdot |\Pi|$}{$x_1:=\maxts(\ho)$}
		\lIf{$\mult(\ho) \land |\ho| > 1/2 \cdot |\Pi|$}{$x_1:=\maxts(\ho)$}
	}
	\Send{$x_1$ $\ls$}{
		\lIf{$\uni(\ho)$}{$x_2:=\inp:=\smor(\ho)$}
	}
	\Send{$x_2$ $\lr$}{
		\lIf{$\uni(\ho) \land |\ho| > 1/2 \cdot |\Pi|$}{$x_3:=\smor(\ho)$}
	}
	\Send{$x_3$ $\ls$}{
		\lIf{$\uni(\ho)$}{$dec:=\smor(\ho)$}
	}
	\BlankLine
	\Cp{$\lF(\p^1)$ where $\p^1 := (\f_{1/2},\ \f_{\ls},\
	\f_{1/2},\ \f_\ls)$} 
	\caption{Paxos algorithm}
\end{algorithm} 
The algorithm is correct by Theorem~\ref{thm:ts-coordinators}. 
One can observe that without modifying the code there is not much room for
improvement in this algorithm.
A decider phase is needed to solve consensus, and $\p_1$ is a minimal requirement for a
decider phase. 
% The communication predicate says that there should be one good phase.
% Our characterization allows to weaken it, and ask for two less good phases:
% $\lF(\p^1\land(\lF\p^2))$ where
% $\p^1=(\f_{1/2},\f_\ls,\true,\true)$,  and 
% $\p^2=(\f_{1/2},\f_\ls,\f_{1/2},\f_\ls)$.
A possible modification is to change the thresholds in the first round to, say, $1/3$
and in the third round to $2/3$ (both in the algorithm and in the communication predicate).

Chandra-Toueg algorithm in the Heard-Of model is actually syntactically the same
 as four round Paxos~\cite{Mar:17}. 
 The communication predicate is even stronger so it clearly satisfies our constraints.\\

 The next example is a three round version of Paxos algorithm. 
 
 \begin{algorithm}[H]\label{alg:paxos-three}
	 %\SetAlgoLined
		 \Send{$(\inp,ts)$ $\lr$}{
		 \lIf{$\uni(\ho) \land |\ho| > 1/2 \cdot |\Pi|$}{$x_1:=\maxts(\ho)$}
		 \lIf{$\mult(\ho) \land |\ho| > 1/2 \cdot |\Pi|$}{$x_1:=\maxts(\ho)$}
	 }
	 \Send{$x_1$ $\ls$}{
		 \lIf{$\uni(\ho)$}{$x_2:=\inp:=\smor(\ho)$}
	 }
	 \Send{$x_2$ $\every$}{
		 \lIf{$\uni(\ho) \land |\ho| > 1/2 \cdot |\Pi|$}{$\dec:=\smor(\ho)$}
	 }
	 \BlankLine
	 \Cp{$\lF(\p^1)$ where $\p^1 := (\f_{1/2},\ \f_{\ls},\ \f_{1/2})$}
	 \caption{Three round Paxos algorithm}
 \end{algorithm}
 The algorithm is correct by Theorem~\ref{thm:ts-coordinators}. 
 Once again it is possible to change constants in the first round to $1/3$ and in
 the last round to $2/3$ (both in the algorithm and in the communication predicate).
 \bigskip
 
\subsection{Coordinators without timestamps}

 One can ask if it is possible to have an algorithm with coordinators without
 timestamps.
 Here is a possibility that resembles three round Paxos:
 
 \begin{algorithm}[H]\label{alg:paxos-coordinators}
	 %\SetAlgoLined
		 \Send{$(\inp)$ $\lr$}{
		 \lIf{$\uni(\ho) \land |\ho| > 2/3 \cdot |\Pi|$}{$x_1:=\smor(\ho)$}
		 \lIf{$\mult(\ho) \land |\ho| > 2/3 \cdot |\Pi|$}{$x_1:=\smor(\ho)$}
	 }
	 \Send{$x_1$ $\ls$}{
		 \lIf{$\uni(\ho)$}{$x_2:=\inp:=\smor(\ho)$}
	 }
	 \Send{$x_2$ $\every$}{
		 \lIf{$\uni(\ho) \land |\ho| > 2/3 \cdot |\Pi|$}{$\dec:=\smor(\ho)$}
	 }
	 \BlankLine
	 \Cp{$\lF(\p)$\quad where $\p := (\f_{2/3},\f_{\ls},\f_{2/3})$}
			 \caption{Three round coordinator algorithm}
 \end{algorithm}
 The algorithm solves consensus by Theorem~\ref{thm:coordinators}. 
 The constants are bigger than in Paxos because we do not have timestamps: 
 the constraints on constants come from
 Definition~\ref{def:structure}, and not from Definition~\ref{def:ts-structure}.
 The advantage is that we do not need time-stamps, while keeping the same
 structure as for three-round Paxos.
 We can parametrize this algorithm in the same way as we did for
 Algorithm~\ref{alg:one-third-more-parameters}.

%  The notion of a weak unifier from Theorem~\ref{thm:core} seems to be a new. 
%  Here is an algorithm that has  a weak unifier, but does not have a strong
%  unifier. 
%  The algorithm is probably not the most efficient one but it is not the same as
%  the others. 
 
%  \begin{algorithm}[H]\label{alg:weak-unifier}
% 	 \Send{$\inp$}{
% 		 \lIf{$\uni(\ho) \land |\ho| > 9/10 \cdot |\Pi|$}{$x_1:=\smor(\ho)$}
% 		 \lIf{$\mult(\ho) \land |\ho| > 4/5 \cdot |\Pi|$}{$x_1:=\smor(\ho)$}
% 	 }
% 	 \Send{$x_1$}{
% 		 \lIf{$\uni(\ho) \land |\ho| > 1/100 \cdot |\Pi|$}{$x_2:=\inp:=\min(\ho)$}
% 		 \lIf{$\mult(\ho) \land |\ho| > 1/100 \cdot |\Pi|$}{$x_2:=\inp:=\min(\ho)$}
% 	 }
% 	 \Send{$x_2$}{
% 		 \lIf{$\uni(\ho) \land |\ho| > 3/5 \cdot |\Pi|$}{$x_3:=\min(\ho)$}
% 	 }
% 	 \Send{$x_3$}{
% 		 \lIf{$\uni(\ho) \land |\ho| > 1/100 \cdot |\Pi|$}{$\dec:=\smor(\ho)$}
% 	 }
% 	 \BlankLine
% 	 \Cp{$\lF(\p^1\land\lF\p^2)$}
% 	 \Where{$\p^1=(\f_=\land\f_{4/5},\ \f_{1/100},\ \true,\ \true)$ and
% 		 $\p^2 = (\f_{9/10},\ \f_{1/100}, \ \f_{3/5},\ \f_{1/100})$}
% 	 \caption{An algorithm with a weak unifier}
%  \end{algorithm}
%  Clearly $\p^1$ is a weak-unifier and $\p^2$ is both a finalizer 
%  and a decider. 
%  \bigskip

\section{Proof of the characterization for the core language}
In this section we prove Theorem~\ref{thm:core}, namely a characterization of
algorithms in the  core language that solve consensus.

% In our characterization we will talk about a tuple $\bias(\th)$ for $\th,<1$, i.e., a
% tuple when we have $n$ processes (for some fixed but large enough $n$) out of which $\th\cdot n$ of them have
% their value set to $b$; and the remaining ones to $a$. 
% The state without $a$'s (resp. $b$'s) is called $\solo$ (resp. $\solo^a$).

We fix a \emph{communication predicate}
\begin{equation*}
	(\lG\gp)\land(\lF(\p^1\land\lF(\p^2\land\dots (\lF\p^k)\dots)))
\end{equation*}
Recall that each of $\gp,\p^1,\dots,\p^k$ is an $r$-tuple of atomic predicates. 
We write $\p\dar_i$ for the $i$-th element of the tuple. 
So $\p$ is $(\p\dar_1,\dots,\p\dar_r)$.
Often we will write $\p_i$ instead of $\p\dar_i$, in particular when $\p_i$
appears as a subscript; for example $f\lact{\p_i} f'$.
If $\f$ is a conjunction of atomic predicates, then by $\thr(\f)$ we denote
the threshold constant appearing in $\f$, i.e., if 
$\f$ has $\f_{thr}$ as a conjunct then $\thr(\f) = thr$, if it has 
no such conjunct then $\thr(\f) = -1$.
\medskip

%For a phase predicate $\p$ we write $\fire_i(f,\p)$ instead of
%$\fire_i(f,\p\dar_i)$. 

\begin{definition}
	We define several tuples of values. 
	All these tuples will be $n$-tuples for some fixed but large enough $n$
	and will be over  $\set{a, b}$, or $\set{?,b}$ or $\set{?,a}$.
	For $\th < 1$, the tuple $\bias(\th)$ is a tuple containing only $a$'s and
	$b$'s with $|b|=\th\cdot n$. 
	Tuple $\bias(1/2)$ is also called $\spread$ to emphasize that there is the
	same number of $a$'s and $b$'s.
	%We require that $\th<1$, so there is at least one $a$ in $\bias(\th)$ for
	%arbitrary $\th$.
	A tuple consisting only of $b$'s is denoted $\solo$.
	Similarly, $\biasq(\th)$ is a tuple over $\set{?,b}$ with $|b|=\thr\cdot
	n$ and $\biasq_a(\th)$ is a tuple over $\set{?,a}$ with $|a| = \thr \cdot n$. We also write $\solo^?$ for a tuple consisting only of $?$'s. 
	Finally, we write $\solo^a$ for a tuple consisting only of $a$'s.
\end{definition}

\noindent\textbf{Notations:}
\begin{itemize}
	\item For a tuple of values $f$ and a predicate $\p$ we write  $\fire_i(f,\p)$
	instead of $\fire_i(f,\p\dar_i)$. 
	Similarly we write $\thr_i(\p)$ for $\thr(\p\dar_i)$.
	\item If $f, f'$ are tuples of values, we write $f\act{\p}f'$ instead of
	$(f,\soloq)\act{\p}(f',\soloq)$. 
	%\item An instruction with an $\uni$ operation is called an $\uni$
	%instruction; similarly for $\mult$ instructions.
\end{itemize}

% \begin{definition}
% 	We say that a round $i$ is \emph{preserving} w.r.t.\  a predicate $\p$
% 	iff $\thr_i(\p)<\max(\thru^i,\thrm^{i,k})$;
% 	Otherwise the round is \emph{non-preserving} w.r.t.\ $\p$.
% \end{definition}

% \begin{definition}
% 	A round $i$ is \emph{solo safe} w.r.t.\ $\p$ if
% 	$\thr_i(\p)\geq \thru^i \ge 0$.
% \end{definition}
% A solo safe round cannot alter $\solo$ state: if $\solo \lact{\f} f$ then $f=\solo$.

Recall that the border threshold for an algorithm,
by Definition~\ref{def:border-threshold}, is
\begin{equation*}
\bthr= 	\max(1-\thruo,1-\thrmok/2)
\end{equation*}
Observe that $\bthr>1/2$ as $\thrmok<1$.

The proof of Theorem~\ref{thm:core} is divided into three parts. 
First we show that if an algorithm does not satisfy the structural properties
then it violates agreement.
Then  we restrict our attention to algorithms with the structural properties. 
We show that if condition T holds, then consensus is satisfied. 
Finally, we prove that if condition T does not hold then the algorithm does not
have the termination property. 

To simplify the statements of the lemmas, we adopt the following convention. If
some condition is proved as necessary for consensus, then for the forthcoming
lemmas, that condition is assumed. For example, in Lemma~\ref{lem:no-uni}, we prove that all
rounds should have a $\uni$ instruction. Hence after Lemma~\ref{lem:no-uni}, it it implicitly
assumed that all algorithms considered have a $\uni$ instruction in every round.

\subsubsection*{Part 1: Structural properties}

\begin{lemma}\label{lem:no-mult}
	If no $\mult$ instruction is present in the first round then the  algorithm may not terminate.
\end{lemma}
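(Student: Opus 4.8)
The plan is to exhibit, for an algorithm whose first round has only $\uni$ instructions (and possibly no instruction at all other than the $\uni$ one), an execution respecting the communication predicate along which no process ever sets its $\dec$ variable. The key observation is that a round containing only $\uni$ instructions can always ``lose everything'': if every process happens to receive the \emph{same} multiset consisting of a single value that already occurs in the current tuple, the $\uni$ condition is about value-diversity, not size, so only the threshold part $|\ho|>\thru^1\cdot|\Pi|$ can block firing — and we will arrange the tuple so that even singletons are large enough, or, more robustly, we will arrange to produce the $?$ value. Concretely, I would start from the tuple $\spread = \bias(1/2)$, i.e.\ half the processes holding $a$ and half holding $b$. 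In the first round, since there is no $\mult$ instruction, every process that receives a mixed multiset (containing both $a$ and $b$) satisfies no condition, so its value becomes $?$. Using the communication medium's freedom to deliver to each process a sub-multiset of $\mset(f)$, we let every process receive the full (mixed) multiset $\{a,b\}$-support of size $n$; this respects any threshold constraint $\thr_1(\p)<1$ in the global predicate because $n > \thr_1(\p)\cdot n$, and since no $\uni$ condition is met the whole tuple collapses to $\soloq$, the all-$?$ tuple.

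Next I would argue that once we are at $\soloq$ we can keep looping forever without anyone deciding, regardless of the rest of the phase and of which sporadic predicate is currently being discharged. From $\soloq$, every multiset delivered to a process is a sub-multiset of $\{?\}$, so the only possible received multiset is some number of copies of $?$; then $\ho - \set{?}$ is empty, no condition is satisfied (all of $\uni$, $\mult$, and the threshold tests fail on the empty multiset), so $\update_i(\ho) = ?$ for every round $i$ and every process. Hence $\soloq \lact{\f}_i \soloq$ is available for every round $i$ and every conjunction of atomic predicates $\f$ — we just need to check the communication predicate is satisfiable, which it is: deliver the empty multiset, or rather a multiset of $?$'s, of whatever size is needed; $\f_=$ is trivially satisfied since all processes receive the same thing, and $\f_\thr$ is satisfied by delivering $n$ copies of $?$ (recall $\thr<1$). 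Thus every phase transition $\soloq \act{\p} \soloq$ is legal for every phase predicate $\p$ appearing in the communication predicate. Since $\dec$ values are set only from $f_r$ in the last round and here $f_r = \soloq$, and $d(p) = ?$ is never overwritten by $?$, no process ever decides. So the execution $\spread \act{\p^1} \soloq \act{\p^2}\soloq \cdots$ (after first discharging, if necessary, the finitely many sporadic predicates, all of which $\soloq$ trivially realizes) is an infinite execution respecting the communication predicate with no decision — violating termination.

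The one technical point to be careful about is the very first phase transition, the one out of $\spread$: we need the phase predicate in force at that point (either the global predicate $\gp$, or a sporadic predicate $\p^i$) to admit the round-$1$ transition $\spread \lact{\p_1}_1 \soloq$ followed by $\soloq$ for the remaining rounds. The round-$1$ part is the only nontrivial bit: under $\gp$ (or $\p^i$) round $1$ might have $\f_=$ as a conjunct, which is fine (every process receives the same mixed multiset), and might have $\f_{\thr_1}$ as a conjunct, which is fine because delivering the full $n$-element mixed multiset beats the threshold. Here we use Proviso~\ref{proviso} only in spirit — actually we do not even need it, since even an equalizer in round $1$ is compatible with delivering a common mixed multiset. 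After round $1$ we are at $\soloq$ and the argument above takes over. I do not expect a genuine obstacle; the main thing to get right is the bookkeeping that $\soloq$ is a sink for \emph{every} round transition under \emph{every} atomic predicate, so that no matter where the sporadic predicates sit in the phase sequence the constructed execution is legal and never decides.
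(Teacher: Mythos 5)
Your approach is the same as the paper's: start from $\spread$, deliver the full mixed multiset in the first round so that (absent any $\mult$ instruction) no condition fires and every process gets $?$, then propagate $\soloq$ through the remaining rounds so that neither $\inp$ nor $\dec$ is touched, and repeat under every phase predicate; this is exactly the paper's one-line argument, including the observation that equalizers and thresholds are satisfied by sending everyone the whole multiset.

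One slip in your bookkeeping, though it does not break the proof: the execution you write, $\spread \act{\p^1} \soloq \act{\p^2}\soloq \cdots$, is not a legal execution in this semantics. Phase transitions act on the $\inp$ tuple, and $\inp$ is updated only when the round-$\fo$ value is different from $?$ (and $\inp$ can never hold $?$ at all). So after your first phase the state is $(\spread,\soloq)$, not an $\inp$-tuple equal to $\soloq$, and there is no phase transition of the form $\soloq\act{\p}\soloq$ to appeal to. The correct loop is $\spread\act{\p}\spread$ for every predicate $\p$: each new phase again sends the $\spread$ values of $\inp$, and the same first-round collapse to $\soloq$ must be (and can be) re-done. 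Your own argument for the first phase — mixed multiset of full size satisfies $\f_=$ and any $\f_\thr$, and no $\uni$ condition fires on a two-valued multiset — already covers every phase, so the repair is immediate; but as written, the part you flag as ``the main thing to get right'' ($\soloq$ being a sink at the phase level) rests on a misreading of how $\inp$ is updated.
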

\begin{proof}
	Suppose no $\mult$ instruction is present in the first round.
	It is easy to verify that for every predicate $\p$, we have
	$\spread \lact{\p_1}_1 \solo^?$ and $\solo^? \lact{\p_i}_i \solo^?$ for $i > 1$. Hence we have the phase transition $\spread \act{\p} \spread$.
	%In this case  $\spread\act{}\spread$ is a phase transition, for every communication predicate. 
\end{proof}

\begin{lemma}\label{lem:no-uni}
	If there is a round without a $\uni$ instruction then the algorithm does not terminate.
\end{lemma}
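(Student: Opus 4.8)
The plan is to exhibit an infinite execution (for any number of processes $n$) that respects the communication predicate and along which no process ever sets $\dec$. Since $\inp$ can never take the value $?$, the configuration we aim to keep the execution in is $(\solo,\soloq)$, where $\solo$ is the all-$b$ tuple and the $\dec$-tuple is all $?$. The core claim is: \emph{if some round has no $\uni$ instruction, then $\solo\act{\p}\solo$ for every phase predicate $\p$} (the global one, or any sporadic one). Granting the claim, the execution $(\solo,\soloq)\act{\gp}^*(\solo,\soloq)\act{\p^1}(\solo,\soloq)\act{\gp}^*\cdots\act{\p^k}(\solo,\soloq)\act{\gp}^\w\cdots$ is legal and keeps every $\dec$-value equal to $?$ forever, so termination fails.

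To prove the claim, let $i$ be the least round index with no $\uni$ instruction, so rounds $1,\dots,i-1$ each have a $\uni$ instruction (if $i=1$ this prefix is empty and the argument is the same). Run the phase from $\solo$ as follows. In each of the rounds $1,\dots,i-1$, let every process receive the full multiset $\mset(\solo)=\{b^n\}$; this meets any threshold constraint, as $n>\thr\cdot n$ for every threshold $\thr<1$, and it meets an equalizer too, as all processes receive the same multiset. Since $\{b^n\}$ is single-valued, the round's $\uni$ instruction fires with update value $b$ (both $\min$ and $\smor$ return $b$), so every intermediate tuple is again $\solo$. In round $i$, again let every process receive $\{b^n\}$: no $\mult$ condition holds (only one value occurs) and there is no $\uni$ condition, so $\update_i(\{b^n\})=?$ and the tuple becomes $\soloq$. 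In each later round, let every process receive the full multiset of $?$'s; this is legal for the same reasons, and the update value of a multiset consisting solely of $?$'s is $?$, since after removing $?$ one gets the empty multiset, which satisfies neither $\uni$ nor $\mult$ nor any size test. Consequently the last-round tuple is $\soloq$, so no $\dec$ gets set; and the round-$\fo$ tuple is $\solo$ when $\fo<i$ and $\soloq$ when $\fo\ge i$, so in either case the new $\inp$-tuple equals $\solo$. Hence $\solo\act{\p}\solo$, as claimed.

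I do not expect a real obstacle: the only point requiring care is that every chosen heard-of multiset must be legal for an \emph{arbitrary} phase predicate, including a sporadic predicate carrying an equalizer, which is precisely why each process is made to receive the full multiset in every round. There is also no conflict with Lemma~\ref{lem:no-mult}: the first round may still contain a $\mult$ instruction; it simply never fires on a single-valued multiset.
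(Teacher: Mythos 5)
Your proposal is correct and follows essentially the same route as the paper: keep the tuple at $\solo$ up to the first round $i$ lacking a $\uni$ instruction, drop to $\soloq$ there, and propagate $\soloq$ to the end, yielding $\solo\act{\p}\solo$ for every phase predicate. You merely spell out the details the paper leaves as ``easy to verify'' (full-multiset receptions satisfy any equalizer/threshold constraint, and the $\inp$/$\dec$ bookkeeping), which is fine.
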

\begin{proof}
	Let $i$ be the round without a $\uni$ instruction. It is easy
	to verify that for every predicate $\p$, we have $\solo \lact{\p_j}_j \solo$ for $j < i$, $\solo \lact{\p_i}_i \solo^?$ and $\solo^? \lact{\p_j}_j \solo^?$ for $j > i$. Hence we get the phase
	transition $\solo \act{\p} \solo$.
	%We get $\solo_a\act{}\solo_a$ for every communication predicate.
\end{proof}

Before considering the remaining structural requirements we state some useful
lemmas.

%\begin{lemma}\label{lem:bias-fire}
%	Suppose the first round has a  $\mult$ instruction with
%	$\min$ as operation. Then for all sufficiently big $\th$, we have  $\set{a,b}\incl\fire_1(\bias(\th),\gp)$.
%\end{lemma}
%\begin{proof}
%	Let $\th > \thru^1$. By observation~(\ref{eq:syntactic-property}) on page~\pageref{eq:syntactic-property},
%	we have $\thru^1 \ge \thr_1(\gp)$. It is then
%	clear that $b \in \fire_1(\bias(\th),\gp)$, because of the 
%	$\uni$ instruction.
%	Let the $j^{th}$ $\mult$ instruction be the instruction
%	with the $\min$ operation. Let $\ho$ be any multiset containing
%	at least one $a$ and one $b$ and is of size just above $\thrm^{1,j}$.
%	By observation~(\ref{eq:syntactic-property}) we have
%	$\thrm^{1,j} \ge \thr_1(\gp)$ and so we have that $\ho \models \gp_1$ and
%	$a  = \update_1(\ho)$.
%To get $a$ consider $\ho$ containing $a$, and of size just above $\thrm^{1,j}$
%where $j$ is the line 	with $\min$ instruction. 
%The $j$-th line will be executed and the variable will be set to $a$. 
%\end{proof}

\begin{lemma}\label{lem:spread}
	Suppose all $\mult$ instructions in the first round have $\smor$ as the
	operation. Then for every predicate $\p$ we have 	$\set{a,b}\incl\fire_1(\spread,\p)$. 
\end{lemma}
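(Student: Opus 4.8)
I want to show that, regardless of the phase predicate $\p$, both $a$ and $b$ are in $\fire_1(\spread,\p)$ — i.e., there are heard-of multisets, compatible with $\p\dar_1$, built from the multiset of values of $\spread$, on which the round-$1$ update yields $a$, and others on which it yields $b$. Recall $\spread = \bias(1/2)$ has exactly $n/2$ copies of $a$ and $n/2$ copies of $b$, so $\mset(\spread)$ contains both values in large quantity. The key hypotheses are: the first round has a $\mult$ instruction (so $\thrmok \geq 0$ makes sense and is a real threshold), every $\mult$ instruction in round $1$ uses $\smor$, and — from Assumption~\ref{assumption}, since there are no rounds before round $1$ — we have $\thrm^{1,k}\ge\thr_1(\gp)$, hence also $\thrm^{1,k}\ge\thr_1(\p)$ for any sporadic predicate $\p$ (as $\p$ implies $\gp$).

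**Producing $a$.** Consider the heard-of assignment where \emph{every} process receives the same multiset $\ho$ consisting of all $n$ values of $\spread$: that is, $n/2$ copies of $a$ and $n/2$ copies of $b$. This $\ho$ is a sub-multiset of $\mset(\spread)$ (in fact it equals it), it satisfies $\f_=$ trivially, and it has size $n > \thr_1(\p)\cdot n$ for any threshold $<1$; so it satisfies $\p\dar_1$ whatever atomic predicates appear (the only atomic predicates available are $\f_=$ and $\f_\thr$). Now $\ho$ contains more than one value, so the first conditional it satisfies is a $\mult$ conditional — whichever $\mult$ instruction has the largest threshold not exceeding $|\ho|/n = 1$, which by Assumption~\ref{assumption} is satisfied since $\thrm^{i,k}$ is the minimal threshold; actually I should be careful: the first satisfied conditional is the first in program order whose condition holds. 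The $\uni$ conditional fails ($\ho$ has two values), so the first satisfied conditional is the first $\mult$ conditional whose threshold is met. Since all $\mult$ operations in round $1$ are $\smor$, the update is $\smor(\ho - \set{?}) = \smor(\ho)$; as $\ho$ has equally many $a$'s and $b$'s, and $a < b$, the smallest most frequent value is $a$. Hence $a \in \fire_1(\spread,\p)$.

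**Producing $b$.** Now I want a single process to receive a multiset that yields $b$. Take the assignment where every process receives the multiset $\ho'$ consisting of all $n/2$ copies of $b$ from $\spread$ together with $n/2 - 1$ copies of $a$ — so $|\ho'| = n - 1$, strictly more $b$'s than $a$'s, and $\ho'\subseteq\mset(\spread)$. This $\ho'$ again satisfies $\f_=$ and has size $n-1 > \thr_1(\p)\cdot n$ for $n$ large enough (since $\thr_1(\p) < 1$). It has two values, so the update is again via a $\mult$ instruction using $\smor$; since $b$ strictly dominates, $\smor(\ho') = b$. Hence $b\in\fire_1(\spread,\p)$. Combining the two cases, $\set{a,b}\subseteq\fire_1(\spread,\p)$.

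**Expected obstacle.** The only subtlety I anticipate is making sure the chosen $\ho,\ho'$ actually satisfy $\p\dar_1$ in \emph{all} cases — i.e., that no matter which atomic conjuncts $\p\dar_1$ carries, both multisets meet them. Since the atomic communication predicates are only $\f_=$ (met because all processes get the identical multiset) and $\f_\thr$ with $\thr<1$ (met because the multisets have size $n$ or $n-1$, which exceeds $\thr\cdot n$ for $n$ large), this goes through uniformly; the "large enough $n$" is exactly the genericity already built into the definition of $\bias(\th)$ and $\spread$. A secondary point is confirming the first satisfied round-$1$ conditional on $\ho,\ho'$ is necessarily a $\mult$ line: this holds because $\uni$ fails on any two-valued multiset and, by the non-increasing-threshold normal form together with Assumption~\ref{assumption} giving $\thrm^{1,k}$ small enough, at least one $\mult$ conditional's threshold is met by a multiset of size $n-1$ — indeed $\thrm^{1,k} < 1$ and $(n-1)/n \to 1$. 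So the proof is a direct construction, with no real obstacle beyond bookkeeping.
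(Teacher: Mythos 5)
Your proof is correct and follows essentially the same construction as the paper: send every process an identical, nearly full sub-multiset of $\spread$ (so $\f_=$ and any $\f_\thr$ with $\thr<1$ are satisfied for large $n$), forcing an $\smor$-$\mult$ instruction to fire and yield $a$ or $b$ depending on which value dominates. The only cosmetic difference is that for $a$ you use the full balanced multiset and the tie-breaking rule of $\smor$ (smallest most frequent value), whereas the paper takes a multiset with strictly more $a$'s than $b$'s; both are valid.
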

\begin{proof}
	%We assume all syntactic properties from Definition~\ref{def:structure}, so in particular there is a $\mult$ instruction in the first round, and all the $\mult$ operations are $\smor$.
	From $\spread$, it is easy to see that we can construct a multiset $\ho$ containing more $a$'s
	than $b$'s such that the size of $\ho$ is bigger than $\thr_1(\p)$ and
	$\thrm^{1,1}$.
	Similarly we can construct a multiset having more $b$'s than $a$'s.
	This then implies that $\set{a,b} \incl \fire_1(\spread,\p)$.
	
	%From $\spread$ we can construct an $\ho$ set passing arbitrary threshold and 
	%having  majority of $a$'s, similarly for $b$'s.
\end{proof}

\begin{lemma}\label{lem:bias-preserving}
	If a round $i$ is preserving w.r.t.\ $\p$ then
	$\set{b,?}\in\fire_i(\bias(\th),\p)$ for all sufficiently big $\th$.
	Similarly $\set{a,?} \in \fire_i(\bias(\th),\p)$ for all sufficiently small $\th$. 
\end{lemma}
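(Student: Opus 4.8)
The plan is to establish the two firings separately --- producing $b$ (resp.\ $a$), and producing $?$ --- and to use the running convention from Lemma~\ref{lem:no-uni} that round $i$ has a $\uni$ instruction; this is exactly what makes the first firing easy and independent of whether the operations are $\min$ or $\smor$. Throughout, $n$ is taken large enough that all the integer multiset sizes mentioned below exist.

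\emph{Firing $b$ (resp.\ $a$).} Let every process receive $\ho_p=\set{b^m}$ for an integer $m$ with $\max(\thru^i,\thr_i(\p))\cdot n<m\le\th\cdot n$; such an $m$ exists once $n$ is large and $\th$ is close enough to $1$, since $\max(\thru^i,\thr_i(\p))<1$. As each $\ho_p$ is unanimous, no $\mult$ conditional applies, while the $\uni$ conditional does (because $m>\thru^i\cdot n$), and both $\min$ and $\smor$ return $b$ on $\set{b^m}$. The constant tuple $(\ho_1,\dots,\ho_n)$ satisfies $\f_=$ and, as $m>\thr_i(\p)\cdot n$, also $\f_{\thr_i(\p)}$ (or $\p\dar_i$ has no threshold conjunct, in which case this is vacuous), so $(\ho_1,\dots,\ho_n)\sat\p\dar_i$. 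Hence $\bias(\th)\lact{\p_i}_i\solo$ and $b\in\fire_i(\bias(\th),\p)$. The statement for $a$ is symmetric, using $\set{a^m}$ with $m\le(1-\th)\cdot n$, which forces $\th$ to be small.

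\emph{Firing $?$.} Using Definition~\ref{def:preserving-and-solo-safe} together with the standing assumption that round $i$ has a $\uni$ instruction, ``$i$ is preserving w.r.t.\ $\p$'' amounts to: round $i$ has no $\mult$ instruction, or $\thr_i(\p)<\max(\thru^i,\thrm^{i,k})$. In each case I would exhibit a constant tuple of sub-multisets of $\mset(\bias(\th))$ that fires no conditional but satisfies $\p\dar_i$. If round $i$ has no $\mult$ instruction, take $\ho_p=\mset(\bias(\th))$: it is non-unanimous, so nothing fires, it has size $n>\thr_i(\p)\cdot n$, and it is constant. If $\thr_i(\p)<\thrm^{i,k}$ (so a $\mult$ instruction is present), take each $\ho_p$ non-unanimous of size $m$ with $\thr_i(\p)\cdot n<m\le\thrm^{i,k}\cdot n$ --- e.g.\ one $a$ and $m-1$ copies of $b$, available since $\th<1$ --- so that no $\mult$ conditional fires ($\thrm^{i,k}$ being the smallest $\mult$ threshold) and $\uni$ is false. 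Otherwise $\thr_i(\p)<\thru^i$ with a $\mult$ instruction present; take $\ho_p=\set{b^m}$ unanimous with $\thr_i(\p)\cdot n<m\le\thru^i\cdot n$ and $m\le\th\cdot n$ --- this last inequality is where ``$\th$ sufficiently big'' is used --- so the $\uni$ conditional does not fire. In every case $\update_i(\ho_p)=?$ and $(\ho_1,\dots,\ho_n)\sat\p\dar_i$, hence $\bias(\th)\lact{\p_i}_i\solo^?$ and $?\in\fire_i(\bias(\th),\p)$; the small-$\th$ version swaps the roles of $a$ and $b$ in the last case.

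The only real work is the bookkeeping in the last paragraph: checking that in each sub-case an integer multiset size can be chosen in the prescribed half-open interval. This is routine because $n$ is taken large enough and the relevant threshold constants are bounded away from $1$ and (in the preserving case) pairwise separated by a fixed gap; the genuinely degenerate situations --- a $\uni$ or $\mult$ threshold equal to $0$, or $\thr_i(\p)=-1$ --- are absorbed by the observation that then $\p\dar_i$ imposes no lower size bound, so even the empty multiset is admissible. Beyond that, the argument is a direct unwinding of the definitions of $\update_i$, of a round transition, and of $\fire_i$.
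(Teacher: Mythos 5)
Your proof is correct and follows essentially the same route as the paper: fire $b$ (resp.\ $a$) via the $\uni$ instruction with a large enough unanimous multiset, then fire $?$ by the same three-way case split on why the round is preserving (no $\mult$ instruction: send the whole tuple; $\thr_i(\p)<\thrm^{i,k}$: a small non-unanimous multiset; $\thr_i(\p)<\thru^i$: a small unanimous multiset), sending the same multiset to every process. The extra bookkeeping about integer sizes, large $n$, and degenerate thresholds only makes explicit what the paper leaves implicit.
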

\begin{proof}
	Let $\th > \max(\thru^i,\thr_i(\p))$.
	Because of the $\uni$ instruction, it is then clear that $b \in \fire_i(\bias(\th),\p)$.
	Since the round is preserving (and since $\uni$ instructions are present in every round), 
	either there is  no $\mult$ instruction in round $i$ or  $\thr_i(\p)<\thru^i$, or
	$\thr_i(\p)<\thrm^{i,k}$. 
	In the first case, let $\ho$ be the entire tuple.
	In the second case, let $\ho$ be a multi-set consisting only of $b$'s but of size smaller than $\thru^i$ and bigger than $\thr_i(\p)$.
	In the third case, let $\ho$ be a multi-set of size smaller
	than $\thrm^{i,k}$ (and bigger than $\thr_i(\p)$) with at least one  $a$, and one $b$. In all the cases, it is clear 
	that $? = \update_i(\ho)$ and so $? \in \fire_i(\bias(\th),\p)$.
	We can argue similarly for the other case as well.
\end{proof}

\begin{lemma}\label{lem:ab-if-mult}
	For every predicate $\p$, for every round $i$ with $\mult$ instruction, there is
	a threshold $\th\geq 1/2$ such that $\set{a,b}\incl \fire_i(\bias(\th),\p)$.
\end{lemma}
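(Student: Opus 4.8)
The plan is to construct, from the tuple $\bias(\th)$ for a suitable $\th\geq 1/2$, two heard-of multisets that are both sub-multisets of $\mset(\bias(\th))$ and both large enough to pass whatever threshold test guards the $\mult$ instruction, but that differ in their smallest-most-frequent value: one yielding $a$ and one yielding $b$. Since $\bias(\th)$ has $\th n$ copies of $b$ and $(1-\th)n$ copies of $a$, I first pick $\th$ large enough that both $\th$ and $1-\th$ exceed the relevant constants; concretely take $\th \ge \max(1/2,\ \thr_i(\p),\ \thrm^{i,k})$, rounding up so that $\th n$ and $(1-\th)n$ are integers for the large $n$ we are free to choose. This is where Assumption~\ref{assumption} is not even needed: we only need $\bias(\th)$ to actually contain enough $a$'s and $b$'s, which holds as long as $\th<1$ and $n$ is big.

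The first multiset $\ho_b$ is taken to be all of $\bias(\th)$, or a sub-multiset thereof of size just above $\max(\thr_i(\p),\thrm^{i,k})\cdot n$ in which $b$'s strictly outnumber $a$'s; since $b$ is already the strict majority in $\bias(\th)$ (as $\th>1/2$), this is immediate, and then $\smor(\ho_b)=b$ (and $\min(\ho_b)$ would be $a$, but I will handle the operation cases below). The second multiset $\ho_a$ is a sub-multiset of $\bias(\th)$ of size just above $\max(\thr_i(\p),\thrm^{i,k})\cdot n$ containing \emph{at least as many $a$'s as $b$'s} — this is possible because $\bias(\th)$ contains $(1-\th)n$ copies of $a$, and since we chose $1-\th$ and $\th$ both above the threshold constants, we can take roughly $(1-\th)n$ copies of $a$ together with the same number of $b$'s, or if $(1-\th)n$ alone already clears the threshold, take only $a$'s. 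In that multiset $\smor(\ho_a)=a$ by the tie-breaking rule ($a$ smaller than $b$), and $\min(\ho_a)=a$ as well. Both $\ho_a$ and $\ho_b$ satisfy $\mult$ (they contain both values — for $\ho_a$ this forces us to include at least one $b$, which is fine), and both satisfy the size constraint $|\ho|>\thrm^{i,k}\cdot n$, hence they satisfy the condition of the first $\mult$ conditional line whose threshold is $\geq \thrm^{i,k}$; actually I should be slightly careful and route each multiset through the \emph{first} conditional it satisfies, noting that a $\uni$ conditional cannot fire since $\mult$ holds, so it goes to some $\mult$ line, whose operation in round $i$ is either $\min$ or $\smor$ — in both cases the update of $\ho_b$ is $b$ and of $\ho_a$ is $a$ by the computations above (for $\min$, $\ho_b$ also contains an $a$, so I instead take $\ho_b$ to be a multiset of $b$'s only of the right size if the operation is $\min$; this needs $\th n$ above the threshold, which we have).

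The main obstacle — really the only subtlety — is the interaction with the \emph{ordering} of conditionals and the choice of operation: because a $\mult$ multiset may fall through several $\mult$ lines with decreasing thresholds, and because $\min$ versus $\smor$ behave differently on mixed multisets, I have to make sure that whichever line actually fires produces the value I want. The clean way to handle this is to choose the \emph{size} of $\ho_a$ and $\ho_b$ to be just above $\thrm^{i,1}\cdot n$ (the largest $\mult$ threshold) when possible, so that the first $\mult$ line fires; since $\thrm^{i,1}<1$ and we are free to take $\th$ as close to $1$ as we like and $n$ as large as we like, both $\th n > \thrm^{i,1} n$ and $(1-\th)n$ versus the smaller thresholds can be arranged, or we accept firing a later line and repeat the argument there — the operation there is still $\min$ or $\smor$, and the same two-multiset construction works verbatim. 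Assembling: $f'(p_a)=a$ for the process receiving $\ho_a$ and $f'(p')=b$ for the one receiving $\ho_b$, with all other processes receiving, say, $\ho_b$; this witnesses a single transition $\bias(\th)\lact{\p_i}_i f'$ with $\{a,b\}\subseteq\dom(f')$, i.e.\ $\{a,b\}\subseteq\fire_i(\bias(\th),\p)$, as required.
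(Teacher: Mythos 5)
Your ingredients are mostly the right ones, but as written the argument has a genuine gap in the $\min$ case, and the choice of $\th$ is incoherent. The opening requirement that ``both $\th$ and $1-\th$ exceed the relevant constants'' is impossible whenever one of those constants exceeds $1/2$, so the single $\th\ge\max(1/2,\thr_i(\p),\thrm^{i,k})$ you fix at the start cannot do the work you later ask of it. The point the paper makes explicit is that $\th$ must be chosen according to the operation of the highest-threshold $\mult$ instruction: if it is $\smor$, take $\th=1/2$, so that a balanced sub-multiset of size above $\thrm^{i,1}\cdot n$ and $\thr_i(\p)\cdot n$ yields $a$ by the tie-break and a $b$-majority one yields $b$; if it is $\min$, take $\th>\max(\thru^i,\thr_i(\p),1/2)$, getting $a$ from the whole tuple and $b$ from an all-$b$ multiset. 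Your blanket ``take $\th$ as close to $1$ as we like'' would in general destroy the $\smor$ case (with $\th$ near $1$ there are not enough $a$'s to build a mixed multiset with $\#a\ge\#b$ of size above $\thrm^{i,1}\cdot n$), while your stated constraints never secure what the $\min$ case needs.

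Concretely, the missing step is this: your repaired $\ho_b$ in the $\min$ case is a multiset of $b$'s only, which does \emph{not} satisfy $\mult$; under the first-match semantics it can only fire the $\uni$ conditional, hence it produces $b$ only if $|\ho_b|>\thru^i\cdot n$, i.e.\ only if $\th>\thru^i$. The constant $\thru^i$ appears nowhere in your constraints, and your ``this needs $\th n$ above the threshold, which we have'' tacitly refers to the wrong threshold; indeed you asserted earlier that ``a $\uni$ conditional cannot fire since $\mult$ holds'', which is false for this replacement multiset. The requirement is not cosmetic: if every $\mult$ line of round $i$ has operation $\min$ and $\th\le\thru^i$, then $b\notin\fire_i(\bias(\th),\p)$ at all. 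The same remark applies to your aside ``if $(1-\th)n$ alone already clears the threshold, take only $a$'s''. Finally, a minor slip: you package both values into one transition in which different processes receive different multisets; if $\p\dar_i$ is an equalizer this is not a legal transition. Since $\fire_i$ only asks for a witnessing transition per value, giving one transition where everybody receives $\ho_a$ and another where everybody receives $\ho_b$ (as the paper implicitly does) repairs this at no cost.
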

\begin{proof}
	Let $I$ be the $\mult$ instruction in round $i$ with the biggest 
	threshold. 
	This threshold is called $\thrm^{i,1}$ in our notation. 
	%Then there is also the bound on the size of $\ho$-set given by $\p$, it is $\p_i(\thr)$.
	If the operation of $I$ is $\smor$ then we take $\th=1/2$ and argue
	similar to the proof of Lemma~\ref{lem:spread}.
	If the operation of $I$ is $\min$ then we take $\th>\max(\thr_i(\p), \thruo,1/2)$.
	Because of the $\uni$ instruction, we can 
	get $b$ by sending a multi-set $\ho$ consisting of all the $b$'s in $\bias(\th)$. 
	Further because of the instruction $I$, if we send the entire tuple as a
	multi-set, we get $a$. 
\end{proof}

\begin{lemma}\label{lem:a-if-min}
	Suppose the first round has a  $\mult$ instruction with
	$\min$ as operation. Then $a\in \fire_1(\bias(\th),\gp)$ for 
	every $\th > 0$.
\end{lemma}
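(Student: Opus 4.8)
The plan is to exhibit, for any $\th > 0$, a single heard-of multiset $\ho$ that a process can receive starting from the tuple $\bias(\th)$ under the global predicate $\gp$, such that $\update_1(\ho) = a$. Since $\bias(\th)$ contains at least one $a$ (as $\th < 1$) and at least one $b$ (since $\th n \geq 1$ for large enough $n$; for the degenerate case $\th$ so small that there are no $b$'s the statement is immediate since the whole tuple is $\solo^a$), we want to send a multiset that triggers the $\mult$ instruction with $\min$, which then returns $a$ because $a$ is the smaller value and $a \in \ho$.

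First I would recall from Assumption~\ref{assumption} that, because round $1$ is the very first round, we have $\thrm^{i,k} \ge \thr_1(\gp)$ for the relevant minimal $\mult$-threshold in round $1$ — more precisely $\thrm^{1,k}\ge \thr_1(\gp)$ whenever the (vacuous) condition ``rounds $1,\dots,0$ are non-preserving'' holds, which it does. So sending the entire tuple $\bias(\th)$ as the multiset $\ho = \mset(\bias(\th))$ gives $|\ho| = n > \thr_1(\gp)\cdot n$, hence $\ho \sat \gp\dar_1$, and $|\ho| = n > \thrm^{1,k}\cdot n$ as well, so all $\mult$-threshold constraints in round $1$ are met. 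Since $\ho$ contains both $a$ and $b$, the atomic condition $\mult(\ho)$ holds; and since there is no $?$ in $\bias(\th)$, no condition involving $\uni$ with only $b$'s can be satisfied first (in fact $\uni(\ho)$ fails outright). Thus the first conditional satisfied by $\ho$ is a $\mult$ instruction. If the first such instruction already has operation $\min$, we are done. If not — i.e.\ an earlier $\mult$ instruction with operation $\smor$ fires on the full tuple — then I would instead tailor $\ho$ to be a multiset with strictly more $a$'s than $b$'s (but still at least one $b$) and of size between $\thrm^{1,1}\cdot n$ and $n$: this is possible from $\bias(\th)$ for large $n$ provided $\th$ is not too close to $1$; for $\th$ close to $1$ one takes $\ho$ with just one $b$ and as many $a$'s as available. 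On such an $\ho$, $\smor(\ho) = a$ as well (smallest most frequent value, and $a$ is strictly more frequent), so whichever $\mult$ instruction fires first, the update is $a$.

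The main obstacle — and it is a minor one — is the bookkeeping of \emph{which} $\mult$ instruction fires first and ensuring the chosen $\ho$ simultaneously (i) satisfies $\gp\dar_1$, (ii) meets the threshold of that particular $\mult$ instruction, and (iii) makes the operation evaluate to $a$. The cleanest way to sidestep this is to observe that any $\mult$ instruction in round $1$ with operation $\min$ returns $a$ on \emph{any} multiset containing an $a$, and any $\mult$ instruction with operation $\smor$ returns $a$ on any multiset where $a$ is among the most frequent values; so I would simply pick $\ho$ with at least one $b$ and with $a$'s strictly dominant, of size $n$ or as close to $n$ as the supply of $a$'s allows — this single choice handles every possible ordering of the $\mult$ instructions and satisfies $\gp\dar_1$ by Assumption~\ref{assumption}. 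Hence $a \in \update_1(\ho)$, so $a \in \fire_1(\bias(\th),\gp)$, as required.
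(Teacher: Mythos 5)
Your proposal breaks down exactly in the case the lemma is really about: $\th$ close to $1$, where $\bias(\th)$ contains only $(1-\th)n$ copies of $a$. Your fallback multiset (``at least one $b$, $a$'s strictly dominant, of size $n$ or as close to $n$ as the supply of $a$'s allows'') then has size at most about $2(1-\th)n$, which for large $\th$ can be below $\thr_1(\gp)\cdot n$ and below every $\mult$ threshold of the first round. In that situation the tuple of heard-of sets does not satisfy $\gp\dar_1$ at all (so it is not an admissible transition), or, even if $\thr_1(\gp)$ happens to be small, no conditional fires and $\update_1(\ho)=?$, not $a$. The appeal to Assumption~\ref{assumption} is a non sequitur here: it only guarantees $\thrm^{1,k}\ge\thr_1(\gp)$, i.e.\ that a multiset of size just above $\thrm^{1,k}$ is admissible; it says nothing about admissibility of a small, $a$-dominated multiset. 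In short, your strategy of forcing $a$ to be (most) frequent re-proves what already holds for $\smor$ instructions (and only for $\th<\bthr$, cf.\ Lemma~\ref{lem:fire-one}); it cannot give the unconditional ``for every $\th>0$'' claim.

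The missing idea, which is the point of the lemma and of the paper's proof, is to exploit the $\min$ operation so that a \emph{single} $a$ suffices, and to control which instruction fires by the \emph{size} of $\ho$ rather than by its composition. Let the $j$-th $\mult$ instruction be one with operation $\min$, and take $\ho$ containing one $a$, one $b$, and padded with $b$'s to size just above $\thrm^{1,j}\cdot n$. Since $\thrm^{1,j}\ge\thrm^{1,k}\ge\thr_1(\gp)$ by Assumption~\ref{assumption}, this $\ho$ satisfies $\gp\dar_1$; the $\uni$ conditional fails, the earlier $\mult$ conditionals with larger thresholds fail, the $j$-th one fires, and $\min(\ho)=a$. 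This works for every $\th>0$ because only one $a$ is needed, which is exactly what your dominance-based construction cannot deliver when $b$'s overwhelmingly outnumber $a$'s.
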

\begin{proof}
	Let the $j^{th}$ $\mult$ instruction be the instruction
	with the $\min$ operation. Let $\ho$ be any multiset containing
	at least one $a$ and one $b$ and is of size just above $\thrm^{1,j}$.
	By observation~(\ref{eq:syntactic-property}) we have
	$\thrm^{1,j} \ge \thr_1(\gp)$ and so we have that $\ho \models \gp_1$ and
	$a  = \update_1(\ho)$. 
\end{proof}

The next sequence of lemmas tells us what can happen in a sequence of rounds.

\begin{lemma}\label{lem:any-qbias-from-qbias}
	Suppose none of $\p_k,\dots,\p_l$ is an equalizer. If $\set{b,?}\incl
	\fire_k(f,\p_k)$ then for every $\th'$ we have
	$f\lact{\p_k}_k\dots\lact{\p_l}_l \biasq(\th')$. Similarly, for $b$ replaced
	by $a$, and $\biasq(\th')$ replaced by $\biasqa(\th')$.
\end{lemma}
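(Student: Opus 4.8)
The plan is to exploit the fact that once we reach a state of the form $\biasq(\th)$ (a tuple over $\set{?,b}$), a non-equalizer round can redistribute the $b$'s and $?$'s essentially freely, because the communication medium is allowed to deliver to each process any sub-multiset of the sent values. First I would set up the target: given any desired final fraction $\th'$, I want to realize $\biasq(\th')$ after round $l$. The key single-round claim is: \emph{if $f$ is a tuple with $\mset(f)\subseteq\set{?,b}$ and $b\in\fire_j(f,\p_j)$, and round $j$ is not an equalizer, then for every fraction $\mu$ we have $f\lact{\p_j}_j \biasq(\mu)$.} The proof of this claim is that, since $b$ is firable, some heard-of multiset $\ho_b$ consisting of $b$'s (we may assume it contains only $b$'s: take the sub-multiset of all $b$'s of any witnessing $\ho$; it still satisfies the relevant threshold and uniqueness conditions, and $\update_j$ on a multiset of pure $b$'s yields $b$) produces $b$; and the empty multiset, or a multiset too small to satisfy any condition, produces $?$ via the last clause of the $\update$ definition. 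Because $\p_j$ is a conjunction of $\f_\thr$ and possibly $\f_{\ls}$-type atoms but crucially \emph{not} $\f_=$, we may hand $\ho_b$ to exactly a $\mu$-fraction of the processes and the "undefined-producing" multiset to the rest, and this assignment still satisfies $\p_j\!\dar_j$ (every process receives a sub-multiset of $\mset(f)$ of the required size — note the "undefined-producing" multiset can be taken of size just below the smallest relevant threshold but still above $\thr_j(\p)$ if necessary, using that $?$-values count toward the communication predicate but are ignored by $\update$). Hence $f\lact{\p_j}_j \biasq(\mu)$.

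Next I would chain this. Applying the claim with $\mu = \th'$ if $l=k$ finishes immediately from the hypothesis $\set{b,?}\subseteq\fire_k(f,\p_k)$ (in particular $b\in\fire_k(f,\p_k)$). If $l>k$, I first use $\set{b,?}\subseteq\fire_k(f,\p_k)$ together with the non-equalizer assumption on $\p_k$ to get $f\lact{\p_k}_k \biasq(1/2)$ (or indeed any convenient fraction with both $b$ and $?$ present). Then for each subsequent round $j=k+1,\dots,l-1$ I need $b\in\fire_j(\biasq(\cdot),\p_j)$ so I can re-apply the claim. This holds because $\biasq$-tuples contain $b$, so handing the pure-$b$ sub-multiset of appropriate size to a process makes it fire $b$ (the $\uni$ clause or a $\mult$ clause applies to a pure-$b$ multiset, and its threshold is met by taking the multiset large enough, using that there are enough $b$'s; if $b$'s were scarce one keeps the fraction above the relevant threshold at each step, which is always possible since we control the fractions). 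So inductively I keep the state in the $\biasq$ family all the way to round $l-1$, and then apply the claim one last time with $\mu=\th'$ to land on $\biasq(\th')$.

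The symmetric statement with $b$ replaced by $a$ and $\biasq$ by $\biasqa$ follows by the identical argument, the only asymmetry in the semantics ($\min$ and $\smor$ favoring $a$) being irrelevant here since all multisets involved are monochromatic in $a$ (resp.\ $b$) or empty. The main obstacle — and the point that needs care rather than cleverness — is the interaction between the threshold constants in $\p_j$ and the need to produce $?$: producing $?$ requires a heard-of multiset that satisfies $\p_j\!\dar_j$ (so it must be large enough, $\th_j(\p)$-fraction) yet fails \emph{every} instruction's condition; this is where one uses that $?$-entries pad the multiset for the communication predicate but are discarded by $\update$, so a multiset of, say, $\th_j(\p)$-many $?$'s does the job. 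One must check this is consistent with $f$ actually containing enough $?$'s, which is guaranteed inductively since we always arrange the intermediate fractions to leave a positive fraction of $?$'s (for $n$ large enough, as in the standing convention on tuple sizes). I would flag that the degenerate subcase where some intermediate round has no $\uni$ and no $\mult$ instruction at all is handled even more easily, as then the entire tuple can be delivered or not delivered at will.
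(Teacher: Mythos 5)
Your overall route is the same as the paper's: use the absence of $\f_=$ to hand different heard-of multisets to different processes, and chain $\biasq$-tuples through rounds $k,\dots,l$. The problem is that your "key single-round claim" is false as stated, and it fails exactly at the one delicate point of this lemma, namely producing $?$. To give a process $?$ at round $j$, its multiset must have total size (counting $?$'s) above $\thr_j(\p)\cdot n$ while its non-$?$ part fails every condition; since the tuple is over $\set{b,?}$, that non-$?$ part is single-valued, so no $\mult$ clause can ever fire (incidentally, a pure-$b$ multiset satisfies $\uni$, not $\mult$, contrary to your parenthetical — this is harmless only because every round is already assumed to have a $\uni$ instruction), and the binding constraint is that the number of $b$'s delivered must stay at most $\thru^j\cdot n$. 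From $\biasq(\th)$ the largest admissible such multiset has fraction $(1-\th)+\min(\th,\thru^j)$, so $?$ is producible only when $\th\le\thru^j$ or $\th<1-\thr_j(\p)+\thru^j$. Hence $b\in\fire_j(f,\p_j)$ alone does not yield $?\in\fire_j(f,\p_j)$: take $\thru^j=1/2$, $\thr_j(\p)=9/10$ and $f=\biasq(8/10)$; then $b$ fires but $?$ cannot, so $f\lact{\p_j}_j\biasq(\mu)$ is impossible for every $\mu<1$. Your stated safeguard — keep "a positive fraction of $?$'s" — is therefore too weak; the requirement is quantitative, and "$n$ large enough" does not help since everything is measured in fractions. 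Note also that the lemma is later applied with sporadic predicates (e.g.\ deciders) whose thresholds $\thr_j(\p)$ can be large, so this is not a corner case one may wave away.

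The repair is precisely the choice the paper makes: when entering round $j$, arrange the intermediate tuple to be $\biasq(\thru^j+\epsilon)$, i.e.\ keep the fraction of $b$'s just above the $\uni$ threshold of the upcoming round. Then $b$ fires by delivering the whole tuple, and $?$ fires by delivering all the $?$'s together with a fraction of $b$'s just below $\thru^j$ — a multiset of fraction $1-\epsilon'$, which exceeds any $\thr_j(\p)<1$. The paper organizes this as an induction on $l-k$ working backwards from the last round; your forward chaining is equivalent once the intermediate biases are pinned down this way rather than left as "any convenient fraction" such as $1/2$. Your first step at round $k$ is fine as written, since there both $b$ and $?$ are firable by hypothesis and mixing them only needs that $\p_k$ is not an equalizer.
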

\begin{proof}
	The proof is by induction on $k-l$. 
	If $k=l$ then the lemma is clearly true since we can produce both $b$ and $?$
	values, and $\p_k$ is not an equalizer.
	For the induction step, consider the last round $l$, and let $\th''=\thru^l+\e$ for
	some small $\e$. 
	We have $b\in\fire_l(\biasq(\th''),\p_l)$ because of the $\uni$ instruction.
	%when we send the whole multiset.
	We can also construct a multiset $\ho$ from $\biasq(\th'')$ 
	of size $1-\e' > \thr(\p_l)$ for some small $\e' > \e$
	containing $\th''-\e'$ fraction of $b$'s and $1-\th''$ fraction of
	$?$.
	This multiset shows that  $?\in\fire_l(\biasq(\th),\p_l)$.
	So from $\biasq(\th'')$ in round $l$ we can get $\biasq(\th')$ for any $\th'$.
	The induction assumption gives us $f\lact{\p_k}_k\dots\lact{\p_{l-1}}_{l-1}
	\biasq(\th'')$, so we are done.
\end{proof}

\begin{lemma}\label{lem:any-frequency}
	Suppose none of $\p_k,\dots,\p_l$ is an equalizer, and all rounds $k\dots l$
	have $\mult$ instructions.
	If $\dom(f')\incl \fire_k(f,\p_k)$ and $? \notin \dom(f')$
	then $f\lact{\p_k}_k\dots\lact{\p_l}_l f'$ is possible.
	%\igwin{Kept "$f'$ does not contain $?$", as the proof only talks about this case}
\end{lemma}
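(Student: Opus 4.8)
The plan is to prove this by induction on $l-k$, after recording one structural observation about non-equalizer rounds. Since none of $\p_k,\dots,\p_l$ contains $\f_=$, the requirement that a tuple of received multisets $(\ho_1,\dots,\ho_n)$ satisfy $\p\dar_j$ is purely per-process (it amounts to $|\ho_p|>\thr_j(\p)\cdot n$ for each $p$). Hence, for such a round $j$, $d\in\fire_j(g,\p)$ holds precisely when there is a \emph{single} multiset $\ho\incl\mset(g)$ with $|\ho|>\thr_j(\p)\cdot n$ and $\update_j(\ho)=d$; and consequently $g\lact{\p_j}_j g'$ is possible for \emph{every} tuple $g'$ with $\dom(g')\incl\fire_j(g,\p)$ — simply let process $p$ send the witness multiset for $g'(p)$. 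This is the workhorse. In particular, for the base case $k=l$, the hypothesis $\dom(f')\incl\fire_k(f,\p_k)$ gives $f\lact{\p_k}_k f'$ directly.

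For the inductive step $k<l$, I would split on $\dom(f')$. If $\dom(f')=\set{b}$, i.e.\ $f'=\solo$, then $b\in\fire_k(f,\p_k)$, so by the observation $f\lact{\p_k}_k\solo$; and $\solo$ is preserved by every later round: sending the full multiset of $n$ copies of $b$ fires the $\uni$ conditional (which, by the normal form, precedes the $\mult$ conditionals, and exists since every round has a $\uni$ instruction) and again yields $b$, giving $\solo\lact{\p_{k+1}}_{k+1}\cdots\lact{\p_l}_l\solo=f'$. The case $\dom(f')=\set{a}$ is symmetric with $\soloa$. If $\dom(f')=\set{a,b}$, I would apply Lemma~\ref{lem:ab-if-mult} to round $k+1$ (which has a $\mult$ instruction by hypothesis) to obtain $\th$ with $\set{a,b}\incl\fire_{k+1}(\bias(\th),\p_{k+1})$; since $\set{a,b}=\dom(f')\incl\fire_k(f,\p_k)$, the observation yields $f\lact{\p_k}_k\bias(\th)$, and since $\dom(f')\incl\set{a,b}\incl\fire_{k+1}(\bias(\th),\p_{k+1})$, the induction hypothesis applied to rounds $k+1,\dots,l$ (still all non-equalizers, all with $\mult$ instructions) gives $\bias(\th)\lact{\p_{k+1}}_{k+1}\cdots\lact{\p_l}_l f'$. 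Prepending the round-$k$ transition closes the step.

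The step I expect to require the most care is the case analysis itself: the natural impulse is to route everything through a single intermediate $\bias(\th)$, but this fails when $\dom(f')$ is a singleton, because $\fire_k(f,\p_k)$ may be exactly $\set{b}$ (e.g.\ when $f=\solo$), so $a$ need not be producible in round $k$; hence the monochromatic cases are genuinely necessary, and the real content there is the (easy but essential) fact that $\solo$ and $\soloa$ are invariant under every round. A minor bookkeeping point is that the intermediate $\bias(\th)$ must live over the same $n$ as $f'$ with $\th\cdot n$ an integer; this is absorbed by the paper's standing convention that all tuples are taken over a common, sufficiently large $n$, and poses no real difficulty.
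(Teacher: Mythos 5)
Your proof is correct and takes essentially the same route as the paper's: induction on the number of rounds with a case split on whether $\dom(f')$ is monochromatic (propagated using the $\uni$ instruction present in every round) or equal to $\set{a,b}$ (routed through an intermediate $\bias(\th)$ supplied by Lemma~\ref{lem:ab-if-mult}). The only cosmetic differences are that you peel off the first round rather than the last and that you state explicitly the per-process "workhorse" observation about non-equalizer rounds, which the paper uses implicitly.
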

\begin{proof}
	We proceed by induction on $l-k$.
	The lemma is clear when $k = l$. Suppose $k \neq l$. 
	Consider three cases:
	
	Suppose $f'$ is $\solo^a$ or $\solo$. By induction hypothesis
	we can reach $f'$ after round $l-1$. Since round $l$ 
	has a $\uni$ instruction it is clear that $f' \lact{\p_l}_l f'$.
	
	Suppose $a,b \in \dom(f')$. 
	Lemma~\ref{lem:ab-if-mult} says that there is $\th$ for which
	$\set{a,b}\in\fire_l(\bias(\th),\p_l)$. 
	Hence $\bias(\th)\lact{\p_l}_l f'$. 
	By induction hypothesis we can reach $\bias(\th)$ after round $l-1$.
\end{proof}

\begin{lemma}\label{lem:any-qbias-from-bias}
	Suppose none of $\p_k,\dots,\p_l$ is an equalizer, and some round $k,\dots,l$
	does not have a $\mult$ instruction. 
	For every $\th$ and every $f$  such that $\set{a,b}\in
	\fire_k(f,\p_k)$ we have 	$f\lact{\p_k}_k\dots\lact{\p_l}_l \biasq(\th)$, 
	and $f\lact{\p_k}_k\dots\lact{\p_l}_l \biasqa(\th)$.
\end{lemma}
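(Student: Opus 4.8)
The plan is to route the construction through the single round that has no $\mult$ instruction. Let $m$ be the \emph{first} index in $\{k,\dots,l\}$ for which round $m$ has no $\mult$ instruction (it exists by hypothesis); then rounds $k,\dots,m-1$ all carry $\mult$ instructions, and — recalling that every round carries a $\uni$ instruction (Lemma~\ref{lem:no-uni}, assumed from now on) — round $m$ has a $\uni$ instruction but no $\mult$. The three steps are: (i) use Lemma~\ref{lem:any-frequency} to drive $f$ through rounds $k,\dots,m-1$ to a tuple $\bias(\th')$ with a conveniently chosen bias $\th'$; (ii) exploit the absence of a $\mult$ instruction in round $m$ to obtain $\set{b,?}\incl\fire_m(\bias(\th'),\p_m)$; (iii) invoke Lemma~\ref{lem:any-qbias-from-qbias} on rounds $m,\dots,l$ (none of $\p_m,\dots,\p_l$ is an equalizer) to reach $\biasq(\th)$ for an arbitrary $\th$. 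Concatenating the two runs gives $f\lact{\p_k}_k\cdots\lact{\p_l}_l\biasq(\th)$; the statement for $\biasqa$ is obtained symmetrically, swapping $a$ and $b$ and using the second halves of Lemmas~\ref{lem:any-frequency} and~\ref{lem:any-qbias-from-qbias}.

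For step (i), when $m>k$: since $\dom(\bias(\th'))=\set{a,b}\incl\fire_k(f,\p_k)$ and $?\notin\dom(\bias(\th'))$, Lemma~\ref{lem:any-frequency} applies to rounds $k,\dots,m-1$ and yields $f\lact{\p_k}_k\cdots\lact{\p_{m-1}}_{m-1}\bias(\th')$ for \emph{any} $\th'<1$. I would pick $\th'$ in the interval $(\max(\thru^m,\thr_m(\p)),1)$, which is nonempty since both thresholds are strictly below $1$. For step (ii): from $\bias(\th')$ in round $m$, the all-$b$ sub-multiset of size just above $\max(\thru^m,\thr_m(\p))\cdot n$ is admissible and triggers the $\uni$ instruction, so $b\in\fire_m(\bias(\th'),\p_m)$; and the full tuple $\mset(\bias(\th'))$ is admissible (its size $n$ exceeds $\thr_m(\p)\cdot n$, and $\p_m$ is not an equalizer) and non-uniform, so — round $m$ having no $\mult$ instruction — its update value is $?$, whence $?\in\fire_m(\bias(\th'),\p_m)$. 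Then step (iii) finishes this case. For the $\biasqa$ variant one instead picks $\th'$ in $(0,1-\max(\thru^m,\thr_m(\p)))$, so that the all-$a$ sub-multiset of $\bias(\th')$ is large enough to trigger the $\uni$ instruction and fire $a$, while the full tuple still fires $?$.

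The only case outside this scheme is $m=k$, where rounds $k,\dots,m-1$ are vacuous and Lemma~\ref{lem:any-frequency} gives nothing. Here I would observe that $\set{a,b}\incl\fire_k(f,\p_k)$ already forces $\mset(f)$ to contain both an $a$ and a $b$, because $\min$ and $\smor$ only return values occurring in the received multiset and every received multiset is a sub-multiset of $\mset(f)$; hence the full tuple $\mset(f)$ is an admissible heard-of multiset for round $k$ (size $n>\thr_k(\p)\cdot n$, and $\p_k$ not an equalizer), it is non-uniform, so its update value is $?$. Combined with $b\in\fire_k(f,\p_k)$ (resp.\ $a\in\fire_k(f,\p_k)$) this gives $\set{b,?}\incl\fire_k(f,\p_k)$ (resp.\ $\set{a,?}\incl\fire_k(f,\p_k)$), and Lemma~\ref{lem:any-qbias-from-qbias} applies directly to rounds $k,\dots,l$. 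I expect the only point needing care is the bookkeeping in step (ii): checking that a single choice of $\th'$ simultaneously makes the uniform sub-multiset large enough to fire $b$ (resp.\ $a$) \emph{and} leaves a mixed admissible multiset to fire $?$; this works precisely because all instruction and communication thresholds are strictly below $1$, so the relevant interval of admissible $\th'$ is nonempty.
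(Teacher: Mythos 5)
Your proof is correct and follows essentially the same route as the paper's: drive $f$ via Lemma~\ref{lem:any-frequency} to a suitably biased tuple at the first round without a $\mult$ instruction, observe that this round can fire both $b$ (resp.\ $a$) and $?$ there (the paper cites Lemma~\ref{lem:bias-preserving} where you re-derive the fact directly), and finish with Lemma~\ref{lem:any-qbias-from-qbias}. Your explicit treatment of the degenerate case where the $\mult$-free round is round $k$ itself is a small extra care the paper glosses over, but it does not change the argument.
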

\begin{proof}
	Let $i$ be the first round without $\mult$ instruction.
	Using Lemma~\ref{lem:any-frequency}, from the tuple $f$ at round $k$, we can arrive at round $i$
	with the tuple $\bias(\th)$ for any $\th$. 
	We  choose $\th$ according to Lemma~\ref{lem:bias-preserving} so that 
	$\set{b,?}\incl \fire_i(\bias(\th),\p_i)$. 
	Then we can apply Lemma~\ref{lem:any-qbias-from-qbias} to prove the claim.
	The reasoning for $\biasqa$ is analogous.
\end{proof}

\begin{lemma}\label{lem:no-mult-sequence}
	If round $\fo+1$ contains a $\mult$ instruction  then the algorithm 
	does not satisfy agreement, or the $\mult$ instruction can be removed without altering the
	correctness of the algorithm.
\end{lemma}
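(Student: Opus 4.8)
The plan is to split on whether any $\mult$ instruction of round $\fo+1$ is ever fired along an execution. If none is, then --- since every reachable state is reached by some finite prefix of $\gp$-phases, and every $\p^i$-phase transition is in particular a $\gp$-phase transition --- deleting all $\mult$ instructions from round $\fo+1$ leaves the set of executions unchanged, so the modified algorithm solves consensus iff the original does; this gives the second disjunct. So suppose some $\mult$ instruction $I$ of round $\fo+1$, with minimal round-$(\fo+1)$ threshold $\thrm^{\fo+1,k}$, is fired; then it is fired already along a run consisting only of $\gp$-phases, so some reachable $\inp$-tuple $g$ leads, after rounds $1,\dots,\fo$ under $\gp$, to a tuple $f_\fo$ holding both $a$ and $b$ among its non-$?$ entries --- in particular a mixed $\inp$-tuple is reachable. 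I claim the algorithm then violates agreement.

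The main tool is Lemma~\ref{lem:ab-if-mult}: since round $\fo+1$ carries a $\mult$ instruction, there is a threshold $\th\ge 1/2$ with $\set{a,b}\incl\fire_{\fo+1}(\bias(\th),\gp)$. First I would show that, starting from a reachable mixed $\inp$-tuple, one can route through rounds $1,\dots,\fo$ --- each carrying a $\uni$ instruction by Lemma~\ref{lem:no-uni}, the first also a $\mult$ instruction by Lemma~\ref{lem:no-mult} --- so as to present the tuple $\bias(\th)$ at the head of round $\fo+1$, and, when needed, to do so while leaving the $\inp$-tuple at the end of the phase of the form $\bias(\th')$, so that the construction can be repeated; here Lemmas~\ref{lem:bias-preserving}, \ref{lem:any-frequency} and \ref{lem:any-qbias-from-bias} move between $\bias$- and $\biasq$-shaped tuples, using that $\gp$ has no equalizer (Proviso~\ref{proviso}). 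From $\bias(\th)$ at round $\fo+1$ both $a$ and $b$ are fireable, and then --- according to whether rounds $\fo+2,\dots,r-1$ all have $\mult$ instructions --- Lemma~\ref{lem:any-frequency} gives $f_{r-1}=\bias(\th'')$ for an arbitrary $\th''$, or else Lemma~\ref{lem:any-qbias-from-bias} gives $f_{r-1}=\biasq(\th'')$ (resp.\ $\biasqa(\th'')$) for an arbitrary $\th''$. If $\thru^r<1/2$, or if round $r$ itself has a $\mult$ instruction, this already lets two distinct processes receive at round $r$ multisets whose updates are $a$ and $b$ respectively, producing conflicting $\dec$ values within a single phase --- a violation of agreement.

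In the remaining cases --- round $r$ is $\uni$-only with $\thru^r\ge 1/2$, or only $\biasq$- or $\biasqa$-shaped tuples are reachable at $f_{r-1}$ --- one phase delivers at most one decision value, and the argument has to be spread over two consecutive $\gp$-phases placed at the very start of the run: a first phase in which some process $p_1$ decides (say) $b$, while a chosen witness $p$ receives a too-small or mixed multiset at round $r$ and keeps $\dec=?$, followed by a second phase, run from the still-mixed $\inp$-tuple left by the first, in which $I$ fires again and makes $p$ decide $a$. Since $\dec$ is irrevocable, and the mandatory sporadic phases appended afterwards cannot undo a decision already taken, this exhibits a reachable state with $d(p_1)=b$ and $d(p)=a$. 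I expect the main obstacle to be precisely this two-phase construction together with the routing through rounds $1,\dots,\fo$: one must check that a single firing of $I$ yields enough control over $f_\fo$ to realise the particular tuple $\bias(\th)$, and that the $\inp$-tuple produced by the first phase still admits a re-firing of $I$ in the second --- which is where the helper lemmas of Part~1 are exercised most heavily.
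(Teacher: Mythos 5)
Your overall skeleton matches the paper's: the dichotomy (never fired $\Rightarrow$ remove the instruction; fired $\Rightarrow$ agreement fails), and a first phase in which some processes decide while others stay undecided, built from Lemmas~\ref{lem:ab-if-mult}, \ref{lem:any-frequency} and \ref{lem:any-qbias-from-bias} under the no-equalizer proviso. The first-phase part of your argument is essentially sound. The gap is in your second phase, and it is exactly the point you flag but do not resolve. You propose that the round-$\fop$ instruction $I$ ``fires again'' in the second phase to produce the conflicting decision. This fails in two situations. First, if some round among $2,\dots,\fo$ has no $\mult$ instruction, then in the second phase only tuples over $\set{b,?}$ or $\set{a,?}$ can reach round $\fop$ (Lemma~\ref{lem:any-qbias-from-bias}), and a multiset with a single non-$?$ value never satisfies $\mult$, so $I$ cannot fire at all. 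Second, and more delicately, the value of $\th$ supplied by Lemma~\ref{lem:ab-if-mult} is not free: if $I$ has $\min$ as its operation, $\th$ must be large, so the $\inp$-tuple left by the first phase is $\bias(\th)$ with very few $a$'s; if the first round's $\mult$ instruction happens to be $\smor$, then $a\notin\fire_1(\bias(\th),\gp)$, so no mixed tuple can ever be transported back to round $\fop$ and your witness $p$ can never be made to decide $a$. (Note that at this stage of the development you may not assume the first-round $\mult$ operations are $\smor$ or $\min$ specifically --- Lemma~\ref{lem:no-min} comes later and itself relies on the present lemma.)

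The paper's proof avoids both problems by case-splitting on the operation of $I$ and by never re-firing $I$ in the second phase. If $I$ is $\smor$, it presents $\spread$ at round $\fop$, makes (some) processes decide $b$, and the phase leaves $\inp=\spread$; then in the next phase the first-round $\mult$ instruction (whether $\smor$ or $\min$) fires $a$ from $\spread$, giving $\soloa$ after round one, which propagates through the $\uni$ instructions so that every undecided process decides $a$. If $I$ is $\min$, it presents $\bias(\th)$ with $\th>\max(\thru^{\fo+1},\thru^1,\thr_{\fo+1}(\gp))$, makes processes decide $a$ in the first phase (via $I$), and in the next phase fires $b$ through the first-round $\uni$ instruction (possible precisely because $\th\ge\thru^1\ge\thr_1(\gp)$), again reaching a solo tuple and forcing the opposite decision. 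So the coordination between the choice of $\th$, which value is decided in which phase, and which first-round instruction produces the opposite solo tuple is the substance of the proof; as written, your proposal leaves this unresolved and its stated mechanism for the second phase does not go through in the $\min$ case or when a middle round lacks $\mult$.
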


\begin{proof}
	Suppose round $\fo+1$ contains a $\mult$ instruction
	
	The first case is when there does not exist any tuple $f$ and an execution
	$f \lact{\gp_1}_1 f_1 \dots \lact{\gp_{\fo-1}}_{\fo-1} f_{\fo-1}$
	such that $a,b \in \fire_\fo(f_{\fo-1},\gp)$. It is then clear
	that the $\mult$ instructions in round $\fo+1$ will never be fired
	and so we can remove all these instructions in round $\fo+1$.
	
	So it remains to examine the case when  there exists a tuple $f$ with
	$f \lact{\gp_1}_1  f_1\lact{\gp_2} \cdots\lact{\gp_{\fo-1}} f_{\fo-1}$ such that
	$a,b \in \fire_{\fo}(f_{\fo-1},\gp)$. 
	In this case we get
	$f_{\fo-1} \lact{\gp_{\fo}}_{\fo} \bias(\th)$ for arbitrary $\th$.
	Let $I$ be the $\mult$ instruction in round $\fo+1$ with the
	highest threshold value.
	Recall that, by proviso from page~\ref{proviso}, $\gp$ is not an equalizer.
	We consider two sub-cases: 
	
	Suppose $I$ has $\smor$ as its operation.
	Then we consider 	$f_{\fo-1} \lact{\gp_{\fo}}_{\fo} \spread$. 
	As $I$ has $\smor$ as operation, from $\spread$ we can construct a multiset $\ho$ containing more $a$'s
	than $b$'s such that the size of $\ho$ is bigger than $\thr_{\fo+1}(\gp)$ and $\thrm^{\fo+1,1}$.
	Similarly we can construct a multiset having more $b$'s than $a$'s.	
	Hence we get $\spread \lact{\gp_{\fo+1}}_{\fo+1} \bias(\th')$
	for arbitrary $\th'$. 
	If all rounds after $\fo+1$ have $\mult$ instructions, then we can
	apply Lemma~\ref{lem:any-frequency} to conclude that we can 
	reach the tuple $\spread$ after round $r$, thereby deciding 
	on both $a$ and $b$ and violating agreement.
	Otherwise we can use Lemma~\ref{lem:any-qbias-from-bias} to conclude that we can 
	reach the tuple $\spread^?$ after round $r$ and hence
	make half the processes decide on $b$. 
	Notice that after this phase  the state of the algorithm is $(\spread,\spread^?)$.
	%In the next phase, the state of $\inp$ is $\spread$. 
	We know, by Lemma~\ref{lem:no-mult} that the first round has a $\mult$ instruction.
	This instruction has $\smor$ or $\min$ as its operation,
	it is clear that in either case, $a \in \fire_1(\spread,\gp)$ and
	so we can get $\spread \lact{\gp_1}_1 \solo^a$ and
	$\solo^a \lact{\gp_i}_i \solo^a$ for $i > 1$, thereby making the
	rest of the undecided processes decide on $a$. Hence
	agreement is violated.
	
	%\igw{this part is changed}
	Suppose $I$ has $\min$ as its operation. Then we consider
	$f_{\fo-1} \lact{\gp_{\fo}}_{\fo} \bias(\th)$ where $\th > \max(\thru^{\fo+1},\thru^{1},\thr_{\fo+1}(\gp	))$ is sufficiently big.
	It is clear that $b \in \fire_{\fo+1}(\bias(\th),\gp)$.
	Further if we take our multi-set $\ho$ to be $\bias(\th)$ itself,
	then (because of the instruction $I$) we have $a \in \fire_{\fo+1}(\bias(\th),\gp)$.
	Hence we get $\bias(\th) \lact{\gp_{\fo+1}}_{\fo+1} \bias(\th')$
	for arbitrary $\th'$. 
	As in the previous case, either this immediately allows us to conclude that agreement is violated, or this allows us to make half the processes decide on $a$. In the latter case, 
	note that the state of the algorithm after this 
	phase will be $(\bias(\th),\spread^?_a)$.
	%Since the first round has a $\uni$ instruction (Lemma~\ref{lem:no-uni})
	Since $\th \ge \thru^1$ and since $\thru^1 \ge \thr_1(\gp)$ by observation~(\ref{eq:syntactic-property}), it follows
	that $b \in \fire_1(\bias(\th),\gp)$.
	Hence we can get $\solo$ as the tuple after the first round and decide
	on $b$, as in the previous case.

	%\igwin{The rest of the proof left for reference}
	%We know that the first round has a $\mult$ instruction $J$.
	%If $J$ has $\smor$ as its operation, then
	%	we arrive at round $l$ with $\bias(\th)$ for $\th$ very small.
	%	This enables us to fire $a$ (using the $\uni$ instruction)
	%	and fire $?$ (because there is no $\mult$ instruction) at round $l$.
	%	We can then propagate this all the way to round $r$, to ensure
	%	that at least one process decides on $a$ and at least one process 
	%	does not decide. In the next phase, since the state of $\inp$
	%	is $\bias(\th)$ for big enough $\th$, we can fire $b$ from the first 
	%	round to get $\solo_b$ after the first round which we can propagate
	%	all the way down to decide on $b$.
	%	If $J$ has $\min$ as its operation, then we arrive at 
	%	round $l$ with $\bias(\th')$ for very big $\th'$.
	%	We can then fire $b$ and $?$ at round $l$ and use
	%	Lemma~\ref{lem:any-bias} to conclude that we can make at least
	%	one process decide on $b$ and one process not decide. 
	%	In the next phase, since $J$ has $\min$ as its operation, 
	%	we can fire $a$ from the first round and hence decide on $a$.
	
	%	Suppose there does not exist any tuple $f$ such that
	%	$f \lact{\p_1}_1 f_1 \dots \lact{\p_{\fo-1}}_{\fo-1} f_{\fo-1}$
	%	with $a,b \in \fire_\fo(f_{\fo-1},\p)$. It is then clear
	%	that the $\mult$ instructions in round $\fo+1$ will never be fired
	%	and so we can remove all these instructions in round $\fo+1$.
	
\end{proof}

\begin{lemma}\label{lem:no-min}
	If the first round has $\mult$ instruction with
	$\min$ as the operation then the  algorithm does not satisfy agreement. 
\end{lemma}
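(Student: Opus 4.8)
The plan is to construct, for $n$ large enough, an execution reaching a configuration in which one process has decided $a$ and another has decided $b$, contradicting agreement. The execution consists of two consecutive phases under the global predicate $\gp$ — which is legal as an initial segment of any execution, since every sporadic predicate implies $\gp$ — a first phase making a proper non-empty set of processes decide $b$ while keeping at least one $a$ among the surviving $\inp$-values, and a second phase driving the remaining undecided processes to decide $a$. This is the same overall shape as the proof of Lemma~\ref{lem:no-mult-sequence}, with the $\min$ instruction in the first round playing, one round earlier, the disruptive role that the $\mult$ instruction in round $\fo{+}1$ plays there.

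Fix $\th$ with $\thruo<\th<1$ and take the initial state $(\bias(\th),\soloq)$. By Lemma~\ref{lem:a-if-min} (the first round has a $\mult$ instruction with $\min$), $a\in\fire_1(\bias(\th),\gp)$; by the first-round $\uni$ instruction (Lemma~\ref{lem:no-uni}) together with $\th>\thruo\ge\thr_1(\gp)$ (Assumption~\ref{assumption}), also $b\in\fire_1(\bias(\th),\gp)$. The witnessing multisets are sub-multisets of $\bias(\th)$, so the medium can hand each process whichever it likes, giving $\bias(\th)\lact{\gp_1}_1 g$ for any tuple $g$ over $\set{a,b}$. Now split on the intermediate rounds. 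If every one of rounds $2,\dots,\fo$ has a $\mult$ instruction, Lemma~\ref{lem:any-frequency} gives $\bias(\th)\lact{\gp_1}_1\cdots\lact{\gp_{\fo}}_{\fo}\bias(\th_2)$ for a freely chosen $\th_2\in\bigl(\max(\thru^{\fo+1},\thr_{\fo+1}(\gp)),\,1\bigr)$, so that $\inp$ after the phase equals $\bias(\th_2)$, which still carries an $a$. Otherwise some round in $2,\dots,\fo$ has no $\mult$ instruction, and Lemma~\ref{lem:any-qbias-from-bias} gives $\bias(\th)\lact{\gp_1}_1\cdots\lact{\gp_{\fo}}_{\fo}\biasq(\th_2)$ for any $\th_2$; choosing $\th_2\in\bigl(\thru^{\fo+1},\,1+\thru^{\fo+1}-\thr_{\fo+1}(\gp)\bigr)\cap(0,1)$ and placing one of its $?$-entries over an $a$-entry of $\bias(\th)$ again leaves an $a$ in $\inp$ after the phase.

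From round $\fo$ we now carry a tuple over $\set{b,?}$ to the last round. The key observation is that a multiset fed to $\update_i$ may be padded with $?$-entries: they count towards the communication-predicate threshold but are deleted before the conditions are tested. Because round $\fo{+}1$ has no $\mult$ instruction (Proviso~\ref{proviso}), the $\th_2$ chosen above is exactly what makes both $b$ and $?$ lie in $\fire_{\fo+1}(\cdot,\gp)$ of the tuple reached at round $\fo$, so we can reach $\biasq(\sigma)$ at round $\fo{+}1$ for an arbitrary $\sigma\in[0,1]$. When $r=\fo{+}1$ we take $\sigma=1/2$ and are done: $f_r=\spread^?$. When $r>\fo{+}1$ we pick $\sigma\in\bigl(\thru^{\fo+2},\,\min(1,\,1+\thru^{\fo+2}-\thr_{\fo+2}(\gp))\bigr)$, which is non-empty and yields $\set{b,?}\incl\fire_{\fo+2}(\biasq(\sigma),\gp)$ (a $\mult$ instruction in round $\fo{+}2$ is inert on a tuple over $\set{b,?}$), and then Lemma~\ref{lem:any-qbias-from-qbias} carries $\biasq(\sigma)$ through rounds $\fo{+}2,\dots,r$ to $f_r=\spread^?$. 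Either way, after this phase exactly half the processes have decided $b$, half are undecided, and $\inp$ is some $\bias(\th_3)$ with $\th_3<1$. The second phase is then immediate: if $\th_3>0$ then $\bias(\th_3)\lact{\gp_1}_1\solo^a$ by Lemma~\ref{lem:a-if-min} (trivially so if $\th_3=0$), and $\solo^a$ is reproduced by every later round since each has a $\uni$ instruction, so $f_r=\solo^a$ and every still-undecided process decides $a$. The resulting configuration carries both $a$ and $b$ among the decided values, so agreement fails.

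The heart of the argument is the first phase: one must keep an $a$ in $\inp$ while routing a $\set{b,?}$-tuple from round $\fo$ to the last round so that precisely a proper non-empty subset of the processes decides. I expect the $?$-padding observation and the bookkeeping of the $\th_2,\sigma$ windows — together with the two intertwined case splits, ``rounds $2,\dots,\fo$ all have $\mult$ or not'' and ``$r=\fo{+}1$ or not'' — to be the only non-routine part; everything else is a direct appeal to the transfer lemmas already established.
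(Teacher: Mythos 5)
Your proposal is correct and follows essentially the same route as the paper's proof: the same two-phase construction under the global predicate, the same case split on whether rounds $2,\dots,\fo$ all have a $\mult$ instruction, and the same transfer lemmas (Lemmas~\ref{lem:any-frequency}, \ref{lem:any-qbias-from-bias}, \ref{lem:any-qbias-from-qbias}, \ref{lem:a-if-min}, together with the absence of $\mult$ in round $\fo+1$). The extra bookkeeping you do explicitly (the threshold windows, the $r=\fo+1$ versus $r>\fo+1$ split, keeping an $a$ in $\inp$) is what the paper handles implicitly or via Lemma~\ref{lem:bias-preserving}, so there is no substantive difference.
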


\begin{proof}
	Suppose that indeed the first round does have a $\mult$ instruction with
	$\min$ operation. 
	Thanks to our proviso, the global predicate  does not  have an equalizer, hence we
	can freely apply Lemmas~\ref{lem:any-frequency} and~\ref{lem:any-qbias-from-bias}. 
	
	We use Lemma~\ref{lem:ab-if-mult} to find $\th$ with $\set{a,b}\incl
	\fire_1(\bias(\th),\gp)$. 
	We consider two cases.
	
	If all the rounds $2,\dots,\fo$ have a $\mult$  instruction then
	Lemma~\ref{lem:any-frequency} allows us to get $\bias(\th')$, for arbitrary
	$\th'$,  after round 	$\fo$.
	By Lemma~\ref{lem:no-mult-sequence}  there is no $\mult$ instruction in round
	$\fo+1$. 
	By Lemma~\ref{lem:bias-preserving} there is $\th'$ such that $\set{b,?}\incl
	\fire_{\fo+1}(\bias(\th'),\gp)$. 
	Using Lemma~\ref{lem:any-qbias-from-qbias} we can make some process decide on $b$, while
	keeping the other processes undecided. 
	Hence the state of the algorithm after this phase is $(bias(\th'),\spread^?)$. 
	By Lemma~\ref{lem:a-if-min}, $a\in\fire_1(\bias(\th'),\gp)$, and so we can get
	$\solo^a$ as the tuple after the first round and make all the other processes decide on $a$. 
	
	The second case is when one of the rounds $2,\dots,\fo$ does not have a $\mult$ 
	instruction. 
	For arbitrary $\th'$, Lemma~\ref{lem:any-qbias-from-bias} allows us to get
	$\biasq(\th')$ after round $\fo$.
	As in the above case, we use it to decide on $b$ for some process while
	leaving other undecided. 
	In the next phase we make other processes decide on $a$.
\end{proof}

\begin{lemma}\label{lem:constants}
	If property of constants from Definition~\ref{def:structure} is not satisfied then the algorithm does not satisfy
	agreement.
\end{lemma}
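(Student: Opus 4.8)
The plan is to prove the contrapositive by exhibiting, for an algorithm that already enjoys properties~1--3 of Definition~\ref{def:structure} (forced under the running convention by Lemmas~\ref{lem:no-mult}, \ref{lem:no-uni}, \ref{lem:no-min}) and has no $\mult$ instruction in round $\fo+1$ (Proviso~\ref{proviso}) but fails property~4, a finite prefix of an execution respecting the communication predicate, built entirely under $\gp$, whose final state $(f,d)$ has $d(p_1)=a$ and $d(p_2)=b$ for two processes. Since every $\gp$-transition is always available and every such prefix extends to a full execution (postponing the sporadic phases), this refutes agreement. Negating property~4 gives two cases, to be treated disjunctively: (A)~$\thruo<1-\thru^{\fo+1}$, and (B)~$\thrmok/2<1-\thru^{\fo+1}$; in both $\thru^{\fo+1}<\bthr$, and whenever only~(A) holds one additionally has $\thruo<1/2$ (otherwise $\bthr=1-\thrmok/2$, so (B) holds too).

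The construction would use two phases. In the first I would drive the algorithm so that a proper nonempty set of processes decides $b$ while the others remain undecided, with the $\inp$-tuple after the phase a $\bias(\rho)$ for a fraction $\rho$ kept just above $\thru^{\fo+1}$ (up to a small $\e$, possibly raised to $\tfrac12$ by the overlap bookkeeping below). Starting from a conveniently chosen initial state $(\bias(\th_0),\soloq)$, I would steer rounds $1,\dots,\fo$ to a tuple $f_{\fo}$ of the form $\bias(\th')$ or $\biasq(\th')$ with $\th'=\thru^{\fo+1}+\e$ under my control: Lemma~\ref{lem:spread} (using property~3) gives both $a$ and $b$ after round~$1$ from $\spread$ (and Lemma~\ref{lem:bias-preserving} gives $b$ and $?$ out of a preserving round), the non-preserving rounds carry $\mult$ instructions and are crossed with Lemma~\ref{lem:any-frequency} after fixing a frequency via Lemma~\ref{lem:ab-if-mult}, and the first preserving round among $1,\dots,\fo$ introduces a $?$ (Lemma~\ref{lem:bias-preserving}), after which Lemma~\ref{lem:any-qbias-from-qbias} reaches any $\biasq(\th')$ at round $\fo$ --- all applicable since $\gp$ has no equalizer (Proviso~\ref{proviso}). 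Round $\fo+1$ has only a $\uni$ instruction of threshold $\thru^{\fo+1}<\th'$, so from $f_{\fo}$ some process can be made to output $b$ (send a multiset whose non-$?$ part is $b$'s of size above $\thru^{\fo+1}\cdot n$, padded with $?$-entries of $f_{\fo}$ to satisfy $\gp_{\fo+1}$; when $f_{\fo}$ is $?$-free, Assumption~\ref{assumption} gives $\thru^{\fo+1}\ge\thr_{\fo+1}(\gp)$ and no padding is needed) and another to output $?$ (send a multiset meeting $\gp_{\fo+1}$ that fires no conditional). Then $\{b,?\}\subseteq\fire_{\fo+1}(f_{\fo},\gp)$, and Lemma~\ref{lem:any-qbias-from-qbias} on rounds $\fo+1,\dots,r$ produces $\biasq(\th'')$ after the last round for any $\th''$, so a chosen proper nonempty set sets $\dec:=b$; placing the $?$-entries of $f_{\fo}$ on the $a$-positions of the initial tuple keeps the resulting $\inp$-tuple a $\bias(\rho)$ with $\rho$ as small as this bookkeeping allows.

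The second phase makes a still-undecided process decide $a$. From $(\bias(\rho),d')$, I would make one undecided process output $a$ in round~$1$ and then keep the tuple at $\soloa$ through the phase (possible since every round has a $\uni$ instruction), whence $\inp$ is updated to $a$ in round~$\fo$ and $\dec$ is set to $a$ in the last round for every still-undecided process. In case~(B), $a\in\fire_1(\bias(\rho),\gp)$ follows by sending a sub-multiset of $\bias(\rho)$ with strictly more $a$'s than $b$'s and of size above $\thrmok\cdot n$ (using properties~1 and~3 and $\thrmok\ge\thr_1(\gp)$ from Assumption~\ref{assumption}), possible as long as $\rho<1-\thrmok/2$; in case~(A), I would instead fire the round-$1$ $\uni$ instruction with the all-$a$ sub-multiset, possible as long as $\rho<1-\thruo$. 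Either way the prefix now has one process with $\dec=a$ and, from the first phase, one with $\dec=b$.

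It remains --- and this is where I expect the real work --- to choose $\th_0$, $\th'$ (hence $\rho$) and the initial positions so that both phases are simultaneously feasible: the first needs $\th'>\thru^{\fo+1}$ (and, when $f_{\fo}$ is $?$-free, $\th'>\thr_{\fo+1}(\gp)$ too), the second needs $\rho<1-\thruo$ in case~(A) --- where $\thru^{\fo+1}<1-\thruo$ together with $\thruo<1/2$ keeps the range nonempty --- or $\rho<1-\thrmok/2$ in case~(B), where $\thru^{\fo+1}<1-\thrmok/2$ and $1/2<1-\thrmok/2$. So $\thru^{\fo+1}<\bthr$ is exactly the hypothesis that makes an admissible choice possible. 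Two points need care: the interaction with $\thr_{\fo+1}(\gp)$ (when it is large one must keep $f_{\fo}$ $?$-free only when Assumption~\ref{assumption} permits, i.e.\ when no round among $1,\dots,\fo$ is preserving w.r.t.\ $\gp$; otherwise a preserving round supplies the $?$-padding and removes the constraint), and the value of $\rho$ when $f_{\fo}$ must carry $?$'s that the new $\inp$-tuple inherits from the initial tuple. Organising this finite case analysis --- split by which of the two property-4 inequalities fails, by the presence and position of a preserving round among $1,\dots,\fo$, and, where relevant, by whether $\thru^{\fo+1}<1/2$ (in which case round $\fo+1$ itself can output both values and the two phases may even collapse into one) --- is the bulk of the argument and the main obstacle.
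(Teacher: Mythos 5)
Your proposal is correct and follows essentially the same route as the paper: a two-phase execution under the global predicate, a case split on whether some round among $1,\dots,\fo$ is preserving to obtain $\set{b,?}$ at round $\fop$ (using Assumption~\ref{assumption} resp.\ $?$-padding), deciding $b$ for part of the processes via the $\biasq$-propagation lemma while controlling $\inp$, and then firing $a$ in the first round of the next phase from whichever of the two inequalities of property~4 fails. The only difference is that the parameter bookkeeping you flag as ``the bulk of the argument'' is dispatched in the paper in two lines by starting from $\spread$ and taking $\th'=\thru^{\fo+1}+\e$, so that the ending $\inp$-tuple is $\spread$ or $\bias(\th')$ and the second-phase conditions $1-\th'>\thruo$ resp.\ $2(1-\th')>\thrmok$ follow immediately from the negated inequality; it is routine rather than an obstacle.
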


\begin{proof}
	We consider an execution of a phase under the
	global predicate and so we can freely use Lemmas~\ref{lem:any-frequency}
	and~\ref{lem:any-qbias-from-bias}. 
	We have seen in Lemma~\ref{lem:no-min} that in the first round all the
	$\mult$ instructions must be $\smor$. We start with the state
	$(\spread,\solo^?)$. 
	
	We consider two cases. 
	
	\textbf{First case :} There are no preserving rounds before
	round $\fo+1$. Hence every round before $\fo+1$ has a $\mult$ instruction. By Lemma~\ref{lem:any-frequency} from $\spread$
	we can get $\bias(\th)$ (for any $\th$) as the tuple before
	round $\fo+1$. Choose $\th = \thru^{\fo+1} + \e$ for some
	small $\e$. 
	By Lemma~\ref{lem:no-mult-sequence} we know 
	that round $\fo+1$ does not have any $\mult$ instruction.
	This implies that $? \in \fire_{\fo+1}(\bias(\th),\gp)$.
	Further, by observation~(\ref{eq:syntactic-property}) we know
	that $\thru^{\fo+1} \ge \thr_{\fo+1}(\gp)$. 
	Therefore, $b \in \fire_{\fo+1}(\bias(\th),\gp)$. Hence
	$\set{b,?} \subseteq \fire_{\fo+1}(\bias(\th),\gp)$.
	
	\textbf{Second case: } There is a round $j < \fo+1$ such
	that round $j$ is preserving. Let $j$ be the first such round.
	By Lemma~\ref{lem:any-frequency} from $\spread$ we can get
	$\bias(\thru^j+\e')$ (for some small $\e'$) before round $j$.
	Since round $j$ is preserving it follows that either round $j$ has no $\mult$ instruction or $\thr_j(\gp) < \max(\thru^j,\thrm^{j,k})$.
	It is then clear that $? \in \fire_j(\bias(\thru^j+\e'))$.
	It is also clear that $b \in \fire_j(\bias(\thru^j+\e'))$. 
	Notice that by Lemma~\ref{lem:any-qbias-from-bias} we can
	get $\biasq(\th)$ (for any $\th$) as the tuple before round 
	$\fo+1$. Choose $\th = \thru^{\fo+1} + \e$ for some small
	$\ep$.
	 %By Lemma~\ref{lem:no-mult-sequence} we know 
	%that round $\fo+1$ does not have any $\mult$ instruction.
	It is clear that we can construct a multi-set $\ho$ of size $1-\ep$
	consisting of $\thru^{\fo+1}$ fraction of $b$'s and 
	the remaining as $?$'s from the tuple $\bias(\th)$. Notice that $\ho$
	does not satisfy any instructions and (for a small enough $\e$)
	is bigger than $\thr_{\fo+1}(\gp)$. 
	Further by sending the entire tuple as a multi-set we get that
	$b \in \fire_{\fo+1}(\biasq(\th),\gp)$. Hence
	$\set{b,?} \subseteq \fire_{\fo+1}(\biasq(\th),\gp)$.\\
	
	%By the same reasoning as before, for every $\th$, we can arrive at
	%round $\fo+1$ with $\biasq(\th)$, or with $\bias(\th)$. 
	%In the later case we know that there is no $\mult$ instruction in round
	%$\fo+1$. 
	%We consider $\th=\thr_u^{\fo+1}+\e$.
	
	%If we can get $\biasq(\th)$ then we have $\set{b,?}\in\fire_\fo+1(\biasq(\th),\gp)$. 
	%Using Lemma~\ref{lem:any-qbias-from-qbias} we
	%can make some process decide on $b$ and leave some process undecided.
	
	%If we get $\bias(\th)$ then we also have
	%$\set{b,?}\in\fire_\fo+1(\biasq(\th),\gp)$ but for this we need our proviso on
	%page~\pageref{proviso} guaranteeing $\thru^\fo+1\geq \thr_{\fo+1}
	%(\gp)$.\igw{here proviso is used}
	%Indeed, it ensures that we can get $b$ by sending only $b$'s and using $\uni$ instruction. 
	%We can get $?$ by sending the whole multiset and recalling that there is no
	%$\mult$ instruction in round $\fo+1$.
	
	In both cases, we can then use Lemma~\ref{lem:any-qbias-from-bias}
	to ensure that half the processes remain undecided and half the processes decide on $b$. 
	Further, in both cases, we can arrange the execution in such a way that the state after this phase is either $(\bias(\th),\spread^?)$ or $(\spread,\spread^?)$. 
	
	If the state is $(\spread,\spread^?)$ then by Lemma~\ref{lem:spread} $a \in \fire_1(\spread,\gp)$ 
	and so in the next phase we can get $\soloa$ as the tuple after the first
	round and make the other processes decide on $a$. 
	In the remaining case we consider separately the two conditions on constants
	that can be violated.
	
	If $\thrmok/2<1-\thru^{\fo+1}$ then in the first round of the next phase
	consider the $\ho$ set containing all the $a$'s in $\bias(\th)$ and the number of $b$'s
	smaller by $\e$ than the number of $a$'s.
	The size of this set is $(1-\th)+(1-\th-\e)=2(1-\thru^{\fo+1})-3\e$. 
	For a suitably small $\e$, this quantity is bigger than $\thrmok$
	which by observation~(\ref{eq:syntactic-property}) is bigger
	than $\thr_1(\p)$. 
	So we can get $\solo^a$ as the tuple after the first round and then use this to make the undecided processes decide on $a$.

	If $\thruo<1-\thru^{\fo+1}$, then just take $\ho$ set in $\bias(\th)$ 
	consisting only of all the $a$'s. 
	Once again for a small enough $\e$, the size of this set is bigger than $\thruo$ which by
	observation~(\ref{eq:syntactic-property}) is bigger than $\thr_1(\p)$.
	Hence, we can get $\solo^a$ as the tuple after the first round 
	and use this to make the undecided processes decide on $a$.
\end{proof}

\begin{lemma}\label{lem:agreement}
	If all the structural properties are satisfied then the algorithm satisfies agreement.
\end{lemma}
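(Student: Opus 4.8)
The plan is to isolate a numerical invariant of the tuple of $\inp$-values — being dominated by a single value in more than $\bthr$ fraction of the processes — and to show that this invariant survives every phase and that a process can set $\dec$ only to a dominating value; agreement is then immediate. We do not restrict to two values at the outset: the two-value principle will fall out because the impossibility Lemmas~\ref{lem:no-mult}--\ref{lem:constants} only ever use $a$ and $b$. Write $\#_v(g)$ for the number of entries of a tuple $g$ equal to the value $v$, and call an $\inp$-tuple $f$ (so $?\notin\dom(f)$) \emph{$v$-dominant} if $\#_v(f)>\bthr\cdot n$; since $\bthr>1/2$ at most one value can be dominant in a given tuple. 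The first thing to notice is that property~(4) of Definition~\ref{def:structure} is precisely the inequality $\thru^{\fo+1}\ge\bthr$.

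The technical core is a single-round statement: if $f$ is $v$-dominant then, for every phase predicate $\p$, every tuple reachable from $f$ by round~$1$ has all its entries in $\set{v,?}$. Fix a heard-of multiset $\ho\incl\mset(f)$. If it fires the $\uni$ instruction then $\ho-\set{?}$ is unary, say of value $u$, of size $>\thruo\cdot n$; but $f$ has at most $(1-\bthr)\cdot n\le\thruo\cdot n$ entries different from $v$ (using $\bthr\ge 1-\thruo$), so $u=v$ and the output is $v$. If it fires a $\mult$ instruction, the operation is $\smor$ by property~(3); were $\smor(\ho-\set{?})$ a value $w\neq v$, then $w$ is a most frequent value of $\ho-\set{?}$, so $\#_v(\ho)\le\#_w(\ho)\le(1-\bthr)\cdot n$, and a short count over $\mset(f)$ then gives $|\ho-\set{?}|\le 2(1-\bthr)\cdot n\le\thrmok\cdot n$ (using $\bthr\ge 1-\thrmok/2$), contradicting that a $\mult$ instruction fired, since every such instruction has threshold at least $\thrmok$. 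Hence round~$1$ maps a $v$-dominant tuple to one over $\set{v,?}$; and once a tuple ranges over $\set{v,?}$, every subsequent round keeps it so, because after deleting $?$ such a multiset has at most one value and can only trigger the $\uni$ instruction (output $v$) or nothing (output $?$).

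Two consequences drive the rest. First, take any phase in which some process sets $\dec$. Round $\fo+1$ has no $\mult$ instruction (Proviso~\ref{proviso}), so a non-$?$ value is produced there only by the $\uni$ instruction, which needs a unary multiset of size $>\thru^{\fo+1}\cdot n$; hence only a value $v$ with $\#_v(f_{\fo})>\thru^{\fo+1}\cdot n\ge\bthr\cdot n$ can occur in $f_{\fo+1}$, and by $\bthr>1/2$ there is a unique such $v$. By persistence, $v$ is then the only value occurring, apart from $?$, in each of $f_{\fo+1},\dots,f_r$, so it is the only value any process can decide in this phase; and the $\inp$-tuple produced by the phase copies $f_{\fo}$ wherever $f_{\fo}\neq ?$, so it has more than $\bthr\cdot n$ entries equal to $v$, i.e.\ it is $v$-dominant. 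Second, if the $\inp$-tuple at the start of a phase is $v$-dominant then, by the single-round statement and persistence, all of $f_1,\dots,f_r$ lie over $\set{v,?}$; so no process can decide anything but $v$, and the resulting $\inp$-tuple retains all the original $v$-entries and is again $v$-dominant.

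The lemma follows by considering the first phase in which some process decides, say it decides $v$: by the first consequence all decisions in that phase are on $v$ and the $\inp$-tuple afterwards is $v$-dominant; by the second consequence and induction, every later phase starts $v$-dominant, stays $v$-dominant, and admits only decisions on $v$; and there is no decision before this phase. Hence every decided process holds $v$ in $\dec$, which is exactly agreement. I expect the delicate step to be the single-round statement for an arbitrary value set — in particular the count showing that $\smor$ over a sub-multiset of a $v$-dominant tuple cannot escape to a different value unless the sub-multiset is already too small to fire any $\mult$ instruction; everything else is routine bookkeeping about how $\inp$ and $\dec$ propagate through a phase.
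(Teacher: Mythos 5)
Your proof is correct and follows essentially the same route as the paper's: both isolate the first deciding phase, use the absence of a $\mult$ instruction in round $\fo+1$ together with $\thru^{\fo+1}\ge\bthr>1/2$ to force a unique decided value held by more than a $\bthr$ fraction of the $\inp$ tuple, and then use properties (3)--(4) to show the first round can never again fire any other value, so that value alone can be decided thereafter. The only difference is presentational: you package this as an explicitly phase-preserved dominance invariant and treat arbitrarily many values directly, whereas the paper argues with the two-value tuples $\bias(\th)$ and remarks that the same argument covers more values.
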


\begin{proof}
	It is clear that the algorithm satisfies agreement when the state of 
	the $\inp$ variable
	is either $\solo$ or $\solo^a$.  Suppose 
	we have an execution $(\bias(\theta),d) \act{\gp}\cdots\act{\gp}
	(\bias(\theta'),d')$ 
	such that $(\bias(\theta'),d')$ is the first state in this execution
	with a process $p$ which has decided on a value. 
	We consider the case when $a$ is this value.
	The other case is analogous.
	
	Since
	$\thrm^{1,k}/2 \geq 1-\thru^{\fo+1}$ we have that $\thru^{\fo+1} \ge 1/2$. Further round $\fo+1$ does not have a $\mult$ instruction.
	It then follows directly from the semantics that if $q$ is a process
	then either $d'(q) = a$ or $d'(q) = ?$.
	Further notice that since $a$ was decided by some process, it has to be 
	the case that at least $\thru^{\fo+1}$ processes have $a$ as their $\inp$ value. Hence $\theta' < 1 - \thru^{\fo+1}$.
	
	Since $\theta' < 1 - \thru^{\fo+1} \le \thru^1$, it follows that $b$ cannot be fired from $\bias(\theta')$ using the $\uni$ instruction in the first round.
	Since $\theta' < 1 - \thru^{\fo+1} \le \thrm^{1,k}/2$ and since every $\mult$ instruction in the first round has $\smor$ as its operator, it follows that
	$b$ cannot be fired from $\bias(\theta')$ using the $\mult$ instruction as well. 
	Hence the number of $b$'s in the $\inp$ tuple can only decrease from this point onwards and so it follows that no process from this point onwards can decide on $b$.
	The same argument applies if there are more than two values.
\end{proof}

\subsubsection*{Part 2: termination}
We consider only two values $a,b$. 
It is direct form the arguments below that the termination proof also works if
there are more values. 
%\begin{lemma}\label{lem:dec}
%	If $\f$ is a decider and $(\solo,\solo^?)\act{\f}(f',d')$ 
%	then $(f',d') = (\solo,\solo)$.
%	Similarly if $(\solo^a,\solo^?) \act{\f}(f',d')$ then $(f',d') = (\solo^a,\solo^a)$.
%\end{lemma}
%\begin{proof}
%	Immediate
%\end{proof}

\begin{lemma}\label{lem:fire-one}
	For the global predicate $\gp$: $a\in\fire_1(\bias(\th),\gp)$ iff $\th<\bthr$. 
	(Similarly $b\in\fire_1(\bias(\th),\gp)$ iff $1-\bthr<\th$).
\end{lemma}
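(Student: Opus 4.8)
The plan is to prove both ``iff''s directly, by pinning down exactly which multisets $\ho$ (necessarily submultisets of $\mset(\bias(\th))$, hence over $\set{a,b}$) can fire $a$, resp.\ $b$, in the first round. Before the case analysis I would record the simplifications that make the argument clean: by Proviso~\ref{proviso} the global predicate has no equalizer, so $\gp_1$ is equivalent to the single size constraint $|\ho| > \thr_1(\gp)\cdot n$; by the structural properties of Definition~\ref{def:structure} (already in force by the running convention) the first round has a $\uni$ instruction with threshold $\thruo$ and at least one $\mult$ instruction, every $\mult$ operation in the first round is $\smor$, and $\thrmok$ is the smallest $\mult$ threshold; and by Assumption~\ref{assumption} (applied with the empty set of preceding rounds) we have $\thruo \ge \thr_1(\gp)$ and $\thrmok \ge \thr_1(\gp)$, so any multiset large enough to fire the $\uni$ instruction or some $\mult$ instruction automatically satisfies $\gp_1$ --- i.e.\ the global predicate never obstructs the constructions.

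For the $a$-direction ``$\Leftarrow$'', assume $\th < \bthr = \max(1-\thruo,1-\thrmok/2)$ and split on which maximum is attained. If $\th < 1-\thruo$, take $\ho$ to consist of all the $a$-entries of $\bias(\th)$: then $|\ho| = (1-\th)n > \thruo n$ and $\ho$ is unanimous, so the $\uni$ instruction fires with value $a$. If $\th < 1-\thrmok/2$, take $\ho$ to consist of all $(1-\th)n$ copies of $a$ together with $\min(\th,1-\th)\cdot n$ copies of $b$: this multiset contains both values, has at least as many $a$'s as $b$'s, and has size $\min(1,2(1-\th))\cdot n > \thrmok n$, so it fires some $\mult$ instruction, whose operation $\smor$ returns $a$ by the tie-break $a<b$. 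For ``$\Rightarrow$'', let $\ho$ witness $a \in \fire_1(\bias(\th),\gp)$. Either $\ho$ is unanimously $a$ (a unanimous $b$-multiset yields $b$ or $?$, never $a$), in which case it must fire the $\uni$ instruction (a unanimous multiset fires no $\mult$ instruction), so $(1-\th)n \ge |\ho| > \thruo n$ and $\th < 1-\thruo \le \bthr$; or $\ho$ contains both values, the $\uni$ instruction is skipped, some $\mult$ instruction fires so $|\ho| > \thrmok n$, and since its operation $\smor$ outputs $a$ at least half the entries of $\ho$ are $a$, whence $\thrmok n < |\ho| \le 2(1-\th)n$ and $\th < 1-\thrmok/2 \le \bthr$. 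The $b$-statement follows by the same analysis, the only difference being that the $\smor$ tie-break now works against $b$: a mixed multiset yields $b$ only when strictly more than half its entries are $b$, so $\thrmok n < |\ho| \le 2\th n$ gives $\th > \thrmok/2$, while a unanimous multiset yields $b$ only when $\th n \ge |\ho| > \thruo n$; combining, $b \in \fire_1(\bias(\th),\gp)$ iff $\th > \min(\thruo,\thrmok/2) = 1-\bthr$.

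All size comparisons above are to be read ``for $n$ large enough'', as licensed by the definition of $\bias(\th)$, so that $\th n$ and $(1-\th)n$ may be rounded to integers without affecting strict inequalities; this part is routine. The one point that genuinely needs care is the asymmetry created by the $\smor$ tie-breaking convention (ties resolved to the smaller value $a$): it is exactly why the $a$-case uses a non-strict majority while the $b$-case needs a strict one, and hence why the characterization reads $\th < 1-\thrmok/2$ on the $a$-side but $\thrmok/2 < \th$ (not $\le$) on the $b$-side. A smaller pitfall to keep in mind is that a unanimous multiset fires no $\mult$ instruction and a too-small mixed multiset fires nothing at all, so in the mixed case one must really use the lower bound $|\ho| > \thrmok n$ and not merely $|\ho| > \thr_1(\gp)n$.
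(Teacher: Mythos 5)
Your proof is correct and follows essentially the same route as the paper's: split into the two ways $a$ can be produced in round one (a unanimous $a$-multiset of size above $\thruo$, or a mixed multiset of size above $\thrmok$ with $\smor$ returning $a$), use Assumption~\ref{assumption} to see that $\gp_1$ imposes no extra obstacle, and read off the two thresholds whose maximum is $\bthr$. Your explicit treatment of the $b$-direction (strict majority because the $\smor$ tie-break favours $a$) is just the spelled-out version of what the paper dismisses as ``similar''.
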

\begin{proof}
	%\igw{proof corrected}
	In order for a multi-set $\ho$ to be such that $a = \update_1(\ho)$
	there are two possibilities: (i) it should be of size $>\thruo$ and contain only $a$'s, or (ii) of size
	$>\thrmok$ and contain at least as many  $a$'s as $b$'s. 
	Recall that by observation~(\ref{eq:syntactic-property}) on page~\pageref{eq:syntactic-property} we have
	$\thru^1 \ge \thr_1(\gp)$ and $\thrm^{1,k} \ge \thr_1(\gp)$.

	% \balain[This might be true, but I
	% do not see it immediately. Can't it be the case that we only need $\thr^1_u$
	% fraction of $a$ to activate the $\uni$ instruction and this might be possible
	% if $\theta < 1 - \thr^1_u$. Maybe we can change the definition of $\bthr$ to
	% $\max(1-\thr^{1,k}_m/2,1-\thr^1_u)$. We don't need to consider $\thr_1(\phi)$
	% since we can assume that $\thr^{1,k}_m \ge \thr_1(\phi)$ and $\thr^1_u \ge
	% \thr_1(\phi)$]
	
	The number of $a$'s in $\bias(\th)$ is $1-\th$.
	So the first case is possible only iff $1-\th>\thruo$, i.e., when $\th<1-\thruo$. 
	Further if $\th < 1-\thruo$, then we can send a set $\ho$ consisting only of
	$a$'s, such that  $|\ho| > \thru^1 \ge \thr_1(\gp)$ 
	and so $a \in \fire_1(\bias(\th),\gp)$.
	The second case,  is possible only if $1-\th>\thrmok/2$, or equivalently,  $\th<1-\thrmok/2$. Further if $\th < 1-\thrmok/2$, 
	then we can send a set $\ho$ of size $\thrmok$ consisting of 
	both $a$'s and $b$'s, which will ensure that $a \in \fire_1(\bias(\th),\gp)$.
	To sum up, $a \in \fire_1(\bias(\th),\gp)$ iff $\th<\bthr$.
	The proof for $b \in \fire_1(\bias(\th),\gp)$ iff $1-\bthr < \th$ is similar.
\end{proof}

\begin{corollary}\label{cor:no-a-above-bthr}
	For every predicate $\p$, if $\th\geq\bthr$ then $a\not\in\fire_1(\bias(\th),\p)$.
	Similarly if $\th\leq 1-\bthr$ then $b \not \in\fire_1(\bias(\th),\p)$.
\end{corollary}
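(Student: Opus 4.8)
The plan is to reduce the corollary to Lemma~\ref{lem:fire-one} by a monotonicity observation: replacing $\gp$ by a stronger predicate can only shrink the set of values that can be fired in a round. Recall (cf.\ the standing assumptions on page~\pageref{assumptions-com-predicate}) that every sporadic predicate $\p$ implies the global predicate $\gp$, so every tuple of multisets $(\ho_1,\dots,\ho_n)$ with $\vho\sat\p\dar_1$ also satisfies $\vho\sat\gp\dar_1$. Since a round-$1$ transition $f\lact{\p_1}_1 f'$ is witnessed by exactly such a tuple (with $\ho_p\incl\mset(f)$ and $f'(p)=\update_1(\ho_p)$), every round-$1$ transition available under $\p$ is also available under $\gp$. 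Consequently $\fire_1(f,\p)\incl\fire_1(f,\gp)$ for every tuple $f$, and in particular $\fire_1(\bias(\th),\p)\incl\fire_1(\bias(\th),\gp)$, with equality when $\p=\gp$.

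With this in hand the argument is immediate. First I would dispatch the case $\p=\gp$, which is literally Lemma~\ref{lem:fire-one}: $a\in\fire_1(\bias(\th),\gp)$ iff $\th<\bthr$, hence $\th\geq\bthr$ forces $a\notin\fire_1(\bias(\th),\gp)$. For an arbitrary $\p$ I then combine this with the inclusion above to conclude $a\notin\fire_1(\bias(\th),\p)$ whenever $\th\geq\bthr$. The second statement is symmetric: by the parenthetical half of Lemma~\ref{lem:fire-one}, $b\in\fire_1(\bias(\th),\gp)$ iff $1-\bthr<\th$, so $\th\leq 1-\bthr$ gives $b\notin\fire_1(\bias(\th),\gp)$, and the same inclusion propagates this to every $\p$.

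There is no real obstacle here; the only point deserving a line of justification is the inclusion $\fire_1(f,\p)\incl\fire_1(f,\gp)$, which rests entirely on the convention that sporadic predicates imply the global one, so that passing from $\gp\dar_1$ to $\p\dar_1$ only discards witnessing tuples and never adds new ones. It is worth stressing that restricting attention to the first round is what makes Lemma~\ref{lem:fire-one} applicable in the first place: round~$1$ is the round in which the received multiset is guaranteed to contain no $?$, which is precisely what underlies the exact threshold computation behind that lemma and hence the value of $\bthr$.
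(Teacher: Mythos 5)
Your proposal is correct and follows essentially the same route as the paper: the inclusion $\fire_1(f,\p)\incl\fire_1(f,\gp)$, justified by the standing assumption that every sporadic predicate implies the global one, is exactly the observation the paper uses before invoking Lemma~\ref{lem:fire-one}. Your write-up merely spells out the witnessing-tuple argument for that inclusion in more detail, which is fine but does not change the argument.
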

\begin{proof}
	We have assumed that every predicate implies the global predicate, so every
	$\ho$ set that is admissible w.r.t.\ some predicate, is also admissible
	w.r.t.\ the 	global predicate. 
	Lemma~\ref{lem:fire-one}, says that $a$ cannot be obtained under the global predicate if $\th \geq \bthr$.
	Similar proof holds for the other claim as well.
\end{proof}

%\begin{lemma}\label{lem:str-uni}
%	Suppose $\f$ is a strong unifier and $\bias(\th)\act{\f}f$ then $f=\solo$ or $f=\solo^a$.
%\end{lemma}
%\begin{proof}
%	Let $i$ be the round such that $\f_i$ is an equalizer
%	and rounds $2,\dots,i$ are non-preserving. 
%	Notice that since $\f$ is a strong unifier we have
%	$\thr_1(\f) \le \thru^1$ and $\thr_1(\f) \le \thrmok$ and
%	so round 1 is non-preserving as well. 
%	Hence, till round $i$ we cannot produce $?$ under the predicate $\p$.
%	Because $\p_i$ has an equalizer, after round $i$ we either have
%	$\solo$ or $\solo^a$ as the tuple.
%	This tuple stays till round $\fo$ as the rounds after $i$ are solo-safe.
%\end{proof}

\begin{lemma}\label{lem:unifier}
	Suppose $\p$ is a unifier and  $\bias(\th)\act{\p}f$.
	If $\thruo\leq \thrmok$ or $1-\bthr\le\th\le\bthr$ 
	then $f=\solo$ or $f=\solo^a$.
\end{lemma}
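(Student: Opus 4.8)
The plan is to track the $\inp$-tuple through the underlying round transitions $\bias(\th)=f_0\lact{\p_1}_1 f_1\lact{\p_2}_2\cdots\lact{\p_r}_r f_r$ of the phase and to show that $f_{\fo}$ is already $\solo$ or $\soloa$. Such a tuple has no $?$ entry, so the definition of the phase transition then gives $f=f_{\fo}$, which is exactly the claim.

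First I would show that $f_1$ has no $?$ entry. Each process $p$ sets $f_1(p)=\update_1(\ho_p)$ for some $\ho_p\incl\mset(\bias(\th))$ with $|\ho_p|>\thr_1(\p)\cdot n$, and $\thr_1(\p)\ge\thrmok$ because $\p$ is a unifier. If $\ho_p$ is mixed then $\mult(\ho_p)$ holds and $|\ho_p|>\thrmok\cdot n$ already satisfies the last $\mult$ conditional (by Definition~\ref{def:structure} the first round has a $\mult$ instruction), so $\update_1(\ho_p)\in\{a,b\}$. If $\ho_p$ is pure, I split on the disjunction in the definition of a unifier: when $\thr_1(\p)\ge\thruo$ then $|\ho_p|>\thruo\cdot n$, the (leading) $\uni$ conditional fires, and $\update_1(\ho_p)\in\{a,b\}$; when instead $\thr_1(\p)\ge\bthr$ but $\thr_1(\p)<\thruo$, then $|\ho_p|>\bthr\cdot n$, and I claim no pure sub-multiset of $\mset(\bias(\th))$ of this size exists — an all-$b$ one would force $\th>\bthr$ and an all-$a$ one would force $1-\th>\bthr$. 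Both are ruled out under the hypothesis: the first disjunct $\thruo\le\thrmok$ would give $\thr_1(\p)\ge\thrmok\ge\thruo$, contradicting $\thr_1(\p)<\thruo$, so the second disjunct $1-\bthr\le\th\le\bthr$ must hold, and it excludes exactly these two situations. Hence in all cases $f_1$ has no $?$ entry.

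Next, let $i$ with $1\le i\le\fo$ be the round whose predicate $\p\dar_i$ is an equalizer, which exists because $\p$ is a unifier. Rounds $2,\dots,i$ are non-preserving, so by the contrapositive of Lemma~\ref{lem:preserving} each of them maps a $?$-free tuple to a $?$-free tuple; iterating from $f_1$ gives $?\notin\dom(f_i)$. Since $\p\dar_i$ is an equalizer, every process receives the same multiset in round $i$, so $f_i$ is a constant tuple, hence $f_i\in\{\solo,\soloa\}$. Finally, rounds $i+1,\dots,\fo$ are solo-safe; a solo-safe round preserves $\solo$ by Lemma~\ref{lem:preserving}, and it preserves $\soloa$ by the symmetric argument (from an all-$a$ tuple every received multiset is pure of size $>\thru^j\cdot n$, so the $\uni$ instruction, present by Definition~\ref{def:structure}, fires and returns $a$). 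Therefore $f_{\fo}=f_i\in\{\solo,\soloa\}$, and since this tuple has no $?$ entry we conclude $f=f_{\fo}\in\{\solo,\soloa\}$.

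The delicate point is the first step: one must verify that, under the disjunctive hypothesis of the lemma, round $1$ cannot turn $\bias(\th)$ into a tuple containing $?$, and this is exactly where the two disjuncts of the hypothesis have to be matched against the two disjuncts in the definition of a unifier. Everything after that is routine bookkeeping with Lemma~\ref{lem:preserving} and the equalizer.
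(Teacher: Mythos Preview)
Your proof is correct and follows essentially the same approach as the paper's: show round~1 cannot produce $?$ (handling the two disjuncts of the unifier definition against the two disjuncts of the lemma hypothesis), then use non-preserving rounds to keep the tuple $?$-free up to the equalizer round, conclude $f_i\in\{\solo,\soloa\}$, and propagate via solo-safety to $f_{\fo}$. Your treatment is in fact slightly more explicit than the paper's in justifying why the first disjunct $\thruo\le\thrmok$ forces $\thr_1(\p)\ge\thruo$ and hence only the second disjunct is relevant in the remaining case.
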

\begin{proof}
	We first show that the value $?$ cannot be produced in the first round. 
	Since $\p$ is a unifier we have $\thr_1(\p) \geq \thrmok$.
	If $\thr_1(\p) \geq \thruo$ then we are done. 
	Otherwise $\thr_1(\p)< \thruo$, implying $\thr_1(\p)\geq \bthr$, by the definition
	of unifier. 
	We consider $1-\bthr\le\th\le\bthr$, and the tuple $\bias(\th)$.
	In this case, every heard-of multiset $\ho$ strictly bigger than the
	threshold $\bthr$ (and hence bigger than $\thr_1(\p)$) 
	must contain both $a$ and $b$. 
	Since there is a $\mult$ instruction in the first round
	(and since $\thr_1(\p) \ge \thrmok$), the first
	round cannot produce $?$, i.e., after the first round 
	the value of the variable $x_1$ of each process is either $a$ or $b$.
	
	Let $i$ be the round such that $\p_i$ is an equalizer
	and rounds $2,\dots,i$ are non-preserving. 
	This round exists by the definition of a unifier.
	Thanks to above, we know that after the first round no process 
	has $?$ as their $x_1$ value. Since rounds $2,\dots,i$ are non-preserving, it follows that 
	till round $i$ we cannot produce $?$ under the predicate $\p$.
	Because $\p_i$ has an equalizer, after round $i$ we either have the tuple $\solo$ or $\solo^a$.
	This tuple stays till round $\fo$ as the rounds  $i+1,\dots,\fo$ are solo-safe. 
\end{proof}

Observe that if rounds $\fo+1,\dots,r$ of a unifier $\p$ are solo-safe then $\p$ is also a
decider and all processes decide.
Otherwise some processes may not decide.
So unifier by itself is not sufficient to guarantee termination.

\begin{lemma}\label{lem:decider}
	If $\p$ is a decider and $(\solo,\solo^?)\act{\p}(f',d')$ 
	then $(f',d') = (\solo,\solo)$.
	Similarly, if $(\solo^a,\solo^?) \act{\p}(f',d')$ then $(f',d') =
	(\solo^a,\solo^a)$.
	In case $\thrmok\leq \thruo$, for every $\th\geq\bthr$: if
	$(\bias(\th),\solo^?)\act{\p}(f',d')$ then $(f',d')=(\solo,\solo)$
	and for every $\th\leq 1-\bthr$: if
	$(\bias(\th),\solo^?)\act{\p}(f',d')$ then $(f',d')=(\solo^a,\solo^a)$.
\end{lemma}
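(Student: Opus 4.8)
The plan is to prove all four claims by the same two-step pattern: show that round~$1$ already drives the $\inp$-tuple to the target value ($\solo$ or $\soloa$), and then use solo safety of the remaining rounds, via Lemma~\ref{lem:preserving}, to keep it there through the rest of the phase and to make every process decide.

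I would start with the two easy claims. Take any round sequence $\solo = f_0 \lact{\p_1}_1 f_1 \lact{\p_2}_2 \cdots \lact{\p_r}_r f_r$ realizing the phase transition. As $\p$ is a decider, every round is solo safe, so Lemma~\ref{lem:preserving} applied inductively gives $f_i = \solo$ for all $i$; hence $f_{\fo} = \solo$ has no $?$, the updated $\inp$-tuple is $\solo$, and since $d = \soloq$ we get $d'(p) = f_r(p) = b$ for all $p$, i.e.\ $(f',d') = (\solo,\solo)$. For $\soloa$ I would first record the dual of Lemma~\ref{lem:preserving} (a solo-safe round maps $\soloa$ to $\soloa$: the received multisets are all $a$'s and, by solo safety, exceed $\thru^i\cdot n$, so only the $\uni$ instruction fires and returns $a$, whatever its operation), and then repeat the propagation argument.

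For the last two claims assume $\thrmok \le \thruo$; then $\bthr = 1-\thrmok/2$, so for $\th \ge \bthr$ the tuple $\bias(\th)$ has at most $\thrmok/2\cdot n$ copies of $a$. Since round~$1$ of a decider is solo safe we have $\thr_1(\p) \ge \thruo \ge \thrmok$, so every multiset a process may receive in round~$1$ has size exceeding both $\thruo\cdot n$ and $\thrmok\cdot n$, hence satisfies some conditional of round~$1$; thus round~$1$ cannot output $?$. By Corollary~\ref{cor:no-a-above-bthr} it cannot output $a$ from $\bias(\th)$ either. So $f_1 = \solo$, and the propagation argument above finishes the proof with $(f',d') = (\solo,\solo)$. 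The range $\th \le 1-\bthr$ is symmetric: then $\bias(\th)$ has at most $\thrmok/2\cdot n$ copies of $b$, the same size argument rules out $?$, Corollary~\ref{cor:no-a-above-bthr} rules out $b$, so $f_1 = \soloa$ and $(f',d') = (\soloa,\soloa)$. I expect the main obstacle to be precisely this round~$1$ analysis: one must ensure simultaneously that every received multiset is large enough not to ``miss'' a conditional (so no $?$ leaks out) and --- implicitly, since the only operation available in round~$1$ is $\smor$ by the structural properties --- that it is never majority-$a$ when $\th \ge \bthr$; both follow from the single inequality $\thr_1(\p) \ge \thruo \ge \thrmok$, combining solo safety of a decider with the standing hypothesis $\thrmok \le \thruo$. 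Everything is carried out for $n$ large enough, as built into the definition of the $\bias$ tuples.
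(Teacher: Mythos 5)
Your proposal is correct and follows essentially the same route as the paper: the first two claims by propagating $\solo$ (resp.\ $\soloa$) through solo-safe rounds, and the $\bias(\th)$ claims by noting that solo safety of round $1$ together with $\thrmok\le\thruo$ gives $\thr_1(\p)\ge\thruo\ge\thrmok$, so round $1$ cannot produce $?$, while Corollary~\ref{cor:no-a-above-bthr} rules out the minority value, forcing $\solo$ (resp.\ $\soloa$) after round $1$ and a decision by every process. The only difference is that you spell out the dual of Lemma~\ref{lem:preserving} for $\soloa$, which the paper leaves implicit.
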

\begin{proof}
	The first two statements are direct from the definition as all the rounds in a
	decider are solo safe. 	
	We only prove the third statement, as the proof of the fourth statement is similar.
	For the third statement, by Corollary~\ref{cor:no-a-above-bthr} after the first round we
	cannot produce $a$'s under $\p_1$.
	Because  the first round is solo-safe, we get
	$\thruo\leq\thr_1(\p)$; and since $\thrmok\leq \thruo$, we get $\thrmok\leq \thr_1(\p)$. 
	Hence, the first round cannot produce $?$ neither.
	This means that from $\bias(\th)$ as the input tuple, we can 
	only get $\solo$ as the tuple after the first round under the
	predicate $\p_1$.
	Since rounds $2,\dots,r$ are solo-safe it follows that 
	all the processes decide on $b$ in round $r$.
\end{proof}

%\begin{proof}\textbf{Main positive}

%	If condition T1 holds, there is a strong unifier followed by a decider.
%	After the strong unifier the state of $\inp$ becomes $\solo$ or $\soloa$ thanks to Lemma~\ref{lem:str-uni}. 
%	After a decider all processes decide thanks to Lemma~\ref{lem:dec}.

%	Otherwise T2 holds, and we have a weak-unifier, followed by a finalizer, followed by
%	a decider. 
%	We claim that after a weak unifier the state of $\inp$ becomes $\solo$, $\solo^a$ or $\bias(\th')$ for $\th'>\bthr$.
%	Indeed, if just before the weak unifier we had $\bias(\th)$ for $\th\leq \bthr$ as the state of $\inp$, then
%	Lemma~\ref{lem:thr-unifier} says that we get $\solo$ or $\soloa$
%	as the state afterwards.
%	Otherwise, $\th>\bthr$, and  Corollary~\ref{cor:no-a-above-bthr} says that $a$ cannot
%	be produced in the first round under the weak-unifier.
%	In consequence, bias can only increase in this phase and hence
%	after the weak unifier, we have the state of the $\inp$ to be
%	$\bias(\th')$ for $\th' > \bthr$.
%	To sum up, after a weak-unifier we can have $\solo$, $\soloa$, or
%	$\bias(\th')$ for $\th'>\bthr$. 

%	By Corollary~\ref{cor:no-a-above-bthr},  once the state of the $\inp$ becomes $\bias(\th)$ for
%	$\th>\bthr$ then the bias can never decrease, as value $a$ cannot be produced in the
%	first round. 
%	Lemma~\ref{lem:finalizer} then guarantees that after a finalizer 
%	the state of $\inp$ becomes $\solo$
%	or $\soloa$. 	
%	After a  decider all processes decide thanks to Lemma~\ref{lem:dec}.
%\end{proof}

We are now ready to show one direction of Theorem~\ref{thm:core}.

\begin{lemma}\label{main-positive}
	If an algorithm in a core language has structural properties from
	Definition~\ref{def:structure}, and satisfies condition T then it solves consensus.
\end{lemma}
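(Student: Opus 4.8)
The plan is to prove agreement and termination separately. Agreement is immediate: by hypothesis the algorithm has all the structural properties of Definition~\ref{def:structure}, so Lemma~\ref{lem:agreement} applies. It then remains to prove termination, that is, that along every execution respecting the communication predicate every $\dec$ variable eventually becomes different from $?$. As explained at the start of Part~2 it suffices to argue with the two values $a,b$, the general case being analogous; and then every reachable $\inp$-tuple has the form $\bias(\th)$ for some $\th$ (with the conventions $\bias(0)=\soloa$, $\bias(1)=\solo$), since a round maps a tuple over $\set{a,b,?}$ to a tuple over $\set{a,b,?}$ and the $\inp$-update only reintroduces values already present. I would then single out a set $S$ of ``good'' $\inp$-tuples: $S=\set{\solo,\soloa}$ when $\thrmok\ge\thruo$, and $S=\set{\bias(\th):\th\ge\bthr\ \text{or}\ \th\le 1-\bthr}$ when $\thrmok<\thruo$.

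Next I would establish \emph{Claim~1: $S$ is invariant under every phase transition (under any phase predicate, not just $\gp$)}. For $\solo$ and $\soloa$ this is clear, since a heard-of multiset drawn from an all-$b$ (all-$a$) tuple contains only $b$'s (only $a$'s) and $?$'s, so every $\update_i$ returns $b$ (resp.\ $a$) or $?$ and the $\inp$-update restores $\solo$ (resp.\ $\soloa$). For $\bias(\th)$ with $\th\ge\bthr$, Corollary~\ref{cor:no-a-above-bthr} forbids firing $a$ in round~$1$, so $f_1$, and hence $f_2,\dots,f_{\fo}$, are over $\set{b,?}$; since the $\inp$-update turns a position into $a$ only if it was already $a$, the fraction of $b$'s cannot drop below $\th\ge\bthr$. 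The case $\th\le 1-\bthr$ is symmetric.

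Then I would derive the two consequences needed. \emph{Claim~2: a decider phase applied to any tuple in $S$ produces $f_r\in\set{\solo,\soloa}$, so the $\dec$-update fills in every position still equal to $?$}; this is Lemma~\ref{lem:decider}, whose first two statements handle $\solo,\soloa$ and whose last two handle the biased tuples, their hypothesis $\thrmok\le\thruo$ being precisely the case in which such tuples lie in $S$. \emph{Claim~3: right after the unifier phase $\p^i$ the $\inp$-tuple lies in $S$}; indeed, if $\thrmok\ge\thruo$ then Lemma~\ref{lem:unifier} gives $\inp\in\set{\solo,\soloa}$ whatever the tuple before $\p^i$ was, if $\thrmok<\thruo$ and the tuple before $\p^i$ is $\bias(\th)$ with $1-\bthr\le\th\le\bthr$ then Lemma~\ref{lem:unifier} again gives $\inp\in\set{\solo,\soloa}\subseteq S$, and otherwise $\bias(\th)$ is already in $S$ and $S$ is phase-invariant by Claim~1. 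Combining: the execution performs the sporadic phases $\p^1,\dots,\p^k$ in order, so $\p^i$ (a unifier) occurs no later than $\p^j$ (a decider), with $i\le j$. If $i<j$, Claim~3 puts $\inp$ into $S$ right after $\p^i$, Claim~1 keeps it there through the intervening phases, and Claim~2 makes $\p^j$ define every $\dec$ variable. If $i=j$, the single phase $\p^i=\p^j$ is both a unifier and a decider, hence all its rounds are solo-safe; reading the proof of Lemma~\ref{lem:unifier} one gets $f_{\fo}\in\set{\solo,\soloa}$ when $\thrmok\ge\thruo$ or when the input bias lies in $[1-\bthr,\bthr]$, after which the solo-safe rounds $\fo+1,\dots,r$ preserve it and round $r$ defines every $\dec$ variable, while in the remaining subcase Lemma~\ref{lem:decider} applies directly. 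In all cases termination holds, and together with agreement this shows the algorithm solves consensus.

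The step I expect to be the main obstacle is making Claim~1 airtight for an arbitrary phase predicate, in particular verifying that from a tuple with at least $\bthr n$ copies of $b$ the number of $b$'s really cannot fall below $\bthr n$ after a whole phase --- this rests on Corollary~\ref{cor:no-a-above-bthr} plus a careful accounting of which positions the $\inp$-update may change --- and, to a lesser degree, the degenerate merged case $i=j$, where the unifier and decider arguments must be interleaved within a single phase.
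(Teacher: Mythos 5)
Your proposal is correct and follows essentially the same route as the paper: agreement via Lemma~\ref{lem:agreement}, then termination by combining Lemma~\ref{lem:unifier}, Lemma~\ref{lem:decider} and Corollary~\ref{cor:no-a-above-bthr}, with the same case split on $\thrmok$ versus $\thruo$ and on whether the bias lies in $[1-\bthr,\bthr]$. Your invariant set $S$ merely makes explicit what the paper leaves implicit (preservation of $\solo$, $\soloa$ and of bias above $\bthr$ across the phases between the unifier and the decider, and the merged case $i=j$, which the paper covers by the remark following Lemma~\ref{lem:unifier}), so it is the same argument with slightly more bookkeeping.
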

\begin{proof}
	Lemma~\ref{lem:agreement} says that the algorithm satisfies agreement.
	If condition T holds, there is a  unifier followed by a decider.
	If $\thruo\leq \thrmok$ then after a unifier the $\inp$ tuple becomes
	$\solo$ or $\soloa$ thanks to Lemma~\ref{lem:unifier}.  
	After a decider all processes decide thanks to Lemma~\ref{lem:decider}.
	
	Otherwise $\thrmok< \thruo$. 
	If before the unifier the $\inp$ tuple was $\bias(\th)$ with $1-\bthr\le\th\le\bthr$ then after the
	unifier $\inp$ becomes $\solo$ or $\soloa$ thanks to Lemma~\ref{lem:unifier}. 
	We once again conclude as above. 
	If $\th>\bthr$ (or $\th < 1-\bthr$) then by Corollary~\ref{cor:no-a-above-bthr}, the number of $b$'s (resp. number of $a$'s) can only
	increase after this point. Hence till the decider, 
	the state of the $\inp$ tuple remains as $\bias(\th')$ with 
	$\th' > \bthr$ (resp. $\th' < 1-\bthr$).
	After a decider all processes decide thanks to Lemma~\ref{lem:decider}.
\end{proof}

\subsubsection*{Part 3: non-termination}
%\begin{lemma}\label{lem:not-decider}
%	If $\p$ is a not a decider then	
%	$\solo\act{\p}\solo$. Further either 
%	$\bias(\th)\act{\p}\bias(\th)$ or $\bias(\th) \act{\p} \solo$
%	is possible
%	for  all $\th>\thruo$. 
%\end{lemma}
%\begin{proof}
%	If $\p$ is not a decider then there is a round, say $i$, that is not solo-safe. 
%	By definition this means $\thr_i(\p)< \thru^i$.
%	If this is the first round then we can get $\soloq$ after the first round by
%	sending an $\ho$ set consisting only of $b$'s and of size $<\thruo$.
%	If it is not the first round we can get $\solo$ after the first round and get
%	$\soloq$ as the tuples after round $i$, which
%	ensures that no process decides during this phase.
%	If $i \le \fo$ then the state of $\inp$ does not change,
%	otherwise it changes to $\solo$.
%\end{proof}

\begin{lemma}\label{lem:not-decider}
	If $\p$ is a not a decider then	
	$\solo\act{\p}\solo$ and $\solo^a \act{\p} \solo^a$; namely, no process may decide.
	%Further, either $\bias(\th)\act{\p}\bias(\th)$ or $\bias(\th) \act{\p} \solo$
	%is possible for  all $\th>\thruo$. 
\end{lemma}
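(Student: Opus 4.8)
The statement to prove is Lemma~\ref{lem:not-decider}: if $\p$ is not a decider then $\solo \act{\p} \solo$ and $\soloa \act{\p} \soloa$, so that no process decides during such a phase. The plan is to unfold the definition of ``decider'': by Definition~\ref{def:predicates}, $\p$ is a decider iff every round is solo-safe w.r.t.\ $\p$, i.e.\ $0 \le \thru^i \le \thr_i(\p)$ for all $i$. So if $\p$ is not a decider, there is some round $j$ that is \emph{not} solo-safe, meaning either round $j$ has no $\uni$ instruction (so $\thru^j = -1 < 0$), or $\thr_j(\p) < \thru^j$. We may assume by the running convention (Lemma~\ref{lem:no-uni}) that every round has a $\uni$ instruction, so the interesting case is $\thr_j(\p) < \thru^j$.

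**The key construction.** I would fix such a round $j$ and exhibit a phase transition $\solo \act{\p} \solo$. Start from the tuple $f_0 = \solo$ (all entries $b$). For rounds $1, \dots, j-1$: since these are not the offending round, I want $\solo \lact{\p_i}_i \solo$. This is \emph{not} automatic from solo-safety of those rounds (they need not be solo-safe), so instead I use the $\uni$ instruction directly: from $\solo$ every process can be sent a multiset consisting only of $b$'s of size just above $\max(\thru^i, \thr_i(\p))$ — wait, this requires $\thru^i$ to make sense, but actually the point is simpler. From $\solo$, whatever multiset a process receives (a sub-multiset of $\mset(\solo)$) consists only of $b$'s, hence satisfies $\uni$; if it is large enough it fires the $\uni$ instruction and the process gets $b$. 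For rounds where $\thr_i(\p)$ is small this always works. The genuinely careful point is round $j$: here $\thr_j(\p) < \thru^j$, so I can send each process a multiset of $b$'s of size strictly between $\thr_j(\p)$ and $\thru^j$ — this satisfies $\p\dar_j$ (size above $\thr_j(\p)$) but is \emph{too small} to fire the $\uni$ instruction, and being unary it does not fire any $\mult$ instruction either, so $\update_j(\ho) = {?}$. Hence $\solo \lact{\p_j}_j \soloq$. After round $j$ the tuple is $\soloq$, and from $\soloq$ every received multiset is a sub-multiset of $\{?\}^n$, so after removing $?$ it is empty and no instruction fires: $\soloq \lact{\p_i}_i \soloq$ for all $i > j$. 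So $f_r = \soloq$, which has only $?$ entries; by the phase-transition semantics $\inp$ is preserved (the update at round $\fo$ is $?$), so $f' = \solo$, and $\dec$ is updated only where $f_r(p) \neq {?}$, which never holds, so $d' = \soloq$. This gives $(\solo, \soloq) \act{\p} (\solo, \soloq)$, i.e.\ $\solo \act{\p} \solo$ with no decision. A small subtlety: I need $j$ such that the above works; if $j > \fo$ this is fine, but if $j \le \fo$ one should double-check that producing $?$ before $\fo$ still leaves $\inp$ unchanged — it does, since the semantics sets $f'(p) = f(p)$ when $f_{\fo}(p) = {?}$, and once we reach $\soloq$ we stay there through round $\fo$.

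**The $\soloa$ case.** This is entirely symmetric: replace $b$ by $a$ everywhere. The only place the order between $a$ and $b$ could matter is in the operations $\min$ and $\smor$, but on a unary multiset (all $a$'s, or all $?$'s) both operations return the unique value or are not fired at all, so the asymmetry is irrelevant here. Hence $\soloa \lact{\p_j}_j \soloq$ via a too-small multiset of $a$'s, and then $\soloq$ propagates as before, giving $\soloa \act{\p} \soloa$ with no decision.

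**Main obstacle.** The main thing to get right is the bookkeeping for the rounds $1, \dots, j-1$ before the offending round: one must argue that the tuple can be kept at $\solo$ (equivalently, that some valid transition keeps it there) rather than being forced to change, and this needs the observation that from a unary tuple every sub-multiset is unary and hence — if large enough — fires the $\uni$ instruction, while smaller choices are available only if we wanted $?$ earlier. Since we only need \emph{existence} of one phase transition $\solo \act{\p} \solo$, it suffices to make one concrete choice of heard-of multisets at each round, and the clean choice is: at rounds $i \neq j$ send the full tuple (unary, so it fires $\uni$ if that instruction exists and its threshold is met — which it is, since $\thr_i(\p) < 1$ and Assumption~\ref{assumption} forces $\thru^i \le \thr_i(\p)$ along non-preserving prefixes, but more robustly the full tuple has size $1 > \thru^i$ whenever... ) — here one must be slightly careful whether $\thru^i < 1$; since thresholds are in $[0,1)$ by the model definition, sending the full tuple of $b$'s always fires the $\uni$ instruction, so $\solo \lact{\p_i}_i \solo$ for every $i \neq j$ unconditionally. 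That resolves the obstacle and the rest is routine.
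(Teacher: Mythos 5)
Your proof is correct and follows essentially the same route as the paper: pick the non-solo-safe round $i$ (using the convention that every round has a $\uni$ instruction, so $\thr_i(\p)<\thru^i$), keep $\solo$ unchanged before round $i$, produce $\soloq$ at round $i$ with a unary multiset whose size lies between $\thr_i(\p)$ and $\thru^i$, and propagate $\soloq$ to the end so that neither $\inp$ nor $\dec$ changes; the $\soloa$ case is symmetric. You merely spell out the details the paper leaves as "easy to verify."
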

\begin{proof}
	If $\p$ is not a decider then there is a round, say $i$, that is not solo-safe. 
	By definition this means $\thr_i(\p)< \thru^i$.
	It is then easy to verify that for $j < i$, $\solo \lact{\p_j}_j \solo$,
	$\solo \lact{\p_i}_i \solo^?$ and $\solo^? \lact{\p_k}_k \solo^?$ for $k > i$. Hence this
	ensures that no process decides during this phase.
	Similar proof holds when the $\inp$ tuple is $\solo^a$.
	%If $i \le \fo$ then the state of $\inp$ does not change,
	%otherwise it changes to $\solo$.
\end{proof}

\begin{lemma}\label{lem:global-th}
	For the global predicate $\gp$: if $1/2\leq\th<\bthr$, then 
	$\bias(\th)\act{\gp}\bias(\th')$ for every $\th'\geq 1/2$.
\end{lemma}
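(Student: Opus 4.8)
The plan is to realise, for each target $\th'$ with $1/2\le\th'<1$, a single phase executed under $\gp$ that carries $(\bias(\th),\solo^?)$ to $(\bias(\th'),\solo^?)$; by the abbreviation $f\act{\gp}f'$ this is exactly the claim. I would arrange the run so that the first $\fo$ rounds fix the updated $\inp$-tuple to $\bias(\th')$, while in rounds $\fo+1,\dots,r$ every process receives a multiset that fires no instruction, leaving $\dec$ untouched. Throughout one may use the structural properties of Definition~\ref{def:structure} (they are assumed at this point) and that $\gp$ has no equalizer (Proviso~\ref{proviso}). As a starting observation, since $\thrmok<1$ we have $\bthr>1/2$, hence $1-\bthr<1/2\le\th<\bthr$, and Lemma~\ref{lem:fire-one} gives $\set{a,b}\incl\fire_1(\bias(\th),\gp)$.

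For rounds $1,\dots,\fo$ I would distinguish two cases. If every one of these rounds has a $\mult$ instruction, Lemma~\ref{lem:any-frequency} directly yields $\bias(\th)\lact{\gp_1}_1\cdots\lact{\gp_\fo}_\fo\bias(\th')$; the tuple after round $\fo$ has no $?$, so the phase updates $\inp$ to $\bias(\th')$. Otherwise let $j\in\set{2,\dots,\fo}$ be the first round without a $\mult$ instruction (round $1$ has one). Lemma~\ref{lem:any-frequency} on rounds $1,\dots,j-1$ brings us to round $j$ with $\bias(\alpha)$ for an arbitrary $\alpha$; taking $\alpha$ large (resp.\ small) and using that round $j$ is preserving, Lemma~\ref{lem:bias-preserving} gives $\set{b,?}\incl\fire_j(\bias(\alpha),\gp)$ (resp.\ $\set{a,?}\incl\fire_j(\bias(\alpha),\gp)$), and then Lemma~\ref{lem:any-qbias-from-qbias} on rounds $j,\dots,\fo$ produces $\biasq(\gamma)$ (resp.\ $\biasqa(\gamma)$) for an arbitrary $\gamma$. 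If $\th'\ge\th$ I use the $\biasq$ branch with $\gamma=\th'-\th$, placing the $b$-entries on entries that held $a$ in $\bias(\th)$; if $\th'<\th$ I use the $\biasqa$ branch with $\gamma=\th-\th'$, placing the $a$-entries on entries that held $b$. A short count then shows that the $?$-preservation rule of the phase transition updates $\inp$ to exactly $\bias(\th')$, and that the tuple after round $\fo$ has fewer than $n/2$ non-$?$ entries.

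For rounds $\fo+1,\dots,r$ I would make every process decide nothing. Round $\fo+1$ has no $\mult$ instruction (Proviso~\ref{proviso}) and, by Definition~\ref{def:structure}, $\thru^{\fo+1}\ge 1-\thrmok/2>1/2$. If the tuple after round $\fo$ is $\bias(\th')$, then sending the whole tuple (which contains both $a$ and $b$) to each process fires no instruction and the outcome is $\solo^?$; if it is $\biasq(\gamma)$ or $\biasqa(\gamma)$ with $\gamma<1/2<\thru^{\fo+1}$, then sending the whole tuple again fires nothing, since after discarding $?$ the multiset is a single value of multiplicity $\gamma n<\thru^{\fo+1}n$, and the outcome is $\solo^?$. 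From $\solo^?$ every later round, including round $r$, can only produce $\solo^?$, so $\dec$ stays $\solo^?$. Combining the three parts gives $\bias(\th)\act{\gp}\bias(\th')$.

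The main obstacle I anticipate is the second part: when some round $\le\fo$ lacks a $\mult$ instruction, Lemma~\ref{lem:any-frequency} is unavailable and the tuple reached after round $\fo$ is forced to carry $?$-entries, so one must choose the $\biasq$/$\biasqa$ parameter $\gamma$ and the positions of the $?$-entries carefully so that the phase-transition's $?$-preservation rule lands $\inp$ on precisely $\bias(\th')$; this requires the bias arithmetic together with the bound $|\th'-\th|<1/2$ to cooperate with the threshold of $\gp$ in round $\fo+1$. The remaining verifications are routine.
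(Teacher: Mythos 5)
Your proposal is correct and follows essentially the same route as the paper's proof: the same case split on whether all rounds $2,\dots,\fo$ have a $\mult$ instruction, the same use of Lemmas~\ref{lem:fire-one}, \ref{lem:any-frequency} and the $\biasq/\biasqa$ machinery (you re-derive Lemma~\ref{lem:any-qbias-from-bias} inline), and the same trick of exploiting $\thru^{\fo+1}>1/2$ (from the constant condition of Definition~\ref{def:structure}) together with the absence of a $\mult$ instruction in round $\fo+1$ to reach $\soloq$ and block all decisions. The only difference is cosmetic: in the second case you take the bias parameter $\gamma=|\th'-\th|<1/2$ uniformly in both directions, whereas the paper uses a parameter $\geq 1/2$ when $\th'\geq\th$; both placements yield exactly $\bias(\th')$ for $\inp$.
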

\begin{proof}
	We first observe that $a,b \in \fire_1(\bias(\th),\gp)$.
	Indeed, by Lemma~\ref{lem:fire-one}, $a\in \fire_1(\bias(\th),\gp)$. 
	Further since $1/2 \le \th$ and every $\mult$ instruction
	in the first round has $\smor$ as operator, it 
	follows that $b \in \fire_1(\bias(\th),\gp)$.
	Recall that by our proviso, the global predicate is not an equalizer.
	
	%Since $\gp$ has no equalizer, we can use
	%Lemmas~\ref{lem:any-frequency} and~\ref{lem:any-qbias-from-bias}
	%to get after round $\fo$, for arbitrary $\th''$ either the tuple: (i)  $\bias (\th'')$,
	%or (ii)  both $\biasq(\th'')$, and $\biasqa(\th'')$.   
	
	Suppose there are $\mult$ instructions in rounds $2,\dots,\fo$.
	Then Lemma~\ref{lem:any-frequency} allows us to get 
	$\bias(\th')$  as the tuple after round $\fo$.
	Moreover, our proviso from page~\pageref{proviso} says that there is no
	$\mult$ instruction in round $\fo+1$.
	So we can get $\soloq$ as the tuple after round $\fo+1$ by sending the whole multiset. We can then propagate the tuple $\soloq$ all 
	the way till the last round.
	This ensures that no process decides and we are done in this case. 
	
	Otherwise there is a round $j$ such that $2 \le j \le \fo$ and $j$ does not have any $\mult$ instruction.
	By Lemma~\ref{lem:any-qbias-from-bias} 
	we can get $\biasq(\th'')$ as well as $\biasqa(\th'')$ (for any $\th''$) after
	round $\fo$.  
	There are two cases depending if $\th'\geq\th$ or not. 
	
	If $\th'\geq \th$, then we consider the tuple $\biasq(\th'')$ for some $1/2\le \th''<
	\min(\thru^{\fo+1}-\e,\th)$, (where $\e$ is some small number). 
	Notice that by Lemma~\ref{lem:constants} we have $\thrmok/2 \geq 1-\thru^{\fo+1}$ and since
	$\thrmok < 1$, this implies that $\thru^{\fo+1} > 1/2$, and so such a $\th''$ exists.
	%This is possible since $\thru^{\fo+1}>1/2$ because by
	%Lemma~\ref{lem:constants} we have $\thrmok/2 \geq 1-\thru^{\fo+1}$ and 
	%$\thrmok < 1$.
	It is  clear that $? \in \fire_{\fo+1}(\biasq(\th''),\gp)$
	and so we can get $\soloq$ as the tuple after round $\fo+1$ thereby
	ensuring that no process decides.
	To terminate, we need to arrange this execution so that the state of $\inp$
	becomes $\bias(\th')$ after this phase.
	Since $\th''\ge1/2$ we have enough $b$'s to change $\th'-\th$ fraction of $a$'s
	to $b$'s. 
	We leave the other values unchanged. 
	This changes the state of $\inp$ from $\bias(\th)$
	to $\bias(\th')$.
	
	%We claim that $\th''>1/2$ so together with $\th$ this allows us to get
	%arbitrary $\bias(\th')$ for $\th'>\th$ in a phase transition. 
	%Indeed, $\thru^{\fo+1}>1/2$ because $\thrmok/2\geq 1-\thru^{\fo+1}$, by Lemma~\ref{lem:suff}, and $1>\thrmok$.
	
	Suppose $\th' < \th$.
	By Lemma~\ref{lem:any-qbias-from-bias}, after round $\fo$ we can reach the
	tuple 
	$\biasqa(\th'')$ for $\th''=\th-\th'$.
	Arguing as before, we can ensure that the 
	state of the $\inp$ can be converted to $\bias(\th')$. We just have
	to show that all processes can choose to not decide in the last
	round. 
	
	We observe that $\th'' \le \thru^{\fo+1}$. 
	Indeed since $\th < \bthr \le 1$ and $\th' \geq 1/2$, it
	follows that $\th'' < 1/2 \le \thru^{\fo+1}$, where the last inequality
	follows from the discussion in the previous paragraph. 
	Now, as $\th'' \le \thru^{\fo+1}$, if we send 
	the entire tuple $\bias_a^?(\th'')$ to every process, we get $\soloq$ as the tuple after
	round $\fo+1$, hence 	making the processes not decide on anything in the last
	round.
	
	% If $\th'' > \thru^{\fo+1}$\igw{Just take $\th''$ to be smaller, and then
	% repeat. This is the place where we needed a proviso. The argument does not
	% work for timestamps}, then we send a multi-set consisting
	% of $\thru^{\fo+1}$ fraction of $a$'s and some small $\e$ fraction of $?$'s. 
	% By our proviso, $\thr_{\fo+1}(\p) \le \thru^{\fo+1}$
	% and so this set can be sent during round $\fo+1$. 
	% Clearly this set does not satisfy the instruction in
	% round $\fo+1$ and so we get $\soloq$ as the tuple after round $\fo+1$,
	% thereby preventing processes to decide in the last round.
\end{proof}

%\begin{lemma}\label{lem:not-thr-uni}
%	If $\p$ is neither a strong unifier nor  a $\thr$-unifier for any $\thr>\bthr$
%	then 
%	\begin{equation*}
%		\bias(\th)\act{\p}\bias(\th) \quad\text{for some $1/2\leq\th<\bthr$.}
%	\end{equation*}
%\end{lemma}
%\begin{proof}
%	We consider two cases. 
%	The first is when $\p$ is  not a strong unifier, but it is a $\thr$-unifier
%	for some $\thr<\bthr$. 
%	Consider $\bias(\th)$ for $\thr<\th<\bthr$.
%	Lemma~\ref{lem:thr-unifier} says that $\bias(\th)\act{\p}\bias(\th)$ is
%	possible.

%	So it remains the case when $\p$ is neither a strong unifier, nor
%	$\thr$-unifier for some $\thr$. 
%	Take  $\th=1/2$.
%	Let us consider all the reasons why $\p$ may not be a $\thr$-unifier.
%	If $\thr_1(\p)<\thrmok$ then we can get $\soloq$ after the first round
%	from $\spread$, which ensures that the state of the algorithm
%	is not changed.
%	The second reason is that there is no equalizer in $\p$. 
%	By Lemma~\ref{lem:spread} and Lemma~\ref{lem:no-min}, we have $a,b\in\fire_1(\spread,\p)$. 
%	Since there is no equalizer, Lemmas \ref{lem:any-frequency} and \ref{lem:any-bias} allow us to get either 
%	$\spread$ or $\soloq$ at round $\fo$; and since round $\fo$ does not have any $\mult$ instructions, we can get 
%	$\soloq$ as the tuple after round $\fo+1$ and we are done. 
%	The last reason is that there is  a round before an equalizer that is
%	preserving, or a 	round after the equalizer that is not solo-safe. 
%	In both cases we can get $\soloq$ at round $\fo$.

%\end{proof}

\begin{lemma}\label{lem:not-uni}
	If $\p$ is not a  unifier then 
	\begin{equation*}
	\bias(\th)\act{\p}\bias(\th) \quad\text{for some $1/2\leq\th<\bthr$.}
	\end{equation*}
\end{lemma}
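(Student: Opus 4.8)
The plan is to exhibit, for a suitable $\th\in[1/2,\bthr)$, one run of a phase from $(\bias(\th),\soloq)$ that returns to $(\bias(\th),\soloq)$, i.e.\ $\bias(\th)\act{\p}\bias(\th)$. I first reduce the goal: by the phase semantics it is enough to find $\th\in[1/2,\bthr)$ and a run $\bias(\th)=f_0\lact{\p_1}_1\cdots\lact{\p_r}_r f_r$ of the rounds with $f_r=\soloq$ (so $\dec$ stays undefined) and $f_\fo\in\{\soloq,\bias(\th)\}$ (so the $\inp$-update in round $\fo$ is the identity on a $\bias(\th)$-tuple). The tail is moreover for free: once some round outputs $\soloq$ every later round keeps $\soloq$, and from $\bias(\th)$ the round $\fo+1$ -- which carries no $\mult$ instruction by Proviso~\ref{proviso} -- outputs $\soloq$ once every process receives the whole, two-valued tuple $\bias(\th)$ (the $\uni$ instruction is not triggered). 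Hence it suffices to reach $\soloq$ or $\bias(\th)$ after round $\fo$. I will repeatedly use that for $\th=1/2+\delta$ with $\delta$ small and $n$ large one has $\{a,b\}\subseteq\fire_1(\bias(\th),\p)$: the whole tuple gives $b$ via $\smor$, and all the $a$'s together with slightly fewer $b$'s give $a$, using that the first-round $\mult$ instructions are $\smor$ and that all thresholds are below $1$.

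The case split follows the clauses defining a unifier. First suppose the threshold clause fails, i.e.\ $\thr_1(\p)<\thrmok$, or else $\thr_1(\p)<\thru^1$ and $\thr_1(\p)<\bthr$. In the first situation, for any admissible $\th$ I hand every process the same two-valued multiset of size strictly between $\thr_1(\p)\cdot n$ and $\thrmok\cdot n$: it satisfies neither $\uni$ nor any $\mult$, so $\bias(\th)\lact{\p_1}_1\soloq$. In the second situation I pick $\th\in(\thr_1(\p),\bthr)\cap[1/2,\bthr)$ (which is nonempty) and hand every process the same all-$b$ multiset of size strictly between $\thr_1(\p)\cdot n$ and $\min(\thru^1,\th)\cdot n$, again reaching $f_1=\soloq$. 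In both cases we are done.

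Now suppose the threshold clause holds but the clause asking for a suitable equalizer round fails, so in particular $\thr_1(\p)\ge\thrmok$. The key remark is that round $1$ applied to $\bias(\th)$ with $\th\in[1/2,\bthr)$ can never output $?$: any heard-of multiset admissible for $\p_1$ has size above $\thrmok\cdot n$, so if it has both values some $\smor$ $\mult$ instruction fires, while an all-$b$ (resp.\ all-$a$) multiset of that size either triggers $\uni$ -- when $\thr_1(\p)\ge\thru^1$ -- or cannot be formed, since $\th<\bthr$ forces too few $b$'s and $\th\ge 1/2>1-\bthr$ forces too few $a$'s. So round $1$ behaves like a non-preserving round. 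If there is no equalizer among rounds $1,\dots,\fo$ and all of rounds $2,\dots,\fo$ are non-preserving, then all these rounds carry a $\mult$ instruction, Lemma~\ref{lem:any-frequency} yields $\bias(\th)\lact{\p_1}_1\cdots\lact{\p_\fo}_\fo\bias(\th)$, and $f_\fo=\bias(\th)$. Otherwise there is a first bottleneck round at or before $\fo$ -- either a preserving round, or the first equalizer $i_0\le\fo$ -- and the tuple can be funnelled down to $\soloq$ by round $\fo$: route with Lemma~\ref{lem:any-frequency} through the equalizer-free, $\mult$-carrying initial segment to a $\bias(\th')$ with $\th'$ close to $1$, and then either (i) at a preserving round apply Lemma~\ref{lem:bias-preserving} to get $\{b,?\}$ and Lemma~\ref{lem:any-qbias-from-qbias} to drive the $b$-fraction to $0$ over the following equalizer-free rounds (collapsing to $\soloq$ at the next equalizer, if any, via the common all-$?$ multiset); or (ii) at $i_0$: if $i_0$ is preserving the common multiset can be chosen to output $?$, and if $i_0$ is non-preserving the common all-$b$ multiset of size above $\thr_{i_0}(\p)\cdot n\ge\thru^{i_0}\cdot n$ gives $f_{i_0}=\solo$, after which -- the equalizer clause failing for $i_0$ while rounds $2,\dots,i_0$ are non-preserving -- some round $\ell$ among $i_0+1,\dots,\fo$ is not solo-safe, and from $\solo$ that round outputs $?$ for every process (all-$b$ multiset of size between $\thr_\ell(\p)\cdot n$ and $\thru^\ell\cdot n$), so $f_\fo=\soloq$. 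In every branch we reach $\soloq$ or $\bias(\th)$ after round $\fo$, which by the reduction finishes the proof.

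The step I expect to be the main obstacle is the bookkeeping inside the second case: one must determine precisely in which of several ways the equalizer clause fails, apply the routing lemmas (Lemmas~\ref{lem:any-frequency}, \ref{lem:any-qbias-from-qbias}, \ref{lem:bias-preserving}) only to equalizer-free segments when equalizers occur before the intended bottleneck, and -- crucially -- make sure a non-preserving equalizer never strands the run at $\solo$ with all later rounds solo-safe, which is exactly what $\p$ not being a unifier forbids. Once this combinatorial split is fixed, the quantitative choices (taking $\th=1/2+\delta$ and the sizes of the heard-of multisets) are routine.
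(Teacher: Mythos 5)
Your overall route is the paper's: negate the definition of a unifier clause by clause, and in each case build one phase from $\bias(\th)$ that either re-creates $\bias(\th)$ at round $\fo$ or produces $\soloq$ there, and produces $\soloq$ at the last round so that nothing is decided. Your handling of the failed threshold clause, of the ``no equalizer and all of $2,\dots,\fo$ non-preserving'' case, and of a bottleneck at a round $j\ge 2$ is sound and matches the paper's argument. There is, however, a genuine gap in your case (ii) when the first equalizer is round $1$ itself, i.e.\ $i_0=1$, and $\thr_1(\p)\ge\bthr$ --- a situation the threshold clause explicitly allows (its second disjunct). There your two prescribed moves both fail: producing a common $?$ at a ``preserving'' $i_0$ contradicts your own key remark that round $1$ applied to $\bias(\th)$ with $\th<\bthr$ can never output $?$ (and indeed, with the threshold clause, round $1$ is formally preserving exactly when $\bthr\le\thr_1(\p)<\thru^1$); and the ``common all-$b$ multiset of size above $\thr_1(\p)\cdot n$'' cannot be drawn from $\bias(\th)$, since the number of $b$'s is $\th\cdot n<\bthr\cdot n\le\thr_1(\p)\cdot n$. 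Note also that at $i_0=1$ there is no initial segment to route through, so the tuple feeding the bottleneck is $\bias(\th)$ with $\th<\bthr$, not a $\bias(\th')$ with $\th'$ close to $1$, which is what your size estimates implicitly assume.

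The repair stays within your framework. If $\p\dar_1$ has $\f_=$, then every process receives the same admissible multiset; by your key remark this multiset fires some instruction, and since all round-$1$ $\mult$ operations are $\smor$ and the multiset is common, all processes obtain the same value, so $f_1\in\{\solo,\soloa\}$. Because the unifier clause fails for the witness $i=1$ (the non-preserving requirement on rounds $2,\dots,1$ is vacuous), some round $\ell$ with $2\le\ell\le\fo$ is not solo-safe; keep $\solo$ (resp.\ $\soloa$) by sending the whole tuple up to round $\ell-1$, then at round $\ell$ send a common single-valued multiset of size strictly between $\thr_\ell(\p)\cdot n$ and $\thru^\ell\cdot n$ to obtain $\soloq$, which then propagates, leaving $\inp$ and $\dec$ untouched --- exactly the collapse you already use in the $i_0\ge 2$ branch. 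With this corner fixed, your bookkeeping in the second case is complete and the proof goes through; this is also essentially how the paper disposes of its (tersely stated) last case.
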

\begin{proof}
	We examine all the reasons why $\p$ may not be a unifier.
	
	First let us look at conditions on constants. 	
	If $\thr_1(\p)<\thrmok$ then let $\th=1/2$. In the first round, we can then send to every process a multi-set
	$\ho$ with both $a$'s and $b$'s, and of size in between $\thr_1(\p)$ and $\thrmok$.
	This allows us to get $\soloq$ as the tuple after the first round, and
	ensures that neither the $\inp$ tuple nor the $\dec$ tuple gets
	updated in this phase.
	
	Suppose $\thr_1(\p)<\thruo$ and $\thr_1(\p) < \bthr$. 
	Let $\e$ be such that $\thr_1(\p) + \e < \min(\thruo,\bthr)$
	and let $\th = \max(\thr_1(\p)+\e,1/2)$.
	%If $\thr_1(\p)\geq 1/2$ then take $\th=\thr_1(\p)+\e$ for some small $\e$. 
	In the first round, by sending to every process a fraction of $(\thr_1(\p)+\e)$ $b$'s from $\bias(\th)$ we get $\soloq$ as the tuple after the first round
	and that allows us to conclude as before.
	%If $\thr_1(\p)<1/2$ then take $\th=1/2$. 
	%Send to every process a multiset $\ho$ consisting of $\thr_1(\p)+\e$ $b$'s,
	%so once again we get $\soloq$.

	The second reason is that there is no equalizer in $\p$ up to round $\fo$.
	We take $\th=1/2$. 
	By Lemmas~\ref{lem:spread}	and~\ref{lem:no-min}, we have
	$a,b\in\fire_1(\spread,\p)$.  
	If all the rounds $1,\dots,\fo$ have a $\mult$ instruction then
	Lemma~\ref{lem:any-frequency} allows us to get $\spread$ as the tuple after round $\fo$.
	Lemma~\ref{lem:no-mult-sequence} says that there cannot be a $\mult$ instruction
	in round $\fo+1$, so by sending the whole multiset in this round, we get
	$\soloq$ as the tuple after round $\fo+1$. This 
	ensures that no process decides in this phase.
	The other case is when there is a round among $1,\dots,\fo$ without the $\mult$ instruction.
	Lemma~\ref{lem:any-qbias-from-bias} allows to get $\soloq$ after round $\fo$ and 
	so neither $\inp$ nor $\dec$ of any process gets updated.
	
	The last reason is that there is  a round before an equalizer that is
	preserving, or a 	round after the equalizer that is not solo-safe. 
	In both cases we can get $\soloq$ as the tuple at round $\fo$ and conclude as before.
\end{proof}

The next lemma gives the main non-termination argument.
\begin{lemma}
	If the structural conditions from Definition~\ref{def:structure} hold, but condition T
	does not hold then the algorithm does not terminate. 
\end{lemma}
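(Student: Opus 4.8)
The plan is to build, for every execution-relevant sequence of sporadic predicates, an infinite execution along which no process ever decides, under the hypothesis that condition T fails. Recall an execution has the shape $(f_0,d_0)\act{\gp}^* (f_1,d_1)\act{\p^1}(f'_1,d'_1)\cdots\act{\gp}^*(f_k,d_k)\act{\p^k}(f'_k,d'_k)\act{\gp}^\w\cdots$. So it suffices to produce a finite prefix $f_1\act{\p^1}f'_1\act{\gp}f_2\act{\p^2}\cdots\act{\p^k}f'_k$ with all $\dec$ values $?$, ending in a tuple $g$ such that $g\act{\gp}g$ with no decision; then looping $\act{\gp}$ forever gives non-termination. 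The two building blocks are Lemma~\ref{lem:not-decider} (if $\p$ is not a decider then $\solo\act{\p}\solo$ and $\soloa\act{\p}\soloa$, no decision) and Lemma~\ref{lem:not-uni} together with Lemma~\ref{lem:global-th} (if $\p$ is not a unifier then $\bias(\th)\act{\p}\bias(\th)$ for some $1/2\le\th<\bthr$, and for the global predicate $\bias(\th)\act{\gp}\bias(\th')$ for every $\th'\ge 1/2$, both without decision). Note also that since every sporadic predicate implies $\gp$, and $\gp$ has no equalizer (Proviso~\ref{proviso}), $\gp$ itself can never be a unifier.

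The case analysis follows the failure of T, i.e.\ of ``there exist $i\le j$ with $\p^i$ a unifier and $\p^j$ a decider.'' First I would dispose of the case where no $\p^i$ is a decider and $\gp$ is not a decider: then take all $f_i=f'_i=\solo$, use Lemma~\ref{lem:not-decider} for every phase (each $\p^i$ is not a decider, and each $\act{\gp}^*$ segment preserves $\solo$ since $\gp$ is not a decider), and finish with $\solo\act{\gp}\solo$. If $\gp$ \emph{is} a decider, then every $\p^i$ is a decider too, so by the failure of T there is no unifier among $\p^1,\dots,\p^k$; then use Lemma~\ref{lem:not-uni} to get $\th_i$ with $\bias(\th_i)\act{\p^i}\bias(\th_i)$, splice the phases together with Lemma~\ref{lem:global-th} (each $\th_i<\bthr$ lets $\gp$ carry $\bias(\th_i)$ to the appropriate $\bias(\th_{i+1})$ with $\th_{i+1}\ge1/2$), and close with $\bias(\th_k)\act{\gp}\bias(\th_k)$ again by Lemma~\ref{lem:global-th}. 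The remaining, main case: $\gp$ is not a decider, but some $\p^i$ is; let $l$ be the index of the \emph{last} decider among $\p^1,\dots,\p^k$. Failure of T forces no $\p^i$ with $i\le l$ to be a unifier. I build the prefix in two segments: on phases $1,\dots,l$ I keep the $\inp$ tuple of the form $\bias(\th)$ with $1/2\le\th<\bthr$, using Lemma~\ref{lem:not-uni} inside each phase (legitimate since none of $\p^1,\dots,\p^l$ is a unifier) and Lemma~\ref{lem:global-th} across the $\act{\gp}^*$ segments; just after phase $l$ I use Lemma~\ref{lem:global-th} to move to $\bias(\th)$ for some $\th>\thruo$; then on phases $l+1,\dots,k$, none of which is a decider, and with $\gp$ not a decider either, I keep the tuple at $\bias(\th)$ with $\th>\thruo$ using Lemma~\ref{lem:not-decider}-style reasoning (actually I would state a small companion fact: if $\p$ is not a decider then $\bias(\th)\act{\p}\bias(\th)$ with no decision whenever $\th>\thruo$, which follows from the same argument as Lemma~\ref{lem:not-decider} applied to the non-solo-safe round). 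Finally close with $\bias(\th)\act{\gp}\bias(\th)$.

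The one subtlety I want to flag, and expect to be the main obstacle, is the interface between ``unifier'' and ``decider'' in the last case: condition T fails as soon as there is no $i\le j$ with $\p^i$ unifier and $\p^j$ decider, which (given at least one decider exists) is equivalent to saying \emph{no unifier occurs at or before the last decider}. So I must be careful that the index $l$ I pick is really the last decider, and that every predicate with index $\le l$ is a non-unifier — a unifier could in principle still appear among $\p^{l+1},\dots,\p^k$, but those phases are handled by the non-decider machinery which does not care whether they are unifiers. The second subtlety is purely bookkeeping: checking that the $\act{\gp}^*$ segments (which may be empty or long) can always be made to preserve whichever tuple shape I am maintaining — $\solo$, $\soloa$, or $\bias(\th)$ with $\th\ge1/2$ — which is exactly what Lemma~\ref{lem:not-decider} and Lemma~\ref{lem:global-th} guarantee, the latter because $1/2\le\th<\bthr$ is preserved and reachable for any target $\th'\ge1/2$. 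Everything else is routine gluing of phase transitions, so I would present it as a clean case split with the two lemmas cited at each step rather than re-deriving the round-level transitions.
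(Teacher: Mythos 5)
Your overall architecture (splice phase transitions $f_1\act{\p^1}f'_1\act{\gp}\cdots$ and close with a $\gp$ self-loop, with the three cases driven by Lemmas~\ref{lem:not-decider}, \ref{lem:not-uni} and \ref{lem:global-th}) is the same as the paper's, and your first two cases are fine. The genuine gap is exactly at the point you flagged: the tail after the last decider $\p^l$. Your ``companion fact'' --- if $\p$ is not a decider then $\bias(\th)\act{\p}\bias(\th)$ with no decision whenever $\th>\thruo$ --- is not only unproven, it is false in general. When T fails, a unifier may well occur among $\p^{l+1},\dots,\p^k$ (only unifiers \emph{at or before} the last decider are excluded), and for a unifier $\p$ Lemma~\ref{lem:unifier} shows that, whenever $\thruo\le\thrmok$ (e.g.\ the OneThird algorithm), \emph{every} phase transition from $\bias(\th)$ under $\p$ ends in $\solo$ or $\soloa$: the equalizer round forces a uniform tuple, no $?$ can be produced before it, and the solo-safe rounds up to $\fo$ preserve it, so $\inp$ cannot remain $\bias(\th)$. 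Hence the execution you construct simply does not exist in that case; the Lemma~\ref{lem:not-decider}-style argument does not transfer from $\solo$ to $\bias(\th)$ because its first step (keeping the tuple unchanged up to the non-solo-safe round) already fails. A secondary problem with the same tail: your closing self-loop $\bias(\th)\act{\gp}\bias(\th)$ is justified by Lemma~\ref{lem:global-th}, which requires $1/2\le\th<\bthr$, and a $\th$ with $\thruo<\th<\bthr$ need not exist (take $\thruo\ge\bthr$, e.g.\ $\thruo=\thrmok=0.9$).

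The repair is what the paper does and is available to you with the machinery you already use: after phase $l$, instead of moving to $\bias(\th)$ with $\th>\thruo$, use Lemma~\ref{lem:global-th} to drive the $\inp$ tuple all the way to $\solo$, and then run every remaining phase (the non-decider sporadic predicates $\p^{l+1},\dots,\p^k$ and the non-decider $\gp$, including the final self-loop) from $\solo$ via Lemma~\ref{lem:not-decider}. From $\solo$ it is irrelevant whether a later predicate is a unifier, so the distinction you hoped the ``non-decider machinery'' would ignore genuinely is ignored --- but only for $\solo$/$\soloa$, not for $\bias(\th)$.
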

\begin{proof}
	We recall that the communication predicate is:
	\begin{equation*}
		(\lG\gp)\land(\lF(\p^1\land\lF(\p^2\land\dots (\lF\p^k)\dots)))
	\end{equation*}	
	and that we have assumed that the global predicate implies all sporadic predicates. 
	This means, for example, that if the global predicate is a decider then all
	sporadic predicates are deciders. 
	
	We construct an execution
	\begin{equation*}
	f_1\act{\p^1}f'_1\act{\gp}f_2\act{\p^2}f'_2\act{\gp}\dots \act{\p^k}f'_k\act{\gp} f'_k
	\end{equation*}
	where every second arrow is a transition on  the global predicate. 
	The last	transition on the global predicate is a self-loop.
	Recall that we write $f\act{\p}f'$ for $(f,\soloq)\act{\p}(f',\soloq)$, so
	indeed the run as above is a witness to non-termination. 
	
	We examine several cases.
	
	If none of $\p^1,\dots,\p^k,\gp$ is a decider, then we take $f_i=f'_i=\solo$ for
	all $i=1,\dots,k$.
	By Lemma~\ref{lem:not-decider} we get the desired execution.
	
	Suppose the last decider in the sequence $\p^1,\dots,\p^k$ is $\p^l$. 
	(Notice that if the global predicate $\p$ is a decider then $l=k$). 
	By our assumption, none of $\p^1,\dots,\p^l$ are unifiers. 
	By Lemma~\ref{lem:not-uni},  for every $\p^i$, $i=1,\dots,l$, there is
	$1/2\leq\th_i<\bthr$ such 	that $\bias(\th_i)\act{\p^i}\bias(\th_i)$. 
	So we take $f_i=f'_i=\bias(\th_i)$.
	We can then use Lemma~\ref{lem:global-th} to get $f'_i \act{\gp} f_{i+1}$, for
	all $i=1,\dots,l-1$.
	This gives us an execution up to $f'_l$. 
	
	To complete the execution we consider two cases. 
	If $l = k$, then by Lemma~\ref{lem:global-th} we have 
	$f'_k\act{\gp}f'_k$ and so we are done.
	Otherwise $l<k$, and we use Lemma~\ref{lem:global-th} to get
	$f'_l\act{\gp}\solo$. 
	We set $f_j=f'_j=\solo$ for $j>l$.
	Since $l < k$, neither the global predicate, nor any one of $\p^{l+1},\dots,\p^k$ are
	deciders, and so by Lemma~\ref{lem:not-decider} we get
	the desired execution.
\end{proof}
 
\section{Proofs for algorithms with timestamps}

We prove the characterization from Theorem~\ref{thm:ts}.
Recall that in this extension we add timestamps to the $\inp$ variable, i.e., timestamps are sent along with $\inp$ and are updated whenever $\inp$ is updated. 
The semantics of rounds is different only in the first round where we have 
$(f_0,t) \lact{}_1 f_1$ instead of the  $f_0 \lact{}_1 f_1$ in the core language.
Further, whenever the $\inp$ of a process is updated, the timestamp is updated as well. 
(In particular, if the value of $\inp$ of a process was  $a$ and later it was updated
to $a$ once again, then in principle the value of $\inp$ does not change but the
time stamp is updated).

\begin{definition}
	We introduce some abbreviations for tuples of values with timestamps. 
	For every tuple of values $f$, and every  $i \in \nn$ define
	$(f,i)$ to be a $\inp$-timestamp tuple where the value of $\inp$ for process
	$p$ is $f(p)$ and the value of the timestamp is $i$. So, for example,
	$(\spread,0)$ denotes the tuple where the value of $\inp$ for half of the
	process is $a$, for the other half it is $b$, and the
	timestamp for every process is $0$.
\end{definition}

Similarly to the core case, we give the proof of Theorem~\ref{thm:ts} in three
parts: we deal first with structural properties and then with termination, and non-termination.

\subsubsection*{Part 1: Structural properties for timestamps}
The structure of the argument is similar to the core case.

\begin{lemma}\label{lem:ts-no-mult}
	If no $\mult$ instruction in the first round then the algorithm does not have
	termination property.
\end{lemma}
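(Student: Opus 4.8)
The plan is to mirror the proof of Lemma~\ref{lem:no-mult} from the core language, since the structural obstruction is essentially the same: if the first round has no $\mult$ instruction, then a process receiving a balanced heard-of multiset cannot update its value.

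First I would fix an arbitrary phase predicate $\p$ and start from the configuration $(\spread,0)$, i.e.\ the $\inp$-timestamp tuple where half the processes hold $a$ and half hold $b$, all with timestamp $0$. In the first round every process sends its value together with timestamp $0$, so all received timestamps are equal; consequently $\maxts(\ho)$ behaves exactly like the core update operation on $\ho$ for any $\ho$ that the medium can deliver. I would then observe that, since the only instructions in the first round are $\uni$-guarded (there is no $\mult$ instruction by hypothesis), a process that receives a multiset containing both an $a$ and a $b$ satisfies no conditional and hence updates its variable to $?$. Because $\spread$ has equally many $a$'s and $b$'s, for each process we can pick such a mixed sub-multiset $\ho_p \incl \mset(\spread)$ of size above $\thr_1(\p)\cdot n$ (and satisfying $\f_=$ if $\p\dar_1$ demands it, by giving everyone the same mixed multiset). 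This yields $(\spread,0)\lact{\p_1}_1 \solo^?$.

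Next I would propagate $\solo^?$: for every subsequent round $i>1$, a process can only receive sub-multisets of $\mset(\solo^?)=\set{?}$, and $\update_i$ ignores $?$, so $\solo^?\lact{\p_i}_i\solo^?$ regardless of the form of $\p\dar_i$. Chaining these round transitions gives a phase transition in which $f_{\fo}$ is entirely $?$ and $f_r$ is entirely $?$, so by the definition of the phase transition neither $\inp$ (nor its timestamp) nor $\dec$ is updated: $(\spread,0,\solo^?)\act{\p}(\spread,0,\solo^?)$. Since this holds for every phase predicate $\p$, in particular for the global predicate and all sporadic predicates, we can iterate it forever, producing an execution along which no $\dec$ variable is ever set; hence termination fails.

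I do not expect any real obstacle here — the only point needing a line of care is that timestamps do not interfere, which is immediate because in the starting configuration all timestamps coincide, so $\maxts$ degenerates to the core behaviour and in fact never fires. This is why starting from $(\spread,0)$ rather than an arbitrary timestamp tuple is convenient.
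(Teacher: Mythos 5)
Your proof is correct and follows exactly the paper's route: the paper's own argument is just the terse remark that $(\spread,0)\act{\p}(\spread,0)$ is a phase transition for every $\p$, which is precisely the self-loop you construct by sending the full (mixed, hence $\uni$-violating and $\f_=$-satisfying) multiset in the first round to reach $\solo^?$ and propagating it. Your extra observation that equal timestamps make $\maxts$ irrelevant is the same implicit point the paper relies on, so nothing further is needed.
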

\begin{proof}
	It is easy to verify that $(\spread,0) \act{\p} (\spread,0)$ is a phase transition, for
	every communication predicate $\p$. 
\end{proof}

\begin{lemma}\label{lem:ts-no-uni}
	If there is a round without uni instruction then the algorithm does not have
	termination property.
\end{lemma}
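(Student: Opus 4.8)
The plan is to replay the proof of Lemma~\ref{lem:no-uni}, carrying the extra timestamp coordinate along. Let $i$ be the first round without a $\uni$ instruction, so that rounds $1,\dots,i-1$ each have one. I claim that for every phase predicate $\p$ there is a phase transition $(\solo,0,\solo^?)\act{\p}(\solo,t,\solo^?)$ for some $t\ge 0$, which immediately yields a non-terminating execution witnessing the failure of the termination property.

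First I would verify the round transitions. For $j<i$ the whole multiset $\mset(\solo)$ is admissible for $\p_j$ (it satisfies $\f_=$, and any $\f_\thr$ since its size is $n$), and sending it to every process the update returns $b$: in round~$1$ the operation is $\maxts$ applied to $n$ copies of $(b,0)$, and in rounds $2,\dots,i-1$ it is $\min$ or $\smor$ applied to $n$ copies of $b$. Hence $(\solo,0)\lact{\p_1}_1\solo$ and $\solo\lact{\p_j}_j\solo$ for $1<j<i$. In round~$i$, which has only $\mult$ instructions, every multiset drawn from $\mset(\solo)$ carries a single value, so no $\mult$ conditional is satisfied and $\update_i$ yields $?$ throughout; this gives $\solo\lact{\p_i}_i\solo^?$, or directly $(\solo,0)\lact{\p_1}_1\solo^?$ when $i=1$. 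Finally $\solo^?\lact{\p_j}_j\solo^?$ for every $j>i$, since after deleting $?$ the received multiset is empty and no conditional fires.

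Composing these round transitions yields a phase transition out of $(\solo,0,\solo^?)$, and it remains to read off the target configuration. The $\dec$ tuple is never touched, since the last round is round $i$ or a later round and both output only $?$; hence $\dec$ stays $\solo^?$. For the $\inp$ tuple, recall that in this extension $\fo\ge 2$ (round~$1$ carries only $\maxts$ operations, which do not write $\inp$): if $\fo\ge i$ then round $\fo$ outputs $?$ everywhere and $\inp$ is left unchanged, so the run returns to $(\solo,0,\solo^?)$; if $\fo<i$ then round $\fo$ outputs $b$ everywhere, so $\inp$ is rewritten to $\solo$ with its timestamp bumped to the current phase number, giving $(\solo,t,\solo^?)$. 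The only point needing a moment of care is this last case: the argument above used nothing about the timestamp vector beyond the fact that it is constant across processes, so it applies verbatim to $(\solo,t,\solo^?)$, and iterating produces an infinite execution in which no process ever decides. This timestamp bookkeeping is the sole genuine difference from the core proof, and the rest is routine.
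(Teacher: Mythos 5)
Your proof is correct and follows essentially the same route as the paper: take the first round $i$ lacking a $\uni$ instruction, run a phase from a solo tuple by sending full multisets so that rounds before $i$ reproduce the solo value, round $i$ produces $?$, and $?$ propagates, with the same case split on whether $\fo$ lies at/after or before $i$ to see that the $\inp$ tuple (up to a timestamp bump) and the undecided $\dec$ tuple are unchanged, yielding a non-terminating execution. The only cosmetic difference is that you work with $\solo$ (all $b$'s) where the paper uses $\solo^a$, which is immaterial here.
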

\begin{proof}
	Let $l$ be the first round without uni instruction. If $l \le \fo$,
	we have $(\solo_a,0)\act{\p}(\solo_a,0)$ for every communication predicate.
	Otherwise we get $(\solo_a,i) \act{\p} (\solo_a,i+1)$ for every communication predicate.
\end{proof}

The next Lemma points out a crucial difference with the case without timestamps (cf.~Lemma~\ref{lem:fire-one})

\begin{lemma}\label{lem:ts-bias-fire}
	For every
	$\p$, we have $\set{a,b}\incl\fire_1((\bias(\th),i),\p)$ for every $i$
	and sufficiently big $\th$.
\end{lemma}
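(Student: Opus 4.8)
The plan is to exploit two features of the timestamp model: every instruction of the first round applies $\maxts$, and in the tuple $(\bias(\th),i)$ all processes carry the \emph{same} timestamp $i$. Hence, in any heard-of multiset $\ho$ built from this tuple every value has timestamp $i$, so $\maxts(\ho)$ is simply the least value occurring in $\ho$; since $a<b$ this is $a$ as soon as $\ho$ contains an $a$, and $b$ when $\ho$ consists of $b$'s only. The proof then reduces to exhibiting, for a sufficiently large $\th$, two admissible heard-of multisets: one uniform in $b$ that triggers the $\uni$ instruction, and one mixing $a$'s and $b$'s that triggers the first $\mult$ instruction.

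First I would recall what is already in force: by Lemma~\ref{lem:ts-no-mult} round $1$ contains a $\mult$ instruction and by Lemma~\ref{lem:ts-no-uni} it contains a $\uni$ instruction; since no round precedes round $1$, Assumption~\ref{assumption} gives $\thru^1\ge\thr_1(\gp)$. Recall also that the first round's list is one $\uni$ conditional followed by $\mult$ conditionals of non-increasing threshold, and that every threshold constant is $<1$. Let $\tau$ denote the largest $\mult$ threshold in round $1$, fix $\th$ with $\th>\max(\thru^1,\tau,\thr_1(\p))$, and take $n$ large enough that at least one process holds $a$ in $\bias(\th)$.

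To get $b\in\fire_1((\bias(\th),i),\p)$ I would hand every process the same multiset $\ho_b$ consisting of $b$'s only and of size (as a fraction of $n$) strictly between $\max(\thru^1,\thr_1(\p))$ and $\th$; there are enough $b$'s for this, $\ho_b$ is a sub-multiset of $\mset(\bias(\th))$ meeting $\p\dar_1$ (also an equalizer, if $\p\dar_1$ has one), and it satisfies the first conditional $\uni\land|\ho|>\thru^1$, so the process sets $x_1$ to $\maxts(\ho_b)=b$. To get $a\in\fire_1((\bias(\th),i),\p)$ I would instead hand every process the same multiset $\ho_a$ containing at least one $a$ and enough $b$'s so that its size lies strictly between $\max(\tau,\thr_1(\p))$ and $\th$; then $\ho_a$ meets $\p\dar_1$, fails $\uni$, and satisfies $\mult\land|\ho|>\tau$, so it triggers the first $\mult$ conditional, and the process sets $x_1$ to $\maxts(\ho_a)$, which --- all timestamps being $i$ --- equals the least value in $\ho_a$, namely $a$. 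Combining the two round transitions yields $\set{a,b}\incl\fire_1((\bias(\th),i),\p)$.

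There is no genuinely hard step here; the only points needing care are that uniform timestamps collapse $\maxts$ to ``smallest value, ignoring multiplicities'', and that $\th$ (and $n$) must be chosen large enough to clear $\thru^1$, the largest $\mult$ threshold, and $\thr_1(\p)$ simultaneously while still leaving an $a$ in the tuple --- which is possible precisely because all these constants are $<1$ and $\th<1$.
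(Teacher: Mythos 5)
Your proof is correct and follows essentially the same route as the paper: fire $b$ by giving everyone a $\uni$ multiset of $b$'s above $\thru^1$ and $\thr_1(\p)$, and fire $a$ by giving a mixed multiset above the largest $\mult$ threshold, using that uniform timestamps make $\maxts$ return the minimal value. The only cosmetic difference is that the paper takes the entire tuple as the mixed multiset rather than one sized just above the largest $\mult$ threshold.
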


\begin{proof}
	We let $\th = \max(\thruo,\thr_1(\p)) + \e$ for small enough $\e$. 
	So $b \in \fire_1((\bias(\th),i),\p)$, when we take $\ho$ to contain all
	the $b$'s.
	%To get $b$ we use $\ho$ with
	%all the $b$'s. %, this is possible thanks to our proviso.
	Since all the $\mult$ instructions have $\maxts$ as their operator,
	it follows that if we take a multi-set $\ho$ consisting of all the values in
	the tuple then $a = \update_1(\ho)$.
\end{proof}

Since the semantics of the rounds remains the same except for the first one,
Lemmas~\ref{lem:any-qbias-from-qbias},~\ref{lem:any-frequency} 
and~\ref{lem:any-qbias-from-bias} apply for timestamp algorithms for $k>2$. 
For the first round, we get the following reformulations.

\begin{lemma}\label{lem:ts-any-frequency}
	Suppose rounds $1\dots l$ all have $\mult$
	instructions
	and none of $\p_1,\dots,\p_l$ is an equalizer. 
	If $\dom(f')\incl \fire_1((f,t),\p_1)$ and  $? \notin \dom(f')$ 
	then 
	$(f,t)\lact{\p_1}_1\dots\lact{\p_l}_l f'$ is possible.
\end{lemma}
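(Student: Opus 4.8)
The plan is to transcribe the proof of Lemma~\ref{lem:any-frequency}, the only genuine novelty being that round~$1$ now carries timestamps. Since the transition relation of rounds $2,\dots,l$ is unchanged from the core language, the idea is to use round~$1$ to reach a timestamp-free tuple $g$ with $?\notin\dom(g)$ and $\dom(f')\incl\fire_2(g,\p_2)$, and then invoke Lemma~\ref{lem:any-frequency} with $k=2$ to obtain $g\lact{\p_2}_2\cdots\lact{\p_l}_l f'$; concatenating the two parts gives the claim. Throughout I use the running convention (in force since Lemma~\ref{lem:ts-no-uni}) that every round has a $\uni$ instruction.

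For the base case $l=1$ there is no second phase to arrange: for each process $p$ the value $f'(p)\in\dom(f')\incl\fire_1((f,t),\p_1)$ is produced by some heard-of multiset admissible under $\p_1$, and since $\p_1$ is not an equalizer these per-process choices are independent, so $(f,t)\lact{\p_1}_1 f'$. For $l\ge 2$ I split on $\dom(f')$, which is contained in $\set{a,b}$ because $?\notin\dom(f')$. If $f'\in\set{\solo,\soloa}$, take $g=f'$: sending the whole tuple in round~$2$ fires the $\uni$ instruction, so $\dom(f')\incl\fire_2(f',\p_2)$, and $\dom(g)=\dom(f')\incl\fire_1((f,t),\p_1)$ gives $(f,t)\lact{\p_1}_1 g$ exactly as in the base case. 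If $\dom(f')=\set{a,b}$, apply Lemma~\ref{lem:ab-if-mult}, which holds for round~$2$ since its transition relation coincides with that of the core language, to pick $\th$ with $\set{a,b}\incl\fire_2(\bias(\th),\p_2)$, and take $g=\bias(\th)$; then $\dom(f')=\set{a,b}\incl\fire_2(g,\p_2)$, while $a,b\in\dom(f')\incl\fire_1((f,t),\p_1)$ yields $(f,t)\lact{\p_1}_1 g$. In either case the hypotheses of Lemma~\ref{lem:any-frequency} hold for rounds $2,\dots,l$ (all have $\mult$ instructions, none of $\p_2,\dots,\p_l$ is an equalizer, $?\notin\dom(f')$, and $\dom(f')\incl\fire_2(g,\p_2)$), which finishes the argument.

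The only point that needs care — and the closest thing to an obstacle — is the claim, used in both sub-cases, that any timestamp-free tuple $g$ with $?\notin\dom(g)\incl\fire_1((f,t),\p_1)$ is reachable from $(f,t)$ in a single first-round step. This holds because $\p_1$ contains no $\f_=$: the constraint it imposes on a heard-of multiset is the same per-process size bound for every process and does not couple distinct processes, so a multiset witnessing a given value at one process can be reused at any other, and the sent value--timestamp pairs — hence the timestamps — are already baked into the definition of $\fire_1$. Everything else is a literal copy of the core argument, the point being that after round~$1$ we are back in the timestamp-free setting where Lemmas~\ref{lem:ab-if-mult} and~\ref{lem:any-frequency} apply verbatim.
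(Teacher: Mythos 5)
Your proof is correct and is essentially the paper's argument, whose proof of this lemma is literally ``same as that of Lemma~\ref{lem:any-frequency}'': the only timestamp-sensitive step is round~$1$, where, like the paper, you use nothing beyond membership in $\fire_1((f,t),\p_1)$ together with the fact that a non-equalizer predicate imposes only per-process size constraints, so witnessing multisets can be combined freely across processes. Your packaging --- one timestamped first-round step to $g$ (either $f'$ itself or $\bias(\th)$ supplied by Lemma~\ref{lem:ab-if-mult}) followed by an application of Lemma~\ref{lem:any-frequency} to rounds $2,\dots,l$ --- is just an explicit form of the same induction.
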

\begin{proof}
	Same as that of Lemma~\ref{lem:any-frequency}.
\end{proof}

\begin{lemma}\label{lem:ts-any-bias}
	Suppose none of $\p_1,\dots,\p_l$ is an equalizer, and some round $1,\dots,l$
	does not have a $\mult$ instruction. 
	For every $\th$ and every $(f,t)$ such that $\set{a,b}\in
	\fire_1((f,t),\p_1)$ or $\set{b,?}\in\fire_1((f,t),\p_1)$	we have
	$(f,t)\lact{\p_1}_1\dots\lact{\p_l}_l \biasq(\th)$.  
\end{lemma}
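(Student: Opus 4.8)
The plan is to transcribe the proof of Lemma~\ref{lem:any-qbias-from-bias}, the only change being that the first round now also reads a timestamp. Let $i$ be the first round among $1,\dots,l$ without a $\mult$ instruction; since the first round carries a $\mult$ instruction (Lemma~\ref{lem:ts-no-mult}, in force by the standing convention on necessary conditions), we have $i\ge 2$, and rounds $2,\dots,l$ retain their core-language semantics, so Lemma~\ref{lem:any-frequency}, Lemma~\ref{lem:bias-preserving} and Lemma~\ref{lem:any-qbias-from-qbias} apply verbatim to that range.

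Suppose first that both $a$ and $b$ lie in $\fire_1((f,t),\p_1)$. Rounds $1,\dots,i-1$ all have a $\mult$ instruction and none of $\p_1,\dots,\p_{i-1}$ is an equalizer, so Lemma~\ref{lem:ts-any-frequency} applied with $f'=\bias(\th')$ gives $(f,t)\lact{\p_1}_1\cdots\lact{\p_{i-1}}_{i-1}\bias(\th')$ for an arbitrary fraction $\th'$. Round $i$ has no $\mult$ instruction, hence is preserving in the sense of Definition~\ref{def:preserving-and-solo-safe}, so Lemma~\ref{lem:bias-preserving} supplies a large enough $\th'$ with $\set{b,?}\incl\fire_i(\bias(\th'),\p_i)$; fixing such a $\th'$ and applying Lemma~\ref{lem:any-qbias-from-qbias} to the rounds $i,\dots,l$ yields $\bias(\th')\lact{\p_i}_i\cdots\lact{\p_l}_l\biasq(\th)$, which settles this case.

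Suppose instead that $\set{b,?}\incl\fire_1((f,t),\p_1)$. Then the first round already produces both $b$ and $?$, and since $\p_1$ is not an equalizer we may take $(f,t)\lact{\p_1}_1\biasq(\th_1)$ for any fraction $\th_1$. This is exactly the situation of Lemma~\ref{lem:any-qbias-from-qbias} with the range starting at round $1$: its induction strips rounds off from the end and uses only that every round has a $\uni$ instruction and that the predicates are not equalizers, none of which is affected by the timestamped first round, so the same reasoning (reading $\fire_1((f,t),\cdot)$ for $\fire_1(f,\cdot)$, and choosing $\th_1$ large enough for the argument to engage at round $2$) gives $(f,t)\lact{\p_1}_1\cdots\lact{\p_l}_l\biasq(\th)$; when $l=1$ the claim is immediate.

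The routine-but-delicate point, and the place I would spend most care, is the threshold bookkeeping hidden in Lemmas~\ref{lem:bias-preserving} and~\ref{lem:any-qbias-from-qbias}: at each round one must exhibit simultaneously an all-$b$ multiset of size above $\max(\thru^j,\thr_j(\p))$ (to keep $b$ alive) and a multiset whose non-$?$ part stays below $\thru^j$ while its total size, after padding with the available $?$ entries, still exceeds $\thr_j(\p)$ (to produce $?$). Making both fit is what forces the intermediate fractions $\th'$ and $\th_1$ to be chosen ``sufficiently big'', and it is here that the constraints on constants recorded in observation~(\ref{eq:syntactic-property}) and Assumption~\ref{assumption}, together with the hypothesis that $\p_1$ and the later predicates are not equalizers, are what make the construction go through.
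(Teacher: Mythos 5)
Your proof is correct and follows essentially the same route as the paper, which simply reuses the arguments of Lemmas~\ref{lem:any-qbias-from-qbias} and~\ref{lem:any-qbias-from-bias} with Lemma~\ref{lem:ts-any-frequency} substituted for Lemma~\ref{lem:any-frequency}. (The closing worry about Assumption~\ref{assumption} and observation~(\ref{eq:syntactic-property}) is unnecessary: the $?$-producing constructions in Lemmas~\ref{lem:bias-preserving} and~\ref{lem:any-qbias-from-qbias} only need the predicate thresholds to be below $1$ and the non-equalizer hypothesis, not the constraints on the algorithm's constants.)
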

\begin{proof}
	The same argument as for Lemmas~\ref{lem:any-qbias-from-qbias}
	and~\ref{lem:any-qbias-from-bias}; replacing replace Lemma~\ref{lem:any-frequency} with Lemma~\ref{lem:ts-any-frequency}.
	%If $\set{b,?}\in\fire_1((f,t),\f_1)$, we can get any arbitrary $\biasq(\th)$ after the first round. 
	%We can then use Lemma~\ref{lem:bias-propagation} to finish the proof.
	%Otherwise, use Lemma~\ref{lem:ts-any-frequency} till the first round without $\mult$
	%instruction. 
	%Then use Lemma~\ref{lem:no-mult-fire} to get $\biasq(\th)$ that is convenient
	%for Lemma~\ref{lem:bias-propagation}.
\end{proof}

We can now deal with the case when there is $\mult$ instruction in round $\fo$. 
This is an analog of Lemma~\ref{lem:no-mult-sequence}.

\begin{lemma} \label{lem:ts-min}
	%If round $\fo+1$ contains a $\mult$ instruction  then the algorithm
	%does not satisfy agreement, or it can be removed without altering the
	%semantics of the algorithm. 
	Suppose round $\fo$ either contains a $\mult$ instruction or $\thru^{\fo} < 1/2$. 
	Then either the algorithm violates consensus or we can remove the 
	$\mult$ instruction and make $\thru^{\fo} = 1/2$ without 
	affecting the semantics of the algorithm.
\end{lemma}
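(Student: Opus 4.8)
The plan is to follow the proof of Lemma~\ref{lem:no-mult-sequence} with all the relevant round indices shifted down by one: round $\fo$ now plays the role that round $\fo+1$ played there, round $\fo-1$ the role of round $\fo$, and the core propagation lemmas for the first round are replaced by their timestamp counterparts (Lemmas~\ref{lem:ts-bias-fire}, \ref{lem:ts-any-frequency}, \ref{lem:ts-any-bias}). We may assume $\fo\ge 2$, since in the timestamp syntax the first round only performs $\maxts$ and does not update $\inp$ (page~\pageref{shape-for-timestamps}). The two violations of the proviso --- a $\mult$ instruction in round $\fo$, and $\thru^{\fo}<1/2$ --- are treated by the same mechanism, because both let round $\fo$ \emph{fork}: from an appropriate tuple at round $\fo-1$ (namely $\spread$ when the fork comes from a $\mult$ instruction with operation $\smor$ or from the low $\uni$ threshold, and $\bias(\th)$ with $\th$ large when it comes from a $\mult$ instruction with operation $\min$) round $\fo$ can produce the $x_\fo$-tuple $\bias(\th')$ for every $\th'<1$.

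The proof splits on whether such a forking tuple is actually reachable, i.e.\ on whether there is an execution of a phase under $\gp$ reaching round $\fo-1$ with a tuple from which round $\fo$ can produce an $x_\fo$-tuple over $\set{a,b}$ containing both values. If not, the $\mult$ instruction of round $\fo$ is never fired and can be deleted, and raising $\thru^\fo$ to $1/2$ does not affect whether the algorithm solves consensus; this is the second alternative. If such an execution exists, we exhibit a consensus violation, the first alternative.

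For the violation we argue as in the two sub-cases of Lemma~\ref{lem:no-mult-sequence}. Using Lemma~\ref{lem:ts-bias-fire} (and the fact that inside a phase started from a $\bias$-tuple all timestamps agree, so $\maxts$ reduces to $\min$ and returns $a$), followed by Lemma~\ref{lem:ts-any-frequency} when rounds $2,\dots,\fo-1$ all carry a $\mult$ instruction and by Lemma~\ref{lem:ts-any-bias} otherwise (these apply since $\gp$ is not an equalizer by Proviso~\ref{ts-proviso}, and the case split mirrors Lemma~\ref{lem:no-min}), we reach round $\fo-1$ with the forking tuple; round $\fo$ then gives $x_\fo=\bias(\th')$ for any $\th'<1$, so after the phase $\inp=\bias(\th')$ with timestamps advanced. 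Pushing the $x$-tuple $\bias(\th')$ through rounds $\fo+1,\dots,r$: if each of those rounds has a $\mult$ instruction, Lemma~\ref{lem:any-frequency} reaches $\spread$ at round $r$, so half the processes decide $a$ and half decide $b$ already in that round; otherwise Lemma~\ref{lem:any-qbias-from-bias} reaches $\spreadq$ at round $r$, half the processes decide $b$, and the state is $(\bias(\th'),\spreadq)$, whereupon --- choosing $\th'$ small --- the next phase starts from $\bias(\th')$ with uniform timestamps, a multiset of $a$'s of size just above $\max(\thruo,\thr_1(\gp))$ makes every undecided process adopt $a$ in round~$1$, and they reach $\soloa$ and decide $a$ in round~$r$. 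Either way agreement fails. (When the fork stems from a $\min$-$\mult$ one instead makes the undecided processes decide $b$, exploiting $\th'\ge\thruo\ge\thr_1(\gp)$ so that $b$ fires in round~$1$ of the next phase, exactly as in Lemma~\ref{lem:no-mult-sequence}.)

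The delicate part, and the main obstacle, is the second alternative: showing that when no forking tuple is reachable, deleting the (then dead) round-$\fo$ $\mult$ instruction and raising $\thru^\fo$ to $1/2$ does not affect whether the algorithm solves consensus. This requires a tight description of the $x_{\fo-1}$-tuples that are actually reachable --- essentially, each is unanimous on its $\set{a,b}$-part (or consists only of $?$) --- from which one checks that every round-$\fo$ transition that occurs in the original algorithm remains available after the modification, that agreement is preserved (raising $\thru^\fo$ only makes round $\fo$ more conservative), and that the termination argument still goes through; this last point is the one most intertwined with the characterization itself. The accompanying timestamp bookkeeping (uniform timestamps within a phase, $\maxts$ collapsing to $\min$, nothing new crossing phase boundaries) is routine but easy to get wrong.
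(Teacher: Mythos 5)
Your overall plan is the paper's plan (dead-instruction case versus an explicit agreement violation, with the violation built by forking, deciding $b$ for some processes, and using the uniform timestamps plus one surviving $a$ to decide $a$ in the next phase), but there is a genuine gap in where you place the case split. You split on whether round $\fo$ can produce both $a$ and $b$; the paper splits on whether some reachable $x_{\fo-1}$-tuple can contain both $a$ and $b$, i.e.\ whether $a,b\in\fire_{\fo-1}(f_{\fo-2},\gp)$ for some tuple reachable under $\gp$. With your split, the negative case does not work as claimed: a $\mult$ instruction in round $\fo$ with $\min$ as operation can perfectly well fire (always yielding $a$) from a reachable two-valued $x_{\fo-1}$-tuple whose $b$-fraction is too small to ever fire $b$, so "the $\mult$ instruction of round $\fo$ is never fired" is a non sequitur, and the "tight description" you then say you need (all reachable $x_{\fo-1}$-tuples unanimous up to $?$) does not follow from your hypothesis either. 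Under the paper's split the negative case is immediate ($\mult$ needs two distinct non-$?$ values in the heard-of set), so the "delicate part" you anticipate largely disappears, while the positive case still goes through: from a two-valued reachable $x_{\fo-1}$-tuple one gets $\bias(\th)$ for arbitrary $\th$ as input to round $\fo$ (no equalizer in $\gp$), and the lemma's hypothesis ($\mult$ in round $\fo$ or $\thru^{\fo}<1/2$) then yields your fork at round $\fo$. Relatedly, you should not re-derive reachability of the forking tuple via Lemmas~\ref{lem:ts-bias-fire}, \ref{lem:ts-any-frequency}, \ref{lem:ts-any-bias}: in the branch where some round among $2,\dots,\fo-1$ lacks a $\mult$ instruction, Lemma~\ref{lem:ts-any-bias} only delivers $\biasq$-tuples, which are not forking tuples; the paper simply continues the execution whose existence the case hypothesis provides.

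The second gap is in pushing $x_{\fo}=\bias(\th')$ through rounds $\fo+1,\dots,r$. Both of your branches (Lemma~\ref{lem:any-frequency} and Lemma~\ref{lem:any-qbias-from-bias}) require $\set{a,b}\incl\fire_{\fo+1}(\bias(\th'),\gp)$, and this is impossible when round $\fo+1$ has no $\mult$ instruction and $\thru^{\fo+1}\ge 1/2$; the paper keeps the alternative fork $\set{b,?}$ at round $\fo+1$ for exactly this situation and finishes with Lemma~\ref{lem:any-qbias-from-qbias}, choosing $f_{\fo}$ by a small case analysis on the operations in rounds $\fo$ and $\fo+1$. Moreover your constraints on $\th'$ clash: the fork at round $\fo+1$ forces $\th'\approx 1/2$ or $\th'>\thru^{\fo+1}$, while your second-phase step (an all-$a$ multiset of size just above $\max(\thruo,\thr_1(\gp))$) needs $1-\th'>\thruo$, and these need not be simultaneously satisfiable. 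The fix is the paper's: all one needs is that $f_{\fo}$ contains no $?$ and at least one $a$; then every process updates $\inp$, all timestamps coincide, and in the next phase the round-one $\mult$ instruction with $\maxts$ applied to a large multiset containing that $a$ returns $a$ regardless of $\th'$, giving $\soloa$ and a decision on $a$.
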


\begin{proof}
	Suppose round $\fo$ either contains a $\mult$ instruction or
	$\thru^{\fo} < 1/2$. We consider two cases:
	
	The first case is when there does not exist any tuple $(f,t)$ with
	$(f,t) \lact{\gp_1}_1 f_1 \dots \lact{\gp_{\fo-2}}_{\fo-2} f_{\fo-2}$
	such that $a,b \in \fire_{\fo-1}(f_{\fo-2},\gp)$. It is then clear
	that the $\mult$ instructions in round $\fo$ will never be fired
	and so we can remove all these instructions in round $\fo$.
	Further it is also clear that setting $\thru^{\fo} = 1/2$ 
	does not affect the semantics of the algorithm in this case.
	
	So it remains to examine the case when  there exists a tuple $(f,t)$ with
	$(f,t) \lact{\gp_1}_1  f_1\lact{\gp_2} \cdots\lact{\gp_{\fo-2}} f_{\fo-2}$ such that
	$a,b \in \fire_{\fo-1}(f_{\fo-2},\gp)$.
	It is clear that in this case, we also have,
	$(\bias(\thru^1+\e),0) \lact{\gp_1}_1  f_1\lact{\gp_2} \cdots\lact{\gp_{\fo-2}} f_{\fo-2}$. Also we get
	$f_{\fo-2} \lact{\gp_{\fo-1}}_{\fo-1} \bias(\th)$ for arbitrary $\th$.
	
	In this case, we will show the following:
	Depending on the structure of rounds $\fo$ and $\fo+1$ we will define 
	two tuples $f_{\fo-1}$ and $f_{\fo}$ with the following properties:
	\begin{itemize}
		\item $f_{\fo-2} \lact{\gp_{\fo-1}}_{\fo-1} f_{\fo-1} \lact{\gp_{\fo}}_{\fo} f_{\fo}$,
		\item $f_{\fo}$ contains no $?$ and at least one $a$,
		\item either $a,b \in \fire_{\fo+1}(f_{\fo},\gp_{\fo+1})$
		or $b,? \in \fire_{\fo+1}(f_{\fo},\gp_{\fo+1})$.
	\end{itemize} 
	Notice that if $a,b \in \fire_{\fo+1}(f_{\fo},\gp_{\fo+1})$ and all rounds after 
	round $\fo$ have a $\mult$ instruction, then we can use Lemma~\ref{lem:any-frequency}
	to conclude that we can decide on both $a$ and $b$.
	In the other case, i.e., if some round after round $\fo$ does not have a $\mult$
	instruction, or $b,? \in \fire_{\fo+1}(f_{\fo},\gp_{\fo+1})$
	we use Lemmas~\ref{lem:any-qbias-from-bias} and~\ref{lem:any-qbias-from-qbias}
	to show that we can make half the processes decide on $b$
	and the other half undecided. Now the state of
	the algorithm after this phase will be $(f_{\fo},1,\spread^?)$
	where $f_{\fo}$ contains at least one $a$.
	Since all the $\mult$ instructions have $\maxts$ as operator
	it follows that we can then get $\solo^a$ after the first 
	round and decide on $a$.
	
	So it remains to come up with $f_{\fo-1}$ and $f_{\fo}$ with the required properties.
	We will do a case analysis, and for each case provide both these tuples.
	In each of these cases, it can be easily verified that the provided tuples
	satisfy the required properties.
	
	\begin{itemize}
		\item $\thru^{\fo} < 1/2$ or the $\mult$ instruction with the highest threshold
		in round $\fo$ has $\smor$ as operator.
		\begin{itemize}
			\item The $\mult$ instruction with the highest threshold in round $\fo+1$
			has $\smor$ as operator: Take $f_{\fo-1} = f_{\fo} = \spread$.
			\item Otherwise: Take $f_{\fo-1} = \spread, f_{\fo} = \bias(\max(\thru^{\fo+1},\thr_{\fo+1}(\gp))+\e)$.
		\end{itemize}
		\item The $\mult$ instruction with the highest threshold in round $\fo$ has $\min$
		as operator.
		\begin{itemize}
			\item The $\mult$ instruction with the highest threshold in round $\fo+1$
			has $\smor$ as operator: Take $f_{\fo-1} = \bias(\max(\thru^{\fo},\thr_{\fo}(\gp))+\e), f_{\fo} = \spread$.
			\item Otherwise: Take $f_{\fo-1} = \bias(\max(\thru^{\fo},\thr_{\fo}(\gp))+\e)$,\\
			$f_{\fo} = 
			\bias(\max(\thru^{\fo+1},\thr_{\fo+1}(\gp))+\e)$.
		\end{itemize}
	\end{itemize}

\end{proof}

\begin{corollary} \label{cor:ts-min}
	If round $\fo+1$ has a $\mult$ instruction or $\thru^{\fo+1} < 1/2$,
	then the $\mult$ instruction can be removed and $\thru^{\fo+1}$ can be 
	made $1/2$ without altering the semantics of the algorithm.
\end{corollary}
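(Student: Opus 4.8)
The plan is to derive Corollary~\ref{cor:ts-min} on the back of Lemma~\ref{lem:ts-min}: once that lemma is in place we may assume, as recorded in Proviso~\ref{ts-proviso}, that round $\fo$ has no $\mult$ instruction and that $\thru^{\fo}\ge 1/2$. The engine of the argument is then a structural observation about round $\fo$. A process can set $x_{\fo}:=v$ only through the (unique) $\uni$ instruction, and only on a heard-of multiset that, after deleting $?$, is uniformly $v$ and of size exceeding $\thru^{\fo}\cdot n\ge n/2$; such a multiset is a sub-multiset of the tuple entering round $\fo$, which must therefore contain more than $n/2$ copies of $v$. Since $a$ and $b$ cannot both occur more than $n/2$ times, every tuple reachable immediately after round $\fo$ is ``monochromatic plus $?$'': all of its non-$?$ entries carry a single value. (With more than two values the statement and its proof are the same, and by the two-value principle this suffices.)

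From here the argument splits. First, a $\mult$ instruction in round $\fo+1$ is dead code: the tuple entering round $\fo+1$ is monochromatic plus $?$, so every heard-of multiset there is uniform after deleting $?$, no $\mult$ condition is ever satisfied, and the instruction can be removed without changing any execution; round $\fo+1$ still has a $\uni$ instruction by Lemma~\ref{lem:ts-no-uni}. Second, raising $\thru^{\fo+1}$ to $1/2$ (when it is smaller) deletes exactly the round-$(\fo+1)$ transitions that fire the $\uni$ instruction on a uniform $v$-multiset of size at most $n/2$, and I would show this cannot create incorrect behaviour. One direction is immediate: the modified algorithm has fewer transitions, hence fewer executions and fewer reachable configurations, so if the original solves consensus so does the modified. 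For the converse I would re-route: whenever such a firing occurs from a post-$\fo$ tuple over $\set{v,?}$, the fact that ``$v$ is fireable at round $\fo$'' forces more than $n/2$ copies of $v$ before round $\fo$, so one can instead reach a $v$-heavier post-$\fo$ tuple $\biasq(\th)$ (resp.\ $\biasqa(\th)$) with $\th$ close to $1$, from which a uniform $v$-multiset of size $>n/2$ is available to every process in round $\fo+1$ and the same round-$(\fo+1)$ outcome is obtained under the larger threshold. When $\fo+1<r$ this reproduces the whole tail of the phase up to the $\inp$-tuple, which is then only heavier in $v$; a short monotonicity check (a $v$-heavier $\inp$-tuple is no worse for agreement and strictly helps a later decider) absorbs the remaining discrepancy.

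The case $\fo+1=r$ is the delicate one, and I expect it to be the main obstacle. Then round $\fo+1$ writes $\dec$ rather than $x_{\fo+1}$, so $\thru^{\fo+1}$ also governs which values enter $\dec$, and the $v$-heavier re-routing above changes the $\inp$-tuple the phase leaves behind, because $\inp$ and its timestamp are refreshed one round earlier, at round $\fo$. Here I would follow the pattern of Lemma~\ref{lem:ts-min} shifted by one round, arguing a dichotomy: using $\thruo\ge 1-\thru^{\fo+1}$ and $\thrmok\ge 1-\thru^{\fo+1}$ from Definition~\ref{def:ts-structure} together with the maximality of the refreshed timestamps — which pins the $\maxts$ first round of the next phase to $v$ — one shows that a sub-$1/2$ decision on $v$ can never be paired with a later decision on a different value, nor block a subsequent decider phase; so either these decisions are harmless, or agreement already fails and the conclusion is vacuous. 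Assembling the three ingredients — the dead $\mult$ instruction, the harmless threshold bump for $\fo+1<r$, and the harmless-or-incorrect dichotomy for $\fo+1=r$ — yields the corollary, the only genuine care throughout being the last-round timestamp bookkeeping.
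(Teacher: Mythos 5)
Your first half coincides with the paper's own proof: from Lemma~\ref{lem:ts-min} you assume round $\fo$ has no $\mult$ instruction and $\thru^{\fo}\ge 1/2$, deduce that a tuple produced by round $\fo$ cannot contain two distinct non-$?$ values (each fired value needs a uniform sub-multiset of more than half the processes), and conclude that the $\mult$ instruction in round $\fo+1$ can never fire and may be deleted. That is exactly the paper's argument and it is correct.

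The divergence is the threshold bump, and there your proposal has a genuine gap. The paper disposes of this step with a one-line assertion; you try to prove it, but your opening move is wrong: raising $\thru^{\fo+1}$ does \emph{not} yield a sub-transition system. When a heard-of multiset stops satisfying the $\uni$ condition, the process does not lose a transition — it takes a different one, its update becoming $?$ — so the modified round relation is incomparable with the original. Concretely, from an all-$v$ tuple after round $\fo$ under a phase predicate with $\thru^{\fo+1}<\thr_{\fo+1}(\p)<1/2$, the original algorithm forces every process to fire $v$, whereas the modified algorithm can leave processes at $?$; hence ``fewer transitions, hence fewer executions and fewer reachable configurations'' fails, and with it the ``immediate'' direction of your argument (and note that for termination, a liveness property, an inclusion of executions would have to go exactly that way to be of any use). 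The remaining ingredients — re-routing to a $v$-heavier post-$\fo$ tuple, the monotonicity check absorbing the changed $\inp$/timestamp tuple, and the harmless-or-incorrect dichotomy when $\fo+1=r$ — are the right kind of considerations (they are in fact what an honest justification of the paper's one-liner would need, since $\thru^{\fo+1}$ also determines which sporadic predicates are solo-safe, i.e.\ deciders, at that round), but you leave them at the level of intentions (``I would show'', ``I expect''), so as written the proposal does not establish the threshold half of the corollary.
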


\begin{proof}
	By the previous lemma, round $\fo$ does not have any $\mult$
	instruction and $\thru^{\fo} \ge 1/2$. 
	%Further by our proviso $\thru^{\fo} \ge 1/2$
	It then follows that if $f \lact{\f}_{\fo} f'$ for arbitrary predicate $\f$
	then there cannot be both $a$ and $b$ in $f$.
	Hence the $\mult$ instruction in 
	round $\fo+1$ will never be fired.
	Consequently, it can
	be removed without affecting the correctness of the algorithm.
	It is also clear that we can raise the value of $\thru^{\fo+1}$ to $1/2$
	without affecting the semantics of the algorithm. 
\end{proof}

\begin{lemma} \label{lem:ts-constants}
	If the property of constants from Definition~\ref{def:ts-structure} 
	is not satisfied, then agreement is violated.
	%\igw{decide on subscripts and superscripts}.
\end{lemma}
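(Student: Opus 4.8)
The plan is to adapt the proof of Lemma~\ref{lem:constants} to the setting with timestamps. Throughout I work under the global predicate $\gp$, which by Proviso~\ref{ts-proviso} has no equalizer, so that Lemmas~\ref{lem:any-frequency}, \ref{lem:any-qbias-from-bias}, \ref{lem:any-qbias-from-qbias}, \ref{lem:ts-any-frequency} and~\ref{lem:ts-any-bias} are all available. By Lemma~\ref{lem:ts-min} and Corollary~\ref{cor:ts-min} I may assume that neither round $\fo$ nor round $\fo+1$ carries a $\mult$ instruction and that $\thru^{\fo}\ge 1/2$ and $\thru^{\fo+1}\ge 1/2$; in particular both of these rounds are preserving w.r.t.\ $\gp$, and by Lemmas~\ref{lem:ts-no-mult} and~\ref{lem:ts-no-uni} the first round has a $\mult$ instruction while every round has a $\uni$ instruction. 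Recall that in this language every first-round instruction uses $\maxts$, so Lemma~\ref{lem:ts-bias-fire} gives $\set{a,b}\subseteq\fire_1((\bias(\th),i),\gp)$ for every timestamp $i$ and all large enough $\th$. I split into two sub-cases according to whether the failing inequality is $\thrmok\ge 1-\thru^{\fo+1}$ or $\thruo\ge 1-\thru^{\fo+1}$.

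The first step is to reach, in a single phase, a state in which some process has decided $b$ while the $\inp$-tuple still holds enough $a$'s carrying the \emph{original} (smallest) timestamp. Starting from $(\bias(\th_0),0,\soloq)$ with $\th_0$ just above $\max(\thruo,\thr_1(\gp))$, I fire both $a$ and $b$ in the first round by Lemma~\ref{lem:ts-bias-fire}, then propagate to round $\fo$ with an arbitrary $\bias(\cdot)$ or $\biasq(\cdot)$ using Lemmas~\ref{lem:ts-any-frequency}, \ref{lem:ts-any-bias}, \ref{lem:any-frequency} and~\ref{lem:any-qbias-from-bias} (following the case analysis of Lemma~\ref{lem:constants}, but with the always-preserving $\mult$-free round $\fo$ now playing the role that round $\fo+1$ played there). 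Since round $\fo$ is $\uni$-only and $\thru^{\fo}\ge 1/2$, I can make $x_{\fo}$ equal to $\biasq(\th)$ for $\th=\thru^{\fo+1}+\ep$ with $\ep$ small: the $\th$-fraction of non-$?$ entries are freshly rewritten $b$'s, while the $(1-\th)$-fraction of $?$-entries leave the corresponding processes with their original value and timestamp $0$; producing these $?$-entries costs nothing because round $\fo$ has no $\mult$. In round $\fo+1$ (again $\uni$-only, no $\mult$) a process hearing the only-$b$ sub-multiset of size $\th n$ decides $b$, while a process hearing $\thru^{\fo+1}n$ $b$'s padded with $?$'s up to size $(1-\ep)n>\thr_{\fo+1}(\gp)n$ fires nothing; by Lemma~\ref{lem:any-qbias-from-qbias} this makes the $\dec$-tuple $\spread^?$. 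Choosing the $?$-positions of $x_{\fo}$ among the original $a$-positions, the phase ends with $\inp$-tuple $\bias(\th)$ where the $(1-\th)n\approx(1-\thru^{\fo+1})n$ surviving $a$'s sit at timestamp $0$ and the $b$'s at the current timestamp; for the $\thrmok$ sub-case I additionally keep one original $b$ among the $?$-positions, so that the surviving timestamp-$0$ part carries both an $a$ and a $b$.

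The second step spends the failed inequality, still under $\gp$, to make the undecided half decide $a$ one phase later. If $\thruo<1-\thru^{\fo+1}$, then $1-\th>\thruo$ for $\ep$ small, and in the next phase's first round every process can be handed the only-$a$ sub-multiset formed by the $(1-\th)n$ surviving $a$'s, which share timestamp $0$, so $\maxts$ returns $a$ and the $\uni$ instruction fires because $1-\th>\thruo\ge\thr_1(\gp)$ (Assumption~\ref{assumption}); hence $x_1=\soloa$, which propagates to the last round, and every still-undecided process decides $a$. If instead $\thrmok<1-\thru^{\fo+1}$, hand every process the sub-multiset of all the surviving timestamp-$0$ values: it has two values, all at timestamp $0$ (so $\maxts$ returns $a$), and size $\approx(1-\th)n>\thrmok n\ge\thr_1(\gp)n$, so the $\mult$ instruction fires; again $x_1=\soloa$ and every undecided process decides $a$. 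In both sub-cases a reachable state carries both an $a$-decision and a $b$-decision, so agreement fails.

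The main obstacle, and the real departure from the core argument, is the timestamp bookkeeping of the first step: whenever $\inp$ is rewritten to $b$ in round $\fo$ it acquires a timestamp strictly newer than that of any surviving $a$, so a later $\maxts$ would invariably prefer that $b$. The construction works only because round $\fo$ carries no $\mult$ instruction, which lets me leave the surviving $a$-entries (and, for the second sub-case, one $b$-entry) entirely untouched, hence at the oldest timestamp, so that the multiset fed to $\maxts$ in the next phase is timestamp-homogeneous and resolves to $a$. Checking that all admissibility bounds $|\ho|>\thr_i(\gp)$ can be met simultaneously — that $\th_0$ can be picked above $\max(\thruo,\thr_1(\gp))$ yet low enough to leave a $(1-\th)$-fraction of surviving $a$'s, that the only-$b$ multisets deciding $b$ in round $\fo+1$ are admissible, and that the surviving $a$-fraction remains above $\thruo$ (respectively the surviving $\set{a,b}$-fraction above $\thrmok$) — is the delicate point where the slack provided by the violated inequality is consumed; here one uses that $\thru^{\fo+1}\ge 1/2$ pins $1-\thru^{\fo+1}\le 1/2\le\thru^{\fo+1}$, and, in the sub-case where $\thruo$ is large, that a more careful choice of the starting configuration still yields the required timestamp-$0$ $\set{a,b}$-majority.
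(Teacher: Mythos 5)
Your proof is correct and follows essentially the same two-phase construction as the paper: use the $\mult$-free round $\fo$ to keep some processes at their old (timestamp-$0$) $\inp$ values while making other processes decide, then spend the violated inequality in the next phase's first round ($\thrmok<1-\thru^{\fo+1}$ via a $\mult$/$\maxts$ firing on the timestamp-$0$ survivors containing both values, or $\thruo<1-\thru^{\fo+1}$ via a $\uni$ firing on a one-valued timestamp-$0$ multiset) to make the undecided processes decide the opposite value. The only deviation is cosmetic: in the $\thruo$ sub-case you decide $b$ first and exploit surviving timestamp-$0$ $a$'s, whereas the paper reaches $\biasqa(\th)$, decides $a$ first, and exploits surviving timestamp-$0$ $b$'s — a symmetric variant that works equally well.
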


\begin{proof}
	The proof starts similarly to the one of Lemma~\ref{lem:constants}.
	We consider an execution under the global predicate $\gp$, and employ
	Lemmas~\ref{lem:ts-any-frequency} and~\ref{lem:ts-any-bias}. 
	We start from configuration $(\bias(\th_1),0)$ where
	$\th_1>\thruo$ big enough so that by Lemma~\ref{lem:ts-bias-fire} we get
	$\set{a,b} \incl\fire_1(\bias(\th_1),\gp)$.
	We consider also $\th=\thru^{\fo+1}+\e$.
	%We get two cases: (i) all the rounds before $\fo$ are
	%non-preserving, or 	(ii) there is a preserving round before $\fo$.
	
	By Lemma~\ref{lem:ts-min} there is a preserving round before 
	round $\fo+1$. 
	%In the first case, we can get $\bias(\th)$ after round $\fo-1$, and we know
	%by Lemma~\ref{lem:ts-min} that round $\fo$ does not have a $\mult$
	%instruction. 
	%This combined with the equation~\ref{eq:syntactic-property} gives us 
	%that $\set{b,?}\incl \fire_{\fo}(\bias(\th),\gp)$.
	%We can then use Lemma~\ref{lem:any-qbias-from-bias}
	%to ensure that half the processes remains undecided, and the other half 
	%decides on $b$.
	%After this phase we have the configuration is $\bias(\th)$ with
	%all processes having timestamp $1$.
	%So we can send the whole multiset of values to every process, and get $\soloa$
	%after the first round. 
	%This will allow to make some processes decide on $a$, and violate agreement.
	%In the second case 
	We proceed differently depending on wether
	$\thrmok<1-\thru^{\fo+1}$ or not.
	By the same argument as in Lemma~\ref{lem:constants}, we can 
	can get $\biasq(\th)$ or $\biasqa(\th)$ after round $\fo$.
	If $\thrmok<1-\thru^{\fo+1}$  then we choose to get $\biasq(\th)$. 
	We use Lemma~\ref{lem:any-qbias-from-bias} to make some processes decide on
	$b$.
	After this phase there are $1-\th$ processes with timestamp $0$.
	We can ensure that among them there is at least one with value $a$ and one
	with value $b$. 
	Since there is $\mult$ instruction in the first round, in the next phase we
	send all the  values 	with timestamp $0$. 
	This way we get $\soloa$ after the first round, and make some process decide
	on $a$.
	
	The remaining case is when $\thrmok\geq 1-\thru^{\fo+1}$. 
	So we have $\thruo<1-\thru^{\fo+1}$, since we have assumed that that the
	property of constants from 	Definition~\ref{def:ts-structure} does not hold. 
	This time we choose to get $\biasqa(\th)$ after round $\fo$, and make some process
	decide on $a$. 
	Since we have started with $\bias(\th_1)$ we can arrange updates so that we
	have at least $\min(\th_1,1-\th)$ processes who have value $b$ with timestamp
	$0$. 
	But $\min(\th_1,1-\th)>\thruo$, so by sending $\ho$ set consisting of these
	$b$'s we reach $\solo$ after the first round and make some processes decide on
	$b$. 
\end{proof}

Now we can state the sufficiency proof, similar to Lemma~\ref{lem:agreement}.
% We just need a small technical lemma because the structural constraints are not
% the same.\igw{new argument to avoid proviso}

% \begin{lemma}\label{lem:ts-no-b-fire}
% 	If in $(f,t)$ there are more than $\thrur$ $b$'s with the latest timestamp
% 	then $a\not\in\fire_1(f,t,\p)$ for any communication predicate $\p$. 
% 	Similarly for $b$'s.
% \end{lemma}
% \begin{proof}
% 	In order to get $a$ in the fist round we need to take $\ho$ set without $b$'s
% 	with the most recent timestamp. 
% 	Its size is $<1-\thrur$.
% 	Since the algorithm satisfies the structural properties: $\thruo>1-\thrur$, and
% 	$\thrmok>1-\thrur$.
% 	So no condition can be true for such $\ho$, and $a$ cannot be produced in the
% 	first round. 
% \end{proof}

\begin{lemma}\label{lem:ts-structure}
	If all the structural properties from Definition~\ref{def:ts-structure} are
	satisfied then the  algorithm satisfies agreement.
\end{lemma}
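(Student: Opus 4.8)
The plan is to follow the proof of the core-language agreement lemma (Lemma~\ref{lem:agreement}) and adapt it in the two places where timestamps change the picture: the first round now uses $\maxts$ instead of $\smor$ or $\min$, and the constraint on constants (item~3 of Definition~\ref{def:ts-structure}) is only the weaker $\thrmok\ge 1-\thru^{\fo+1}$ and $\thruo\ge 1-\thru^{\fo+1}$. First I would dispose of the trivial cases: if the $\inp$-tuple of a reachable state is $\solo$ or $\soloa$, agreement is immediate. So fix an execution and let $(\bias(\theta'),t',d')$ be the first state on it at which some process has decided; assume it decided $a$ (the case of $b$ is symmetric), and call $m$ the phase producing this state.

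Next I would pin down the shape of phase $m$. By Proviso~\ref{ts-proviso} round $\fo$ has no $\mult$ instruction and $\thru^{\fo}\ge 1/2$, and by Corollary~\ref{cor:ts-min} the same may be assumed for round $\fo+1$. Since $\thru^{\fo}\ge 1/2$ and round $\fo$ carries only a $\uni$ instruction, after round $\fo$ of any phase the tuple $x_{\fo}$ ranges over $\{v,?\}$ for a single value $v$; the same then holds for $x_{\fo+1},\dots,x_r$, so the only value that can be written into $\dec$ during that phase is $v$. Applied to phase $m$ this forces $v=a$: no process decides $b$ in phase $m$, every process updating $\inp$ in round $\fo$ sets it to $a$ with timestamp $m$, so every $b$-entry of $(\bias(\theta'),t')$ carries a timestamp strictly below $m$. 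Furthermore, the way $a$ reaches $\dec$ is by propagating back to a $\uni$-fire of size $>\thru^{\fo+1}n$ in round $\fo+1$; reading this off the communication predicate yields, exactly as in Lemma~\ref{lem:agreement}, that a $\thru^{\fo+1}$-fraction of processes end phase $m$ holding $\inp=a$, i.e.\ $\theta'\le 1-\thru^{\fo+1}$.

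Finally I would run the subsequent-phase argument. In the first round of any later phase the operation is $\maxts$, so for a process to obtain $b$ its received multiset must have its maximal timestamp realised exclusively by $b$-entries and be of size $>\thruo n$ (when it is uni) or $>\thrmok n$ (when it is mult). The bound $\theta'\le 1-\thru^{\fo+1}\le\min(\thruo,\thrmok)$ from item~3 of Definition~\ref{def:ts-structure}, together with the fact that the freshest timestamps are carried only by $a$-values, rules this out, so $x_1$ ranges over $\{a,?\}$ and, by the propagation observation above, $\dec$ can again receive only $a$. Since the number of $b$'s does not increase and the same timestamp bookkeeping is reproduced, an induction on the phases shows that no process ever decides $b$, which is agreement. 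I expect this last step to be the main obstacle: in the core case it suffices to note that ``the number of $b$'s can only decrease'', whereas here one must maintain the finer invariant that throughout every subsequent state the value $a$ keeps both the freshest timestamp and a blocking fraction of the processes; formulating this invariant precisely and verifying it through a phase transition is the technical heart of the proof.
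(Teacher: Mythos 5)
Your proposal is correct and follows essentially the same route as the paper's proof: isolate the first deciding phase, use the Proviso/Corollary facts that round $\fo$ (and $\fop$) carry only a $\uni$ instruction with threshold at least $1/2$ to force a single decided/updated value in that phase, derive that more than a $\thru^{\fop}$-fraction of processes hold that value with the freshest timestamp, and then use item~3 of Definition~\ref{def:ts-structure} together with $\maxts$ to show the other value can never again be fired in the first round. The only difference is presentational: you spell out the final phase-by-phase invariant (freshest timestamps carried only by the decided value, held by a blocking fraction) that the paper compresses into ``the number of $a$'s can only decrease from this point onwards.''
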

\begin{proof}
	It is clear that the algorithm satisfies agreement when the initial frequency
	is either $\solo$ or $\solo^a$.  Suppose $(\bias(\theta),t,d) \act{\p^*}
	(\bias(\theta'),t',d')$ such that $(\bias(\theta'),t',d')$ is the first state in this execution
	with a process $p$ which has decided on a value.
	Without loss of generality let $b$ be the value that $p$ has decided on.
	%This means that there are more than $\thru^{\fo+1}$ processes with $a$ and the most
	% recent time-stamp. 
	% By Lemma~\ref{lem:ts-no-b-fire} we cannot get $b$ in the next round, so we \balain[I think Lemma~\ref{lem:ts-no-b-fire} only
	% 	guarantees that $b$ cannot be produced in the first round of the next phase. However it might be the case that
	% 	some process decided on $a$ while some other process decided on $b$
	% 	in the same phase. However I think there is a way we can get
	% 	rid of the $\thru^r \ge 1/2$ assumption. I will finish the few remaining modifications in the core section and come back to this, if it is okay.]
	% cannot get $b$ at the end of the phase, and in consequence we cannot get $b$'s
	% in the rest of the execution.
  %
  %
	By Lemma~\ref{lem:ts-min} round $\fo$ does not have any 
	$\mult$ instructions and $\thru^{\fo} \ge 1/2$  and so it follows 
	that every other process could only decide
	on $b$ or not decide at all. For the same reason it follows
	that every process either updated its $\inp$ value to $b$
	or did not update its $\inp$ value at all.
	Further notice that since $b$ was decided by some
	process, it has to be the case that more than $\thru^{\fo+1}$ processes have $b$ as
	their $\inp$ value with the most recent phase as their timestamps.
	This means that the number of $a$'s in the configuration is  less than $1-\thru^{\fo+1}$. Moreover since every process either updated its
	$\inp$ to $b$ or did not update all, no process with $a$ has the latest
	timestamp. 
	
	Since $\thruo \geq 1 - \thru^{\fo+1}$, it follows that $a$ cannot be fired
	from $(\bias(\theta'),t')$ using the $\uni$ instruction in the first round. Further
	since $\thrmok\geq  1 - \thru^{\fo+1}$ it follows that any $\ho$ set bigger
	than $\thrmok$ has to contain a value with the latest timestamp. 
	Since the only value with the latest timestamp is the value $b$,
	it follows that
	$a$ cannot be fired from $(\bias(\theta'),t')$ using the $\mult$ instruction as
	well. In consequence, the number of $a$'s can only decrease from this point onwards and so it
	follows that no process from this point onwards can decide on $a$. 
	%A similar
	%(in fact easier) argument follows if the first value decided was $a$. 
\end{proof}

The proof of termination is simpler compared to the proof of termination for
core language. This is in part due to the use of $maxts$ rather than $smor$ as the
operator in the first round. 

\subsubsection*{Part 2: termination for timestamps}

\begin{lemma}\label{lem:ts-dec}
	If $\p$ is a decider and $(\solo,t,\soloq)\act{\p}(f',t',d')$
	then $(f',d') = (\solo,\solo)$ for every timestamp tuple $t$. Similarly if
	$(\solo^a,t,\soloq) \act{\p}(f',t',d')$ then $(f',d') = (\solo^a,\solo^a)$.
\end{lemma}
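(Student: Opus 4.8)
The plan is to follow the proof of Lemma~\ref{lem:decider}, isolating the one novelty: the timestamp-aware first round. I would establish the first claim and obtain the second by replacing $b$ with $a$ throughout. The argument has two steps: first show that the value tuple remains $\solo$ along the whole phase, then read off the conclusion from the definition of a phase transition.

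For the first step, fix round transitions $(f_0,t)\lact{\p_1}_1 f_1 \lact{\p_2}_2 \cdots \lact{\p_r}_r f_r$ witnessing the phase transition, with $f_0 = \solo$. Since $\p$ is a decider, every round is solo safe w.r.t.\ $\p$. Rounds $2,\dots,r$ have exactly the same semantics as in the core language, so by Lemma~\ref{lem:preserving} each of them maps $\solo$ to $\solo$. It remains to treat round~$1$: solo safety of round~$1$ gives $0 \le \thru^1 \le \thr_1(\p)$, so round~$1$ has a $\uni$ instruction, $\p\dar_1$ has a threshold conjunct, and every heard-of multiset $\ho_p\incl\mset(f_0)$ admissible for $\p\dar_1$ is nonempty, has size $> \thr_1(\p)\cdot n \ge \thru^1\cdot n$, and consists only of $b$'s. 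Hence the first satisfied conditional of round~$1$ is the $\uni$ instruction (a $\mult$ conditional cannot be satisfied by a unary multiset), and since its operation is $\maxts$ and $\ho_p$ is unary, $\update_1(\ho_p) = b$ regardless of the timestamps in $t$. Thus $f_1 = \solo$, and by the previous remark $f_i = \solo$ for all $i$; in particular $f_{\fo} = \solo$ and $f_r = \solo$.

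For the second step, the definition of a phase transition gives $f'(p) = f_{\fo}(p) = b$ for every $p$ (as $f_{\fo}(p)\neq{?}$), so $f' = \solo$; and since $d = \soloq$, we get $d'(p) = f_r(p) = b$ for every $p$, so $d' = \solo$. The timestamp tuple $t'$ is immaterial, as timestamps are ignored by communication predicates and by conditions. The case of $(\solo^a,t,\soloq)$ is symmetric, replacing $b$ by $a$, and yields $(f',d') = (\solo^a,\solo^a)$. I do not anticipate a genuine obstacle here; the only care needed is the observation that a unary heard-of multiset triggers the $\uni$ instruction — using solo safety for the size bound — and that $\maxts$ of such a multiset returns its sole value, exactly as $\min$ and $\smor$ would.
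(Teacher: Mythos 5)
Your proof is correct and is exactly the verification the paper leaves implicit: the paper's own proof of this lemma is simply ``Immediate''. Your careful unpacking (solo safety forces a nonempty, size-bounded, single-valued heard-of multiset in round 1, so the $\uni$/$\maxts$ instruction fires and returns the unique value, and rounds $2,\dots,r$ behave as in the core language) is precisely the intended argument, so there is nothing to add.
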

\begin{proof}
	Immediate
\end{proof}

\begin{lemma}\label{lem:strong-unifier}
	Suppose $\p$ is a strong unifier. If  $(\bias(\th),t)\act{\p}(f,t')$
	then $f=\solo$ or $f=\solo^a$ (for every timestamp tuple $t$).
\end{lemma}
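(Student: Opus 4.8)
The plan is to follow the proof of the core-language analog, Lemma~\ref{lem:unifier}, the simplification being that the strong-unifier hypothesis $\thruo\le\thr_1(\p)$ makes the case distinction on $\th$ (and the threshold $\bthr$) unnecessary. Throughout I use that the first round has a $\mult$ instruction and every round has a $\uni$ instruction, assumed from Lemmas~\ref{lem:ts-no-mult} and~\ref{lem:ts-no-uni} on; in particular $\thrmok\ge 0$ and $\thruo\ge 0$. Since $\p$ is a strong unifier it is a unifier, so $\thr_1(\p)\ge\thrmok\ge 0$ (hence $\p\dar_1$ carries a threshold conjunct) and $\thruo\le\thr_1(\p)$.

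First I would show that $?$ cannot be produced in the first round of the phase. Any heard-of multiset $\ho$ admissible for $\p\dar_1$ out of $\bias(\th)$ contains no $?$ (the $\inp$-values are never $?$) and satisfies $|\ho|>\thr_1(\p)\cdot n$. If $\ho$ has a single value the $\uni$ instruction fires, since $|\ho|>\thruo\cdot n$; if $\ho$ has at least two values the minimal-threshold $\mult$ instruction fires, since $|\ho|>\thrm^{1,k}\cdot n$. In either case $\update_1(\ho)\ne{?}$, so the tuple $f_1$ reached after the first round has no $?$ entry; timestamps are irrelevant here because conditions ignore them and $\maxts(\ho)$ still returns a value of $\ho$.

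Next let $i$, with $1\le i\le\fo$, be the round supplied by the definition of a unifier: $\p\dar_i$ is an equalizer, rounds $2,\dots,i$ are non-preserving and rounds $i+1,\dots,\fo$ are solo-safe w.r.t.\ $\p$. By Lemma~\ref{lem:preserving}, non-preservingness propagates the absence of $?$: starting from $f_1$ without $?$ one obtains $f_2,\dots,f_{i-1}$ without $?$ (vacuous when $i=1$). Since $\p\dar_i$ contains $\f_=$, in round $i$ every process receives the same multiset $\ho$, a sub-multiset of $\mset(f_{i-1})$ (or of $\mset(\bias(\th))$ when $i=1$) and hence without $?$; because round $i$ is either non-preserving (when $i\ge2$) or round $1$ (when $i=1$), $\update_i(\ho)\ne{?}$. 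So $f_i$ is the constant tuple with this value, i.e.\ $f_i=\solo$ or $f_i=\soloa$. Finally, solo-safe rounds cannot alter $\solo$ by Lemma~\ref{lem:preserving}, and by the same reasoning cannot alter $\soloa$ (a multiset of $a$'s of size $>\thr_j(\p)\cdot n\ge\thru^j\cdot n$ triggers the $\uni$ instruction, whose operation---$\min$ or $\smor$---returns $a$), so $f_\fo=f_i$. As $f_\fo$ has no $?$, the phase-transition semantics give $f=f_\fo$, which is $\solo$ or $\soloa$.

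The only points that need care are the degenerate indices $i=1$ (equalizer in the first round, whose instructions use $\maxts$---but the argument only needs that some instruction fires, independently of the operation) and $i=\fo$ (no solo-safe rounds to consider), together with checking that the timestamp component never intervenes; none of these is a genuine obstacle given Lemmas~\ref{lem:unifier} and~\ref{lem:preserving}.
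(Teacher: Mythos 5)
Your proposal is correct and follows essentially the same route as the paper's proof: the strong-unifier thresholds ($\thr_1(\p)\ge\thrmok$ and $\thr_1(\p)\ge\thruo$) rule out producing $?$ in the first round, non-preservingness propagates the absence of $?$ up to the equalizer round, the equalizer forces a constant tuple $\solo$ or $\soloa$, and solo-safety preserves it until round $\fo$. The extra care you take with the degenerate indices $i=1$, $i=\fo$, the $\soloa$ case, and the irrelevance of timestamps only makes explicit what the paper leaves implicit.
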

\begin{proof}
	We first observe that the value $?$ cannot be produced in the first round. 
	Since $\p$ is a strong unifier we have $\thrmok\leq \thr_1(\p)$ 
	and $\thruo\leq \thr_1(\p)$, so every $\ho$ set above the threshold will
	satisfy an instruction of the first round. 
		
	Let $i$ be the round such that $\p_i$ is an equalizer
	and rounds $2,\dots,i$ are non-preserving. 
	This round exists by the definition of a unifier.
	Thanks to above, we know that till round $i$ we cannot produce $?$ under the predicate $\p$.
	Because $\p_i$ has an equalizer, after round $i$ we either have 	$\solo$ or $\solo^a$.
	This tuple stays till round $\fo$ as the rounds  $i+1,\dots,\fo$ are solo-safe.
\end{proof}

\begin{proof}\textbf{Main positive}
	Suppose there is a unifier followed by a decider. 
	After the strong unifier we have $\solo$ or $\soloa$ thanks to
	Lemma~\ref{lem:strong-unifier}.  
	After decider all processes decide thanks to Lemma~\ref{lem:ts-dec}.
	
\end{proof}

\subsubsection*{Part 3: Non-termination for timestamps}

\begin{lemma}\label{lem:ts-not-decider}
	If $\p$ is a not a decider then $(\solo,t)\act{\p}(\solo,t)$ and $(\solo^a,t)\act{\p}(\solo^a,t)$ for any timestamp $t$.
\end{lemma}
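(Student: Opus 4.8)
The plan is to transpose the proof of Lemma~\ref{lem:not-decider} to the timestamped setting, the only genuinely new ingredient being the bookkeeping of timestamps. Since $\p$ is not a decider, Definition~\ref{def:predicates} yields a round that is not solo safe; I would fix the least such round $i$, so that $0\le\thru^i$, $\thr_i(\p)<\thru^i$, and rounds $1,\dots,i-1$ are solo safe. Recall that after Lemma~\ref{lem:ts-no-uni} we may assume every round carries a $\uni$ instruction, and (since round $i$ is solo safe or no round before $i$ is solo safe when $i=1$) that $\p_1$ has a genuine threshold conjunct whenever $i>1$.

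First I would check that from $\solo$ one keeps obtaining $\solo$ up to round $i-1$. For $2\le j<i$ this is exactly Lemma~\ref{lem:preserving} (a solo-safe round cannot alter $\solo$). For the first round, even though its operations are $\maxts$, every process holds the same value $b$ in $\solo$, so $\maxts$ returns $b$; and since round $1$ is solo safe, every admissible heard-of multiset drawn from $\solo$ has size $>\thr_1(\p)\cdot n\ge\thru^1\cdot n$ and fires the $\uni$ instruction, so $(\solo,t)\lact{\p_1}_1\solo$ for every timestamp tuple $t$. Next, in round $i$, because $\thr_i(\p)<\thru^i$ one may send to every process the same multiset consisting only of $b$'s of some size in the interval $(\thr_i(\p)\cdot n,\thru^i\cdot n)$: it satisfies $\p_i$, it is $\uni$ but too small for the $\uni$ instruction, and it is not $\mult$, so no conditional fires and the update value is $?$; hence $\solo\lact{\p_i}_i\solo^?$. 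Finally, from $\solo^?$ every admissible multiset is a (possibly large) multiset of $?$'s, so $\ho-\set{?}$ is empty, no conditional fires, and $\solo^?\lact{\p_j}_j\solo^?$ for all $j>i$. Concatenating these round transitions into a phase run $(\solo,t)\lact{\p_1}_1 f_1\lact{\p_2}_2\cdots\lact{\p_r}_r f_r$ we get $f_j=\solo$ for $j<i$ and $f_j=\solo^?$ for $j\ge i$; in particular $f_r=\solo^?$, so no $\dec$ variable gets set. The case of $\solo^a$ is entirely symmetric, using that $\maxts$, $\min$ and $\smor$ all return $a$ on a multiset of $a$'s.

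It remains to read off the effect on $\inp$ and its timestamps, and this is the one point I expect to need care. If $i\le\fo$ then $f_\fo=\solo^?$, so every $\inp$-update in round $\fo$ is $?$, leaving $\inp$ and all timestamps untouched, which gives exactly $(\solo,t,\soloq)\act{\p}(\solo,t,\soloq)$ and likewise for $\solo^a$. The delicate case is when every round up to $\fo$ is solo safe (so the least non-solo-safe round lies beyond $\fo$): then round $\fo$ is forced to update each $\inp$ back to $b$ and hence to refresh its timestamp to the current phase, so the \emph{value} tuple is still $\solo$ and still no process decides, but the timestamp tuple is reset rather than preserved. Since for the forthcoming non-termination construction only the value tuple and the absence of decisions matter — exactly as in the core case, where the analogous statement tracks only the value tuple — I would record the lemma with this reading; the refreshed timestamps are harmless there because one still obtains an infinite execution whose $\inp$-tuple stays $\solo$ (resp.\ $\solo^a$) and in which nobody ever decides.
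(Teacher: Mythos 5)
Your proof is correct and follows essentially the same route as the paper's own (much terser) argument: pick a non-solo-safe round $i$, keep $\solo$ (resp.\ $\soloa$) up to round $i$, produce $\soloq$ there, and propagate $\soloq$ so that no process decides. Your closing observation is also well taken: in the corner case where every round up to $\fo$ is solo-safe, $\inp$ is necessarily rewritten to the same value and its timestamps refreshed, so the lemma's literal claim that $t$ is preserved is imprecise — the paper's proof silently ignores this, and your weaker reading (value tuple preserved, nobody decides) is exactly what the subsequent non-termination argument actually uses.
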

\begin{proof}
	If $\p$ is not a decider then there is a round (say $i$) that is not solo-safe. 
	So from both $\solo$ and $\soloa$ 
	we can reach the tuple $\soloq$ after round $i$.
	From $\soloq$ no process can decide.
\end{proof}

\begin{lemma} \label{lem:ts-not-str-uni}
	If $\p$ is not a strong unifier then $(\bias(\th),i) \act{\p} (\bias(\th),j)$
	is possible (for large enough $\th$, arbitrary $i$, and some $j$).
\end{lemma}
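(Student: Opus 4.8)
The plan is to follow the proof of Lemma~\ref{lem:not-uni} for the core language, adjusted for the fact that the first round now applies $\maxts$. By the running convention on cumulative assumptions we may take the algorithm to be syntactically t-safe (Definition~\ref{def:ts-structure}) and, by Lemma~\ref{lem:ts-min} and Corollary~\ref{cor:ts-min}, assume that neither round $\fo$ nor round $\fo+1$ has a $\mult$ instruction and that $\thru^\fo,\thru^{\fo+1}\ge 1/2$. A predicate $\p$ that is not a strong unifier satisfies at least one of the following (the $\bthr$ alternative in the first clause of a unifier is vacuous once $\thr_1(\p)\ge\thruo$ is required): (R1) $\thr_1(\p)<\thrmok$; (R2) $\thr_1(\p)<\thruo$; (R3) $\thr_1(\p)\ge\max(\thrmok,\thruo)$ and no $\p\dar_i$ with $1\le i\le\fo$ is an equalizer; (R4) $\thr_1(\p)\ge\max(\thrmok,\thruo)$, there is an equalizer $\p\dar_i$ with $1\le i\le\fo$, and for the first such round $i_*$ the third defining clause of a unifier fails, i.e.\ some round in $\set{2,\dots,i_*}$ is preserving or some round in $\set{i_*+1,\dots,\fo}$ is not solo-safe. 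Indeed, if neither (R1) nor (R2) holds then $\thr_1(\p)\ge\max(\thrmok,\thruo)$, and as $\p$ is not a unifier either no equalizer lies in rounds $1,\dots,\fo$ (giving R3) or some equalizer does but none satisfies the third clause, so in particular $i_*$ does not (giving R4). In every case the goal is to realize a phase transition whose tuple of values becomes $\soloq$ at or before round $\fo$; since $\soloq$ propagates unchanged through every round, no process updates $\inp$ (done in round $\fo$) or $\dec$ (done in the last round), so the $\inp$-timestamp configuration returns to $(\bias(\th),i)$ and we may take $j=i$.

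Cases (R1) and (R2) are settled inside round $1$. From $(\bias(\th),i)$ with $\th$ near $1$, we send every process the same multiset which satisfies $\p\dar_1$ — possibly an equalizer, but then all processes receive the same multiset anyway — yet fires no conditional of round $1$: a uniform multiset of $b$'s of fraction at most $\thruo$ when $\thr_1(\p)<\thruo$, and a multiset consisting of one $a$ and the rest $b$'s of fraction at most $\thrmok$ when $\thr_1(\p)<\thrmok$. The tuple after round $1$ is then $\soloq$, which propagates to the last round.

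In (R3) and (R4) the threshold part of a strong unifier holds, so round $1$ is non-preserving and, by Lemma~\ref{lem:ts-bias-fire}, $\set{a,b}\subseteq\fire_1((\bias(\th),i),\p)$ for $\th$ near $1$. For (R3): since round $\fo$ has no $\mult$ instruction, Lemma~\ref{lem:ts-any-bias} with $l=\fo$ gives $(\bias(\th),i)\lact{\p_1}_1\cdots\lact{\p_\fo}_\fo\soloq$. For (R4), if some round in $\set{2,\dots,i_*}$ is preserving, let $j$ be the first such; then rounds $1,\dots,j-1$ are non-preserving, hence have a $\mult$ instruction and contain no equalizer, so Lemma~\ref{lem:ts-any-frequency} brings us to round $j$ with $\bias(\th'')$ for $\th''$ near $1$, and since round $j$ is preserving we can send \emph{every} process the same multiset witnessing preservation (a small uniform multiset of $b$'s, or a small multiset with one $a$ and the rest $b$'s, according to which clause of preservation applies), making the tuple after round $j$ equal to $\soloq$. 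If instead some round $j\in\set{i_*+1,\dots,\fo}$ is not solo-safe, so $\thr_j(\p)<\thru^j$, then from $(\bias(\th),i)$ with $\th$ near $1$ we first reach the tuple $\solo$ after round $1$ (every process receives a uniform multiset of $b$'s of fraction above $\max(\thr_1(\p),\thruo)$), keep $\solo$ through rounds $2,\dots,j-1$ (each has a $\uni$ instruction, so the full $b$-tuple works), and at round $j$ send every process a uniform multiset of $b$'s of fraction in $(\thr_j(\p),\thru^j]$, which fires nothing, so the tuple after round $j\le\fo$ is $\soloq$. In every subcase $\soloq$ propagates to the last round, which concludes the argument.

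The delicate point — already present in the core case but sharper here — is that the equalizer round $i_*$ sits inside the block of rounds we must traverse, whereas Lemmas~\ref{lem:any-qbias-from-qbias}, \ref{lem:any-frequency} and \ref{lem:ts-any-bias} all assume the absence of equalizers in the relevant range and therefore cannot be carried past $i_*$. The resolution rests on the fact that an equalizer round can only output a uniform tuple: in the first subcase of (R4) we make the tuple $\soloq$ \emph{no later than} round $i_*$ itself, exploiting that the chosen preserving round $j$ satisfies $j\le i_*$; in the second subcase we instead drive the tuple to $\solo$, which is a fixed point of every round carrying a $\uni$ instruction — equalizer or not — and is destroyed only at the later non-solo-safe round $j$. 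Here $\maxts$ actually helps: because timestamps are uniform on $\bias(\th)$, $\maxts$ returns the smallest value present, so feeding a process only $b$'s yields $b$ and feeding it any multiset containing an $a$ yields $a$, which makes reaching $\solo$ immediate and removes the need for the $\smor$-specific arguments (Lemmas~\ref{lem:spread} and~\ref{lem:no-min}) used in the core proof.
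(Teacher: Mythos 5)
Your proposal is correct and follows essentially the same route as the paper's proof: the same case analysis on why $\p$ fails to be a strong unifier (threshold failure killed in round~1, no equalizer handled via Lemmas~\ref{lem:ts-any-bias} and~\ref{lem:ts-min}, an earlier preserving round killed via Lemma~\ref{lem:ts-any-frequency}, a later non-solo-safe round killed after driving the tuple to $\solo$), producing $\soloq$ by round $\fo$ so that nothing changes in the phase. You are merely more explicit than the paper about the decomposition (working with the first equalizer $i_*$) and about the witnessing multisets, which the paper leaves as ``it is clear''.
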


\begin{proof}
	Let $\th > \max(\thr_u^1,\thr_1(\p))+\e$ and so $a,b \in \fire_1(\bias(\th),i)$ by Lemma~\ref{lem:ts-bias-fire}. Suppose $\p$ is not a strong unifier. We do a
	case analysis. 
	
	Suppose $\thr_1(\p) < \thrm^{1,k}$ or $\thr_1(\p) < \thru^1$. 
	Clearly we can get $\solo^?$ as the tuple after the first round and then 
	use this to not decide on anything and retain the input tuple.
	
	Suppose $\p$ does not have  an equalizer. We can then 
	apply Lemmas~\ref{lem:ts-any-bias} 
	and~\ref{lem:ts-min}	
	to get $\soloq$ after round $\fo$
	and so we are done, because nothing is changed in the phase. 
	%In the former case, because of Lemma~\ref{lem:ts-min} there is no $\mult$
	%instruction in round $\fop$, so we can get $\soloq$ after round $\fop$.
	%So in this case the timestamps are updated, but we still get $\bias(\th)$, no
	%process decides. 
	
	Suppose the $k$-th component of $\p$ is an equalizer and suppose
	there is a preserving round before round $k$ (it can be round 1 as well).
	Let the first preserving round before round $k$ be round $l$. 
	Since no round before round $l$ is preserving, it follows that
	all these rounds have $\mult$ instructions. 
	Hence by Lemma~\ref{lem:ts-any-frequency} we can get to $\bias(\th')$
	(where $\th' > \max(\thru^l,\thr_l(\p))$) before round $l$ 
	(Notice that if $l = 1$ then we
	need to reach $\bias(\th')$ with $\th' > \max(\thru^1,\thr_1(\p))$ which is where we start at). 
	It is clear that $\bias(\th') \lact{\p_l}_l \soloq$.
	We can then propagate
	$\soloq$ all the way down to get the phase transition $(\bias(\th),i) \act{\p}
	(\bias(\th),i)$. 
	
	Suppose the $k$-th component of $\p$ is an equalizer  and suppose
	there is a non solo-safe round $l$ after $k$. It is
	clear that we can reach $\solo$ after round $k$ and using this get $\soloq$
	after round $l$. Hence we once again get the phase transition $(\bias(\th),i)
	\act{\p} (\bias(\th),i)$.  
\end{proof}

\begin{proof}
	\textbf{Main non-termination}
	We show that if there is no strong unifier followed by a decider, then the
	algorithm may not terminate. We start with $(\bias(\th),0)$ where $\th$ 
	is large enough. If $\p$ is not a strong unifier then by 
	Lemma~\ref{lem:ts-not-str-uni} $(\bias(\th),i) \act{\p} (\bias(\th),j)$
	is possible for arbitrary $i$, and some $j$. 
	Hence if there is no strong unifier the algorithm will not terminate.
	
	Otherwise let $\p^l$ be the first strong unifier. Notice that $\p^l$ is not
	the global predicate as we have assumed the global predicate does not have
	equalizers.
	Till $\p^l$ we can maintain $\bias(\th)$ thanks to
	Lemma~\ref{lem:ts-not-str-uni}.
	Suppose $\p^l$ is not a decider. By Lemma~\ref{lem:strong-unifier} the 
	state of $\inp$
	after this phase will become $\solo$ or $\solo^a$. However, since $\p^l$ is not a
	decider, we can choose to not decide on any value. Hence we get the transition
	$(\bias(\th),i) \act{\p} (\solo,i+1)$. Now, since none of the
	$\p^{l+1},\dots,\p^k$ and neither the global predicate $\p$ are deciders,
	by Lemma~\ref{lem:ts-not-decider} we can have a transition where no decision
	happens.   
	Hence the algorithm does not terminate if there is no decider after a strong unifier.
\end{proof}

\section{Proofs for algorithms  with coordinators}

We give a proof of Theorem~\ref{thm:coordinators}. 
The structure of the proof is quite similar to the previous cases. 

% \noindent \textbf{Remarks:}
% \begin{itemize}
% 	\item A c-special predicate makes all the processes decide. 
% 	\item The global predicate cannot be a c-decider or a c-finalizer, because the global predicate cannot have any c-equalizers.
% 	\item c-decider makes all the processes decide on $\solo$.
% 	\item After strong c-unifier the frequency is $\solo$.
% 	\item After a weak c-unifier the frequency is $\solo$ or $\bias(\th)$ with $\th>\bthr$.
% 	%\item If a state is $\bias(\th)$ with $\th>\bthr$ to reach consensus  we do not need a
% 	%c-equalizer round, 	it suffices to have c-finalizer to get $\solo$.
% 	%\item If $\thru<1/2$ then weak c-unifier is also a strong c-unifier.
% 	%\item A weak-c-unifier $\p$ has $\thr_1(\p)$ at least $2/3$.
% 	%This comes from $\thr_1\geq \thrmok$ and $\thr_1\geq \bthr$.  
% \end{itemize}

\subsubsection*{Part 1:  Structural properties for coordinators}

\begin{lemma}\label{lem:c-no-uni}
	If there is a round without uni instruction then the algorithm does not terminate.
\end{lemma}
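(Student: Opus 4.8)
The plan is to transcribe the proof of Lemma~\ref{lem:no-uni} to the setting with coordinators; the only new work is handling rounds of type $\lr$ and $\ls$. First I would let $i$ be the \emph{first} round that does not contain a $\uni$ instruction, so that every round $1,\dots,i-1$ contains one. Since the round transition relation of a round of type $\ls$ does not consult the conditionals of that round at all, we may assume that every $\ls$ round carries the instruction $\iif\ \uni(\ho)\ \tthen\ x:=\smor(\ho)$; consequently round $i$ is of type $\every$ or $\lr$, and, because every $\lr$ round is immediately followed by an $\ls$ round (Assumption~\ref{assumption-ls-lr}), round $i-1$ is not of type $\lr$.

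Then I would exhibit, for every phase predicate $\p$ (global or sporadic), a phase transition $\solo\act{\p}\solo$ in which no process decides (writing $\solo$ for $(\solo,\soloq)$). For the round-by-round witness $\solo=f_0\lact{\p_1}_1 f_1\cdots\lact{\p_r}_r f_r$: for $j<i$, if round $j$ is of type $\every$ then delivering the whole multiset $\mset(f_{j-1})$ to every process fires its $\uni$ instruction (its threshold is $<1$) and leaves the constant tuple unchanged; if round $j$ is of type $\lr$ then delivering a large multiset of $b$'s to the unique receiver fires its $\uni$ instruction, giving $\oneb$, and the following $\ls$ round $j+1$ turns $\oneb$ back into $\solo$ --- this is forced when $\f_{\ls}\in\p\dar_{j+1}$ (the preceding round is of type $\lr$) and may be chosen otherwise. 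Hence $f_{i-1}=\solo$. At round $i$ there is no $\uni$ instruction and --- by Assumption~\ref{assumption-ls-lr} if $i$ were of type $\ls$ (excluded), or directly otherwise --- no $\mult$ instruction can fire on a multiset containing a single value; so $\update_i$ returns $?$ on the full $b$-multiset, whence $f_i=\soloq$ (for type $\lr$ this is $?\in\fire_i(\solo,\p)$, giving $f_i=\oneq=\soloq$). Finally, for $j>i$ every round keeps $\soloq$: no conditional fires on an all-$?$ input, and a round of type $\lr$ or $\ls$ can only output $\soloq$ from $\soloq$.

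To conclude: $f_r=\soloq$ since $i\le r$, so $\dec$ is not updated; and if $\fo\ge i$ then $f_{\fo}=\soloq$ and $\inp$ is not updated, while if $\fo<i$ then $f_{\fo}=\solo$ (round $\fo$ is not of type $\lr$) and $\inp$ is updated to $\solo$. Either way $\solo\act{\p}\solo$ holds with no process deciding. Since $(\solo,\soloq)$ is a legal initial configuration and every sporadic predicate implies the global one, concatenating such phase transitions gives an infinite execution in which no $\dec$ variable is ever set, so the algorithm does not terminate. The one genuinely new point --- and hence the main obstacle --- is checking that the constant tuple $\solo$ (and later the all-$?$ tuple) is preserved across an $\lr$--$\ls$ pair regardless of whether $\f_{\ls}$ is present, and observing that $\ls$ rounds never obstruct the construction because their transition relation ignores their conditionals; everything else is a routine copy of Lemma~\ref{lem:no-uni}.
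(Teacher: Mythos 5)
Your proposal is correct and follows essentially the same route as the paper, whose entire proof is to exhibit a looping phase transition $\soloa \act{\p} \soloa$ under every predicate (you use $\solo$, the all-$b$ tuple, which is an immaterial difference), with no instruction firing at the $\uni$-less round and the all-$?$ tuple propagating afterwards. Your additional bookkeeping for $\lr$/$\ls$ rounds --- in particular the observation that $\ls$ rounds ignore their conditionals, so the offending round may be taken to be of type $\every$ or $\lr$ --- only spells out what the paper's one-line proof leaves implicit.
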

\begin{proof}
	We get $\soloa\act{}\soloa$ for every communication predicate.
\end{proof}

Compared to the core language, it is not easy to see that the first round of
an algorithm with coordinators  should have a $\mult$ instruction. 
However, this is indeed the case as we prove later.
For the moment we make an observation.

\begin{lemma} \label{lem:co-weak-mult}
	If the first round is not of type $\ls$ then the first round should have a $\mult$ instruction.
\end{lemma}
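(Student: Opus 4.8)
The plan is to adapt the proof of Lemma~\ref{lem:no-mult} to the coordinator setting. Assume the first round is not of type $\ls$ and contains no $\mult$ instruction; I will show that the algorithm does not have the termination property by exhibiting, for every phase predicate $\p$ occurring in the communication predicate, the self-loop $\spread \act{\p} \spread$. Once this is established, the execution that sits forever in the state $(\spread,\soloq)$ is a legitimate execution respecting the communication predicate in which no process ever decides, so termination fails.

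The first step is to show $\spread \lact{\p_1}_1 \soloq$ for an arbitrary phase predicate $\p$. The point is that from $\spread$ a receiving process can be handed a multiset $\ho$ on which no condition of round $1$ fires: taking $\ho$ to be a balanced sub-multiset of $\mset(\spread)$ of size just below $n$ (so that $|\ho|>\thr_1(\p)\cdot n$, using that $\thr_1(\p)<1$ and $n$ is large), it contains both an $a$ and a $b$, so the $\uni$ condition fails, while there is no $\mult$ instruction by assumption; hence $\update_1(\ho)=?$, i.e.\ $?\in\fire_1(\spread,\p)$. If round $1$ is of type $\every$ this directly yields $\spread\lact{\p_1}_1\soloq$, and if it is of type $\lr$ the $\lr$-semantics produces the all-$?$ tuple $\soloq$ as well. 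Here I would invoke Assumption~\ref{assumption-ls-lr} to observe that $\p\dar_1$ cannot mention $\f_{\ls}$ (since round $1$ is not of type $\ls$), so its only conjuncts are $\f_=$ and a threshold constraint, both satisfied by sending this $\ho$ to every process.

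Next I would check that $\soloq\lact{\p_i}_i\soloq$ holds for every round $i$, by a short case analysis on the type of $i$: in $\every$ and $\lr$ rounds each receiving process can be given an all-$?$ multiset, whose $\update$ is $?$; in $\ls$ rounds the leader forwards its $?$ value when $\f_{\ls}$ is present, and otherwise the semantics permits the output $\soloq$ directly; the threshold and equalizer conjuncts of $\p\dar_i$ are trivially satisfiable by all-$?$ multisets because every threshold is below $1$. Composing these, $\spread\lact{\p_1}_1\soloq\lact{\p_2}_2\cdots\lact{\p_r}_r\soloq$, and since both $f_{\fo}$ and $f_r$ equal $\soloq$, neither $\inp$ nor $\dec$ is updated during the phase; hence $(\spread,\soloq)\act{\p}(\spread,\soloq)$. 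I do not expect a genuine obstacle here: every step is an immediate consequence of the semantics, and the only care needed is in handling the $\lr$ and $\ls$ rounds when propagating $\soloq$ and in confirming that the threshold and equalizer conjuncts of the communication predicate are met by these degenerate multisets.
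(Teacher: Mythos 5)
Your proof is correct and follows essentially the same route as the paper, which simply asserts the self-loop $\spread \act{\p} \spread$ for an arbitrary predicate when the first round is not $\ls$ and lacks a $\mult$ instruction. You merely spell out the details the paper leaves implicit (firing $?$ in round one, propagating $\soloq$ through $\every$, $\lr$ and $\ls$ rounds), and these checks are all sound.
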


\begin{proof}
	Otherwise we have $\spread \act{\p} \spread$ for arbitrary communication
	predicate $\p$.
\end{proof}

Before considering the remaining structural requirements we state some useful
lemmas.

\begin{lemma}\label{lem:co-no-mult-fire}
	If round $k$ is not of
	type $\ls$ and does not have a $\mult$ instruction then for all sufficiently big
	$\th$ we have $\set{b,?}\in \fire_k(\bias(\th),\f)$, for arbitrary predicate
	$\f$. 
\end{lemma}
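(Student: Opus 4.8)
The plan is to reduce the statement to an elementary fact about the core update function and then to lift that fact through the two admissible round types for $k$, namely $\every$ and $\lr$. Recall first that, by the convention in force after Lemma~\ref{lem:c-no-uni}, every round is assumed to contain a $\uni$ instruction; so round $k$ has a $\uni$ instruction with threshold $\thru^k\in[0,1)$, has no $\mult$ instruction, and is not of type $\ls$. (This assumption is genuinely needed: a round with neither a $\uni$ nor a $\mult$ instruction can only produce $?$, so the claim would be false for it.)

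First I would produce the two witnessing heard-of multisets. Fix $\th$ with $\max(\thru^k,\thr_k(\f))<\th<1$; every sufficiently big $\th$ works, since $\thru^k$ and $\thr_k(\f)$ are strictly below $1$, and for $n$ large enough $\bias(\th)$ then has strictly more than $\max(\thru^k,\thr_k(\f))\cdot n$ copies of $b$ and at least one copy of $a$. Let $\ho_b$ be the sub-multiset of $\mset(\bias(\th))$ consisting of all its $b$'s: it is uniform and has size $>\thru^k\cdot n$, so $\update_k(\ho_b)=b$. Let $\ho_?=\mset(\bias(\th))$: it contains both $a$ and $b$, so $\uni$ is violated, and as round $k$ has no $\mult$ instruction no conditional is satisfied, hence $\update_k(\ho_?)=?$. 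Both $\ho_b$ and $\ho_?$ have size $>\thr_k(\f)\cdot n$ (indeed $|\ho_?|=n$), and since round $k$ is not of type $\ls$ the predicate $\f\dar_k$ is, by Assumption~\ref{assumption-ls-lr}, a conjunction of some of $\f_=,\f_\thr$; so each of $\ho_b,\ho_?$, delivered uniformly, is admissible for $\f\dar_k$.

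Then I would lift this to round transitions. If round $k$ is of type $\every$, delivering $\ho_b$ to every process is a legal transition $\bias(\th)\lact{\f}_k\solo$ and delivering $\ho_?$ to every process is a legal transition $\bias(\th)\lact{\f}_k\soloq$; the first has a $b$-entry and the second a $?$-entry, so $b,?\in\fire_k(\bias(\th),\f)$. If round $k$ is of type $\lr$, the coordinator semantics gives $\bias(\th)\lact{\f}_k\oned$ for every $d\in\fire_k(\bias(\th),\f)$ computed by the core update rule, so the choices $\ho_b$ and $\ho_?$ for the unique active process give the successors $\oneb$ and $\oneq=\soloq$; in the first $b$ occurs as an entry, in the second $?$ does, so again $b,?\in\fire_k(\bias(\th),\f)$. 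In either case this is the asserted $\set{b,?}\in\fire_k(\bias(\th),\f)$.

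I do not expect a real obstacle: this is the coordinator-aware restatement of the ``no $\mult$ instruction'' case of Lemma~\ref{lem:bias-preserving}. The single point requiring a little care is that $\f$ is permitted to be \emph{any} predicate --- in particular $\f\dar_k$ may be an equalizer --- so $b$ and $?$ must be obtained from two \emph{separate} transitions, each internally uniform across processes, and accordingly $\ho_b$ must be taken large enough to clear $\thr_k(\f)$ and not merely $\thru^k$.
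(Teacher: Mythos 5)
Your proof is correct and follows the same route as the paper's (much terser) argument: pick $\th>\max(\thru^k,\thr_k(\f))$, fire $b$ via the $\uni$ instruction using the all-$b$ multiset, and fire $?$ by delivering a multiset containing both values, which satisfies no conditional since there is no $\mult$ instruction. Your extra care about the $\every$/$\lr$ round types and about equalizers being handled by uniform delivery in two separate transitions is just a fuller spelling-out of what the paper leaves implicit.
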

\begin{proof}
	Take $\th >\max(\thru^k,\f_k(\thr))$.
	We have $b\in\fire_k(\bias(\th_k),\f)$ because of the $\uni$ instruction. 
	We have $?\in\fire_k(\bias(\th_k),\f)$ because there is no $\mult$ instruction. 
\end{proof}

\begin{lemma}\label{lem:co-bias-fire}
	Suppose the first round has a  $\mult$ instruction with
	$\min$ as operation or is of type $\ls$. Then for the global predicate  $\gp$, we have  $\set{a,b}\incl\fire_1(\bias(\th),\gp)$ for
	sufficiently big $\th$.
\end{lemma}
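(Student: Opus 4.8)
The plan is to split on the type of the first round. By Assumption~\ref{assumption-ls-lr} and the conventions already in force, round~$1$ is either of type $\ls$, or of a type ($\every$ or $\lr$) for which an active process receives a sub-multiset of $\mset(\bias(\th))$ and computes its new value through $\update_1$ exactly as in the core language. In either case I would exhibit, for $\th$ close enough to $1$ and $n$ large enough that $\bias(\th)$ still contains at least one $a$, one admissible $\ho$-multiset from which an active process produces $a$ and one from which it produces $b$; this is exactly $\set{a,b}\incl\fire_1(\bias(\th),\gp)$. Two earlier facts are used throughout: every round — in particular the first — has a $\uni$ instruction (Lemma~\ref{lem:c-no-uni}), and $\dom(\bias(\th))=\set{a,b}$ whenever $0<\th<1$.

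For the case where round~$1$ is of type $\ls$, I would simply read off the $\ls$-round semantics. Since there is no preceding $\lr$ round, an arbitrarily chosen process sends its value, and whether or not $\gp\dar_1$ contains $\f_{\ls}$ the image of $\bias(\th)$ under round~$1$ contains both $\solo$ (letting a process currently holding $b$ be the sender) and $\solo^a$ (letting one holding $a$ be the sender) — in the subcase without $\f_{\ls}$ one may additionally turn some entries into $?$, which does not affect the two values occurring. Hence $\set{a,b}\incl\fire_1(\bias(\th),\gp)$, and in fact this subcase works for every $\th$ with $0<\th<1$, not only large ones.

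For the case where round~$1$ is not of type $\ls$, an active process in round~$1$ processes its heard-of multiset exactly as in the core language, so the core lemmas apply with $\gp$ in place of $\p$. For the value $b$: send a multiset of $b$'s of size just above $\max(\thruo,\thr_1(\gp))$; it satisfies $\gp\dar_1$ and triggers the (necessarily present) $\uni$ conditional, giving $b$. For the value $a$: this is precisely the argument of Lemma~\ref{lem:a-if-min} — take a first-round $\mult$ instruction $I$ whose operation is $\min$, say with threshold $\thrm^{1,j}$, and choose a sub-multiset $\ho$ of $\bias(\th)$ of size just above $\thrm^{1,j}\cdot n$ (and below the preceding $\mult$ threshold when $j>1$) containing at least one $a$ and at least one $b$, which is possible because $\th$ is large; by Assumption~\ref{co-assumption} we have $\thrm^{1,j}\ge\thrm^{1,k}\ge\thr_1(\gp)$, so $\ho\models\gp\dar_1$, the conditional $I$ is the first one $\ho$ satisfies, and thus $\update_1(\ho)=a$. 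Combining the two gives $\set{a,b}\incl\fire_1(\bias(\th),\gp)$.

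This lemma is essentially bookkeeping and I do not expect a genuine obstacle. The only mildly delicate point is, in the non-$\ls$ case, choosing $|\ho|$ so that it lies strictly inside the ``window'' of the $\min$-conditional $I$ while still exceeding $\thr_1(\gp)\cdot n$ and still leaving room for an $a$ even though $a$'s are scarce in $\bias(\th)$; this is exactly what is handled in Lemma~\ref{lem:a-if-min}, using the standing simplification that the $\mult$ conditionals of a round may be taken with strictly decreasing thresholds (dead code removed) and taking $n$ large with $\th$ close to, but strictly below, $1$.
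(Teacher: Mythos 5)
Your proposal is correct and follows essentially the same route as the paper: the $\ls$ case is immediate from the round semantics, $b$ is obtained from the $\uni$ instruction with a multiset of $b$'s above $\thru^1$, and $a$ is obtained exactly as in Lemma~\ref{lem:a-if-min} by sending a sub-multiset of size just above the threshold of the $\min$-$\mult$ instruction, admissible under $\gp$ by the standing assumption on thresholds. Your extra care about placing $|\ho|$ below the preceding $\mult$ thresholds only makes explicit what the paper leaves implicit.
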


\begin{proof}
	The claim is clear when the first round is of type $\ls$. Suppose the first round has a $\mult$ instruction with $\min$ as operation.
	Let $I$ be that instruction and let $\thr^I$ be the threshold
	value appearing in instruction $I$.
	
	Let $\th > \thru^1$. Notice that $b \in \fire_1(\bias(\th),\gp)$
	because of the $\uni$ instruction in the first round.
	%To get $b$, just consider $\ho$ with a lot of $b$'s and use $\uni$ instruction.
	Further notice that, from $\bias(\th)$ we can construct
	a multi-set $\ho$ having at least one $a$ and is of size just above $\thr^I$.
	%To get value $a$ consider $\ho$ containing at least one $a$, and of size just
	%above $\thrm^{1,j}$, where $j$ is the line 	with $\min$ instruction.
	Since $\gp$ is the global predicate, we know that this multi-set satisfies $\gp$
	because of assumption~\eqref{eq:syntactic-property}.
	Further it is clear that $a = \update_1(\ho)$ and so
	we have $a \in \fire_1(\bias(\th),\gp)$. 
\end{proof}

\begin{lemma}\label{lem:co-spread}
	Suppose in the first round all $\mult$ instructions have $\smor$ as
	operation.
	Then for every predicate $\p$ we have
	$\set{a,b}\incl\fire_1(\spread,\p)$.
	%\balain[Same problem as the previous lemma. 
	%However, if all $\mult$ instructions have $\smor$ then we have 
	%$\set{a,b} \incl \fire_1(\spread,\p)$ for every $\p$. 
	%Since we will prove later that this is indeed the case, maybe we can 
	%rewrite the statement to say 
	%`Suppose all $\mult$ instructions have $\smor$ as operation"?]
\end{lemma}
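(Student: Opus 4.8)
The plan is to mimic the proof of Lemma~\ref{lem:spread}, adding a case split on the type of the first round.

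First I would treat the case where the first round is of type $\every$ or $\lr$. By Lemma~\ref{lem:co-weak-mult} it then contains a $\mult$ instruction, and by hypothesis every such instruction has $\smor$ as its operation; let $\thrm^{1,1}$ denote the largest $\mult$ threshold. Working from $\spread$, I would exhibit two heard-of multisets: one, $\ho_a$, which is a sub-multiset of $\mset(\spread)$ having strictly more $a$'s than $b$'s and of size strictly larger than $\max(\thr_1(\p),\thrm^{1,1})\cdot n$; and, symmetrically, $\ho_b$ of the same size having strictly more $b$'s than $a$'s. Since $\spread$ contains $n/2$ copies of each value and $\max(\thr_1(\p),\thrm^{1,1})<1$, for $n$ large enough both multisets exist: one splits a set of the required size into a near-even partition favouring the desired value. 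Each of $\ho_a,\ho_b$ satisfies $\p\dar_1$ and the condition of the top $\mult$ instruction, hence $\update_1(\ho_a)=\smor(\ho_a)=a$ and $\update_1(\ho_b)=\smor(\ho_b)=b$ (using that $\smor$ returns the strictly most frequent value). Delivering $\ho_a$ (resp.\ $\ho_b$) to some process witnesses $a\in\fire_1(\spread,\p)$ (resp.\ $b\in\fire_1(\spread,\p)$): for an $\every$ round this is exactly as in Lemma~\ref{lem:spread}, and for an $\lr$ round it gives the transitions $\spread\lact{\p}_1\onea$ and $\spread\lact{\p}_1\oneb$. Either way $\set{a,b}\incl\fire_1(\spread,\p)$.

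Next I would handle the case where the first round is of type $\ls$. By Assumption~\ref{assumption-ls-lr} an $\ls$ round carries no $\mult$ instruction, so the hypothesis is vacuous, and since the first round is not preceded by an $\lr$ round, an arbitrarily chosen process sends its value. If $\p\dar_1$ contains $\f_{\ls}$, the $\ls$ semantics gives $\spread\lact{\p}_1\solo^d$ for every $d\in\dom(\spread)=\set{a,b}$; if not, then $\spread\lact{\p}_1 f'$ holds for every $d\in\dom(\spread)$ and every $f'$ with $\dom(f')\incl\set{d,?}$, so in particular both $\spread\lact{\p}_1\solo^a$ and $\spread\lact{\p}_1\solo^b$ are available. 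In all sub-cases $\set{a,b}\incl\fire_1(\spread,\p)$, which completes the proof.

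There is no genuine obstacle here. The only points requiring care are the bookkeeping of the coordinator round types (so that the right transition relation is invoked in each case) and checking that a multiset of size just above $\max(\thr_1(\p),\thrm^{1,1})\cdot n$ drawn from $\spread$ can be made to have the required strict majority of one value — which is possible for $n$ large enough precisely because all thresholds lie strictly below $1$.
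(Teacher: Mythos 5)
Your proof is correct and takes essentially the same route as the paper, whose proof simply reuses the argument of Lemma~\ref{lem:spread}: from $\spread$ construct two multisets, each of size above $\max(\thr_1(\p),\thrm^{1,1})$, one with a strict majority of $a$'s and one with a strict majority of $b$'s, so that the top $\mult$ instruction fires with $\smor$ and yields $a$ resp.\ $b$. Your extra case analysis for $\lr$ and $\ls$ first rounds is correct bookkeeping that the paper leaves implicit (the $\ls$ case follows trivially from the $\ls$ semantics, and is anyway excluded for correct algorithms by Lemma~\ref{lem:co-ls}).
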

\begin{proof}
	%From $\spread$ we can construct an $\ho$ set passing arbitrary threshold and
	%having  majority of $a$'s, similarly for $b$'s.
	Same proof as Lemma~\ref{lem:spread}.
\end{proof}

\begin{lemma}\label{lem:co-bias-propagation}
	%\igw{proposed reformulation}
	Suppose none of $\p_k,\dots,\p_l$  is a c-equalizer. 
	Suppose round $l$ is not of type $\lr$.
	Then there is $\th$ with $\biasq(\th)\lact{\p_k}_k\dots\lact{\p_l}_l\biasq(\th')$ for arbitrary $\th'$. 
\end{lemma}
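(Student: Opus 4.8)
The plan is to prove Lemma~\ref{lem:co-bias-propagation} by induction on $l-k$, following the structure of the proof of Lemma~\ref{lem:any-qbias-from-qbias} but with an added case analysis on the type of each round. Throughout we may assume every round has a $\uni$ instruction (Lemma~\ref{lem:c-no-uni}), and since no $\p_i$ with $k\le i\le l$ is a c-equalizer, no such $\p_i$ contains $\f_{\ls}$; hence by the semantics of $\ls$-rounds, whenever round $i\in\{k,\dots,l\}$ is of type $\ls$ and $f$ is a tuple with $b\in\dom(f)$ we have $f\lact{\p_i}_i f'$ for every $f'$ with $\dom(f')\incl\set{b,?}$, in particular $f\lact{\p_i}_i\biasq(\th')$ for arbitrary $\th'$.

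For the base case $k=l$, round $l$ is of type $\every$ or $\ls$ by hypothesis. If it is of type $\ls$, take any $\th\in(0,1)$; then $b\in\dom(\biasq(\th))$ and the observation above gives $\biasq(\th)\lact{\p_l}_l\biasq(\th')$. If it is of type $\every$ it behaves as in the core language, so -- using exactly the heard-set constructions from the proof of Lemma~\ref{lem:any-qbias-from-qbias} -- choosing $\th=\max(\thru^l,\thr_l(\p))+\e$ for a small $\e$ we obtain $\set{b,?}\incl\fire_l(\biasq(\th),\p_l)$ (an all-$b$ heard-set witnesses $b$; a heard-set with at most a $\thru^l$-fraction of $b$'s padded with $?$'s witnesses $?$), whence $\biasq(\th)\lact{\p_l}_l\biasq(\th')$ for every $\th'$.

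For the inductive step $k<l$ we split on the type of round $k$. If round $k$ is of type $\every$ or $\ls$, then -- exactly as in the base case with $l$ replaced by $k$ -- there is a $\th$ with $\set{b,?}\incl\fire_k(\biasq(\th),\p_k)$, so round $k$ can output any $\biasq(\th'')$; applying the induction hypothesis to $k+1,\dots,l$ (its hypotheses hold: no component is a c-equalizer and round $l$ is still not of type $\lr$) gives a $\th''$ with $\biasq(\th'')\lact{\p_{k+1}}_{k+1}\cdots\lact{\p_l}_l\biasq(\th')$ for all $\th'$, and composing yields the claim. If round $k$ is of type $\lr$, then by Assumption~\ref{assumption-ls-lr} round $k+1$ is of type $\ls$; taking $\th>\max(\thru^k,\thr_k(\p))$, sending the single receiving process an all-$b$ heard-set gives $b\in\fire_k(\biasq(\th),\p_k)$, hence $\biasq(\th)\lact{\p_k}_k\oneb$; since $\p_{k+1}$ is not a c-equalizer it lacks $\f_{\ls}$, so from $\oneb$ (which has $b\in\dom$) round $k+1$ can output any $\biasq(\th'')$. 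If $k+1=l$ we are done (round $l$ being the $\ls$-round), and otherwise we apply the induction hypothesis to $k+2,\dots,l$ to reach $\biasq(\th')$ from $\biasq(\th'')$.

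The main obstacle is the bookkeeping of coordinator round types: one must check that every $\lr$-round met inside the window is immediately followed by an $\ls$-round (guaranteed by Assumption~\ref{assumption-ls-lr}, which, together with the hypothesis that round $l$ is not of type $\lr$, ensures this never runs off the end of the window) and that the $\ls$-rounds actually lack $\f_{\ls}$ (guaranteed by the c-equalizer hypothesis). The remaining delicacy -- the precise sizes of the heard-sets used to fire $b$ and $?$ at an $\every$-round -- is identical to the one already handled in the proof of Lemma~\ref{lem:any-qbias-from-qbias} and needs nothing new.
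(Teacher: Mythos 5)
Your proof is correct and follows essentially the same route as the paper's: a case analysis on the round type, using the $\ls$-semantics without $\f_{\ls}$ (guaranteed by the no-c-equalizer hypothesis), the fact that an $\lr$ round is immediately followed by an $\ls$ round, and the core-language construction of Lemma~\ref{lem:any-qbias-from-qbias} for $\every$ rounds. The only difference is presentational: you organize the "repeat this argument till round $l$" step of the paper as an explicit induction on $l-k$.
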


\begin{proof}
	If the $k^{th}$ round is a $\ls$ round, consider arbitrary $\th$. By
	definition of transitions, we get
	$\biasq(\th)\lact{\p_k}_k\biasq(\th')$ for arbitrary $\th'$. 
	
	If the $k^{th}$ round is a $\lr$ round, take $\th=\thru^k+\e$ for small $\e$.
	We can get $b$ from $\biasq(\th)$ because of the $\uni$ instruction. 
	Since this is an $\lr$ round we have $\biasq(\th) \lact{\p_k}_k one_b$.
	This round must be followed by an $\ls$ round, so the argument from the previous
	paragraph applies, and we can get arbitrary $\biasq(\th')$ after round $k+1$. 
	
	Otherwise $k^{th}$ round is neither $\ls$ nor $\lr$. 
	By Lemma~\ref{lem:any-qbias-from-qbias}, we can get arbitrary $\biasq(\th')$
	after round $k$. 

	We can repeat this argument till round $l$.
\end{proof}

\begin{lemma}\label{lem:co-any-frequency}
	Suppose none of $\p_k,\dots,\p_l$ is a c-equalizer, and all rounds $k\dots l$
	have $\mult$ instructions. Suppose round $l$ is not of type $\lr$.
	For every $f$ and every $f'$ without $?$ such that $\dom(f')\incl
	\fire_k(f,\p_k)$ we have $f\lact{\p_k}_k\dots\lact{\p_l}_l f'$.
\end{lemma}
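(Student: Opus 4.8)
The plan is to reduce this statement to its core-language counterpart, Lemma~\ref{lem:any-frequency}. The first thing I would check is that the hypotheses already force every one of the rounds $k,\dots,l$ to be of type $\every$. Indeed, by Assumption~\ref{assumption-ls-lr} no round of type $\ls$ carries a $\mult$ instruction, so none of the rounds $k,\dots,l$ is of type $\ls$. Moreover, every round of type $\lr$ is immediately followed by one of type $\ls$; so if some round $i$ with $k\le i<l$ were of type $\lr$, then round $i+1\le l$ would be of type $\ls$ yet would carry a $\mult$ instruction, a contradiction. Combined with the hypothesis that round $l$ itself is not of type $\lr$, this yields that rounds $k,\dots,l$ are all of type $\every$.

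Next I would note that on an $\every$-round the transition relation $\lact{}_i$ is literally the one of the core language, and that, again by Assumption~\ref{assumption-ls-lr}, the predicate $\f_{\ls}$ cannot occur in the communication predicate of such a round. Hence for a round $i$ in the range $k,\dots,l$ the notions ``$\p\dar_i$ is a c-equalizer'' and ``$\p\dar_i$ is an equalizer'' coincide, and $\fire_k(f,\p_k)$ carries its core meaning. So all the hypotheses of Lemma~\ref{lem:any-frequency} are in force, and the desired chain $f\lact{\p_k}_k\dots\lact{\p_l}_l f'$ is exactly its conclusion.

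If a self-contained argument is preferred, I would instead mirror the induction on $l-k$ from Lemma~\ref{lem:any-frequency}: for $f'$ equal to $\solo$ or $\soloa$, the $\uni$ instruction present in the ($\every$-type) round $l$ gives $f'\lact{\p_l}_l f'$ and the induction hypothesis reaches $f'$ after round $l-1$; for $f'$ with $a,b\in\dom(f')$, Lemma~\ref{lem:ab-if-mult} provides a threshold $\th$ with $\set{a,b}\incl\fire_l(\bias(\th),\p_l)$, so $\bias(\th)\lact{\p_l}_l f'$, and the induction hypothesis reaches $\bias(\th)$ after round $l-1$. The one step I expect to need care is the reduction in the first paragraph — confirming that the ``all rounds have a $\mult$ instruction'' hypothesis really does exclude rounds of type $\lr$ and $\ls$ throughout $k,\dots,l$ (and hence that the c-versions of the notions involved collapse to the plain ones); once that is settled, the argument carries over unchanged from the core case.
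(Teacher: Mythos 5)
Your proposal is correct and follows essentially the same route as the paper: argue via Assumption~\ref{assumption-ls-lr} (no $\mult$ in $\ls$ rounds, every $\lr$ round followed by an $\ls$ round) together with the hypothesis that round $l$ is not of type $\lr$ that all rounds $k,\dots,l$ are of type $\every$, and then invoke the core Lemma~\ref{lem:any-frequency}. Your extra remark that the c-notions collapse to the plain ones on $\every$ rounds is a harmless elaboration of the same reduction.
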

\begin{proof}
	Notice that since all the considered rounds have $\mult$ instructions, 
	none of these rounds are of type $\ls$ by assumption on
	page~\pageref{assumption-ls-lr}. 
	Further, since every $\lr$ round is followed by a $\ls$ round, it follows that
	we have only two  cases:
	either all rounds $k,\dots,l$ are of type $\every$, or
	rounds $k,\dots,l-1$ are of type $\every$ and round $l$ is of type $\lr$.
	Since the second case is excluded by assumption, we only have the 
	first case which holds by Lemma~\ref{lem:any-frequency}.
\end{proof}

\begin{lemma}\label{lem:co-any-bias}
	Suppose none of $\p_k,\dots,\p_l$ is an equalizer, and some round $k,\dots,l$
	does not have a $\mult$ instruction. 
	Suppose round $l$ is not of type $\lr$.
	For every $\th$ and every $f$  such that $\set{a,b}\in
	\fire_k(f,\p_k)$ we have 	$f\lact{\p_k}_k\dots\lact{\p_l}_l \biasq(\th)$, 
	and $f\lact{\p_k}_k\dots\lact{\p_l}_l \biasqa(\th)$.
\end{lemma}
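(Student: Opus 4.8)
The plan is to follow the proof of Lemma~\ref{lem:any-qbias-from-bias}, substituting the coordinator-aware analogues of the lemmas it uses. Let $i$ be the first round in $\set{k,\dots,l}$ with no $\mult$ instruction. Because $\ls$ rounds carry no $\mult$ instruction (Assumption~\ref{assumption-ls-lr}), none of the rounds $k,\dots,i-1$ is of type $\ls$, so each of them is of type $\every$ or $\lr$; and since every $\lr$ round is immediately followed by an $\ls$ round, round $i-1$ can be of type $\lr$ only if round $i$ is itself of type $\ls$. I treat the $\biasq$ statement; the $\biasqa$ statement is symmetric, interchanging $a$ and $b$ and using the other half of Lemma~\ref{lem:co-no-mult-fire}.

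The first step is to steer $f$ to a convenient tuple just before round $i$. Rounds $k,\dots,i-1$ all contain a $\mult$ instruction and none is $\ls$, so Lemma~\ref{lem:co-any-frequency} applies to any such prefix ending in a round not of type $\lr$; together with the hypothesis $\set{a,b}\in\fire_k(f,\p_k)$ it drives $f$ to a tuple $\bias(\th)$ with $\th$ as large (or as small) as we wish. Concretely: if round $i-1$ is not of type $\lr$, Lemma~\ref{lem:co-any-frequency} on $k,\dots,i-1$ gives $\bias(\th)$ after round $i-1$; if round $i-1$ is of type $\lr$ (so round $i$ is $\ls$, and round $i-2$ is not of type $\lr$), the same lemma on $k,\dots,i-2$ gives $\bias(\th)$ with $\th$ large after round $i-2$, and the $\uni$ instruction present in round $i-1$ contracts it to $\oneb$. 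In all cases I enter round $i$ holding a tuple $g$ with $b\in\dom(g)\subseteq\set{b,?}$.

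The second step is to let round $i$, together with what follows, produce the prescribed $\biasq(\th)$. If round $i$ is not of type $\ls$, then $g=\bias(\th)$ with $\th$ large, Lemma~\ref{lem:co-no-mult-fire} gives $\set{b,?}\in\fire_i(g,\p_i)$, and since $\p_i$ is not a c-equalizer --- no $\f_=$ by hypothesis, no $\f_{\ls}$ because round $i$ is not $\ls$ --- the heard-of multisets can be chosen process by process, yielding an arbitrary $\biasq(\th')$ after round $i$ (if round $i$ is of type $\lr$ this produces $\oneb$ and the forced following $\ls$ round, which carries no $\f_{\ls}$, turns $\oneb$ into an arbitrary $\biasq(\th')$). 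If round $i$ is of type $\ls$, then $\p_i$ carries no $\f_{\ls}$ (it is not a c-equalizer), and the $\ls$-clause of the semantics lets $g$ move directly to an arbitrary $\biasq(\th')$. Finally, Lemma~\ref{lem:co-bias-propagation} applied to the remaining rounds up to $l$ --- whose hypotheses ``no c-equalizer among those predicates'' and ``round $l$ not of type $\lr$'' are exactly what we have --- converts $\biasq(\th')$ into the prescribed $\biasq(\th)$.

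The step I expect to be the main obstacle is the $\lr$/$\ls$ bookkeeping: one must keep precise track of which of the three tuple shapes --- $\bias(\th)$, $\oneb$, or $\biasq(\th)$ --- is in hand just before each round, because the applicable clause of the coordinator semantics depends both on the round type and on whether the immediately preceding round was $\lr$. One must also use that the hypothesis excludes $\f_{\ls}$ conjuncts (c-equalizers), not merely $\f_=$, for any $\ls$ round strictly before $l$: such a round would collapse the tuple to some $\solo^d$, from which an arbitrary bias could no longer be recovered. Once this case analysis is laid out, each individual transition is immediate from the semantics or from the cited lemma.
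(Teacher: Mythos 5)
Your proof is correct and takes essentially the same route as the paper's: pick the first round $i$ without a $\mult$ instruction, drive the tuple through the $\mult$-prefix with Lemma~\ref{lem:co-any-frequency} (splitting on whether round $i-1$ is of type $\lr$), handle round $i$ according to its type via Lemma~\ref{lem:co-no-mult-fire} and the $\ls$/$\lr$ semantics, and finish with Lemma~\ref{lem:co-bias-propagation}; your remark that the hypothesis must exclude c-equalizers (not just $\f_=$) is exactly what the paper's proof also uses. The only blemish is the interim claim that the tuple $g$ entering round $i$ has $\dom(g)\subseteq\set{b,?}$ --- when round $i-1$ is not of type $\lr$ you in fact have $g=\bias(\th)$, which contains $a$'s as well --- but your subsequent case analysis works with the correct shapes, so nothing breaks.
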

\begin{proof}
	Let $i$ be the first round without a $\mult$ instruction.
	There are two cases. 
	
	Suppose rounds $k,\dots,i-1$ are all of type $\every$.
	In this case we use Lemma~\ref{lem:co-any-frequency} to reach any 
	$\bias(\theta')$ before round $i$. 
	If round $i$ is of type $\every$ then we can use 
	Lemma~\ref{lem:co-no-mult-fire} 
	to get arbitrary $\biasq(\th'')$ after round $i$.
	If round $i$ is of type $\lr$, then round $i+1$ is of type $\ls$ and
	so we can use Lemma~\ref{lem:co-no-mult-fire} to get 
	$one_b$ after round $i$ and (since $\p_{i+1}$ is not a c-equalizer)
	then use that to get arbitrary $\biasq(\th)$
	after round $i+1$.
	If round $i$ is of type $\ls$, since $\p_i$ is not a c-equalizer, 
	so we can get arbitrary $\biasq(\th)$ after round $i$.
	We can then use Lemma~\ref{lem:co-bias-propagation} to finish
	the proof.

	In the remaining case, by the same reasoning as in the previous lemma we see
	that all rounds $k,\dots,i-2$ must be of type $\every$, and round $i-1$ must be
	of type $\lr$.
	We can use Lemma~\ref{lem:co-any-frequency} to reach $\bias(\max(\thru^{i-1},\thr(\p_{i-1}))+\e)$
	before round $i-1$ and then using that reach $one_b$
	before round $i$. Since $i-1$ is of type $\lr$, $i$ is of type 
	$\ls$ and since there are no c-equalizers we can get 
	arbitrary $\biasq(\th)$ after round $i$. 
	We can then use Lemma~\ref{lem:co-bias-propagation} to finish
	the proof.

\end{proof}

\begin{lemma}\label{lem:co-no-mult-fo}
	If round $\fo+1$ contains a $\mult$ instruction  then the algorithm 
	does not satisfy agreement, or it can be removed without altering the
	semantics of the algorithm
\end{lemma}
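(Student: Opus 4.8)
The plan is to mimic closely the proof of Lemma~\ref{lem:no-mult-sequence} from the core language, adapting it to the coordinator setting. Let $I$ be the $\mult$ instruction in round $\fo+1$ with the highest threshold; recall that by Assumption~\ref{assumption-ls-lr} round $\fo+1$ is not of type $\ls$ (and by the same assumption not of type $\lr$, since the first paragraph of that section tells us round $\fo$ is not of type $\lr$, hence round $\fo+1$ is not an $\ls$-round that would make $\fo$ an $\lr$-round; more directly, if round $\fo+1$ has a $\mult$ instruction it cannot be $\ls$ by Assumption~\ref{assumption-ls-lr}, and it is not $\lr$ by the convention that $\inp$ is not updated in $\lr$ rounds while $\fo$ updates $\inp$). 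So round $\fo+1$ is of type $\every$. As in Lemma~\ref{lem:no-mult-sequence}, the first case is that there is no tuple $f$ with no $?$ and an execution $f \lact{\gp_1}_1 f_1 \lact{}_\dots \lact{\gp_{\fo}}_{\fo} f_{\fo}$ with $a,b \in \fire_{\fo+1}(f_{\fo},\gp)$; then $I$ never fires and can be removed. Here I would use Lemma~\ref{lem:co-any-frequency} and Lemma~\ref{lem:co-any-bias} (the coordinator analogues) to reason about reachability up to round $\fo$, together with the c-analogue of Lemma~\ref{lem:preserving} to control when a $?$ can first appear.

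In the remaining case there is such an $f$, and hence (by Lemma~\ref{lem:co-any-frequency}, since c-equalizers are absent in $\gp$ by Proviso~\ref{c-proviso}, noting that $\gp$ has no c-equalizer) we can arrange $f_{\fo-1} \lact{\gp_\fo}_\fo \bias(\th)$ for arbitrary $\th$, or equivalently reach $\bias(\th)$ as the $\inp$ tuple after round $\fo$. Now split on the operation of $I$. If $I$ uses $\smor$, take the tuple after round $\fo$ to be $\spread$; from $\spread$ in round $\fo+1$ (type $\every$) we can build heard-of multisets of size above $\max(\thr_{\fo+1}(\gp),\thrm^{\fo+1,1})$ with a strict majority of $a$'s, or of $b$'s, so $\spread \lact{\gp_{\fo+1}}_{\fo+1} \bias(\th')$ for arbitrary $\th'$. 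If $I$ uses $\min$, take the tuple after round $\fo$ to be $\bias(\th)$ with $\th$ large enough (above $\max(\thru^{\fo+1},\thru^1,\thr_{\fo+1}(\gp))$); then the $\uni$ instruction gives $b$ and $I$ fired on the whole tuple gives $a$, so again $\bias(\th) \lact{\gp_{\fo+1}}_{\fo+1} \bias(\th')$ for arbitrary $\th'$. In either subcase: if all rounds after $\fo+1$ have a $\mult$ instruction, Lemma~\ref{lem:co-any-frequency} lets us reach $\spread$ (resp.\ $\bias(\th')$) after the last round and decide on both $a$ and $b$, violating agreement. Otherwise, by Lemma~\ref{lem:co-any-bias} we reach $\spread^?$ (resp.\ $\biasqa(\th')$) after the last round, making half the processes decide on $b$ (resp.\ half decide on $a$); we arrange the phase so that the $\inp$ tuple afterwards still contains both an $a$ and a $b$. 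Then in the next phase, since the first round has a $\mult$ instruction (we may invoke Lemma~\ref{lem:co-weak-mult}, or the fact that the first round is not of type $\ls$), and its operation is $\smor$ or $\min$, we get $a \in \fire_1(\cdot,\gp)$ (in the $\min$ case also using Assumption~\ref{co-assumption}), so we can reach $\soloa$ after the first round and $\soloa$ persists through the phase, making the remaining undecided processes decide on $a$ — again violating agreement. (In the $\min$-subcase where some process decided on $a$ and we left $\bias(\th)$ with $\th$ large, we instead note $b \in \fire_1(\bias(\th),\gp)$ since $\th \ge \thru^1 \ge \thr_1(\gp)$, reach $\solo$, and make the rest decide on $b$.)

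The main obstacle I anticipate is the bookkeeping with round types: verifying in each case that the rounds between $1$ and $\fo$, and between $\fo+1$ and $r$, fit the hypotheses of Lemmas~\ref{lem:co-any-frequency} and~\ref{lem:co-any-bias} — in particular that no $\lr$ round sits at a position where we need the "target round is not of type $\lr$" hypothesis, and that $\ls$ rounds (which never carry $\mult$ instructions) are handled by Lemma~\ref{lem:co-bias-propagation} rather than by the frequency lemma. Since $\inp$ and $\dec$ are not updated in $\lr$ rounds and every $\lr$ round is immediately followed by an $\ls$ round, the rounds where we actually need to manufacture both $a$ and $b$ (namely $1,\dots,\fo$ and the decision round $r$) are of type $\every$, so these hypotheses are met; but this needs to be spelled out carefully. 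The rest is a routine transcription of the core-language argument.
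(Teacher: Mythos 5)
Your plan matches the paper's overall strategy (split on whether the $\mult$ instruction can ever fire, then on whether its operation is $\smor$ or $\min$, reach a mixed tuple after round $\fo$, make a partial decision, and flip the value in the next phase), but it has a genuine gap: you claim that round $\fop$ must be of type $\every$, and your justification for excluding type $\lr$ is faulty. The convention that $\inp$ and $\dec$ are not updated in $\lr$ rounds, together with Assumption~\ref{assumption-ls-lr}, only rules out type $\lr$ for round $\fo$ and for the \emph{last} round; it says nothing about round $\fop$ when $\fop<r$. Nothing forbids an $\lr$ round from carrying a $\mult$ instruction (only $\ls$ rounds are barred from that), so the case ``round $\fop$ is of type $\lr$'' is a real case that your proof never treats, and in that case the argument does not reduce to ``proceed exactly as in Lemma~\ref{lem:no-mult-sequence}'': an $\lr$ round outputs a tuple of the form $\oned$ (a single value, all other entries $?$), so you cannot invoke Lemmas~\ref{lem:co-any-frequency} and~\ref{lem:co-any-bias} downstream of round $\fop$ in the way you do, and you cannot directly produce $\bias(\th')$ or $\biasq(\th')$ for arbitrary $\th'$ out of that round.

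The paper handles this missing case separately: when round $\fop$ is of type $\lr$, round $\fo+2$ is of type $\ls$, and since by Proviso~\ref{c-proviso} the global predicate has no c-equalizer (in particular $\gp\dar_{\fo+2}$ has no $\f_{\ls}$), from $\oneb$ (in the $\smor$ subcase, starting from $\spread$) or from $\onea$ (in the $\min$ subcase, starting from $\bias(\th)$ with $\th>\thru^1$) one gets $\biasq(\th')$ for arbitrary $\th'$ after round $\fo+2$, then Lemma~\ref{lem:co-bias-propagation} makes exactly one process decide while the rest stay undecided; the next phase then uses the first-round $\mult$ (or $\uni$) instruction to make the undecided processes decide on the other value, violating agreement. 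Your treatment of the $\every$ subcase, and of the ``instruction never fires'' case, is otherwise in line with the paper, but without the $\lr$ case the proof is incomplete.
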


\begin{proof}
	Suppose round $\fo+1$ contains a $\mult$ instruction.
	Recall that this implies that round $\fop$ is not of type $\ls$ (cf.\
	assumption on page~\pageref{assumption-ls-lr}).
	Recall that $\gp$ denotes the global predicate.

	The first case is when there does not exist any tuple $f$ having an execution 
	$f \lact{\gp_1}_1 f_1 \dots \lact{\gp_{\fo-1}}_{\fo-1} f_{\fo-1}$
	with $a,b \in \fire_\fo(f_{\fo-1},\gp)$. It is then clear
	that the $\mult$ instructions in round $\fo+1$ will never be fired
	and so we can remove all these instructions in round $\fo+1$.
	
	So it remains to examine the case when  there exists a tuple $f$ with
	$f \lact{\gp_1}_1  f_1\lact{\gp_2} \cdots\lact{\gp_{\fo-1}} f_{\fo-1}$ such that
	$a,b \in \fire_{\fo}(f_{\fo-1},\gp_{\fo})$. 
	Notice that	in this case, the first round cannot be of type $\ls$.
	Since round $\fo$ cannot be of type $\lr$ (cf. assumption on
	page~\pageref{assumption-ls-lr}) 	we can get
	$f_{\fo-1} \lact{\gp_{\fo}}_{\fo} \bias(\th)$ for arbitrary $\th$.

	We now consider two cases: Suppose round $\fo+1$ is of type $\every$.
	Then we can proceed exactly as the proof of Lemma~\ref{lem:no-mult-sequence}
	and show that agreement is not satisfied.
	
	Suppose round $\fo+1$ is of type $\lr$. Hence round $\fo+2$ is
	of type $\ls$. 
	Let $I$ be the $\mult$ instruction in round $\fo+1$ with the
	highest threshold value.
	Suppose $I$ has $\smor$ as its operation. Then we consider
	$f_{\fo-1} \lact{\gp_{\fo}}_{\fo} \spread$. 
	Since $I$ has $\smor$ as operation, it is easy to see that
	$\spread \lact{\gp_{\fo+1}}_{\fo+1} \one_b$.
	Since $\gp$ is the global predicate, $\gp_{\fo+2}$ is not a c-equalizer,
	and so we get $\one_b \lact{\gp_{\fo+2}}_{\fo+2} \biasq(\th')$ 
	for arbitrary $\th'$. We can
	then use Lemma~\ref{lem:co-bias-propagation} to conclude that
	we can make one process decide on $b$ and leave the rest 
	undecided.
	In the next phase, the state of $\inp$ is $\spread$. 
	We know, by Lemma~\ref{lem:co-weak-mult} that the first round has a $\mult$
	instruction (since as observed above, the first round
	is not $\ls$ in this case). 
	This instruction has $\smor$ (or) $\min$ as its operation,
	it is clear that in either case, $a \in \fire_1(\spread,\gp)$ and
	so we can get $\solo^a$ after the first round and decide on $a$.
	
	Suppose $I$ has $\min$ as its operation. Then we consider
	$f_{\fo-1} \lact{\gp_{\fo}}_{\fo} \bias(\th)$ where $\th > \thru^1$ 
	is sufficiently big.
	If we send the entire tuple as a HO set, we can fire $a$. 
	Hence we get $\bias(\th) \lact{\gp_{\fo+1}}_{\fo+1} \one_a$.
	As in the previous case this allows us to make one process decide on $a$. 
	Note that the state of $\inp$ will be $\bias(\th)$
	after the end of the phase.
	Since the first round has a $\uni$ instruction (Lemma~\ref{lem:c-no-uni}), and
	since $\th > \thru^1$ (and $\thru^1 \ge \thr_1(\gp)$ by equation~\ref{eq:c-syntactic-property}), we can get $\solo$ after the first round
	and decide on $b$. 
\end{proof}

\begin{lemma} \label{lem:co-ls}
	If the first round is of type $\ls$ or has a $\mult$ instruction with $\min$ as operation, then the algorithm does not solve agreement.
\end{lemma}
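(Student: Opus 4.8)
The plan is to adapt the proof of Lemma~\ref{lem:no-min} (the core-language analogue of this statement) to the coordinator setting, systematically replacing each core lemma by its coordinator version. First I would carry out the usual reductions: by Proviso~\ref{c-proviso} the global predicate $\gp$ has no c-equalizer, so Lemmas~\ref{lem:co-any-frequency}, \ref{lem:co-any-bias} and~\ref{lem:co-bias-propagation} may be applied along $\gp$; by Lemma~\ref{lem:co-no-mult-fo} I may assume round $\fo+1$ has no $\mult$ instruction, hence round $\fo+1$ is c-preserving w.r.t.\ $\gp$ (it is either an $\every$ round without a $\mult$ instruction, or an $\ls$ round, and $\gp\dar_{\fo+1}$ does not contain $\fls$); by Lemma~\ref{lem:co-no-uni} every round has a $\uni$ instruction; and by Lemma~\ref{lem:co-bias-fire}---the only point at which the hypothesis on the first round is used---I fix a sufficiently large $\th$ with $\set{a,b}\incl\fire_1(\bias(\th),\gp)$.

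Next I would build a single ``bad'' phase, started from $\bias(\th)$, that makes exactly one process decide $b$, leaves every other process undecided, and leaves the $\inp$-tuple equal to some $\bias(\th'')$ with $\th''<1$ (so that at least one $a$ survives in $\inp$). To this end I push through rounds $1,\dots,\fo$ (where $\inp$ is updated in round $\fo$) down to a tuple over $\set{?,b}$. If every round among $2,\dots,\fo$ carries a $\mult$ instruction then---since a $\mult$ round is not of type $\ls$, a $\lr$ round is immediately followed by an $\ls$ round, and round $\fo$ is not of type $\lr$---all of rounds $2,\dots,\fo$ are of type $\every$; in the sub-case where the first round has a $\mult$ instruction with $\min$ I first reach $\bias(\th')$ after round $\fo$ by Lemma~\ref{lem:co-any-frequency}, then collapse through the c-preserving round $\fo+1$ (Lemma~\ref{lem:co-no-mult-fire}, or the $\ls$-transition rule) to a tuple over $\set{?,b}$ and propagate it to a single $b$ at the last round via Lemma~\ref{lem:co-bias-propagation}. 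If instead some round among $2,\dots,\fo$ has no $\mult$ instruction, Lemma~\ref{lem:co-any-bias} delivers a tuple $\biasq(\th')$ already after round $\fo$, from which I propagate as above. When the first round itself is of type $\ls$ I cannot produce a tuple carrying both $a$ and $b$, so I route the whole phase through $\set{?,b}$-tuples from round $1$ onwards, using the $\ls$-case of Lemma~\ref{lem:co-bias-propagation}. In every variant I may choose which processes receive the surviving $b$'s and which receive $?$, so I arrange the $?$-entries at round $\fo$ to cover at least one position that held $a$ in $\bias(\th)$; then $\inp$ after this phase is some $\bias(\th'')$ with $\th''<1$.

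Finally, I start the next phase from $\bias(\th'')$ with $0\le\th''<1$: the first round can produce $\soloa$---directly by the $\ls$-transition rule when the first round is $\ls$, and by broadcasting the whole tuple to every process together with Lemma~\ref{lem:co-bias-fire} (equivalently Lemma~\ref{lem:a-if-min}) when the first round has a $\mult$ instruction with $\min$---and $\soloa$ then propagates unchanged through the phase, since $\uni$ instructions are present everywhere and $\lr$/$\ls$ rounds only narrow the domain towards $\set{a,?}$; so all the undecided processes decide $a$ in the last round. This exhibits a reachable state in which one process has decided $b$ and the others have decided $a$, contradicting agreement.

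The step I expect to be the main obstacle is the case split on the type of the first round: an $\ls$ first round cannot emit a tuple carrying both $a$ and $b$ at once, which breaks the convenient ``reach $\bias(\th')$ after round $\fo$'' route and forces carrying $\set{?,b}$-tuples through the entire first phase while still guaranteeing that (i) an $a$ remains in $\inp$ after round $\fo$ and (ii) the subsequent phase can regenerate $\soloa$. A secondary, purely bookkeeping point is that Lemmas~\ref{lem:co-bias-propagation} and~\ref{lem:any-qbias-from-qbias} fix the frequency at which a block of $\ls$/$\lr$/$\every$ rounds has to be entered; since the preceding frequency lemmas provide $\biasq(\th')$ for an arbitrary $\th'$, one only has to match that entry frequency, but this coordination has to be spelled out.
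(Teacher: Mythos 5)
Your proposal is correct and follows essentially the same route as the paper's proof: use Lemma~\ref{lem:co-bias-fire} (plus Lemmas~\ref{lem:co-any-frequency}, \ref{lem:co-any-bias}, \ref{lem:co-bias-propagation} and \ref{lem:co-no-mult-fo}, all applicable since $\gp$ has no c-equalizer) to build one phase in which a single process decides $b$ while $\inp$ retains an $a$, and a second phase that produces $\soloa$ after the first round and makes the remaining processes decide $a$. The extra case split on an $\ls$ first round that you flag as the main obstacle is already absorbed by the paper into Lemmas~\ref{lem:co-bias-fire} and~\ref{lem:co-any-bias} (whose proof routes through $\biasq$-tuples exactly as you describe), so it is the same argument, just spelled out more explicitly.
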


\begin{proof}
	Suppose that indeed in the first round we have a $\mult$ instruction with
	$\min$ operation or the first round is of type $\ls$. 
	We execute the phase under the  global predicate $\gp$.
	By Lemma~\ref{lem:co-bias-fire} we have $\set{a,b}\incl
	\fire_1(\bias(\th),\gp)$, for some sufficiently big $\th$.  
	Consider $\th_\fop=\max(\thru^\fop,\thr_\fop(\gp))+\e$
	for some small $\e$.
	Thanks to our proviso, the global predicate does not have a c-equalizer, 
	hence we can freely apply Lemmas~\ref{lem:co-any-frequency} 
	and~\ref{lem:co-any-bias} to get $\bias(\th_\fop)$ or
	$\biasq(\th_\fop)$ after round 	$\fo$.
	By Lemma~\ref{lem:co-no-mult-fo}, there is no $\mult$ instruction in round
	$\fop$.
	Hence $\set{b,?}\in\fire_\fop(\bias(\th_\fop),\gp)$.
	We can apply Lemma~\ref{lem:co-bias-propagation} to set $\dec$ of 
	one process to $b$ in this phase and leave the other processes undecided.
	Moreover, in the round $\fo$ the variable $\inp$ is set to $\bias(\max(\th,\th_{\fo+1}))$.
	
	In the next phase, 	Lemma~\ref{lem:co-bias-fire} says that  $\set{a,b}\in
	\fire_1(\bias(\max(\th,\th_{\fo+1})),\gp)$. 
	We can	get $\solo^a$ as the tuple after the first round under global predicate, hence we can set some $\dec$ to $a$.
\end{proof}

\begin{lemma}
	If the first round does not have a $\mult$ instruction then the algorithm does not terminate.
\end{lemma}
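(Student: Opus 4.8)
The plan is to reduce the claim to the self‑loop argument that already underlies Lemma~\ref{lem:co-weak-mult}. First I would note that, by Lemma~\ref{lem:co-ls}, an algorithm whose first round is of type $\ls$ cannot satisfy agreement; following the same convention as in the previous sections we may therefore assume from now on that the first round is of type $\every$ or $\lr$. Hence the only situation left to treat is the one not covered by Lemma~\ref{lem:co-weak-mult}: the first round is $\every$ or $\lr$ and contains no $\mult$ instruction.

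In this situation I would show that $(\spread,\soloq)\act{\p}(\spread,\soloq)$ holds for every phase predicate $\p$ occurring in the communication predicate. The key point is that, since $\mset(\spread)$ contains both $a$ and $b$ and the size constraint $|\ho|>\thr_1(\p)\cdot|\Pi|$ can always be met (we only need $\thr_1(\p)<1$), one can hand to a process a multiset $\ho$ that contains both values and is large enough to satisfy $\p\dar_1$; as there is no $\mult$ instruction in round $1$, no conditional of that round is then satisfied, so $\update_1(\ho)=?$. Consequently $\spread\lact{\p_1}_1\soloq$ when round $1$ is of type $\every$, and $\spread\lact{\p_1}_1\oneq=\soloq$ when it is of type $\lr$ (using that $?\in\fire_1(\spread,\p)$). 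From $\soloq$ every later round, of any type, produces $\soloq$ again, exactly as in the earlier lemmas; thus after round $\fo$ the tuple is $\soloq$, so no $\inp$ value is updated, and after the last round it is still $\soloq$, so no $\dec$ value is set. Hence the phase transition fixes $(\spread,\soloq)$.

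Finally I would assemble a witness to non‑termination: since the self‑loop above is valid for $\gp$ and for every $\gp\land\p^i$, the run $(\spread,\soloq)\act{\gp\land\p^1}(\spread,\soloq)\act{\gp\land\p^2}\cdots\act{\gp\land\p^k}(\spread,\soloq)\act{\gp}^{\w}\cdots$ respects the communication predicate and no process ever sets its $\dec$ variable, contradicting termination. I expect no genuine obstacle here; the only place needing a line of care is the $\lr$ subcase, where one records that the single active process outputs $?$ so that the post‑round tuple $\oneq$ is just $\soloq$ — everything else is identical to the proof of Lemma~\ref{lem:co-weak-mult}.
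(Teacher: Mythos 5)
Your proposal is correct and follows essentially the same route as the paper: by Lemma~\ref{lem:co-ls} (and the standing convention) the first round is not of type $\ls$, and then from $\spread$ a full multiset containing both values satisfies $\p\dar_1$ but no $\uni$ conditional, so $\spread\lact{\p_1}_1\soloq$ (covering the $\lr$ subcase via $?\in\fire_1(\spread,\p)$), after which $\soloq$ propagates and yields the non-terminating self-loop $(\spread,\soloq)\act{\p}(\spread,\soloq)$. The paper states this in one line; your write-up just supplies the routine details of the multiset construction and the witness execution.
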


\begin{proof}
	Since the first round does not have type $\ls$, if there are no $\mult$ 
	instructions in the first round, then we have $\spread \lact{\f}_1 
	\solo_?$ for any predicate $\f$.
\end{proof}

\begin{lemma} \label{lem:co-ls-fo}
	If round $\fo+1$ is of type $\ls$, then the algorithm does not solve consensus.
\end{lemma}

\begin{proof}
	Suppose round $\fo+1$ is of type $\ls$. 
	We consider an execution of a phase under the
	global predicate and so we can freely use Lemmas~\ref{lem:co-any-frequency}
	and~\ref{lem:co-any-bias}. 
	We have seen in Lemma~\ref{lem:co-ls} that in the first round all the
	$\mult$ instructions must be $\smor$. We start with $\spread$. 
	We can then use Lemmas~\ref{lem:co-any-frequency} 
	and~\ref{lem:co-any-bias}
	to get  $\spread$ or $\spread^?$ after round $\fo$.
	In either case, because round $\fo+1$ is of type $\ls$
	and the global predicate does not have c-equalizers,
	it follows that we can get $\biasq(\th)$ for arbitrary $\th$
	after round $\fo+1$. Applying Lemma~\ref{lem:co-bias-propagation}
	we can make one process decide on $b$ and prevent the
	other processes from deciding.
	
	Notice that the state of $\inp$ in the next phase is still $\spread$.
	By Lemma~\ref{lem:co-spread} we have that $a \in \fire_1(\spread,\p)$.
	Hence we can get $\solo^a$ after the first round and use this 
	to make the undecided processes decide on $a$.
\end{proof}

\begin{lemma} \label{lem:co-constants}
	If the property of the constants is not satisfied, then the algorithm does not solve consensus.
\end{lemma}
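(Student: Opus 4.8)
The plan is to follow the proof of Lemma~\ref{lem:constants} almost verbatim, substituting each core-language lemma by its coordinator counterpart. By this point we may already assume — using Lemmas~\ref{lem:c-no-uni}, \ref{lem:co-ls}, \ref{lem:co-no-mult-fo} and~\ref{lem:co-ls-fo}, together with the observation that the first round must carry a $\mult$ instruction — that every round has a $\uni$ instruction, the first round is of type $\every$ with a $\mult$ instruction all of whose operations are $\smor$, and round $\fo+1$ is not of type $\ls$ and has no $\mult$ instruction. Suppose the constant condition of Definition~\ref{def:structure} fails, i.e.\ $\thrmok/2<1-\thru^{\fo+1}$ or $\thruo<1-\thru^{\fo+1}$. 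I would fix a large number of processes and build a two-phase execution in which one process decides $b$ during the first phase while all other processes decide $a$ during the second, contradicting agreement.

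Both phases run under the global predicate $\gp$, which by Proviso~\ref{c-proviso} has no c-equalizer, so Lemmas~\ref{lem:co-any-frequency}, \ref{lem:co-any-bias} and~\ref{lem:co-bias-propagation} are available, and $\set{a,b}\incl\fire_1(\spread,\gp)$ by Lemma~\ref{lem:co-spread}. Starting the first phase from $(\spread,\soloq)$, I distinguish whether some round among $2,\dots,\fo$ is c-preserving. If none is, then since $\gp$ carries no c-equalizer no such round is of type $\ls$ (an $\ls$ round is c-preserving under $\gp$) nor of type $\lr$ (it would be followed by such an $\ls$ round, or by a round $\fo$ which is not of type $\lr$), so they are all of type $\every$ with a $\mult$ instruction; Lemma~\ref{lem:co-any-frequency} reaches $\bias(\th)$ for every $\th$ just before round $\fo+1$, and moreover $\thru^{\fo+1}\ge\thr_{\fo+1}(\gp)$ by Assumption~\ref{co-assumption} (rounds $1,\dots,\fo$ being non-c-preserving), whence $\set{b,?}\incl\fire_{\fo+1}(\bias(\th),\gp)$ for $\th=\thru^{\fo+1}+\e$. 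If some round among $2,\dots,\fo$ is c-preserving, let $j$ be the first one; by the same c-equalizer argument the rounds below $j$ are of type $\every$ with a $\mult$ instruction and $j$ is of type $\every$ or $\ls$, so, arguing as in the second case of Lemma~\ref{lem:constants} (via Lemma~\ref{lem:co-any-bias}, or directly as in the proofs of Lemma~\ref{lem:bias-preserving} and Lemma~\ref{lem:co-bias-propagation}), we reach $\biasq(\th)$ for every $\th$ just before round $\fo+1$ and get $\set{b,?}\incl\fire_{\fo+1}(\biasq(\th),\gp)$: sending the whole tuple fires $b$, while an $\ho$-set consisting of a $\thru^{\fo+1}$ fraction of $b$'s padded with $?$'s to size just below $1$ fires nothing yet still exceeds $\thr_{\fo+1}(\gp)<1$. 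In either case, fixing $\th=\thru^{\fo+1}+\e$ and invoking Lemma~\ref{lem:co-bias-propagation} on the rounds after round $\fo+1$ (which under $\gp$ contain no c-equalizer, round $\fo+1$ not being of type $\ls$ and the last round not of type $\lr$), I make exactly one process decide $b$ and leave the rest undecided, arranging the updates so that at the end of the phase $\inp$ equals $\bias(\thru^{\fo+1}+\e)$, or equals $\spread$ when $\thru^{\fo+1}<1/2$.

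For the second phase I again run under $\gp$. If $\inp=\spread$, then $a\in\fire_1(\spread,\gp)$ by Lemma~\ref{lem:co-spread}, and since every round preserves a solo tuple (each round has a $\uni$ instruction, and an $\lr$ round is immediately followed by an $\ls$ round that can be forced to output the same solo tuple) I reach $\soloa$ after the first round and drive every undecided process to decide $a$. If $\inp=\bias(\thru^{\fo+1}+\e)$, the failed constant supplies an admissible first-round $\ho$-set producing $a$: when $\thrmok/2<1-\thru^{\fo+1}$, take all the $a$'s of $\bias(\thru^{\fo+1}+\e)$ together with a slightly smaller number of $b$'s, of size $\approx 2(1-\thru^{\fo+1})>\thrmok\ge\thr_1(\gp)$, on which $\smor$ returns $a$ (here we use that the first round's $\mult$ operations are $\smor$); when $\thruo<1-\thru^{\fo+1}$, take just the $a$'s, of size $\approx 1-\thru^{\fo+1}>\thruo\ge\thr_1(\gp)$, on which the $\uni$ instruction returns $a$. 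Either way we obtain $\soloa$ after the first round and make the undecided processes decide $a$, contradicting agreement.

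I expect the main obstacle to be the coordinator bookkeeping underlying the high-level claims above: checking that the $\lr$/$\ls$ rounds interleaved among the rounds after $\fo+1$ are covered by Lemma~\ref{lem:co-bias-propagation} (this is where Assumption~\ref{assumption-ls-lr} together with Lemma~\ref{lem:co-ls-fo}, giving that round $\fo+1$ is not of type $\ls$ and the last round not of type $\lr$, are needed), that an $\ls$ round among $2,\dots,\fo$ is genuinely c-preserving under $\gp$ (because $\f_{\ls}$ is a c-equalizer and $\gp$ has none, forcing the first c-preserving round there to be of type $\every$ or $\ls$), and that a solo tuple survives every c-solo-safe round, including $\ls$ rounds. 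None of these points is conceptually new relative to Lemma~\ref{lem:constants}; they are simply the cost of carrying the coordinator syntax through the argument.
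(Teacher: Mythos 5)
Your overall route is the paper's own: run two phases under $\gp$, split on whether a c-preserving round occurs before round $\fo+1$, reach $\bias(\th_{\fo+1})$ or $\biasq(\th_{\fo+1})$ with $\th_{\fo+1}=\thru^{\fo+1}+\e$, fire $\set{b,?}$ in round $\fo+1$, use Lemma~\ref{lem:co-bias-propagation} to let one process decide $b$, and then exploit the violated constant in the next phase to fire $a$ from the leftover $a$'s. However, there is a genuine gap in your case split: you split on whether some round among $2,\dots,\fo$ is c-preserving, whereas the paper splits on rounds $1,\dots,\fo$. Round $1$ itself can be c-preserving under $\gp$ (it has both a $\uni$ and a $\mult$ instruction, and Assumption~\ref{co-assumption} only forces $\thru^1,\thrm^{1,k}\ge\thr_1(\gp)$, not equality, so $\thr_1(\gp)<\max(\thru^1,\thrm^{1,k})$ is possible). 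In that situation your first branch applies, but its key step --- ``$\thru^{\fo+1}\ge\thr_{\fo+1}(\gp)$ by Assumption~\ref{co-assumption}, rounds $1,\dots,\fo$ being non-c-preserving'' --- is unjustified, since the hypothesis of the assumption fails for $i=\fo+1$. Concretely, with $\thr_1(\gp)<\thru^1$ and $\thr_{\fo+1}(\gp)>\thru^{\fo+1}$, from a $?$-free tuple $\bias(\th)$ you can only fire $b$ in round $\fo+1$ if $\th>\thr_{\fo+1}(\gp)$, and then the surviving fraction of $a$'s is below $1-\thr_{\fo+1}(\gp)$, which the failed inequalities (stated in terms of $1-\thru^{\fo+1}$) no longer dominate; so the second phase need not produce an $a$-decision. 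The paper avoids this precisely by letting $j=1$ be the first c-preserving round, producing $?$'s already in round $1$ and arriving at $\biasq(\th_{\fo+1})$, from which $b$ is fired by sending the \emph{entire} tuple (the $?$'s count toward the communication-predicate size but not toward the $\uni$ condition), so no bound on $\thr_{\fo+1}(\gp)$ is needed. Your own second branch contains exactly this machinery; you just need to let it start at $j=1$.

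A smaller inaccuracy: in the second branch you assert that the first c-preserving round $j$ is of type $\every$ or $\ls$. A c-preserving round of type $\lr$ is also possible (being $\lr$ does not by itself make a round non-c-preserving), and Lemma~\ref{lem:co-any-bias} does not apply directly since a c-preserving round may still carry a $\mult$ instruction. The paper handles this case explicitly: reach $\bias(\thru^j+\e)$ before round $j$, note that round $j+1$ is then of type $\ls$ and $\gp\dar_{j+1}$ has no $\f_{\ls}$, hence arbitrary $\biasq(\th)$ is reachable after round $j+1$, and then propagate with Lemma~\ref{lem:co-bias-propagation}. With these two repairs your argument coincides with the paper's proof; the second-phase analysis (the two $\ho$-set constructions exploiting $\thrmok/2<1-\thru^{\fo+1}$ or $\thruo<1-\thru^{\fo+1}$) is correct as written.
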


\begin{proof}
	We consider an execution of a phase under the
	global predicate and so we can freely use Lemmas~\ref{lem:co-any-frequency}
	and~\ref{lem:co-any-bias}. 
	We have seen in Lemma~\ref{lem:co-ls} that in the first round all the
	$\mult$ instructions must be $\smor$. We start with $\spread$. 
	
	We have two cases, that resemble those of Lemma~\ref{lem:constants}.
	
	The first case is when all the rounds $1,\dots,\fo$ are not-c-preserving.
	Since we consider global predicate, there are not $c$-equalizers, so none of
	these rounds in an $\ls$ round. 
	This implies that all these rounds have a $\mult$ instruction. 
	We take $\th_\fop=\thru^{\fo+1}+\e$.
	From Lemma~\ref{lem:co-any-frequency} we can get $\bias(\th_\fop)$ as a tuple
	before the round $\fo+1$.
	By observation~\eqref{eq:syntactic-property}, we have
	$\thru^{\fo+1}\geq\thr_{\fo+1}(\gp)$, so $b\in\fire_{\fop}(\bias(\th_\fop),\gp)$. 
	We also have $?\in\fire_{\fop}(\bias(\th_\fop),\gp)$, because
	round $\fo+1$ does not have any $\mult$ instructions.
	Then we can apply Lemma~\ref{lem:co-bias-propagation} to set $\dec$ of 
	one process to $b$ in this round and leave the other processes undecided.

	The second case, is when there is a preserving round among $1,\dots,\fo$. 
	Let $j\leq \fo$ be the first such round.
	Since, all rounds before $j$ are not-c-preserving, by
	Lemma~\ref{lem:co-any-frequency}, we can get $\bias(\thru^j+\e)$ before
	round $j$. 
	Since $j$ is preserving, it is either of type $\ls$,  or has 
	no $\mult$ instructions or
	$\thr_j(\gp)<\max(\thru^j,\thr^{j,k}_m)$. 
	In the first case, since the global predicate does not have c-equalizers, we
	have $\set{b,?}\incl\fire_j(\bias(\thru^j+\e),\gp)$. 
	In the other cases the type of round $j$ can be $\every$ or $\lr$. 
	For type $\every$ we also get $\set{b,?}\incl\fire_j(\bias(\thru^j+\e),\gp)$. 
	For type $\lr$, we have that the round $j+1$ is $\ls$. 
	Since we have assumed that $\gp$ does not have a c-equalizer, $\gp_{j+1}$ does
	not have $\f_\ls$. 
	So we can get $\biasq(\th)$ for arbitrary $\th$ after round $j+1$.
	After all these cases we can use Lemma~\ref{lem:co-bias-propagation} to get
	$\biasq(\th_\fop)$ before round $\fop$; where as before $\th_\fop=\thru^\fop+\e$.
	As in the first case, we employ Lemma~\ref{lem:co-bias-propagation} to  make
	some process decide on $b$ and leave other processes undecided. 

	In both cases we can arrange the execution so that the state of
	$\inp$ after this phase is $(\bias(\th_\fop),\spreadq)$ or
	$(\spread,\spreadq)$. 
	The same argument as in Lemma~\ref{lem:constants} shows that some process can
	decide on $a$ in the next phase. 
\end{proof}

\begin{lemma} \label{lem:co-suff}
	If all the structural properties are satisfied then the algorithm satisfies agreement.
\end{lemma}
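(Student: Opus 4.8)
The plan is to mirror the proof of Lemma~\ref{lem:agreement} for the core language, inserting the observations needed to handle $\lr$ and $\ls$ rounds. Agreement is immediate when the $\inp$ tuple is $\solo$ or $\soloa$, so I would take an execution $(\bias(\theta),d)\act{\gp}\cdots\act{\gp}(\bias(\theta'),d')$ in which $(\bias(\theta'),d')$ is the first reachable state where some process has decided, say on $a$; a first decision on $b$ is handled symmetrically, using that every first-round $\mult$ operation is $\smor$ and $a<b$.

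First I would show that in the phase producing $(\bias(\theta'),d')$ only $a$ is decided, and moreover $\theta'<1-\thru^{\fo+1}$. Recall that round $\fo+1$ is not of type $\ls$ (Lemma~\ref{lem:co-ls-fo} and the standing convention), has no $\mult$ instruction (Lemma~\ref{lem:co-no-mult-fo}), and has exactly one $\uni$ instruction (Lemma~\ref{lem:c-no-uni}), whose threshold satisfies $\thru^{\fo+1}>1/2$ since $\thrmok/2\ge 1-\thru^{\fo+1}$ and $\thrmok<1$. Hence at most one value besides $?$ occurs in $f_{\fo+1}$, and if a value $v$ is fired at round $\fo+1$ then more than $\thru^{\fo+1}\cdot n$ entries of $f_{\fo}$ equal $v$ — this also covers the case where round $\fo+1$ has type $\lr$, the unique receiver then reading a sub-multiset of $\mset(f_{\fo})$. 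Since no round can fire a value it does not receive, since under $\gp$ an $\ls$ round broadcasts only a value already present in its input (the proviso forbids $\f_=$ and $\f_{\ls}$ in $\gp$) and an $\lr$ round outputs only a value computed from a sub-multiset of its input, the value $v$ is the only one that reaches the last round; thus $\dec$ is set from $f_r$ only to $v=a$, and because the new $\inp$ agrees with $f_{\fo}$ wherever $f_{\fo}\ne ?$, the $b$-fraction $\theta'$ of the resulting $\inp$ is bounded by the fraction of entries of $f_{\fo}$ different from $a$, i.e.\ $\theta'<1-\thru^{\fo+1}$. (If round $\fo$ itself has type $\ls$ then $f_{\fo}$ is a $\solo$-tuple and the bound is trivial; round $\fo$ is never of type $\lr$ by Assumption~\ref{assumption-ls-lr}.)

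Next I would conclude using the constant inequalities of Definition~\ref{def:structure}. From $\theta'<1-\thru^{\fo+1}$ together with $\thruo\ge 1-\thru^{\fo+1}$ and $\thrmok/2\ge 1-\thru^{\fo+1}$ we get $\theta'<\thruo$ and $2\theta'<\thrmok$. Then, exactly as in Lemmas~\ref{lem:fire-one} and~\ref{lem:agreement}, $b$ cannot be fired in the first round from $\bias(\theta')$: not by the $\uni$ instruction, since a uniform multiset of $b$'s has size at most $\theta'\cdot n<\thruo\cdot n$; and not by any $\mult$ instruction, since its operation is $\smor$ and $a<b$, so such a multiset would have strictly more $b$'s than $a$'s and hence size at most $2\theta'\cdot n<\thrmok\cdot n$. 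By the same ``no unreceived value'' propagation through $\every$, $\lr$ and $\ls$ rounds, no process holds $b$ after round $1$, so $f_{\fo}$ contains no $b$ and the $b$-fraction of $\inp$ does not increase in this phase. The inequality $\theta'<1-\thru^{\fo+1}\le\thruo$ is therefore preserved forever, so $b$ is never again fired in a first round and no process ever decides $b$; with the first-decision assumption this makes all decisions equal to $a$, which is agreement. Interchanging $a$ and $b$ handles the symmetric case.

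The main obstacle I expect is the bookkeeping over round types: one must verify, using that $\gp$ contains no c-equalizer, that an $\ls$ round propagates only values already present in its input, that an $\lr$ round outputs only a value read from a sub-multiset of its input, and that $\fo$ being non-$\lr$ forces $f_{\fo}$ to be either a full $\set{a,b}$-tuple or a $\solo$-tuple, so that the threshold counting above is valid. With those observations the counting itself is identical to that of the core case.
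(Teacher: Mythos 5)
Your proposal is correct and follows essentially the same route as the paper's proof: take the first state where a process decides, use the absence of $\mult$ in round $\fo+1$, the fact that rounds $1$ and $\fo+1$ are not of type $\ls$, and $\thru^{\fo+1}\ge 1/2$ to bound the opposite-value fraction by $1-\thru^{\fo+1}$, and then use the constant inequalities of Definition~\ref{def:structure} to show the losing value can never again be fired in the first round. The extra bookkeeping you do for $\lr$/$\ls$ rounds (and the symmetric case) is exactly what the paper leaves implicit, and the minor imprecision about $f_{\fo}$ after an $\ls$ round (it may contain $?$'s rather than being a full $\solo$ tuple) does not affect the threshold counting.
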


\begin{proof}
	It is clear that the algorithm satisfies agreement when the initial frequency is either $\solo$ or $\solo^a$.  Suppose $(\bias(\theta),d) \act{\p^*} (\bias(\theta'),d')$ such that for the first time in this transition sequence, some process has decided (say the process has decided on $a$). Since $\thrm^{1,k}/2 \geq 1-\thru^{\fo+1}$ we have that $\thru^{\fo+1} \ge 1/2$.
	Further, since round $\fo+1$ does not have any $\mult$ instructions (Lemma~\ref{lem:co-no-mult-fo}), it follows that every other process could only decide on $a$ or not decide at all. 
	Further notice that since $a$ was decided by some process and since
	round $\fo+1$ is not of type $\ls$ (Lemma~\ref{lem:co-ls-fo}), 
	it has to be the case that at least $\thru^{\fo+1}$ processes have $a$ as their $\inp$ value. Hence $\theta' < 1 - \thru^{\fo+1}$.
	
	Recall that the first round is not of type $\ls$ (Lemma~\ref{lem:co-ls}).
	Since $\theta' < 1 - \thru^{\fo+1} \le \thru^1$, it follows that $b$ cannot be fired from $\bias(\theta')$ using the $\uni$ instruction in the first round.
	Since $\theta' < 1 - \thru^{\fo+1} \le \thrmok/2$ and since every $\mult$ instruction in the first round has $\smor$ as its operator, it follows that
	$b$ cannot be fired from $\bias(\theta')$ using the $\mult$ instruction as well. 
	Hence the number of $b$'s can only decrease from this point onwards, and so it follows that no process from this point onwards can decide on $b$.
\end{proof}

%\balain[Modified till here.]

\subsubsection*{Part 2: termination for coordinators}

\begin{lemma}\label{lem:co-fire-one}
	For the global predicate $\gp$: $a\in\fire_1(\bias(\th),\gp)$ iff $\th<\bthr$. 
	(Similarly $b\in\fire_1(\bias(\th),\gp)$ iff $1-\bthr<\th$).
\end{lemma}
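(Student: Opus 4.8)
The plan is to reproduce the proof of Lemma~\ref{lem:fire-one} for the core language; the only genuinely new point is to check that the coordinator rounds do not interfere, and they do not, since $\fire_1$ is defined through the round-$1$ transition of the core semantics and is therefore insensitive to the type of the first round. Recall that, by the structural properties established so far (and assumed from now on by our convention), every round has a $\uni$ instruction (Lemma~\ref{lem:c-no-uni}), the first round is not of type $\ls$ and has a $\mult$ instruction all of whose operations are $\smor$ (Lemmas~\ref{lem:co-weak-mult} and~\ref{lem:co-ls}); by Proviso~\ref{c-proviso} the global predicate has no c-equalizer, so $\gp_1$ is just $\f_{\thr_1(\gp)}$; and by \eqref{eq:c-syntactic-property} we have $\thruo\ge\thr_1(\gp)$ and $\thrmok\ge\thr_1(\gp)$.

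First I would describe the heard-of multisets $\ho\incl\mset(\bias(\th))$ that satisfy $\gp_1$ and give $\update_1(\ho)=a$. Since no $\mult$ instruction of the first round uses $\min$, there are exactly two ways to obtain $a$: either $\ho$ satisfies the $\uni$ instruction, which forces $\ho$ to consist only of $a$'s and to have size $>\thruo\cdot n$ (this also ensures $\ho\sat\gp_1$, as $\thruo\ge\thr_1(\gp)$), or $\ho$ satisfies the weakest $\mult$ instruction, i.e.\ has size $>\thrmok\cdot n$ (again automatically $\ho\sat\gp_1$), and $\smor(\ho)=a$, i.e.\ $\ho$ contains at least as many $a$'s as $b$'s. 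Now $\bias(\th)$ is the tuple over $\set{a,b}$ with a $1-\th$ fraction of $a$'s, so a multiset of the first kind exists iff $1-\th>\thruo$, and a multiset of the second kind exists iff some sub-multiset of $\bias(\th)$ with at least as many $a$'s as $b$'s can have size $>\thrmok\cdot n$, which, taking all the available $a$'s together with the same number of $b$'s, holds iff $2(1-\th)>\thrmok$. Hence $a\in\fire_1(\bias(\th),\gp)$ iff $\th<1-\thruo$ or $\th<1-\thrmok/2$, that is iff $\th<\max(1-\thruo,1-\thrmok/2)=\bthr$.

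The argument for $b$ is symmetric, with the usual asymmetry: since $a<b$, $\smor(\ho)=b$ only when $\ho$ has strictly more $b$'s than $a$'s, and $\min$ never returns $b$ on a multiset containing an $a$. So $\update_1(\ho)=b$ requires either $\ho$ consisting only of $b$'s of size $>\thruo\cdot n$, which is possible iff $\th>\thruo$, or $\ho$ of size $>\thrmok\cdot n$ with strictly more $b$'s than $a$'s, which is possible iff $2\th>\thrmok$. Since $1-\bthr=\min(\thruo,\thrmok/2)$, the disjunction ``$\th>\thruo$ or $\th>\thrmok/2$'' is equivalent to $\th>1-\bthr$, giving $b\in\fire_1(\bias(\th),\gp)$ iff $1-\bthr<\th$. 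I expect no real obstacle: the statement is a routine recomputation of the core-language Lemma~\ref{lem:fire-one}, and the only points requiring care are keeping track of exactly which structural properties have already been secured and observing that the type of round~$1$ plays no role in the definition of $\fire_1$.
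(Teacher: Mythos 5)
Your proposal is correct and follows essentially the same route as the paper: the paper's proof simply notes that by Lemma~\ref{lem:co-ls} the first round is not of type $\ls$ and then invokes the core-language argument of Lemma~\ref{lem:fire-one}, which is exactly the computation (uni instruction vs.\ weakest $\smor$-$\mult$ instruction, using $\thruo,\thrmok\ge\thr_1(\gp)$) that you spell out. Your explicit case analysis for $b$ is just the "similar" half the paper leaves implicit, so there is nothing to add.
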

\begin{proof}
	Since the first round cannot be a $\ls$ round (Lemma~\ref{lem:co-ls}), the
	proof of this lemma 
	is the same as that of Lemma~\ref{lem:fire-one}.
\end{proof}

\begin{corollary}\label{cor:co-no-a-above-bthr}
	For every predicate $\p$, if $\th\geq\bthr$ then $a\not\in\fire_1(\bias(\th),\p)$.
	Similarly if $\th\leq 1-\bthr$ then $b \not \in\fire_1(\bias(\th),\p)$.
\end{corollary}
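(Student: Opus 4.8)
The plan is to reduce the statement to Lemma~\ref{lem:co-fire-one} in exactly the way Corollary~\ref{cor:no-a-above-bthr} was obtained from Lemma~\ref{lem:fire-one} in the core case. First I would recall the standing assumption (made when fixing the communication predicate) that every sporadic predicate implies the global predicate $\gp$; consequently any tuple of heard-of multisets $(\ho_1,\dots,\ho_n)$ that satisfies $\p\dar_1$ also satisfies $\gp\dar_1$. By Lemma~\ref{lem:co-ls} we may assume the first round is not of type $\ls$, so the transition relation $\lact{\cdot}_1$ for round $1$ is the ordinary one, and strengthening the round-$1$ predicate can only discard admissible heard-of tuples. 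Hence $\fire_1(\bias(\th),\p)\subseteq\fire_1(\bias(\th),\gp)$ for every predicate $\p$.

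Given this inclusion, the two claims follow immediately from Lemma~\ref{lem:co-fire-one}: if $\th\geq\bthr$ then $a\notin\fire_1(\bias(\th),\gp)$, hence $a\notin\fire_1(\bias(\th),\p)$; and symmetrically, if $\th\leq 1-\bthr$ then $b\notin\fire_1(\bias(\th),\gp)$ by the parenthetical half of Lemma~\ref{lem:co-fire-one}, hence $b\notin\fire_1(\bias(\th),\p)$.

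There is no genuine obstacle here: the whole content is the monotonicity of $\fire_1$ under strengthening of the round predicate together with Lemma~\ref{lem:co-fire-one}. The only point that deserves a line of justification is that the first round is of type $\every$ (so that the monotonicity argument applies verbatim), which is exactly what Lemma~\ref{lem:co-ls} gives us under the running convention that conditions already shown necessary for consensus are henceforth assumed.
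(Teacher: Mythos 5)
Your proof is correct and follows essentially the same route as the paper: the standing assumption that every predicate implies the global one gives $\fire_1(\bias(\th),\p)\subseteq\fire_1(\bias(\th),\gp)$, and Lemma~\ref{lem:co-fire-one} (which itself invokes Lemma~\ref{lem:co-ls} to rule out an $\ls$ first round) then yields both claims, exactly as in the core-language Corollary~\ref{cor:no-a-above-bthr}. One small imprecision: the first round need not be of type $\every$ — it may be of type $\lr$ (as in Paxos) — but since $\fire_1$ for an $\lr$ round is determined by the same admissible heard-of multisets as for $\every$, excluding only type $\ls$, which Lemma~\ref{lem:co-ls} provides, is all that your monotonicity argument actually requires.
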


\begin{lemma}\label{lem:co-unifier}
	Suppose $\p$ is a unifier and  $\bias(\th)\act{\p}f$.
	If $\thruo\leq \thrmok$ or $1-\bthr\le\th\le\bthr$ 
	then $f=\solo$ or $f=\solo^a$.
\end{lemma}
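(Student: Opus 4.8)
The plan is to follow the proof of Lemma~\ref{lem:unifier} almost verbatim, adding the bookkeeping forced by the two coordinator-specific features: the c-equalizer witnessing that $\p$ is a c-unifier may be either $\f_=$ or $\f_{\ls}$, and the rounds $2,\dots,i$ of a c-unifier may contain $\lr$/$\ls$ pairs. First I would show that the value $?$ cannot be produced in the first round. By the structural facts already available (Lemma~\ref{lem:co-ls} and Lemma~\ref{lem:co-weak-mult}) the first round is of type $\every$, has a $\mult$ instruction, and all its $\mult$ instructions use $\smor$. Since $\p$ is a c-unifier, $\thr_1(\p)\ge\thrmok$. If moreover $\thr_1(\p)\ge\thruo$, then every heard-of multiset admissible for $\p\dar_1$ satisfies the $\uni$ condition or a $\mult$ condition of the first round, hence produces a value in $\set{a,b}$. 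Otherwise $\thr_1(\p)<\thruo$, so by the definition of c-unifier $\thr_1(\p)\ge\bthr$; under the hypothesis $1-\bthr\le\th\le\bthr$ every multiset of size larger than $\bthr\cdot n$ built from $\bias(\th)$ contains both $a$ and $b$, hence fails $\uni$ but satisfies a $\mult$ condition (its size exceeds $\thrmok$), so it still produces $a$ or $b$. In all cases the tuple after round $1$ ranges over $\set{a,b}$.

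Next I would let $i$ be the index from the definition of c-unifier, so $\p\dar_i$ is a c-equalizer, rounds $2,\dots,i$ are non-c-preserving, and rounds $i+1,\dots,\fo$ are c-solo-safe, and I would push the invariant ``the tuple contains no $?$'' through rounds $2,\dots,i$. The auxiliary observation that makes this work is that an $\ls$ round is non-c-preserving only when its predicate contains $\f_{\ls}$, and every $\lr$ round is immediately followed by an $\ls$ round; hence among rounds $2,\dots,i$ every $\lr$ round $j$ is followed by an $\ls$ round $j+1\le i$ whose predicate contains $\f_{\ls}$. For such a pair, non-c-preservation of round $j$ (via the coordinator counterpart of Lemma~\ref{lem:preserving}) forces the single active process to set a value in $\set{a,b}$, producing $\oned$ with $d\in\set{a,b}$, and then round $j+1$ with $\f_{\ls}$ broadcasts $d$ to everybody, yielding $\solo$ or $\soloa$. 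A non-c-preserving round of type $\every$ cannot turn a $?$-free tuple into one containing $?$ (again by the counterpart of Lemma~\ref{lem:preserving}). Composing these steps, the tuple is $?$-free just before round $i$. Finally, round $i$ is a c-equalizer: if $\p\dar_i$ contains $\f_=$, all processes receive the same multiset and perform the same update on a $?$-free tuple, so the result is $\solo$ or $\soloa$; if $\p\dar_i$ contains $\f_{\ls}$ (so round $i$ is of type $\ls$), the value broadcast by the leader comes either from the preceding $\lr$ round or, if there is none, from $\dom(f)\subseteq\set{a,b}$, so again the result is $\solo$ or $\soloa$. This value then survives rounds $i+1,\dots,\fo$ because these are c-solo-safe, whence $f=\solo$ or $f=\soloa$.

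The step I expect to be the main obstacle is exactly this round-type bookkeeping: an $\lr$ round, by its very semantics, produces a tuple with $?$ entries, so the naive ``no $?$'' invariant of the core proof breaks, and one must instead argue that each $\lr$ round inside $2,\dots,i$ is paired with the ensuing $\f_{\ls}$ $\ls$ round and that this pair restores a $\solo$-type tuple, while simultaneously handling the $\f_=$ and $\f_{\ls}$ cases of the c-equalizer step uniformly. The conditions-on-constants part (first bullet of the c-unifier definition) and the final c-solo-safe propagation are routine and identical to the core case. For this reason I would isolate the $\lr$/$\ls$ pairing observation as a small sub-claim before running the induction on rounds $2,\dots,i$.
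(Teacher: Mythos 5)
Your overall route is the paper's own: the published proof of this lemma is literally one line (``the argument is the same as in Lemma~\ref{lem:unifier}, as the first round cannot be of type $\ls$''), so replaying the core proof and adding $\lr$/$\ls$ bookkeeping is exactly the intended argument, and your observation that an $\lr$ round breaks the naive ``no $?$'' invariant, to be repaired by the ensuing $\f_{\ls}$-equipped $\ls$ round, is precisely the detail the paper leaves implicit.

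There is, however, one concrete misstep in your write-up: Lemmas~\ref{lem:co-ls} and~\ref{lem:co-weak-mult} only exclude type $\ls$ for the first round (and force a $\mult$ instruction with $\smor$); they do not make it of type $\every$. The first round may be of type $\lr$ --- this happens in the paper's own coordinator examples, e.g.\ Algorithm~\ref{alg:paxos-coordinators} --- and then the tuple after round $1$ is $\oned$ with $d\in\set{a,b}$, not a $?$-free tuple, so your sentence ``in all cases the tuple after round $1$ ranges over $\set{a,b}$'' fails, and your pairing sub-claim, which you restricted to rounds $2,\dots,i$, does not cover this pair. The repair is your own argument shifted by one round: round $2$ is then of type $\ls$, and whether it lies in $2,\dots,i$ (non-c-preserving, hence carrying $\f_{\ls}$) or in $i+1,\dots,\fo$ (c-solo-safe, hence carrying $\f_{\ls}$), it broadcasts $d$ and restores $\solo$ or $\soloa$. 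The same remark applies to your treatment of the c-equalizer round $i$: Assumption~\ref{assumption-ls-lr} does not forbid $\f_=$ in an $\lr$ round, and in that degenerate case round $i$ yields $\oned$ rather than $\solo$/$\soloa$, so one again needs the following c-solo-safe $\ls$ round to finish. With these two adjustments (both instances of your pairing idea) the proof goes through and is, if anything, more careful than the paper's one-line deferral.
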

\begin{proof}
	%Since the first round cannot be of type $\ls$, the argument is the same as 
	%We first observe show that the value $?$ cannot be produced in the first
	%round. 
	The argument is the same as in Lemma~\ref{lem:unifier}, as the first round
	cannot be of type $\ls$. 
\end{proof}

\begin{lemma}\label{lem:co-decider}
	If $\p$ is a decider and $(\solo,\solo^?)\act{\p}(f',d')$ 
	then $(f',d') = (\solo,\solo)$.
	Similarly, if $(\solo^a,\solo^?) \act{\p}(f',d')$ then $(f',d') =
	(\solo^a,\solo^a)$.
	In case $\thrmok\leq \thruo$, for every $\th\geq\bthr$: if
	$(\bias(\th),\solo^?)\act{\p}(f',d')$ then $(f',d')=(\solo,\solo)$
	and for every $\th\leq 1-\bthr$: if
	$(\bias(\th),\solo^?)\act{\p}(f',d')$ then $(f',d')=(\solo^a,\solo^a)$.
\end{lemma}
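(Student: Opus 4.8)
The plan is to transcribe the proof of Lemma~\ref{lem:decider} to the coordinator setting, the only new ingredient being the bookkeeping for $\lr$/$\ls$ round pairs. The key subclaim I would establish first is that, in a c-decider, a c-solo-safe round cannot destroy a $\solo$ tuple: if round $i$ is c-solo-safe and $f_{i-1}=\solo$, then $f_i=\solo$, except when round $i$ is of type $\lr$, in which case $f_i=\oneb$ and then round $i+1$ --- an $\ls$ round which, being part of a c-decider, is c-solo-safe and hence carries $\f_{\ls}$ --- gives $f_{i+1}=\solo^b=\solo$. The verification is by cases on the type of round $i$: for $\every$ this is the second sentence of Lemma~\ref{lem:preserving}; for $\ls$, c-solo-safety means $\p\dar_i$ contains $\f_{\ls}$, so $\solo\lact{\p}_i\solo^b=\solo$ by the semantics; for $\lr$, c-solo-safety forces $0\le\thru^i\le\thr_i(\p)$, so the unique active process receives a multiset of size $>\thru^i$ consisting only of $b$'s, which fires $b$ via the $\uni$ instruction and nothing else, whence $f_i=\oneb$. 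The symmetric statements hold with $b,\oneb$ replaced by $a,\onea$.

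From this, the first two statements follow exactly as in Lemma~\ref{lem:decider}: along any phase run $f_0\lact{\p_1}_1\cdots\lact{\p_r}_r f_r$ with $f_0=\solo$ every tuple at the end of an $\every$ or $\ls$ round equals $\solo$, and since by Assumption~\ref{assumption-ls-lr} neither round $\fo$ nor round $r$ is of type $\lr$ we get $f_{\fo}=\solo$ and $f_r=\solo$; the definition of the phase transition, started from $(f,\soloq)$, then yields $(f',d')=(\solo,\solo)$, and the $\solo^a$ case is symmetric. For the third statement, assume $\thrmok\le\thruo$ and $\th\ge\bthr$. By Lemma~\ref{lem:co-ls} the first round is not of type $\ls$, so its c-solo-safety is ordinary solo-safety, i.e.\ $\thruo\le\thr_1(\p)$; with $\thrmok\le\thruo$ this gives $\thrmok\le\thr_1(\p)$, so every heard-of multiset admissible for $\p\dar_1$ has size $>\thr_1(\p)\ge\thrmok$ and, being built from $\bias(\th)$, contains no $?$ and has one value or both values and hence fires some value (via $\uni$ or a $\mult$ instruction): $?\notin\fire_1(\bias(\th),\p)$. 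Since $\th\ge\bthr$, Corollary~\ref{cor:co-no-a-above-bthr} gives $a\notin\fire_1(\bias(\th),\p)$, so $\fire_1(\bias(\th),\p)=\set{b}$. Hence after round~$1$ the tuple is $\solo$ if that round is of type $\every$, and $\oneb$ --- so $\solo$ after the ensuing $\f_{\ls}$-carrying $\ls$ round --- if it is of type $\lr$; from there the subclaim keeps the tuple equal to $\solo$ through all remaining c-solo-safe rounds, so $f_{\fo}=f_r=\solo$ and $(f',d')=(\solo,\solo)$. The fourth statement is symmetric, using $\th\le1-\bthr$ and the second half of Corollary~\ref{cor:co-no-a-above-bthr}.

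The step I expect to be the only genuine obstacle is the $\lr$/$\ls$ bookkeeping above: one must check that a c-solo-safe $\lr$ round produces exactly $\oneb$ (resp.\ $\onea$) and never $\soloq$, and that in a c-decider the $\ls$ round forced to follow it is itself c-solo-safe and therefore carries $\f_{\ls}$, so that the $\solo$ (resp.\ $\solo^a$) tuple is restored no later than the round before $\fo$. Everything else is a direct copy of the core-language argument for Lemma~\ref{lem:decider}.
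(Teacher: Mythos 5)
Your proposal is correct and follows the same route as the paper, which simply states that the proof is "the same as in the case of the core language" (i.e.\ Lemma~\ref{lem:decider}); your extra $\lr$/$\ls$ bookkeeping is exactly the verification the paper leaves implicit. The case analysis on round types and the use of Corollary~\ref{cor:co-no-a-above-bthr} together with $\thrmok\leq\thruo\leq\thr_1(\p)$ match the intended argument.
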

\begin{proof}
	The same as in the case of the core language. 
\end{proof}

\begin{lemma}\label{lem:co-main-positive}
	If an algorithm in a core language has structural properties from
	Definition~\ref{def:structure}, and satisfies condition cT1 then it solves consensus.
\end{lemma}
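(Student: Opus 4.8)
The plan is to transcribe, almost verbatim, the argument of Lemma~\ref{main-positive} (the ``Main positive'' step for the core language) into the coordinator setting, replacing each ingredient by its c-version established earlier in this section. First I would observe that by Lemma~\ref{lem:co-suff} the structural properties from Definition~\ref{def:structure} already give the agreement property, so it only remains to establish termination. As in the core case it suffices to reason with the two values $a,b$: a general $\inp$ tuple then has the shape $\bias(\th)$, with $\solo$ and $\soloa$ the extreme cases $\th=1$ and $\th=0$.

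For termination I would use condition cT of Theorem~\ref{thm:coordinators}: fix $i\le j$ with $\p^i$ a c-unifier and $\p^j$ a c-decider, and consider an arbitrary execution. Let $\bias(\th)$ be the $\inp$ tuple just before the phase $\p^i$ is run, and split on the comparison of $\thruo$ and $\thrmok$. If $\thruo\le\thrmok$, then Lemma~\ref{lem:co-unifier} says the $\inp$ tuple after $\p^i$ is $\solo$ or $\soloa$, regardless of $\th$; by Corollary~\ref{cor:co-no-a-above-bthr} (which, as every sporadic predicate implies the global one, applies to each intervening $\gp$-phase as well) this shape is preserved up to $\p^j$, and then Lemma~\ref{lem:co-decider} makes every process set its $\dec$ value. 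If instead $\thrmok<\thruo$, distinguish further: when $1-\bthr\le\th\le\bthr$, Lemma~\ref{lem:co-unifier} again produces $\solo$ or $\soloa$ after $\p^i$ and we conclude as before; when $\th>\bthr$ (symmetrically $\th<1-\bthr$), Corollary~\ref{cor:co-no-a-above-bthr} shows the fraction of $b$'s (resp.\ of $a$'s) never decreases, so the $\inp$ tuple stays $\bias(\th')$ with $\th'\ge\bthr$ (resp.\ $\th'\le 1-\bthr$) all the way to $\p^j$, where the matching clause of Lemma~\ref{lem:co-decider} --- available precisely because $\thrmok\le\thruo$ --- forces every process to decide. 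The degenerate case $i=j$ is handled by combining the unifier and decider arguments for the single phase $\p^i$, and the starting shapes $\solo,\soloa$ by the same reasoning (or directly from the c-solo-safety that Definition~\ref{def:c-predicates} puts on a c-decider).

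I do not expect a genuine obstacle, because all of the coordinator-specific difficulty has already been absorbed into Lemmas~\ref{lem:co-unifier}, \ref{lem:co-decider}, \ref{lem:co-suff} and Corollary~\ref{cor:co-no-a-above-bthr}; these rely on Lemma~\ref{lem:co-ls}, which guarantees the first round is never of type $\ls$, so the $\fire_1$ computation behaves exactly as in the core language. The only point needing a line of care is the bookkeeping over the arbitrarily long stretches of $\gp$-phases between $\p^i$ and $\p^j$: one must check that the ``shape'' established after $\p^i$ (being $\solo$/$\soloa$, or being $\bias(\th')$ with $\th'$ on the correct side of $\bthr$) is invariant under any phase respecting $\gp$, which follows from Corollary~\ref{cor:co-no-a-above-bthr} together with the semantics, since no $a$ can reappear once round~$1$ cannot fire it. Putting these pieces together yields an execution along which every $\dec$ value is eventually set, giving termination; combined with agreement, this shows the algorithm solves consensus.
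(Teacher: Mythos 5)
Your proposal is correct and follows essentially the same route as the paper: the paper's proof of this lemma is literally ``the same as for the core language,'' i.e.\ the argument of Lemma~\ref{main-positive} with Lemmas~\ref{lem:co-suff}, \ref{lem:co-unifier}, \ref{lem:co-decider} and Corollary~\ref{cor:co-no-a-above-bthr} substituted for their core counterparts, which is exactly what you do. Your extra remarks on preserving the shape of the $\inp$ tuple across the intervening $\gp$-phases are a harmless (and welcome) spelling-out of what the paper leaves implicit.
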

\begin{proof}
	The same as for the core language.
\end{proof}

\subsubsection*{Part 3: non-termination for coordinators}

\begin{lemma}\label{lem:co-not-decider}
	If $\p$ is a not a c-decider then	
	$\solo\act{\p}\solo$ and $\solo^a \act{\p} \solo^a$; namely, no process may decide.
\end{lemma}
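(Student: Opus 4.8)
The plan is to mirror the proof of Lemma~\ref{lem:not-decider}, the only genuinely new point being how rounds of type $\lr$ and $\ls$ act on $\solo$-type tuples. Assume $\p$ is not a c-decider, so some round $i$ is not c-solo-safe; recall (Lemma~\ref{lem:c-no-uni}) that we may assume every round has a $\uni$ instruction.

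First I would check that from $\solo$ one can keep the tuple equal to $\solo$ throughout rounds $1,\dots,i-1$. An $\every$ round keeps $\solo$ via its $\uni$ instruction. For an $\ls$ round one uses the clause $f\lact{\f}_i\solo^b$ when $\f_{\ls}$ occurs in $\p_i$, and otherwise the clause $f\lact{\f}_i f'$ with $\dom(f')\incl\set{b,?}$, instantiated with $f'=\solo$ (the incoming tuple $f$ is $\solo$, or $\oneb$ if the previous round was $\lr$, so $b\in\dom(f)$ in either case). For an $\lr$ round, sending $b$'s to the single receiver gives $\solo\lact{\f}_i\oneb$, and the forced $\ls$ successor then restores $\solo$ as just described.

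Next I would show round $i$ can produce $\soloq$. If round $i$ is of type $\ls$, then since it is not c-solo-safe $\p_i$ has no $\f_{\ls}$ conjunct, so from the incoming tuple ($\solo$ or $\oneb$) the $\ls$-semantics gives $f\lact{\f}_i f'$ for every $f'$ with $\dom(f')\incl\set{b,?}$; picking $f'=\soloq$ does it. Otherwise round $i$ is of type $\every$ or $\lr$ and, failing to be solo-safe in the sense of Definition~\ref{def:preserving-and-solo-safe}, satisfies $\thr_i(\p)<\thru^i$. Starting from $\solo$ one then takes a heard-of multiset consisting only of $b$'s and of size strictly between $\thr_i(\p)\cdot n$ and $\thru^i\cdot n$: it has a single value, so it satisfies neither the $\uni$ nor any $\mult$ instruction, hence its update value is $?$. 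Thus $?\in\fire_i(\solo,\p)$, which yields $\solo\lact{\f}_i\soloq$ (for an $\lr$ round via the special clause $f\lact{\f}_i\soloq$ available when $?\in\fire_i(f,\p)$). Finally, $\soloq$ is preserved by every later round $i+1,\dots,r$: for $\every$ and $\lr$ rounds every heard-of multiset built from $\soloq$ consists of $?$'s only, whose update value is $?$; for $\ls$ rounds the applicable clauses produce only tuples over $\set{?}$. Hence $x_r$ equals $\soloq$ and no $\dec$ variable is set, so $\solo\act{\p}\solo$ with no process deciding.

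The case starting from $\solo^a$ is identical with $a$ in place of $b$ throughout (and note that $\inp$ and $\dec$ are never updated in $\lr$ rounds, so these values are untouched there). I do not anticipate a real obstacle; the only care needed is the bookkeeping when the offending round $i$ is itself one half of an $\lr$/$\ls$ pair, which the case analysis above already covers.
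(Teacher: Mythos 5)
Your proof is correct and follows essentially the same route as the paper's: exhibit a run that keeps $\solo$ (resp.\ $\solo^a$) up to the non-c-solo-safe round $i$, produce $\soloq$ there, and propagate $\soloq$ to the end so that neither $\inp$ nor $\dec$ changes. The paper's own proof states this as ``easy to verify''; you merely spell out the $\lr$/$\ls$ bookkeeping (passing through $\oneb$ and back, and the case split on whether $\f_{\ls}$ occurs in $\p_i$), which is exactly the verification the paper leaves implicit.
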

\begin{proof}
	If $\p$ is not a decider then there is a round, say $i$, that is not c-solo-safe. 
	By definition this means that either round $i$ is of type $\ls$ with $\p_i$
	not containing $\f_\ls$ or it has one of the two other types and $\thr_i(\p)< \thru^i$.
	It is then easy to verify that for $j < i$, $\solo \lact{\p_j}_j \solo$,
	$\solo \lact{\p_i}_i \solo^?$ and $\solo^? \lact{\p_k}_k \solo^?$ for $k > i$. Hence this
	ensures that no process decides during this phase.
	Similar proof holds when the $\inp$ tuple is $\solo^a$.
\end{proof}

\begin{lemma}\label{lem:co-global-th}
	For the global predicate $\gp$: if $1/2\leq \th\leq \bthr$ then
	$\bias(\th)\act{\gp}\bias(\th')$ for every $\th'\geq 1/2$.
\end{lemma}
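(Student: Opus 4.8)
The plan is to mirror the proof of Lemma~\ref{lem:global-th}, replacing each core-language ingredient by its coordinator analogue and using the structural facts already established for algorithms with coordinators. First I would show $\set{a,b}\incl\fire_1(\bias(\th),\gp)$: by Lemma~\ref{lem:co-fire-one} we get $a\in\fire_1(\bias(\th),\gp)$ (as in the core case this uses $\th<\bthr$); and since the first round is not of type $\ls$ (Lemma~\ref{lem:co-ls}) it has a $\mult$ instruction, all of whose operations are $\smor$ (Definition~\ref{def:structure}), so using $\th\ge 1/2$ one builds from $\bias(\th)$ a multiset with strictly more $b$'s than $a$'s and of size above $\thr_1(\gp)$, giving $b\in\fire_1(\bias(\th),\gp)$. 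Recall also that $\gp$ is not a c-equalizer (Proviso~\ref{c-proviso}), so Lemmas~\ref{lem:co-any-frequency}, \ref{lem:co-any-bias} and~\ref{lem:co-bias-propagation} apply with $k=1$, $l=\fo$ (round $\fo$ is of type $\every$ or $\ls$ by Assumption~\ref{assumption-ls-lr}).

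The argument then splits according to the rounds $2,\dots,\fo$. If every one of them has a $\mult$ instruction, then all of $1,\dots,\fo$ have a $\mult$ instruction; since $\ls$ rounds carry no $\mult$ and every $\lr$ round is immediately followed by an $\ls$ round, none of these rounds is of type $\lr$ (round $\fo$ being excluded anyway), so they are all of type $\every$. Then Lemma~\ref{lem:co-any-frequency} yields $\bias(\th)\lact{\gp_1}_1\cdots\lact{\gp_\fo}_\fo\bias(\th')$, so after this phase $\inp$ equals $\bias(\th')$. Round $\fo+1$ has no $\mult$ instruction (Lemma~\ref{lem:co-no-mult-fo}) and is not of type $\ls$ (Lemma~\ref{lem:co-ls-fo}); feeding the whole tuple $\bias(\th')$ to every process satisfies no instruction, so after round $\fo+1$ we have $\soloq$, and $\soloq$ propagates through every remaining round regardless of its type (an $\every$ round maps the empty multiset to $?$; an $\lr$ round maps any tuple to $\oneq=\soloq$; an $\ls$ round maps $\soloq$ to $\soloq$, with or without $\f_\ls$), so no process decides and $\bias(\th)\act{\gp}\bias(\th')$.

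In the remaining case some round $j$ with $2\le j\le\fo$ has no $\mult$ instruction, so by Lemma~\ref{lem:co-any-bias} we can make the round-$\fo$ tuple be any tuple over $\set{?,b}$ with a prescribed number of $b$'s, and likewise over $\set{?,a}$. Note $\thru^{\fo+1}>1/2$, since $\thrmok/2\ge 1-\thru^{\fo+1}$ (Definition~\ref{def:structure}) and $\thrmok<1$. If $\th'\ge\th$, run the prefix so that the round-$\fo$ tuple is over $\set{?,b}$ with $\th''n$ occurrences of $b$ for some $\th''$ with $1/2\le\th''<\min(\thru^{\fo+1}-\ep,\th)$, placing the $b$'s so that exactly $(\th'-\th)n$ of the processes holding $a$ in $\bias(\th)$ receive $b$ (possible by the permutation symmetry of the model and since $\th'<1$); the $\inp$-update of the phase then turns $\bias(\th)$ into $\bias(\th')$. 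Since round $\fo+1$ has no $\mult$ instruction and $\th''<\thru^{\fo+1}$, feeding a multiset of $?$'s (padded with $b$'s up to the threshold $\thr_{\fo+1}(\gp)$) satisfies no instruction, so we reach $\soloq$ after round $\fo+1$ and propagate it as above. If $\th'<\th$, symmetrically make the round-$\fo$ tuple be over $\set{?,a}$ with $(\th-\th')n$ occurrences of $a$, placed on processes holding $b$ in $\bias(\th)$, so $\inp$ becomes $\bias(\th')$; as $\th-\th'<1/2<\thru^{\fo+1}$, feeding the whole tuple in round $\fo+1$ satisfies no instruction, again giving $\soloq$, which we propagate. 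In every case $\bias(\th)\act{\gp}\bias(\th')$.

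The routine points — propagating $\soloq$ through $\lr$/$\ls$ rounds, and the inequality $\thru^{\fo+1}>1/2$ — are immediate. The one thing demanding care is the last case: arranging the round-$\fo$ output so that the per-process $\inp$-update produces exactly $\bias(\th')$, which requires tracking where the $b$'s (resp.\ $a$'s) of the intermediate $?$-tuple sit and checking that the fractions are consistent ($\th'-\th<1/2\le\th''$, $1-\th\ge\th'-\th$). This is precisely the delicate step already present in the proof of Lemma~\ref{lem:global-th}; nothing new beyond Lemmas~\ref{lem:co-any-frequency}, \ref{lem:co-any-bias}, \ref{lem:co-bias-propagation} and the $\ls$/$\lr$ semantics is needed here.
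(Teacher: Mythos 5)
Your proof is correct and follows essentially the same route as the paper's: the same initial observation that $\set{a,b}\incl\fire_1(\bias(\th),\gp)$ (using that the first round is not of type $\ls$), the same case split on whether all rounds $2,\dots,\fo$ have a $\mult$ instruction, the same use of Lemmas~\ref{lem:co-any-frequency}, \ref{lem:co-any-bias} and~\ref{lem:co-bias-propagation} to reach $\soloq$ after round $\fop$ while steering $\inp$ to $\bias(\th')$, and the same bookkeeping in the sub-cases $\th'\ge\th$ and $\th'<\th$. The only nitpick is the strict bound $1/2\le\th''<\min(\thru^{\fo+1}-\ep,\th)$, which leaves no admissible $\th''$ when $\th=1/2$; take $\th''=\min(\th,\thru^{\fo+1}-\ep)$ as the paper does (and note that, like the paper, your argument uses $\th<\bthr$ to get $a\in\fire_1$, so the endpoint $\th=\bthr$ allowed by the statement is not really covered by either write-up).
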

\begin{proof}
	The proof follows the same argument as in Lemma~\ref{lem:global-th}. 
	There are some complications due to new types of rounds. 

	As in Lemma~\ref{lem:global-th} we start by observing that $\set{a,b}\incl
	\fire_1(\bias(\th,\gp)$. 
	This follows, as we have observed that the first round cannot be of type
	$\ls$. 

	If there are $\mult$ instructions in rounds $2,\dots,\fo$ then
	Lemma~\ref{lem:co-any-frequency} ensures that for arbitrary $\th'$ we can get
	$\bias(\th')$ after round $\fo$ (we have observed that round $\fo$ cannot be
	of type $\lr$). 
	Since there is no $\mult$ instruction in round $\fop$ (Proviso~\ref{proviso}
	and Lemma~\ref{lem:co-no-mult-fo}) we can get $\soloq$ after round $\fop$ and
	decide on nothing. 
	Hence we are done in this case.

	If some round $2,\dots,\fo$ does not have a $\mult$ instruction then we can
	use Lemma~\ref{lem:co-any-bias} to get $\biasq(\th'')$ as well as
	$\biasqa(\th'')$, for arbitrary $\th''$, after round $\fo$.
	There are two cases depending on $\th'\geq \th$ or not.

	If $\th'\geq \th$ then we take $\th''=\min(\th,\thru^\fop-\e)$. 
	For the same reasons as in Lemma~\ref{lem:global-th} that $\th''\geq 1/2$, so we can get
	$\bias(\th')$ as a state of $\inp$ after round $\fo$. 
	We show that $\bias(\th')\lact{\gp}_\fop\soloq$.
	If round $\fop$ is of type $\ls$ then this is direct from definition since
	round $\fo$ is not of type $\lr$. 
	Otherwise, we can just set the whole multiset of values to every process, and
	there is not enough of $b$'s to pass $\thru^\fop$ threshold.
	So we are done in this case.

	The remaining case is when $\th'<\th$. 
	As in Lemma~\ref{lem:global-th} we reach $\biasqa(\th'')$ for
	$\th''=\th-\th'$. 
	This gives us some $a$'s that we need, to convert $\bias(\th)$ to
	$\bias(\th')$. 
	As in the previous case we argue that we can get $\soloq$ after round $\fop$.
	So we are done in this case too. 
\end{proof}

\begin{lemma}
	If $\p$ is not a c-unifier then 
	\begin{equation*}
		\bias(\th)\act{\p}\bias(\th)\qquad \text{for some $1/2\leq \th <\bthr$}
	\end{equation*}
\end{lemma}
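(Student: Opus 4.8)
The plan is to mirror the proof of Lemma~\ref{lem:not-uni}, systematically replacing each core lemma used there by its coordinator analog (Lemmas~\ref{lem:co-spread}, \ref{lem:co-any-frequency}, \ref{lem:co-any-bias}, \ref{lem:co-bias-propagation}, \ref{lem:co-no-mult-fo}, \ref{lem:co-no-mult-fire}), and to add one auxiliary observation that makes the new round types harmless: once a phase reaches the all-$?$ tuple $\soloq$ after some round, it stays at $\soloq$ until the end of the phase. For $\every$ and $\lr$ rounds this holds because every received multiset is a sub-multiset of $\mset(\soloq)$, hence consists only of $?$ and yields update value $?$; for $\ls$ rounds it follows directly from the semantics since $\dom(\soloq)=\set{?}$, whether or not $\p\dar_i$ contains $\f_{\ls}$. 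So it suffices, in every case, to steer the phase to $\soloq$ no later than round $\fo$: then $\inp$ keeps its old value and no process decides, giving $\bias(\th)\act{\p}\bias(\th)$.

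I would first recall the structural facts already established for algorithms with coordinators: the first round is not of type $\ls$ and has a $\mult$ instruction all of whose operations are $\smor$ (Lemmas~\ref{lem:co-ls}, \ref{lem:co-weak-mult}), and round $\fo+1$ has no $\mult$ instruction and, by hypothesis of Theorem~\ref{thm:coordinators}, is not of type $\ls$. With these in hand I would split on the reason $\p$ fails to be a c-unifier (Definition~\ref{def:c-predicates}): (i) its first-round threshold is too small; (ii) no $\p\dar_i$ with $1\le i\le\fo$ is a c-equalizer; or (iii) there is a c-equalizer $\p\dar_i$ with $1\le i\le\fo$ but some round among $2,\dots,i$ is c-preserving, or some round among $i+1,\dots,\fo$ is not c-solo-safe.

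For (i) I would reuse the $\ho$-set constructions of Lemma~\ref{lem:not-uni} verbatim: since the first round is an ordinary round with a $\mult$ instruction, sending everyone a mixed multiset of size strictly between $\thr_1(\p)$ and $\thrmok$ (when $\thr_1(\p)<\thrmok$, with $\th=1/2$), or a pure-$b$ multiset of size strictly between $\thr_1(\p)$ and $\min(\thruo,\bthr)$ (when $\thr_1(\p)<\thruo$ and $\thr_1(\p)<\bthr$, with $\th=\max(\thr_1(\p)+\e,1/2)$), fires no conditional and produces $\soloq$ after round $1$; in both cases $1/2\le\th<\bthr$. For (ii), with $\th=1/2$ and $\set{a,b}\incl\fire_1(\spread,\p)$ from Lemma~\ref{lem:co-spread}, I would either obtain $\spread$ after round $\fo$ by Lemma~\ref{lem:co-any-frequency} and then $\soloq$ from the $\mult$-free, non-$\ls$ round $\fo+1$, or obtain $\soloq$ already after round $\fo$ by Lemma~\ref{lem:co-any-bias}, according to whether all of rounds $1,\dots,\fo$ have a $\mult$ instruction. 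For (iii), again with $\th=1/2$, I would use Lemma~\ref{lem:co-any-frequency} to route to the offending round carrying $\bias(\thru^j+\e)$ (or, past the c-equalizer, $\solo$), fire $?$ there (an $\ls$ round lacking $\f_{\ls}$ emits $\soloq$ directly from the semantics, and an ordinary c-preserving or non-solo-safe round emits $?$ as in Lemma~\ref{lem:co-no-mult-fire} and the definition of preserving), and then propagate $\soloq$ through round $\fo$ by Lemma~\ref{lem:co-bias-propagation}.

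The main obstacle I expect is entirely in the bookkeeping forced by the $\lr$/$\ls$ round types in case (iii): one must ensure that the route built with Lemma~\ref{lem:co-any-frequency} never has to end on an $\lr$ round, that an $\lr$ round met along the way is immediately followed by its $\ls$ round through which $\soloq$ still propagates, and that "c-preserving'' — which for $\ls$ rounds means precisely "lacks $\f_{\ls}$'' — is exactly the property that allows the offending round to emit $?$. None of this is conceptually new, but it is where the coordinator argument genuinely diverges from Lemma~\ref{lem:not-uni}, so I would write case (iii) out with an explicit case split on the type of the offending round and of its predecessor.
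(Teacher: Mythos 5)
Your proposal is correct and follows essentially the same route as the paper, whose own proof of this lemma is just a terse reduction to Lemma~\ref{lem:not-uni} with the coordinator analogs (Lemmas~\ref{lem:co-any-frequency}, \ref{lem:co-any-bias}, etc.) substituted, relying on the first round not being of type $\ls$. Your extra observation that $\soloq$ persists through every round type, and the explicit case split on $\lr$/$\ls$ rounds in case (iii), merely spell out details the paper leaves implicit (and which it does write out in the analogous coordinators-with-timestamps lemma).
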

\begin{proof}
	As in the proof of an analogous lemma for the core language,
	Lemma~\ref{lem:not-uni}, we examine all the reasons for $\p$ not to be a
	c-unifier.
	
	The case of conditions of constants is the same as in Lemma~\ref{lem:not-uni}
	as the round cannot be of type $\ls$.
	If there is no equalizer in $\p$ up to round $\fo$ then the reasoning is the
	same but now using Lemmas~\ref{lem:co-any-frequency} and~\ref{lem:co-any-bias}.
\end{proof}

We can conclude the non-termination case. 
The proof is the same in for the core language but now using
Lemmas~\ref{lem:co-not-decider} and~\ref{lem:co-global-th}.

\begin{lemma}
	If the structural properties from Definition~\ref{def:structure} hold, but the
	condition cT1 does not hold then the algorithm does not terminate
\end{lemma}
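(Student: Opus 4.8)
The plan is to replay, almost verbatim, the non-termination argument given for the core language at the end of Section~\ref{sec:core}, replacing each auxiliary lemma by its coordinator counterpart. Recall that the communication predicate is $(\lG\gp)\land(\lF(\p^1\land\lF(\p^2\land\dots (\lF\p^k)\dots)))$ and that every sporadic predicate implies $\gp$. I will build an infinite execution
\begin{equation*}
	f_1\act{\p^1}f'_1\act{\gp}f_2\act{\p^2}f'_2\act{\gp}\dots \act{\p^k}f'_k\act{\gp} f'_k
\end{equation*}
in which the odd arrows realize the sporadic predicates in order, the even arrows use $\gp$, and the last $\gp$-arrow is a self-loop; taking the initial tuple to be $f_1$ and noting that every configuration along this run keeps the $\dec$-tuple equal to $\soloq$, this is a witness to non-termination.

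First I would dispatch the trivial case: if none of $\p^1,\dots,\p^k,\gp$ is a c-decider, set $f_i=f'_i=\solo$ for all $i$ and apply Lemma~\ref{lem:co-not-decider}, which gives $\solo\act{\p}\solo$ for each such predicate. Otherwise, let $\p^l$ be the last c-decider among the sporadic predicates (so $l=k$ when $\gp$ itself is a c-decider, since then all sporadic predicates are c-deciders too). Since condition cT fails, none of $\p^1,\dots,\p^l$ is a c-unifier, so by the preceding lemma (the c-unifier analog of Lemma~\ref{lem:not-uni}) each admits $\bias(\th_i)\act{\p^i}\bias(\th_i)$ for some $\th_i$ with $1/2\le\th_i<\bthr$; put $f_i=f'_i=\bias(\th_i)$. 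Because $\th_i<\bthr$, Lemma~\ref{lem:co-global-th} provides $f'_i\act{\gp}f_{i+1}$ for $i=1,\dots,l-1$ (the target $\bias(\th_{i+1})$ is admissible since $\th_{i+1}\ge 1/2$), giving the run up to $f'_l$. If $l=k$, Lemma~\ref{lem:co-global-th} also yields $f'_k\act{\gp}f'_k$. If $l<k$, I would use Lemma~\ref{lem:co-global-th} to get $f'_l\act{\gp}\solo$, set $f_j=f'_j=\solo$ for $j>l$, and close the run with Lemma~\ref{lem:co-not-decider}, since then neither $\gp$ nor any of $\p^{l+1},\dots,\p^k$ is a c-decider.

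The only content specific to coordinators --- handling $\ls$ and $\lr$ rounds both when showing $\solo\act{\p}\solo$ for a non-c-decider and when pumping the bias under $\gp$ --- has already been absorbed into Lemmas~\ref{lem:co-not-decider} and~\ref{lem:co-global-th}. Hence the remaining work is purely the index bookkeeping sketched above, and I do not expect any genuine obstacle: the argument is structurally identical to the core case.
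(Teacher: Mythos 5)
Your proposal is correct and follows exactly the route the paper intends: the paper itself only remarks that the proof is the same as for the core language but now using Lemmas~\ref{lem:co-not-decider} and~\ref{lem:co-global-th} (together with the c-unifier analog of Lemma~\ref{lem:not-uni}), which is precisely the substitution you carry out. Your case split on the last c-decider and the pumping of $\bias(\th_i)$ under $\gp$ mirrors the core-language argument verbatim, so there is nothing to add.
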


\section{Proofs for algorithms with coordinators and timestamps}
We give a proof of Theorem~\ref{thm:ts-coordinators}. 
The structure of the proof is the same as in the other cases. 

\subsubsection*{Part 1: Structural properties for coordinators with timestamps}

\begin{lemma}
	If there is a round without uni instruction then the algorithm does not terminate.
\end{lemma}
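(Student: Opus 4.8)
The plan is to adapt the proofs of Lemma~\ref{lem:ts-no-uni} and Lemma~\ref{lem:c-no-uni}: I would exhibit, for an arbitrary phase predicate $\p$, a phase transition that loops on a configuration in which no process ever sets its $\dec$ variable, the timestamps changing only in a harmless way.

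First I would let $l$ be the first round without a $\uni$ instruction and build a round-by-round execution $\solo^{a}=f_0\lact{\p_1}_1 f_1\lact{\p_2}_2\cdots\lact{\p_r}_r f_r$ with $f_j=\solo^{a}$ for $j<l$, with $f_l=\soloq$, and with $f_j=\soloq$ for $l<j\le r$. Keeping the all-$a$ tuple up to round $l$ uses the $\uni$ instructions of those rounds: from an all-$a$ tuple a round of type $\every$ (or of type $\ls$) can deliver the value $a$ to every process, which then keeps $a$ (for the first round, where the operation is $\maxts$, all timestamps agree, so the outcome is still $a$); for an $\lr$ round the unique active process sets its variable to $a$, and by the semantics of the immediately following $\ls$ round its value $a$ can be delivered to everyone, restoring $\solo^{a}$ (recall from Assumption~\ref{assumption-ls-lr} that $\ls$ rounds carry no $\mult$ instruction and that rounds $\fo$ and the last round are not of type $\lr$, so these $\lr$/$\ls$ pairs are well formed and disjoint from round $\fo$). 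At round $l$ any heard-of multiset obtained from an all-$a$ tuple contains at most the single value $a$, so $\mult$ is not satisfied; as round $l$ has no $\uni$ instruction either — and no $\mult$ instruction if it is of type $\ls$ — every process sets its variable to $?$, hence $f_l=\soloq$ in every case. Finally $\soloq$ propagates through rounds $l+1,\dots,r$: any heard-of multiset built from $\soloq$ has empty $?$-free part and satisfies no condition, and an $\ls$ round applied to $\soloq$ again yields $\soloq$.

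It then remains to read off two cases from this execution. If $l\le\fo$, the tuple reaching round $\fo$ is $\soloq$, so neither $\inp$ nor its timestamp is updated in this phase, and since $f_r=\soloq$ no $\dec$ is set; hence $(\solo^{a},0,\soloq)\act{\p}(\solo^{a},0,\soloq)$, and iterating this transition forever gives a non-terminating execution. If $l>\fo$, the tuple at round $\fo$ is $\solo^{a}$, so $\inp$ is re-set to $a$ everywhere with its timestamp advanced to the current phase number, while still $f_r=\soloq$ and no process decides; hence $(\solo^{a},t,\soloq)\act{\p}(\solo^{a},t',\soloq)$ for a suitable $t'$ and every timestamp tuple $t$, and iterating again yields a non-terminating execution. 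In both cases the algorithm fails the termination property. I do not expect a genuine obstacle: the only delicate point is the $\lr$/$\ls$ bookkeeping before round $l$ and its compatibility with Assumption~\ref{assumption-ls-lr}, together with the degenerate cases $l=1$ and $\fo=1$, where some "keep $\solo^{a}$" steps become vacuous but the construction is otherwise unchanged.
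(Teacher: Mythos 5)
Your proof is correct and takes essentially the same route as the paper's: the paper's own argument is just the terse observation that, with $l$ the first round lacking a $\uni$ instruction, one gets the looping phase transitions $(\solo^a,0)\act{\p}(\solo^a,0)$ when $l\le\fo$ and $(\solo^a,i)\act{\p}(\solo^a,i+1)$ when $l>\fo$, for every predicate $\p$ — which is exactly the construction you carry out (keep $\solo^a$ up to round $l$, drop to $\soloq$ there, propagate $\soloq$, then split on $l$ versus $\fo$). Your explicit treatment of the $\lr$/$\ls$ rounds, the $\maxts$ first round, and the boundary case $l=\fo$ merely fills in what the paper leaves as "easy to verify", and is if anything slightly more careful about the timestamp bookkeeping at $l=\fo$.
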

\begin{proof}
	Let $l$ be the first round without uni instruction and let $\p$
	be any predicate. If $l < \fo$,
	we get $(\solo^a,0)\act{\p}(\solo^a,0)$ for every communication predicate.
	Otherwise we get $(\solo^a,i) \act{\p} (\solo^a,i+1)$ for every communication predicate.
\end{proof}

\begin{lemma} \label{lem:co-ts-weak-mult}
	If the first round is not of type $\ls$ then the first round should have a $\mult$ instruction.
\end{lemma}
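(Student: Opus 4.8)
The plan is to adapt the proofs of Lemma~\ref{lem:co-weak-mult} (the coordinator case) and Lemma~\ref{lem:ts-no-mult} (the timestamp case): assuming the first round is not of type $\ls$ and has no $\mult$ instruction, I would exhibit the phase self-loop $(\spread,0)\act{\p}(\spread,0)$ for an arbitrary communication predicate $\p$, which witnesses non-termination.

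First I would note that, not being of type $\ls$, the first round is of type $\every$ or $\lr$; in the latter case nothing below is affected, since $\inp$ is in any event not updated in $\lr$ rounds. Starting from $(\spread,0)$ --- the tuple whose $\inp$ values form a spread and all of whose timestamps are $0$ --- I would send to every process the same heard-of multiset $\ho$ built from the values occurring in $\spread$, chosen to contain both an $a$ and a $b$ and to be as large as needed to meet any threshold appearing in $\p\dar_1$ (possible since $\spread$ contains linearly many copies of each value). Because $\ho$ carries two distinct values, $\uni(\ho)$ fails, and since the first round has no $\mult$ instruction, $\update_1(\ho)=?$, irrespective of whether the first-round operations are $\smor$, $\min$ or $\maxts$; this also covers the case where $\p\dar_1$ is an equalizer, as every process receives this same $\ho$. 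Hence $(\spread,0)\lact{\p_1}_1\soloq$.

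Next I would propagate $\soloq$ through the remaining rounds: from $\soloq$, every round $i>1$ again yields $\soloq$, since for rounds of type $\every$ or $\lr$ the only value heard is $?$, and for rounds of type $\ls$ the active leader can only send $?$. Thus $f_r=\soloq$, so no $\dec$ variable gets set, and since $f_{\fo}=\soloq$ has no non-$?$ entry, the tuple of $\inp$ values and all its timestamps are left unchanged. This yields the phase self-loop $(\spread,0)\act{\p}(\spread,0)$ for every communication predicate $\p$, hence an infinite execution in which no process ever decides. There is essentially no hard step here; the only points deserving a word of care --- a possible equalizer in $\p\dar_1$ and a first round of type $\lr$ --- are handled by the remarks above, and the argument is uniform in the choice of first-round operations.
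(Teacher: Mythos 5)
Your proposal is correct and follows essentially the same route as the paper, whose proof of this lemma is exactly the observation that without a $\mult$ instruction in a non-$\ls$ first round one gets the phase self-loop $(\spread,0)\act{\p}(\spread,0)$ for every communication predicate, hence non-termination. Your additional remarks (the equalizer case, a first round of type $\lr$, and the irrelevance of the operation since no condition fires) are just the details the paper leaves implicit.
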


\begin{proof}
	Let $\p$ be any predicate. If the first round is not of type $\ls$
	and does not have a $\mult$ instruction then we will have
	$(\spread,0) \act{\p} (\spread,0)$.
\end{proof}

The following lemma is an adaption of Lemma~\ref{lem:ts-min} to the extension
with coordinators. 

\begin{lemma}\label{lem:co-ts-no-mult-fo}
	If round $\fo$ is not a $\ls$ round %\igw{changed $\fo$ to $\fop$} 
	and either contains a $\mult$ instruction  or
	$\thru^{\fo} < 1/2$, then the algorithm 
	does not satisfy agreement, or we can remove the $\mult$ instruction and make $\thru^{\fo} = 1/2$ without altering the
	semantics of the algorithm.
\end{lemma}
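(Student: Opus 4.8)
The plan is to follow the proof of Lemma~\ref{lem:ts-min} closely, substituting the coordinator-aware propagation lemmas (Lemmas~\ref{lem:co-any-frequency}, \ref{lem:co-any-bias} and \ref{lem:co-bias-propagation}) for their core counterparts and inserting a case distinction on the types of round $\fo+1$ and of the rounds that follow it. First I would record that, by the hypothesis together with Assumption~\ref{assumption-ls-lr} (under which round $\fo$ is not of type $\lr$), round $\fo$ is of type $\every$, so its behaviour on tuples of the form $\bias(\th)$ is exactly as in the core timestamp setting, and the coordinator reachability lemmas apply.

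As in Lemma~\ref{lem:ts-min}, I would split into two cases. If there is no tuple $(f,t)$ with $(f,t)\lact{\gp_1}_1\cdots\lact{\gp_{\fo-2}}_{\fo-2} f_{\fo-2}$ and $a,b\in\fire_{\fo-1}(f_{\fo-2},\gp)$, then no $\mult$ instruction in round $\fo$ can ever fire on a multiset with two distinct values, so these instructions, together with the bound $\thru^{\fo}<1/2$, may be discarded as claimed. This case also absorbs any configuration in which some round among $1,\dots,\fo-1$ is of type $\ls$: since the global predicate has no c-equalizer, a $\ls$ round outputs a tuple ranging over $\{d,?\}$ for a single $d$, so every later $\every$ round can then only see multisets with at most one value, and round $\fo$ never fires a $\mult$ instruction. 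Hence, in the second case, rounds $\fo-1$ and $\fo$ are both of type $\every$, a tuple $f_{\fo-2}$ as above exists, and, exactly as in Lemma~\ref{lem:ts-min}, the same execution can be started from $(\bias(\thru^{1}+\e),0)$ and then extended by $f_{\fo-2}\lact{\gp_{\fo-1}}_{\fo-1} f_{\fo-1}\lact{\gp_{\fo}}_{\fo}\bias(\th)$ for arbitrary $\th$.

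The core of the argument is then to build, by the same case analysis as in Lemma~\ref{lem:ts-min} on the operation of the highest-threshold $\mult$ instruction of rounds $\fo$ and $\fo+1$, tuples $f_{\fo-1}$ and $f_{\fo}$ such that $f_{\fo-2}$ reaches $f_{\fo}$ through rounds $\fo-1,\fo$, the tuple $f_{\fo}$ has no $?$ and contains at least one $a$, and either $a,b\in\fire_{\fo+1}(f_{\fo},\gp)$ or $b,?\in\fire_{\fo+1}(f_{\fo},\gp)$. If round $\fo+1$ is of type $\every$, the four sub-cases are literally those of Lemma~\ref{lem:ts-min}. If round $\fo+1$ is of type $\lr$, then round $\fo+2$ is of type $\ls$; I would choose $f_{\fo-1},f_{\fo}$ as in the corresponding $\every$-sub-case so that $b\in\fire_{\fo+1}(f_{\fo},\gp)$ ($\fire_{\fo+1}$ being computed just as for an $\every$ round), collapse to $\oneb$ in round $\fo+1$, and spread it in round $\fo+2$ via Lemma~\ref{lem:co-bias-propagation}. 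If round $\fo+1$ is of type $\ls$, it carries no $\mult$ instruction and $\gp_{\fo+1}$ lacks $\f_{\ls}$, so taking $f_{\fo}=\spread$ (reachable from $f_{\fo-2}$, choosing $f_{\fo-1}$ as in the matching sub-cases of Lemma~\ref{lem:ts-min}) the $\ls$ semantics directly gives $b,?\in\fire_{\fo+1}(\spread,\gp)$. In every case one finishes as in Lemma~\ref{lem:ts-min}: either every round after $\fo$ is of type $\every$ and carries a $\mult$ instruction, and Lemma~\ref{lem:co-any-frequency} makes some processes decide $a$ and others $b$ within one phase; or Lemmas~\ref{lem:co-any-bias} and \ref{lem:co-bias-propagation} make half the processes decide on $b$ while the $\inp$-tuple ends up equal to $f_{\fo}$ with all timestamps set to the current phase, whereupon, since every first-round update is $\maxts$ — or, if the first round is of type $\ls$, since one may select the value $a\in\dom(f_{\fo})$ and then use the $\uni$ instruction of round $2$ — the next phase forces $\soloa$ after the first round (or round $2$) and lets the remaining processes decide $a$, violating agreement.

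I anticipate the main obstacle to be the bookkeeping of this round-type case analysis: one must verify that the tuple $f_{\fo}$ produced just before an $\ls$- or $\lr$-round still contains an $a$ (so that the $\maxts$-based bootstrap of the next phase goes through), that $\gp_{\fo+1}$ indeed lacks $\f_{\ls}$ whenever round $\fo+1$ is of type $\ls$ (this is exactly where Proviso~\ref{ts-co-proviso} on c-equalizers is used), and that the degenerate configurations — a round among $1,\dots,\fo-1$ of type $\ls$, or a phase so short that round $\fo+1$ coincides with the last round — either fall under the first case or yield an immediate agreement violation. As in Lemma~\ref{lem:ts-min}, each of these verifications is routine, but there are many of them.
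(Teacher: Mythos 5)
Your proposal is correct and follows essentially the same route as the paper: the same two-case split on whether a tuple with $a,b\in\fire_{\fo-1}(\cdot,\gp)$ is reachable under $\gp$, with the first case (which, as you note, absorbs any $\ls$ round before round $\fo$) allowing the instructions to be discarded, and the second case reduced to the argument of Lemma~\ref{lem:ts-min} after observing that all rounds up to and including round $\fo$ are then of type $\every$. The only difference is that you spell out explicitly how rounds of type $\lr$ or $\ls$ after round $\fo$ are handled via Lemmas~\ref{lem:co-any-frequency}, \ref{lem:co-any-bias} and~\ref{lem:co-bias-propagation}, whereas the paper simply declares this case identical to Lemma~\ref{lem:ts-min}.
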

\begin{proof}
	Suppose round $\fo$ is not of type $\ls$ and
	either contains a $\mult$ instruction or
	$\thru^{\fo} < 1/2$. We consider two cases:
	
	The first case is when there does not exist any tuple $(f,t)$ with
	$(f,t) \lact{\gp_1}_1 f_1 \dots \lact{\gp_{\fo-2}}_{\fo-2} f_{\fo-2}$
	such that $a,b \in \fire_{\fo-1}(f_{\fo-2},\gp)$. 
	Notice that this happens in particular when some round before round $\fo$ are
	of type $\ls$. 
	It is then clear that the $\mult$ instructions in round $\fo$ will never be
	fired and so we can remove all these instructions in round $\fo$.
	Further it is also clear that setting $\thru^{\fo} = 1/2$ 
	does not affect the semantics of the algorithm in this case.
	
	So it remains to examine the case when there exists a tuple 
	$(f,t)$ with
	$(f,t) \lact{\gp_1}_1  f_1\lact{\gp_2} \cdots\lact{\gp_{\fo-2}} f_{\fo-2}$ such that
	$a,b \in \fire_{\fo-1}(f_{\fo-2},\gp)$.
	By the above observation, none of the rounds before round $\fo$
	are of type $\ls$. 
	Further, we have assumed that round $\fo$ is itself not of type $\ls$.
	By our proviso, it also follows that round $\fo$ is not of type $\lr$.
	Since every $\lr$ round should be followed by a $\ls$ round,
	it follows that in this case all the rounds up to and including
	round $\fo$ are of type $\every$. Hence, the proof of this case is the
	same as the proof of Lemma~\ref{lem:ts-min}.
\end{proof}

The proof of the following Corollary is the similar to the proof
of corollary~\ref{cor:ts-min}.

\begin{corollary} \label{cor:co-ts-no-mult-fo}
	If round $\fo+1$ is not of type $\ls$ and has a $\mult$ instruction or $\thru^{\fo+1} < 1/2$,
	then the $\mult$ instruction can be removed and $\thru^{\fo+1}$ can be 
	made $1/2$ without altering the semantics of the algorithm.
\end{corollary}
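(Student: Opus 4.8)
The plan is to follow the proof of Corollary~\ref{cor:ts-min} almost verbatim, replacing the appeal to Lemma~\ref{lem:ts-min} by an appeal to Lemma~\ref{lem:co-ts-no-mult-fo} and inserting one extra case to account for the possibility that round $\fo$ is of type $\ls$. First I would invoke Lemma~\ref{lem:co-ts-no-mult-fo}: since we are only interested in algorithms that solve consensus, and in particular satisfy agreement, that lemma lets us assume that round $\fo$ is either of type $\ls$, or else not of type $\ls$, without a $\mult$ instruction, and with $\thru^{\fo}\ge 1/2$; in every other situation the lemma already says the algorithm violates agreement, so there is nothing to prove.

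The key observation, valid in both remaining cases, is that any round-$\fo$ transition $f\lact{\f}_{\fo}f'$ produces a tuple $f'$ in which at most one value other than $?$ occurs. If round $\fo$ is of type $\ls$, this is immediate from the semantics of $\ls$ rounds, whose output always satisfies $\dom(f')\incl\set{d,?}$ for a single value $d$. If round $\fo$ is of type $\every$ with no $\mult$ instruction and $\thru^{\fo}\ge 1/2$, this is exactly the argument used in Corollary~\ref{cor:ts-min}: the only instruction that can fire in round $\fo$ is the $\uni$ one, and a process fires it only on receiving a uniform heard-of multiset of size $>\thru^{\fo}\cdot n\ge n/2$, so two processes cannot fire different values, as $\mset(f)$ has only $n$ elements. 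Hence the values of $x_{\fo}$ produced by round $\fo$ — equivalently, the non-$?$ values written into $\inp$ in this round — all lie in $\set{a,?}$ or all lie in $\set{b,?}$.

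Since, by hypothesis, round $\fo+1$ is not of type $\ls$, in round $\fo+1$ every heard-of multiset is a sub-multiset of $\mset(f')$ and therefore contains at most one value distinct from $?$; consequently the condition $\mult$ is never satisfied in round $\fo+1$, so the $\mult$ instruction there is dead code and can be removed. Raising $\thru^{\fo+1}$ to $1/2$ is then justified exactly as in Corollary~\ref{cor:ts-min}: once $\mult$ has been removed, round $\fo+1$ can only ever propagate the single dominant value, and the same reasoning as there shows that correctness is unaffected. The only genuinely new point compared with the timestamp case is the $\ls$ branch for round $\fo$, and it is disposed of by the one-line remark about $\ls$-semantics above; the rest is bookkeeping, so I expect no real obstacle beyond making sure the case analysis inherited from Lemma~\ref{lem:co-ts-no-mult-fo} is exhaustive.
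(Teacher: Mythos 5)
Your proposal is correct and follows essentially the same route as the paper, which simply says the proof is "similar to Corollary~\ref{cor:ts-min}": invoke Lemma~\ref{lem:co-ts-no-mult-fo} to constrain round $\fo$, conclude that round $\fo$ can output at most one non-$?$ value (using the $\ls$-semantics in the one genuinely new branch, and the $\thru^{\fo}\ge 1/2$, no-$\mult$ argument otherwise, with round $\fo$ never of type $\lr$ by Assumption~\ref{assumption-ls-lr}), so the $\mult$ instruction in round $\fop$ is dead code and the threshold can be raised. Your explicit treatment of the $\ls$ case for round $\fo$, and your terse justification of raising $\thru^{\fo+1}$ to $1/2$, match exactly what the paper's own (equally terse) argument requires.
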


\begin{lemma} \label{lem:co-ts-no-ls}
	The first round cannot be of type $\ls$.
\end{lemma}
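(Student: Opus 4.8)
The plan is to assume that the first round is of type $\ls$ and to exhibit a reachable state that violates agreement; it then suffices to show that such an algorithm cannot solve consensus. As usual I first invoke the earlier lemmas of this section together with the standing convention, so that it is enough to treat algorithms in which every round carries a $\uni$ instruction (otherwise termination already fails). The one structural fact I use repeatedly is that, by Proviso~\ref{ts-co-proviso}, the global predicate $\gp$ has no c-equalizer, hence $\gp\dar_1$ does not contain $\f_{\ls}$. Consequently, in the first round under $\gp$, from a tuple $(f,t)$ with $d\in\dom(f)$ one may reach any tuple $f_1$ with $\dom(f_1)\incl\set{d,?}$; in particular, from $(\spread,0)$ the first round under $\gp$ can produce $\soloa$, $\solo$, or $\biasq(\th)$ (resp.\ $\biasqa(\th)$) for an arbitrary $\th<1$.

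The first phase runs entirely under $\gp$, from the initial state $(\spread,0,\soloq)$. I route the first round to a suitable $\biasq(\th_1)$, apply Lemma~\ref{lem:co-bias-propagation} to rounds $2,\dots,\fo$ to obtain $\biasq(\th_{\fo})$ with $0<\th_{\fo}<1$ after round $\fo$, and apply it again to rounds $\fo+1,\dots,r$ to obtain $\biasq(\th_r)$ with $0<\th_r<1$ after the last round; both applications are legitimate because $\gp$ has no c-equalizer and neither round $\fo$ nor the last round is of type $\lr$ (Assumption~\ref{assumption-ls-lr}), and the two required starting thresholds are values $<1$ that the preceding round can be made to output. The phase transition then makes the $\th_r$-fraction of processes holding $b$ in the last tuple decide on $b$, leaves the remaining processes undecided, and sets the $\inp$ tuple to $\bias(\theta^*)$: the positions where $f_{\fo}$ equals $?$ keep their original $\spread$ value, so since $\th_{\fo}<1$ this $\inp$ tuple still contains at least one $a$.

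The second phase is again played under $\gp$ and uses once more that $\gp\dar_1$ contains no $\f_{\ls}$: taking as leader of the first round a process that currently holds $a$, I route the first round to $\soloa$. Since every round has a $\uni$ instruction and every intermediate $\lr$ round is immediately followed by an $\ls$ round — on which $\soloa$ is restored because $\gp$ has no c-equalizer — the tuple $\soloa$ propagates unchanged to the last round, so all processes still undecided after the first phase now decide on $a$. The state so obtained lies on a legal execution (the constructed fragment uses only $\gp$-transitions and, since the algorithm never blocks, extends to a full execution that also passes through all sporadic phases) and contains one process with decision value $b$ and another with decision value $a$, contradicting agreement.

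The only delicate part is the bookkeeping in the first phase: one must check that the two invocations of Lemma~\ref{lem:co-bias-propagation} are compatible with the threshold constraints imposed by $\gp$ and — the genuinely important point — that the tuple leaving round $\fo$ has a $?$ over at least one position that held $a$ in $\spread$, so that the $\inp$ tuple after the phase still offers an $a$ to serve as leader in the second phase. Everything else is a direct transcription of the arguments already used in Lemmas~\ref{lem:co-ls}, \ref{lem:co-ls-fo}, and~\ref{lem:co-constants}.
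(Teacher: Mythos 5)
Your proposal is correct, and its skeleton is the same as the paper's: assuming the first round is of type $\ls$, you build a two-phase execution under the global predicate in which some process decides $b$ while at least one $a$ survives in $\inp$ and some processes stay undecided, and then, exploiting that $\gp\dar_1$ has no $\f_{\ls}$ (Proviso~\ref{ts-co-proviso}), you elect an $a$-holding leader in the next phase, spread $\soloa$, and make an undecided process decide $a$, violating agreement. Where you genuinely diverge is in the machinery for the first phase. The paper starts from $\bias(\th)$, uses $\set{a,b}\incl\fire_1(\bias(\th),\gp)$, reaches $\bias(\th_\fop)$ or $\biasq(\th_\fop)$ after round $\fo$ via Lemmas~\ref{lem:co-any-frequency} and~\ref{lem:co-any-bias} (a case split on which rounds carry $\mult$ instructions), and then needs Corollary~\ref{cor:co-ts-no-mult-fo} (no $\mult$ in round $\fop$) to fire $\set{b,?}$ there before finishing with Lemma~\ref{lem:co-bias-propagation}. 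You instead start from $(\spread,0)$ and use the rogue $\ls$ first round to emit a $\set{b,?}$-tuple $\biasq(\th_1)$ immediately, after which two chained applications of Lemma~\ref{lem:co-bias-propagation} (legitimate since rounds $\fo$ and $r$ are not of type $\lr$ and $\gp$ has no c-equalizer) carry $\biasq$ tuples to the end of the phase; this bypasses both the $\mult$-instruction case analysis and the appeal to Corollary~\ref{cor:co-ts-no-mult-fo}, at the price of the bookkeeping you correctly flag, namely arranging (by symmetry of processes) that a $?$ at round $\fo$ sits over a position that held $a$ in $\spread$ so that an $a$ with timestamp $0$ survives to serve as leader — which is sound, since the $\ls$ semantics ignores conditions and timestamps, exactly as the paper's own second phase relies on. Both routes yield the same contradiction; yours is slightly leaner in its dependencies, the paper's keeps closer to the template of Lemmas~\ref{lem:co-ls} and~\ref{lem:co-ts-constants}.
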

\begin{proof}
	Suppose the first round is of type $\ls$. 
	We execute the phase under the  global predicate $\gp$.
	By semantics we have $\set{a,b}\incl
	\fire_1(\bias(\th),\gp)$, for arbitrary $\th$.  
	Consider $\th_\fop=\max(\thru^\fop,\thr_\fop(\gp))+\e$
	for some small $\e$.
	Thanks to our proviso, the global predicate does not have a c-equalizer, 
	hence we can freely apply Lemmas~\ref{lem:co-any-frequency} 
	and~\ref{lem:co-any-bias} to get $\bias(\th_\fop)$ or
	$\biasq(\th_\fop)$ after round 	$\fo$.
	By Corollary~\ref{cor:co-ts-no-mult-fo}, there is no $\mult$ instruction in round
	$\fop$.
	Hence $\set{b,?}\in\fire_\fop(\bias(\th_\fop),\gp)$.
	We can apply Lemma~\ref{lem:co-bias-propagation} to set $\dec$ of 
	one process to $b$ in this phase and leave the other processes undecided.
	Moreover, in the round $\fo$ the variable $\inp$ is set to $\bias(\max(\th,\th_{\fo+1}))$.
	
	In the next phase, once again we have $\set{a,b}\in
	\fire_1(\bias(\max(\th,\th_{\fo+1})),\gp)$. 
	We can	get $\solo_a$ under global predicate, hence we can set some $\dec$ to $a$.
\end{proof}

\begin{lemma}\label{lem:co-ts-mult-in-the-first-round}
	The first round should have a $\mult$ instruction.
\end{lemma}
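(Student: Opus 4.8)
The plan is to derive the statement immediately from the two lemmas just established. By Lemma~\ref{lem:co-ts-no-ls} the first round is not of type $\ls$, and Lemma~\ref{lem:co-ts-weak-mult} says precisely that a first round which is not of type $\ls$ must contain a $\mult$ instruction. So I would simply chain these two facts; no further work is needed.

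If a self-contained argument is wanted, I would argue by contradiction. Suppose the first round has no $\mult$ instruction. By Lemma~\ref{lem:co-ts-no-ls} it is also not of type $\ls$, so in round $1$ every process either does nothing or performs a $\uni$-guarded $\maxts$ operation. Starting from $(\spread,0)$ the communication medium can deliver to every process a heard-of multiset containing both $a$ and $b$ that is large enough to satisfy any threshold of $\gp$; such a multiset violates every $\uni$ guard, so $\update_1$ returns $?$ for each process and the $x_1$-tuple after round $1$ is $\solo^?$. Exactly as in the proof of Lemma~\ref{lem:ts-no-mult}, $\solo^?$ then propagates through all the remaining rounds, so $\inp$ is not updated in round $\fo$ and no $\dec$ variable is set. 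Hence $(\spread,0)\act{\gp}(\spread,0)$ is a phase self-loop on which no process ever decides, contradicting termination.

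I do not expect a real obstacle here. The one point to keep in mind is that with timestamps the first-round operations are $\maxts$ rather than $\smor$ or $\min$; but since a $\uni$ guard fails on any multiset with two distinct values, the particular operation plays no role, just as in Lemma~\ref{lem:ts-no-mult}. In short, Lemmas~\ref{lem:co-ts-no-ls} and~\ref{lem:co-ts-weak-mult} already pin down the first round, and the present lemma just records this consequence.
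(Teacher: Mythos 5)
Your proposal matches the paper exactly: the paper's proof is simply that the statement follows from Lemmas~\ref{lem:co-ts-no-ls} and~\ref{lem:co-ts-weak-mult}, which is your main argument. Your optional self-contained non-termination argument from $(\spread,0)$ is also consistent with how those supporting lemmas are proved, so no issues.
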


\begin{proof}
	Follows from Lemmas~\ref{lem:co-ts-no-ls} and \ref{lem:co-ts-weak-mult}.
\end{proof}

\begin{lemma}\label{lem:co-ts-bias-fire}
	For the global predicate  $\p$, we have
	$\set{a,b}\incl\fire_1((\bias(\th),i),\p)$ for 
	sufficiently big $\th$ and every $i$. 
\end{lemma}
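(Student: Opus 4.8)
The plan is to transfer the proof of Lemma~\ref{lem:ts-bias-fire} verbatim, using only the fact that in this model the first round cannot be a coordinator \emph{send} round. First I would collect the structural facts already established and therefore in force: every round contains a $\uni$ instruction, the first round contains a $\mult$ instruction (Lemma~\ref{lem:co-ts-mult-in-the-first-round}), all instructions of the first round use the operation $\maxts$, and the first round is not of type $\ls$ (Lemma~\ref{lem:co-ts-no-ls}). The last point is what makes the coordinator extension costless here: for a round of type $\every$ or of type $\lr$, the set $\fire_1((f,t),\gp)$ is, by definition, the set of values $\update_1(\ho)$ obtained when a single process receives a sub-multiset $\ho\incl\mset(f)$ with $\ho\sat\gp\dar_1$ -- the coordinator flavour of the round does not affect this set -- so the local computation is exactly the one analysed in the core/timestamp case.

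Then I would fix the threshold and argue directly. Let $\th$ be any value with $\max(\thruo,\thr_1(\gp))<\th<1$, and let $i$ be arbitrary; I work with the $\inp$-timestamp tuple $(\bias(\th),i)$, in which every timestamp equals $i$. To obtain $b\in\fire_1((\bias(\th),i),\gp)$, I take $\ho$ to be the multiset consisting of exactly the $b$-entries of $\bias(\th)$: then $|\ho|=\th\cdot n$ exceeds both $\thruo\cdot n$ and $\thr_1(\gp)\cdot n$, so $\ho\sat\gp\dar_1$ and $\ho$ triggers the $\uni$ instruction, and since all timestamps in $\ho$ coincide, $\maxts(\ho)=b$. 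To obtain $a\in\fire_1((\bias(\th),i),\gp)$, I take $\ho$ to be the whole tuple $\bias(\th)$: its size is $n>\thr_1(\gp)\cdot n$, so $\ho\sat\gp\dar_1$, and it contains both $a$ and $b$ (for $n$ large enough, as $\th<1$), hence it triggers the first $\mult$ instruction; again all timestamps coincide, so $\maxts(\ho)$ returns the smallest value present, namely $a$. Thus $\set{a,b}\incl\fire_1((\bias(\th),i),\gp)$.

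I do not expect a genuine obstacle: the argument is a direct transfer of Lemma~\ref{lem:ts-bias-fire}, the sole addition being the remark that the first round is not of type $\ls$, so that $\fire_1$ is computed exactly as before. The only point requiring a moment's care is bookkeeping -- it does not matter \emph{which} conditional actually fires when the whole tuple is sent (some $\mult$ conditional, since $\uni$ fails on a tuple with two distinct values), because every first-round instruction uses $\maxts$ and therefore yields $a$ on a tuple with uniform timestamps, and dually $b$ is produced on the all-$b$ multiset. Once established, Lemma~\ref{lem:co-ts-bias-fire} will play, in the termination and non-termination arguments for the coordinator-with-timestamps model, the same role that Lemma~\ref{lem:ts-bias-fire} plays in the timestamp model.
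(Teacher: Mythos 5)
Your proposal is correct and is essentially the paper's argument: the paper proves this lemma by saying it is "similar to that of Lemma~\ref{lem:ts-bias-fire}", i.e., pick $\th$ above $\max(\thruo,\thr_1(\gp))$, send all the $b$'s to fire the $\uni$ instruction (giving $b$ via $\maxts$ on uniform timestamps) and send the whole tuple to fire a $\mult$ instruction (giving $a$), the only coordinator-specific point being that the first round is not of type $\ls$ (Lemma~\ref{lem:co-ts-no-ls}) and has a $\mult$ instruction (Lemma~\ref{lem:co-ts-mult-in-the-first-round}), which you invoke exactly as intended.
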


\begin{proof}
	Similar to that of Lemma~\ref{lem:ts-bias-fire}.
\end{proof}

\begin{lemma} \label{lem:co-ls-fo}
	If round $\fo+1$ is of type $\ls$, then the algorithm does not solve consensus.
\end{lemma}
\begin{proof}
	Suppose round $\fo+1$ is of type $\ls$. 
	We consider an execution of a phase under the
	global predicate and so we can freely use Lemmas~\ref{lem:co-any-frequency}
	and~\ref{lem:co-any-bias}. 
	We have seen that the first round cannot be of type $\ls$. 
	We can take $\th$ big enough to have $\set{a,b}\in\fire_1(\bias(\th),\gp)$. 
	We can then use Lemmas~\ref{lem:co-any-frequency} 
	and~\ref{lem:co-any-bias}
	to get  $\bias(\th')$ or $\biasq(\th')$ after round $\fo$; for arbitrary
	$\th'$. 
	In either case, because round $\fo+1$ is of type $\ls$
	and the global predicate does not have c-equalizers,
	it follows that we can get $\biasq(\th'')$ for arbitrary $\th''$
	after round $\fop$. Applying Lemma~\ref{lem:co-bias-propagation}
	we can make one process decide on $b$ and prevent the
	other processes from deciding.
	
	Notice that the state of $\inp$ in the next phase will have $\th'$ processes
	with value $b$ and timestamp $1$.
	Till now we have put no constraints on $\th'$, so we can take it sufficiently
	small so that $1-\th' > \thru^1$. This enables us to get $\solo^a$ after the first round.
	We use this  	to make the undecided processes decide on $a$.
\end{proof}

\begin{lemma} \label{lem:co-ts-constants}
	If the property of constants from Definition~\ref{def:ts-structure} 
	is not satisfied, then agreement is violated.
\end{lemma}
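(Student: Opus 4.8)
The plan is to transplant the proof of Lemma~\ref{lem:ts-constants} into the coordinator setting, replacing the core propagation lemmas by their coordinator counterparts (Lemmas~\ref{lem:co-any-frequency}, \ref{lem:co-any-bias} and~\ref{lem:co-bias-propagation}) and their timestamped refinements, exactly as Lemma~\ref{lem:co-constants} does for coordinators without timestamps. So I would first collect the normalisations in force at this point. The global predicate $\gp$ has no c-equalizer (Proviso~\ref{ts-co-proviso}), hence no occurrence of $\f_{=}$ or $\f_{\ls}$, so every $\ls$ round is c-preserving under $\gp$; by Lemmas~\ref{lem:co-ts-no-ls} and~\ref{lem:co-ls-fo} the first round and round $\fop$ are not of type $\ls$, and by the provisos round $\fo$ and the last round are not of type $\lr$; by Lemma~\ref{lem:co-ts-no-mult-fo} (and Assumption~\ref{assumption-ls-lr} should round $\fo$ be $\ls$) round $\fo$ has no $\mult$ instruction and $\thru^{\fo}\ge 1/2$; by Corollary~\ref{cor:co-ts-no-mult-fo} round $\fop$ has no $\mult$ instruction and $\thru^{\fop}\ge 1/2$; by Lemma~\ref{lem:co-ts-mult-in-the-first-round} the first round has a $\mult$ instruction, whose operation is $\maxts$ by the syntax of the timestamp model. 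Failure of the hypothesis means $\thrmok<1-\thru^{\fop}$ or $\thruo<1-\thru^{\fop}$.

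Next I would build the offending execution, always working under $\gp$. Start from $(\bias(\th_1),0)$ with $\th_1>\thruo$ large enough that $\set{a,b}\incl\fire_1((\bias(\th_1),0),\gp)$ (Lemma~\ref{lem:co-ts-bias-fire}), and set $\th=\thru^{\fop}+\e$ for a tiny $\e>0$. Because round $\fo$ has no $\mult$ instruction it is c-preserving, so the timestamped analogs of Lemmas~\ref{lem:co-any-frequency} and~\ref{lem:co-any-bias} let me drive the phase so that the $\inp$-updating round $\fo$ outputs $\biasq(\th)$ or, alternatively, $\biasqa(\th)$; in either case the processes that receive $?$ keep their stale value (taken from $\bias(\th_1)$) and the stale timestamp $0$, and I choose which processes receive $?$ so that among the stale processes there is at least one $a$ and at least one $b$ in the $\biasq$ case, and at least $\min(\th_1,1-\th)$ processes with value $b$ in the $\biasqa$ case. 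Since round $\fop$ is not of type $\ls$, has no $\mult$ instruction, and $\thru^{\fop}\ge\thr_{\fop}(\gp)$ (Assumption~\ref{co-assumption}), from $\biasq(\th)$ one can fire $b$ via the $\uni$ instruction (as $\th>\thru^{\fop}$) and also fire $?$, and symmetrically $\set{a,?}\incl\fire_{\fop}(\biasqa(\th),\gp)$; Lemma~\ref{lem:co-bias-propagation} then lets me finish the phase so that exactly one process decides -- on $b$ (resp.\ $a$) -- while every other process stays undecided.

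Finally I would close the argument in the following phase, splitting on which half of the constants fails. If $\thrmok<1-\thru^{\fop}$, I use the $\biasq(\th)$ run, so that some process already has $\dec=b$ while at least $1-\th$ processes still carry timestamp $0$ with both an $a$ and a $b$ among their values; in the next phase, sending to everybody the multiset of all timestamp-$0$ values -- a legal $\ho$-set, being a sub-multiset of the current $\inp$-tuple, of size $1-\th>\thrmok\ge\thr_1(\gp)$ for small $\e$, and containing both values so it satisfies the $\mult$ instruction of the first round -- makes $\maxts$ return the smallest value with the latest timestamp present, namely $a$, so the first round yields $\soloa$ and every undecided process decides $a$, contradicting agreement. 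If instead $\thruo<1-\thru^{\fop}$ -- which is forced when $\thrmok\ge 1-\thru^{\fop}$ -- I use the $\biasqa(\th)$ run, so that some process has $\dec=a$ while at least $\min(\th_1,1-\th)$ processes carry timestamp $0$ with value $b$, and $\min(\th_1,1-\th)>\thruo$ for small $\e$ (since $\th_1>\thruo$ and $1-\th=1-\thru^{\fop}-\e>\thruo$); sending in the next phase the $\ho$-set consisting only of those $b$'s -- of size $>\thruo\ge\thr_1(\gp)$ -- makes the first round output $\solo$ and every undecided process decide $b$, again contradicting agreement. I expect the main obstacle to be bookkeeping rather than ideas: verifying that the timestamped coordinator refinements of Lemmas~\ref{lem:co-any-frequency}, \ref{lem:co-any-bias} and~\ref{lem:co-bias-propagation} can be run while leaving the stale-timestamp processes untouched through the interleaved $\lr$/$\ls$ rounds (using that $\inp$ is never written in an $\lr$ round and that $\gp$ contains no $\f_{\ls}$), that round $\fop$ can indeed fire the deciding value while leaving the remaining processes undecided, and that the small-$\e$ inequalities used above all go through.
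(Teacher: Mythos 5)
Your proposal is correct and follows essentially the same route as the paper's own proof: start under $\gp$ from $(\bias(\th_1),0)$ with $\set{a,b}\incl\fire_1$, exploit that round $\fo$ region contains a c-preserving round to reach $\biasq(\th_{\fop})$ or $\biasqa(\th_{\fop})$ before round $\fop$, let one process decide, and in the next phase either send all timestamp-$0$ values (case $\thrmok<1-\thru^{\fop}$, where $\maxts$ yields $a$) or the stale $b$'s (case $\thruo<1-\thru^{\fop}$), exactly as in the paper's adaptation of Lemma~\ref{lem:ts-constants}. The one slip is your appeal to Assumption~\ref{co-assumption} for $\thru^{\fop}\ge\thr_{\fop}(\gp)$, which is inapplicable here because round $\fo$ is itself c-preserving; the paper instead fires $b$ (and $?$) at round $\fop$ by padding the heard-of multiset with $?$ values, whose size counts toward the communication threshold $\thr_{\fop}(\gp)$ but not toward the condition evaluated on $\ho-\set{?}$, and with that substitution your argument goes through unchanged.
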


\begin{proof}
	The proof is similar to the one of Lemma~\ref{lem:ts-constants}.
	We consider an execution under the global predicate $\gp$, and employ
	Lemmas~\ref{lem:co-any-frequency} and~\ref{lem:co-ts-bias-fire}.
	We start from configuration $(\bias(\th_1),0)$ where
	$\th_1>\thruo$ big enough so that by Lemma~\ref{lem:co-ts-bias-fire} we get
	$\set{a,b} \incl\fire_1(\bias(\th_1),\gp)$.
	Observe that the first round
	cannot be of type $\ls$ by Lemma~\ref{lem:co-ts-no-ls} so we can get arbitrary
	bias after the first round. 
	%We get two cases: (i) all the rounds before $\fo+1$ are
	%non-c-preserving, or 	(ii) there is a c-preserving round before $\fop$.
	
	Due to Lemma~\ref{lem:co-ts-no-mult-fo} we know that there
	is a c-preserving round before round $\fop$.
	%In the first case, by the same argument as in Lemma~\ref{lem:co-constants} we
	%can get $\bias(\th)$ after round $\fo$, and we know
	%by Lemma~\ref{lem:co-ts-no-mult-fo} that round $\fo+1$ does not have a
	%$\mult$ instruction. 
	%So we have $\set{b,?}\incl \fire_{\fo+1}(\bias(\th),\gp)$.
	%We can then use Lemma~\ref{lem:co-bias-propagation}
	%to ensure that half the processes remains undecided, and the other half 
	%decides on $b$.
	%After this phase we have the configuration is $\bias(\th)$ with
	%all processes having timestamp $1$.
	%So we can send the whole multiset of values to every process, and get $\soloa$
	%after the first round. 
	%This will allow to make some processes decide on $a$, and violate agreement.
	%The second case is when there is a co-preserving round among $1,\dots,\fo$. 
	Let $j\leq \fo$ be the first c-preserving round.
	Since all rounds before $j$ are non-c-preserving, by
	Lemma~\ref{lem:co-any-frequency} we can get $\bias(\thr^j_u+\e)$, as well as 
	$\bias(1-(\thr^j_u+\e))$ before round $j$
	(intuitively, we can get bias with many $b$'s or many $a$'s). 
	Since $j$ is preserving, it is either of type $\ls$ or $\thr_j(\gp)<
	\max(\thru^j,\thrm^{j,k})$. 
	If it is of type $\ls$ then we get $\set{a,b,?}\in
	\fire_j(\bias((\thr^j_u+\e)),\gp)$ since  $\gp\dar_j$ is not $c$-equalizer . 
	In the other cases we use $\bias(\thr^j_u+\e)$ if we want to get $\set{b,?}$
	and $\bias(1-(\thr^j_u+\e))$ if we want to get $\set{a,?}$.
	If $j$ is of type $\every$ we get it at round $j$.
	If $j$ is of type $\lr$ then we get it at round $j+1$ since round $j+1$ must be necessarily of
	type $\ls$. 
	We then use Lemma~\ref{lem:co-bias-propagation} to reach $\biasq(\th_\fop)$ or
	$\biasq(1-\th_\fop)$ before round $\fop$; where as before
	$\th_\fop=\thru^\fop+\e$.
	
	We have two cases depending on whether $\thrmok< 1-\thru^{\fo+1}$ or not.

	If $\thrmok< 1-\thru^{\fo+1}$ then we reach $\biasq(\th_\fop)$ before round
	$\fop$, and then make some processes decide on 	$b$.
	After this phase there are $1-\th_\fop$ processes with timestamp $0$.
	We can ensure that among them there is at least one with value $a$ and one
	with value $b$. 
	Since there is $\mult$ instruction in the first round (Lemma~\ref{lem:co-ts-mult-in-the-first-round}), in the next phase we
	send all the  values 	with timestamp $0$. 
	This way we get $\soloa$ after the first round, and make some process decide
	on $a$.

	The remaining case is when $\thrmok\geq 1-\thru^{\fo+1}$. 
	So we have $\thruo<1-\thru^{\fo+1}$, since we have assumed that that the
	property of constants from 	Definition~\ref{def:ts-structure} does not hold. 
	This time we choose to get $\biasqa(\th_\fop)$ after round $\fo$, and make some process
	decide on $a$. 
	Since we have started with $\bias(\th_1)$ we can arrange updates so that at
	the beginning of the next phase we have at least $\min(\th_1,1-\th_\fop)$ processes who
	have value $b$  with timestamp $0$. 
	But $\thruo<\min(\th_1,1-\th_\fop)$, so by sending $\ho$ set consisting of these
	$b$'s we reach $\solo$ after the first round and make some processes decide on
	$b$. 
\end{proof}

\begin{lemma}
	If all the structural properties are satisfied then the algorithm satisfies agreement.
\end{lemma}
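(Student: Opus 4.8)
The plan is to transcribe, almost line for line, the agreement argument for the timestamp case (Lemma~\ref{lem:ts-structure}), letting the coordinator features be absorbed by structural facts already at our disposal. First I would dispose of the trivial situation: if the tuple of $\inp$ values is $\solo$ or $\soloa$ then no other value can ever re-enter $\inp$, so agreement is immediate. Otherwise I would fix an execution $(\bias(\theta),t,d)\act{\p^*}(\bias(\theta'),t',d')$ in which $(\bias(\theta'),t',d')$ is the first state along which some process has decided, and assume without loss of generality that this process decided on $b$. I would then record the structural results we may already use: the first round and the round $\fo+1$ are not of type $\ls$ (Lemmas~\ref{lem:co-ts-no-ls} and~\ref{lem:co-ls-fo}); round $\fo$ carries no $\mult$ instruction and $\thru^{\fo}\ge 1/2$ (Proviso~\ref{ts-co-proviso} and Lemma~\ref{lem:co-ts-no-mult-fo}); round $\fo+1$ carries no $\mult$ instruction (Corollary~\ref{cor:co-ts-no-mult-fo}); and neither $\fo$ nor the last round is of type $\lr$.

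The core of the argument is in two steps. First I would analyse the phase in which $b$ got decided. The structure of round $\fo$ (no $\mult$ instruction, $\thru^{\fo}\ge 1/2$, and --- in the $\ls$ case --- the global predicate lacking $\f_\ls$) forces that the processes updating $\inp$ in that phase all set it to one and the same value, carrying the fresh, maximal timestamp, while every other process keeps its pair $(\inp,ts)$; likewise every process either decides that value or stays undecided. Tracing the decided value $b$ back through the $\mult$-free round $\fo+1$ (where $\thru^{\fo+1}\ge 1/2$, so only a uniform-$b$ majority can propagate it) up to the last round, and noting that the intervening $\every$/$\lr$/$\ls$ rounds can manufacture neither the minority value nor a newer timestamp, I would conclude that in $(\bias(\theta'),t')$ the fraction of $b$'s exceeds $\thru^{\fo+1}$, that strictly fewer than a $1-\thru^{\fo+1}$ fraction of the processes hold $a$, and --- crucially --- that the most recent timestamp present in the tuple sits only on processes whose value is $b$. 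The second step is to propagate this configuration: I would show by induction over the subsequent phases that these two properties survive, so that the first round can never fire $a$. The $\uni$ instruction of the first round would need a uniform $a$-multiset of size $>\thruo\ge 1-\thru^{\fo+1}$, more than the available $a$'s; and any $\mult$ instruction of the first round applies $\maxts$ to a multiset of size $>\thrmok\ge 1-\thru^{\fo+1}$, which must therefore contain an entry carrying the most recent timestamp, hence a $b$, so that $\maxts$ returns $b$. Consequently, after round $1$ no process holds value $a$, and since rounds $2,\dots,\fo$ (whatever their type) preserve this, the tuple of $\inp$ after round $\fo$ has at most as many $a$'s as before the phase; hence the number of $a$'s is non-increasing and stays below $1-\thru^{\fo+1}$, and, as round $\fo+1$ is $\mult$-free with $\thru^{\fo+1}\ge 1/2$, no process can ever decide $a$. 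Agreement follows, and reading ``$a$'' as ``any value distinct from $b$'' the same argument covers the case of more than two values.

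I expect the main obstacle to be making the two invariants ``fewer than $1-\thru^{\fo+1}$ copies of $a$'' and ``most recent timestamp only on $b$'s'' truly airtight, in both the base case and the inductive step. In the base case one must exploit the hypothesis that $b$ was actually decided to conclude that the most recent update of $\inp$, in or before the deciding phase, really set $b$ rather than $a$ --- otherwise a minority of $a$'s could carry the freshest timestamp and defeat the $\maxts$ argument in the following phase. In the inductive step one must check that no round type reattaches the current maximal timestamp to a value other than $b$ --- in particular a leader-send round, which can overwrite the whole tuple with one value, and a leader-receive round with an adversarially chosen leader --- and that the supply of $a$'s cannot grow when the first round, the only place where $\inp$ and its timestamp may change, is interleaved with these coordinator rounds. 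Pinning down the leader-choice edge cases, so that the scheduler cannot slip an $a$ with a fresh timestamp past round $\fo$, is where the real care lies; everything else is a faithful rewriting of the timestamp proof.
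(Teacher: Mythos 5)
Your proposal is correct and takes essentially the same route as the paper's own proof: isolate the first deciding phase, use the $\mult$-free round $\fo$ with $\thru^{\fo}\ge 1/2$ to conclude that fewer than a $1-\thru^{\fo+1}$ fraction of processes hold $a$ and that only $b$'s carry the latest timestamp, and then use $\thruo\ge 1-\thru^{\fo+1}$, $\thrmok\ge 1-\thru^{\fo+1}$ and the $\maxts$ operation to show $a$ can never again be fired in the first round, so the number of $a$'s is non-increasing and no process can decide $a$. The only cosmetic difference is that the paper justifies ``only $b$ can be decided in that phase'' by pointing to an $\ls$ round, whereas you argue through the $\mult$-free round $\fo+1$ with $\thru^{\fo+1}\ge 1/2$; your explicit induction and the leader-round caveats are a more detailed rendering of the same argument.
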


\begin{proof}
	It is clear that the algorithm satisfies agreement when the initial frequency
	is either $\solo$ or $\solo^a$.  Suppose $(\bias(\theta),t,d) \act{\p^*}
	(\bias(\theta'),t',d')$ such that for the first time in this transition sequence,
	some process has decided (say the process has decided on $b$). 
	Since there exists an $\ls$ round in the algorithm, it follows 
	that every other process could only decide
	on $b$ or not decide at all. 
	Further notice that since $b$ was decided by some
	process, it has to be the case that more than $\thru^{\fo+1}$ processes have $b$ as
	their $\inp$ value and maximum timestamps.
	This means that the number of $a$'s in the configuration is  less than $1-\thru^{\fo+1}$.
	Also notice that since round $\fo$ has no $\mult$ instructions
	and $\thru^{\fo} \ge 1/2$, it follows that no process with 
	value $a$ has the latest timestamp.
	
	Since $\thruo \geq 1 - \thru^{\fo+1}$, it follows that $a$ cannot be fired
	from $\bias(\theta')$ using the $\uni$ instruction in the first round. Further
	since $\thrmok\geq  1 - \thru^{\fo+1}$ it follows that any HO set bigger
	than $\thrmok$ has to contain a value with the latest timestamp. 
	As no $a$ has the latest timestamp, $a$ cannot be fired from $\bias(\theta')$
	using the $\mult$ instruction as
	well. In consequence, the number of $b$'s can only increase from this point onwards and so it
	follows that no process from this point onwards can decide on $a$. A similar
	argument applies if the first value decided was $a$. 
\end{proof}

\subsubsection*{Part 2: termination for coordinators with timestamps.}

The proof for termination is very similar to the case of timestamps. 

\begin{lemma}\label{lem:co-ts-dec}
	If $\f$ is a c-decider and $(\solo,t,\solo^?)\act{\f}(f,t',d)$
	then $(f,d) = (\solo,\solo)$ for any ts-tuple $t$. Similarly if $(\solo^a,t,\solo^?) \act{\f}(f,t',d)$ then $(f,d) = (\solo^a,\solo^a)$.
\end{lemma}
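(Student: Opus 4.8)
The plan is to mirror the proofs of Lemma~\ref{lem:ts-dec} and Lemma~\ref{lem:co-decider}: since $\f$ is a c-decider, by Definition~\ref{def:c-predicates} every round $1,\dots,r$ is c-solo-safe w.r.t.\ $\f$, and I would show round by round that, starting from $f_0=\solo$, the heard-of dynamics cannot leave the ``all $b$'' tuple (up to the harmless detour through $\oneb$ that an $\lr$ round forces). Recall also the standing convention that we may assume every round has a $\uni$ instruction, and that by our provisos neither round $\fo$ nor round $r$ is of type $\lr$.

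First I would treat a round $i$ of type $\every$: c-solo-safety here is solo-safety in the old sense, so Lemma~\ref{lem:preserving} gives $\solo\lact{\f}_i\solo$; the only novelty with timestamps is that the first round uses $\maxts$, but $\maxts(\ho)=b$ for every non-empty multiset $\ho$ of $b$'s regardless of the attached timestamps, so nothing changes. Next, a round $i$ of type $\lr$: every admissible heard-of multiset for $\f\dar_i$ has size $>\thr_i(\f)\ge\thru^i\ge 0$ and contains only $b$'s, so the $\uni$ instruction fires and $\fire_i(\solo,\f)=\set{b}$, whence $\solo\lact{\f}_i\oneb$; this round must be followed by an $\ls$ round $i+1$, which, being c-solo-safe, has $\f_{\ls}$ as a conjunct of $\f\dar_{i+1}$, and since round $i$ is $\lr$ the semantics yields $\oneb\lact{\f}_{i+1}\solo$. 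Finally, a round $i$ of type $\ls$ not immediately preceded by an $\lr$ round: c-solo-safety again gives $\f_{\ls}$ in $\f\dar_i$, so $f\lact{\f}_i\solo^d$ for some $d\in\dom(f)$, and as $\dom(\solo)=\set{b}$ we stay at $\solo$. Stitching these together, $f_i=\solo$ at every round boundary that is not the target of an $\lr$ round; in particular $f_{\fo}=f_r=\solo$.

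With this invariant in hand the conclusion is immediate from the phase-transition semantics: starting from $(\solo,t,\solo^?)$, the $\inp$ variable is updated in round $\fo$ to $f_{\fo}(p)=b\neq{?}$ for all $p$, so $f=\solo$; and since $d(p)={?}$ initially, $\dec$ is set to $f_r(p)=b$ for all $p$, so $d=\solo$; hence $(f,d)=(\solo,\solo)$. The case from $(\solo^a,t,\solo^?)$ is symmetric, replacing $b$ by $a$ throughout (again $\maxts$ and the $\uni$ instruction behave the same on an all-$a$ multiset), giving $(f,d)=(\solo^a,\solo^a)$. I expect the only point needing any care to be the $\lr$/$\ls$ bookkeeping, since $\oneb$ is momentarily not $\solo$; but the immediately following $\ls$ round is c-solo-safe and so carries $\f_{\ls}$, which restores $\solo$, so there is no real obstacle.
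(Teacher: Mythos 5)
Your argument is correct and is essentially the verification the paper leaves implicit: its own proof of this lemma is just ``Immediate,'' in the spirit of Lemmas~\ref{lem:ts-dec} and~\ref{lem:co-decider}. Your round-by-round case analysis (c-solo-safe $\every$ rounds via Lemma~\ref{lem:preserving}, the $\lr$ detour through $\oneb$ repaired by the mandatory following $\ls$ round with $\f_{\ls}$, and the $\maxts$/$\uni$ behaviour on an all-$b$ or all-$a$ multiset) is exactly the routine check the paper is appealing to, so no further comment is needed.
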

\begin{proof}
	Immediate
\end{proof}

\begin{lemma}\label{lem:co-ts-str-uni}
	Suppose $\f$ is a strong c-unifier and $(\bias(\th),t)\act{\f}(f,t')$ then
	$f=\solo$ or $f=\solo^a$ (for every tuple of timestamps $t$).
\end{lemma}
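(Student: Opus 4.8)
The plan is to follow the proof of Lemma~\ref{lem:strong-unifier} essentially verbatim, inserting the coordinator-specific adjustments exactly the way Lemma~\ref{lem:co-unifier} adjusts Lemma~\ref{lem:unifier}; the only genuinely new ingredient is the bookkeeping for $\lr$/$\ls$ rounds. First I would check that round $1$ cannot output the value $?$. Since $\f$ is a strong c-unifier we have $\thr_1(\f)\ge\thrmok$ and $\thr_1(\f)\ge\thruo$; by Lemma~\ref{lem:co-ts-no-ls} round $1$ is not of type $\ls$ and by Lemma~\ref{lem:co-ts-mult-in-the-first-round} it contains a $\mult$ instruction. Hence, starting from $\bias(\th)$, which is over $\set{a,b}$ so that no $?$ can even be sent, every admissible heard-of multiset $\ho$ has $|\ho|>\thr_1(\f)\cdot n\ge\max(\thruo,\thrmok)\cdot n$ and therefore fires the $\uni$ instruction if it carries a single value and a $\mult$ instruction otherwise; in either case $\maxts(\ho)\in\set{a,b}$. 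So the tuple $f_1$ of $x_1$-values after round $1$ is over $\set{a,b}$, and from here on timestamps are irrelevant since they are attached only to $\inp$, which is touched again only in round $\fo$.

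Next, take $1\le i\le\fo$ as granted by Definition~\ref{def:c-predicates}: $\f\dar_i$ is a c-equalizer, rounds $2,\dots,i$ are non-c-preserving, and rounds $i+1,\dots,\fo$ are c-solo-safe. I would propagate the invariant ``the current tuple lies over $\set{a,b}$'' through rounds $2,\dots,i$. For a round of type $\every$ this is immediate from the definition of non-c-preserving, exactly as in the core case: every admissible $\ho$ is large enough to fire $\uni$ or $\mult$. A round of type $\ls$ in this range must carry $\f_{\ls}$ (otherwise it would be c-preserving, contradicting the hypothesis), so it collapses the tuple to $\solo$ or $\soloa$; and a round of type $\lr$ momentarily yields $\oneb$ or $\onea$, which its obligatory successor $\ls$ round --- itself non-c-preserving, hence again carrying $\f_{\ls}$ --- restores to $\solo$ resp.\ $\soloa$. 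Since $\f\dar_i$ is a c-equalizer, i.e.\ contains $\f_=$ or ($i$ being of type $\ls$) $\f_{\ls}$, after round $i$ the tuple is $\solo$ or $\soloa$. c-solo-safety of rounds $i+1,\dots,\fo$ then preserves this value up to round $\fo$, with the same $\lr$-then-$\ls$-with-$\f_{\ls}$ bookkeeping and using that round $\fo$ is not of type $\lr$ (Assumption~\ref{assumption-ls-lr}), so that no dangling single-value tuple survives. By the phase semantics $f=f_{\fo}$, hence $f=\solo$ or $f=\soloa$.

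The delicate point, and the only place needing real work beyond copying Lemma~\ref{lem:strong-unifier}, is this $\lr$/$\ls$ bookkeeping: one must verify that inside the two ranges $2,\dots,i$ and $i+1,\dots,\fo$ the hypotheses ``non-c-preserving'' resp.\ ``c-solo-safe'' genuinely exclude an $\ls$ round without $\f_{\ls}$ (which could emit arbitrary $?$'s and destroy the invariant), and that every $\lr$ round occurring in these ranges sits strictly before $\fo$, so that the $\ls$ round required to follow it is still in range and turns $\oneb$/$\onea$ back into $\solo$/$\soloa$. Once these checks are in place, the remainder of the argument is identical to the timestamp case.
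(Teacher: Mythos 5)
Your argument is essentially the paper's own: the paper proves this lemma by the same three-step sketch (no $?$ fired up to the c-equalizer round $i$, collapse to $\solo$ or $\soloa$ at round $i$, preservation by c-solo-safety up to $\fo$), deferring the details to the earlier unifier lemmas, and your filled-in version of the $\lr$/$\ls$ bookkeeping is exactly the right way to make that sketch precise.

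One spot needs a small repair: you conclude in the first paragraph that ``the tuple $f_1$ of $x_1$-values after round $1$ is over $\set{a,b}$'', but nothing excludes the first round being of type $\lr$ --- indeed this is the case in Paxos, the flagship example for this fragment (only $\ls$ is excluded for round $1$, by Lemma~\ref{lem:co-ts-no-ls}). If round $1$ is $\lr$, what your threshold argument actually gives is $\fire_1((\bias(\th),t),\f)\incl\set{a,b}$, so $f_1$ is either over $\set{a,b}$ (type $\every$) or equals $\oned$ with $d\in\set{a,b}$ (type $\lr$), which contains $?$'s and breaks the invariant you start propagating at round $2$. The fix is exactly the bookkeeping you already carry out for $\lr$ rounds in the range $2,\dots,i$: round $2$ is then forced to be of type $\ls$, and it carries $\f_{\ls}$ whether $i\ge 2$ (round $2$ non-c-preserving) or $i=1$ (round $2$ c-solo-safe), so it collapses $\oned$ back to $\solo^d$ with $d\in\set{a,b}$ and the rest of your argument goes through unchanged. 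With that one extension of your $\lr$-then-$\ls$ check to cover round $1$, the proof is complete and coincides with the paper's intended argument.
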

\begin{proof}
	Let $i$ be the round with c-equalizer. 
	Till round $i$ we cannot produce $?$. 
	After round $i$ we have $\solo$ or $\soloa$.
	This stays till round $\fo$ as the rounds after $i$ are c-solo-safe.
\end{proof}

\begin{proof}\textbf{Main positive}
	%If there is a c-special predicate, then we are done by Lemma~\ref{lem:co-ts-special}.
	
	Suppose there is a strong c-unifier followed by a c-decider. 
	After the strong c-unifier we have $\solo$ or $\soloa$ thanks to Lemma~\ref{lem:co-ts-str-uni}. 
	After c-decider all processes decide thanks to Lemma~\ref{lem:co-ts-dec}.
\end{proof}

\subsubsection*{Part 3: Non-termination for coordinators with timestamps}

\begin{lemma}\label{lem:co-ts-not-decider}
	If $\p$ is a not a c-decider then $(\solo,t)\act{\p}(\solo,t)$ and
	$(\solo^a,t)\act{\p}(\solo^a,t)$ for every tuple of timestamps $t$.
\end{lemma}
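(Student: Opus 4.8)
The plan is to prove this as the combined analog of Lemma~\ref{lem:co-not-decider} (coordinators) and Lemma~\ref{lem:ts-not-decider} (timestamps), so the proof should be short and reuse their structure. First I would observe that, since $\p$ is not a c-decider, Definition~\ref{def:c-predicates} guarantees a round that is not c-solo-safe; I would let $i$ be the \emph{first} such round. Unpacking the definition, either round $i$ is of type $\ls$ with $\p\dar_i$ not containing $\f_{\ls}$, or round $i$ is of another type and $\thr_i(\p) < \thru^i$.

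Next I would build an execution $\solo = f_0 \lact{\p_1}_1 \cdots \lact{\p_r}_r f_r$ witnessing the phase self-loop. On rounds $1,\dots,i-1$, all of which are c-solo-safe by minimality of $i$, the tuple stays $\solo$: for $\every$ and $\ls$ rounds this is Lemma~\ref{lem:preserving}; for an $\lr$ round the one active process receives only $b$'s and fires $b$ (the $\uni$ instruction is present by Lemma~\ref{lem:c-no-uni}), yielding $\oneb$, after which the immediately following $\ls$ round is c-solo-safe, hence contains $\f_{\ls}$ and sends $b$ to everybody, restoring $\solo$. At round $i$ I would produce $\soloq$: if $i$ is an $\ls$ round without $\f_{\ls}$ this is immediate from the semantics of $\ls$ rounds (taking the value $b$); otherwise $\thr_i(\p) < \thru^i$, so some process may receive a multiset of $b$'s of size strictly between $\thr_i(\p)$ and $\thru^i$, which satisfies no instruction, so $\update_i$ returns $?$. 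Finally $\soloq$ propagates unchanged through rounds $i+1,\dots,r$.

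Then I would conclude as follows. The $\dec$-tuple is never touched, since the last round (which is not of type $\lr$ by Assumption~\ref{assumption-ls-lr} and has index $\ge i$) sees the tuple $\soloq$ and produces $?$ for every process; and the $\inp$-tuple ends up $\solo$, because it equals $\solo$ before round $\fo$ and, whether $\fo$ precedes $i$ (in which case $\inp$ is re-set to $b$ with a uniform timestamp) or not (in which case $\inp$ is not updated), the resulting $\inp$-tuple is still $\solo$. Hence no process decides during this phase and we have $(\solo,t)\act{\p}(\solo,t)$; the case of $\solo^a$ is entirely symmetric, with $a$ in place of $b$. I do not expect a genuine obstacle: the only points needing care are the handling of $\lr$/$\ls$ round pairs when keeping $\solo$ on the prefix, and the remark that the timestamp component either stays fixed or becomes uniform so that only the $\inp$- and $\dec$-tuples matter — both resolved exactly as in Lemmas~\ref{lem:co-not-decider} and~\ref{lem:ts-not-decider}.
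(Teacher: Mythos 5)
Your proof is correct and follows essentially the same route as the paper, whose own argument is just a terse version of yours: pick a non-c-solo-safe round, drive the tuple to $\soloq$ there, propagate it, and conclude that neither $\dec$ nor (effectively) $\inp$ changes. Your extra care with $\lr$/$\ls$ pairs and the observation that the timestamp component may become uniform when $\fo$ precedes the chosen round are fine elaborations of details the paper glosses over.
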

\begin{proof}
	If $\p$ is not a c-decider then there is a round that is not c-solo-safe. 
	So we can go to $\soloq$ both from $\solo$ and from $\soloa$.
	From $\soloq$ no process can decide.
\end{proof}

\begin{lemma} \label{lem:co-ts-not-str-uni}
	If $\p$ is not a strong c-unifier
	then $(\bias(\th),i) \act{\p} (\bias(\th),j)$ %or $(\bias(\th),i) \act{\p} (\bias(\th),i+1)$ 
	is possible (for large enough $\th$, arbitrary $i$, and some $j$).
\end{lemma}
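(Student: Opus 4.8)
The plan is to transcribe the proof of Lemma~\ref{lem:ts-not-str-uni} (the timestamp case) while treating the new round types $\ls$ and $\lr$ exactly as they were treated when the coordinator case was obtained from the core case. First I would fix a small $\e>0$ and take $\th>\max(\thru^1,\thr_1(\p))+\e$, so that Lemma~\ref{lem:co-ts-bias-fire} gives $\set{a,b}\subseteq\fire_1((\bias(\th),i),\p)$ for every timestamp $i$. I would also record once and for all that, by Lemmas~\ref{lem:co-ts-no-ls} and~\ref{lem:co-ts-mult-in-the-first-round}, the first round is not of type $\ls$ and carries a $\mult$ instruction, and that, by Assumption~\ref{assumption-ls-lr} and Proviso~\ref{ts-co-proviso}, round $\fo$ is not of type $\lr$ and carries no $\mult$ instruction (and $\thru^{\fo}\ge 1/2$). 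The common target in every case is to produce a run from $(\bias(\th),i)$ that reaches the all-$?$ tuple $\soloq$ no later than round $\fo$ and then keeps $\soloq$ until the last round; since $\soloq$ is reproduced by a round of any type, such a run updates neither $\inp$ nor $\dec$, hence witnesses $(\bias(\th),i)\act{\p}(\bias(\th),i)$ with $j=i$.

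Next I would unfold the definition of a strong c-unifier and split into three cases. (1) If the condition on constants fails, i.e.\ $\thr_1(\p)<\thrm^{1,k}$ or $\thr_1(\p)<\thru^1$, then, exactly as in Lemma~\ref{lem:not-uni}, in the first round one sends every process a multiset containing both $a$ and $b$ of size strictly between $\thr_1(\p)$ and $\thrm^{1,k}$ in the first subcase, or an all-$b$ multiset of size strictly between $\thr_1(\p)$ and $\thru^1$ in the second; since $\th>\thr_1(\p)$ the tuple $\bias(\th)$ supplies enough values, no instruction fires, so $(\bias(\th),i)\lact{\p_1}_1\soloq$. (2) If the constants are fine but $\p$ has no c-equalizer among $\p_1,\dots,\p_\fo$, then no $\p_j$ ($j\le\fo$) contains $\f_=$ or $\f_{\ls}$, and since round $\fo$ carries no $\mult$ instruction and is not of type $\lr$, the timestamp analog of Lemma~\ref{lem:co-any-bias} (which holds by the argument of Lemma~\ref{lem:ts-any-bias}, using Lemma~\ref{lem:co-ts-bias-fire} in place of the first-round fire lemma) yields a run from $(\bias(\th),i)$ reaching $\soloq$ after round $\fo$. (3) If the constants are fine and $\p$ does have a c-equalizer among $\p_1,\dots,\p_\fo$, let $m$ be the smallest such index; since $\p$ is not a strong c-unifier, the round-structure requirement must fail for $m$, i.e.\ either some round $l$ with $2\le l\le m$ is c-preserving, or rounds $2,\dots,m$ are all non-c-preserving and some round $l$ with $m<l\le\fo$ is not c-solo-safe.

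For the first alternative of case (3) I would take the least such $l$; the constants condition forces round $1$ to be non-c-preserving, so rounds $1,\dots,l-1$ are non-c-preserving, and, lying strictly below $m$, none of them is a c-equalizer, hence all of them carry $\mult$ instructions and are of type $\every$ or $\lr$. Then Lemma~\ref{lem:co-any-frequency} (in its timestamp form for the initial segment) lets us reach an arbitrary $\bias(\th')$ just before round $l$, or $\one_b$ just before round $l$ if round $l-1$ is of type $\lr$. Since $l$ is c-preserving and is of type $\every$, $\lr$, or an $\ls$ round without $\f_{\ls}$, it can output $?$ from such a tuple; if $l$ is of type $\lr$ one uses in addition that the $\ls$ round following it carries no $\f_{\ls}$ (as $\p$ has no c-equalizer up to $m$). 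In every subcase $\soloq$ is obtained at round $l$ or $l+1$, hence no later than round $\fo$. For the second alternative, non-c-preservingness of rounds $1,\dots,m$ keeps $?$ out of the tuples up to round $m$, and since $\p_m$ is a c-equalizer the tuple after round $m$ is $\solo$ or $\soloa$; the c-solo-safe rounds $m+1,\dots,l-1$ preserve it, and the non-c-solo-safe round $l$ — an $\ls$ round without $\f_{\ls}$ or a round of another type with $\thr_l(\p)<\thru^l$ — can output $?$ from that tuple, so $\soloq$ is reached after round $l\le\fo$. Propagating $\soloq$ to the last round finishes the argument.

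The main obstacle is case (3): the interplay of $\lr$/$\ls$ rounds with the notions of c-preserving and c-solo-safe requires careful bookkeeping. The useful structural facts are that an $\ls$ round is never ``non-c-preserving'' (so a non-c-preserving initial segment automatically contains no $\ls$ round, which is what makes the frequency lemma applicable) and that every $\lr$ round is immediately followed by an $\ls$ round (so $\lr$ rounds must always be handled in tandem with their successor). Choosing $m$ to be the smallest c-equalizer index is what guarantees that the segment fed to Lemma~\ref{lem:co-any-frequency} contains no c-equalizer; getting these adjacency conditions exactly right is the delicate part, the rest being a routine copy of Lemma~\ref{lem:ts-not-str-uni}.
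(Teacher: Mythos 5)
Your proof is correct and follows essentially the same route as the paper's: the same choice of $\th$, the same case split (failure of the threshold conditions, absence of a c-equalizer, failure of the round-structure condition at the first c-equalizer), and the same strategy of driving the phase to $\soloq$ no later than round $\fo$ and propagating it, using the coordinator/timestamp versions of the fire, frequency and bias lemmas. Your bookkeeping of the $\lr$/$\ls$ adjacency and of the case where the c-preserving round coincides with the c-equalizer round is, if anything, slightly more explicit than the paper's own argument.
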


\begin{proof}
	Let $\th > \max(\thru^1,\thr_1(\p)) + \e$. Suppose $\p$ is not a strong c-unifier. 
	We do a case analysis. 
	
	Suppose $\thr_1(\p) < \thrm^{1,k}$ or $\thr_1(\p) < \thru^1$. 
	We can get $\solo^?$ after the first round and then 
	use this to not decide on anything and retain the input tuple.
	
	Suppose $\p$ does not have  an c-equalizer. In this case can we  
	apply Lemmas~\ref{lem:co-ts-bias-fire},~\ref{lem:co-ts-no-ls},~\ref{lem:co-any-bias} and~\ref{lem:co-ts-no-mult-fo} to conclude that 
	we can reach $\soloq$ before round $\fop$
	and so we are done, because nothing is changed after the phase. 
	%In the former case, because by Lemma~\ref{lem:co-ts-no-mult-fo} there is no
	%$\mult$ instruction in round $\fop$, we can get $\soloq$ after round $\fop$.
	%So in this case, the time stamps are updated but we still get $\bias(\th)$, and
	%no process decides. 
	
	The next possible situation is that $i$ is the first component of $\p$ that is a c-equalizer,
	and there is a c-preserving round, call it $j$ before round $i$ 	
	(it can be round $1$ as well).
	Every round before round $j$ is non-c-preserving, so it cannot be of type $\ls$.
	This is because non-c-preserving round of type $\ls$ is necessarily a c-equalizer,
	and the first c-equalizer is $i$. 
	So every round up-to $(j-1)$ has to be of type $\every$, and round $j$ can be of
	either of type $\every$ or of type $\lr$ (because $\lr$ round must be followed by $\ls$ round
	thanks to assumption on page~\pageref{assumption-ls-lr}). 
	In both cases, by Lemma~\ref{lem:co-any-frequency} we can get to $\bias(\th')$
	(where $\th' > \max(\thru^j,\thr_j(\p))$) before round $j$ 
	(Notice that if $j = 1$ then we
	need to reach $\bias(\th')$ with $\th' > \max(\thru^1,\thr_1(\p))$ which is
	where we start at). 
	
	If round $j$ is of type $\every$, then since it is preserving it is easy to
	see that $\bias(\th') 	\lact{\p_j}_j \soloq$.
	The remaining possibility is that round $j-1$ is of type $\lr$.
	We can get $\oneb$ after round $j-1$, and because round $j$ is
	necessarily of type $\ls$ and is not c-equalizer, we can get $\soloq$ after
	round $j$. 
	In both cases, as $j<\fo$ no process changes $\inp$ value, or decides in this phase.
	
	The remaining possibility for $\p$ not to be strong c-unifier is that there is
	$i$-th round that is a c-equalizer followed by a non-c-solo safe round $j\leq
	\fo$.
	It is clear that we can reach $\solo$ after round $i$ and using this get $\soloq$
	after round $j$. 
	Hence nothing will change in the phase giving a transition $(\bias(\th),i)
	\act{\p} (\bias(\th),i)$.
\end{proof}

\begin{proof}
	\textbf{Main non-termination}
	%We will assume that there is no c-special predicate, as otherwise
	%we are done by Lemma~\ref{lem:co-ts-special}.
	
	We show that if there is no strong c-unifier followed by a c-decider, then the
	algorithm will not terminate. We start with $(\bias(\th),0)$ where $\th$ 
	is large enough. If $\p$ is not a strong c-unifier then by 
	Lemma~\ref{lem:co-ts-not-str-uni}, for every $i$ transition $(\bias(\th),i)
	\act{\p} (\bias(\th),j)$ is possible for some $j$.
	Hence if there is no 	strong c-unifier in the communication predicate then the algorithm will not  terminate.
	
	Otherwise let $\p^l$ be the first strong c-unifier. 
	Notice that $\p^l$ is 
	not the global predicate. Till $\p^l$ we can maintain 
	$(\bias(\th),i)$ for some $i$.
	Suppose $\p^l$ is not a c-decider. 
	By Lemma~\ref{lem:co-ts-str-uni} the state
	after this phase will become $\solo$ or $\solo^a$. However since $\p^l$ is not a
	c-decider, we can choose to not decide on any value. Hence we get the transition
	$(\bias(\th),i) \act{\p} (\solo,i+1)$. 
	Now, since none of the 	Lemma~\ref{lem:co-ts-not-decider} we can have a
	transition where no decision 	happens.   
	Hence the algorithm does not terminate if there is no c-decider after a strong c-unifier.
\end{proof}

\section{Conclusions}

We have characterized all algorithms solving consensus in a fragment of the
Heard-Of model. 
We have aimed at a fragment that can express most important algorithms while trying to
avoid ad hoc restrictions (c.f. proviso on page~\pageref{proviso}).
The fragment covers algorithms considered in the context of
verification~\cite{MarSprBas:17,ChaMer:09} with a notable exception of
algorithms sending more than one variable.
In  this work we have considered only single phase algorithms while originally the
model permits also to have initial phases. 
We believe that this is not a severe restriction. 
More severe and technically important restriction is that we allow to use only
one variable at a time, in particular it is not possible to send pairs of
variables. 
% We believe that the characterizations can be extended to
% restricted use of pairings sufficient to cover voting algorithms, but the
% complexity of proofs grows 

One curious direction of further research would be to list all ``best'' consensus
algorithms under some external constraints; for example the constraints can come from some 
properties of an execution platform external to the Heard-Of model.
This problem assumes that there is some way to compare two algorithms. 
One guiding principle for such a measure could be efficient use of
knowledge~\cite{MosRaj:02,Mos:16}: at 
every step the algorithm does maximum it can do, given its knowledge of the state
of the system. 

This research is on the borderline between distributed computing and
verification. 
From a distributed computing side it considers quite a simple model, but gives a
characterization result. 
From a verification side, the systems are complicated because the number of processes
is unbounded, there are timestamps, and interactions are based on a fraction of
processes having a particular value. 
We do not advance on verification methods for such a setting. 
Instead, we observe that in the context considered here verification may be avoided.
We believe that a similar phenomenon can appear also for other problems than
consensus. 
It is also an intriguing question to explore how much we can enrich the current model and still get a
characterization. 
We conjecture that a characterization is possible for an extension with
randomness covering at least the Ben-Or algorithm.
Of course, formalization of proofs, either in Coq or Isabelle, for such
extensions would be very helpful.

\bibliography{papers}
%   \appendix
%  \input{app-core}
%  \input{app-timestamps} 
%  \input{app-coord} 
%  \input{app-coord-ts}
% \input{app-examples}
  %\input{optional}

\end{document}